\definecolor{monbbleu}{RGB}{76, 114, 176}
\renewcommand{\ALG@name}{Procedure}
\g@addto@macro\normalsize{%
  \setlength\abovedisplayskip{5pt}
  \setlength\belowdisplayskip{5pt}
  \setlength\abovedisplayshortskip{5pt}
  \setlength\belowdisplayshortskip{5pt}
}
\setlist[itemize]{noitemsep, topsep=0pt}
\setlist[enumerate]{noitemsep, topsep=0pt}
   \newtheorem{theorem}{Theorem}
   \newtheorem{corollary}[theorem]{Corollary}
   \newtheorem{lemma}[theorem]{Lemma}
   \newtheorem{proposition}[theorem]{Proposition}
   \newtheorem{assumptions}[theorem]{Assumptions}
\theoremstyle{definition}
   \newtheorem{definition}[theorem]{Definition}
   \newtheorem{example}[theorem]{Example}
\theoremstyle{remark}    
  \newtheorem{remark}[theorem]{Remark}
\DeclareMathOperator{\tr}{tr}
\DeclareMathOperator{\rank}{rank}
\DeclareMathOperator{\diag}{diag}
\DeclareMathOperator*{\argmax}{arg\,max}
\DeclareMathOperator*{\argmin}{arg\,min}
\newcommand{\labitem}[2]{%
\def\@itemlabel{\textcolor{black}{#1}}
\item
\def\@currentlabel{#1}\label{#2}}
\begin{document}

\title{The low-rank hypothesis of complex systems}

\author{Vincent Thibeault}\email[]{vincent.thibeault.1@ulaval.ca}
\affiliation{D\'epartement de physique, de g\'enie physique et d'optique, Universit\'e Laval, Qu\'ebec (Qc), Canada}
\affiliation{Centre interdisciplinaire en mod\'elisation math\'ematique de l'Universit\'e Laval, Qu\'ebec (Qc), Canada} 

\author{Antoine Allard}
\affiliation{D\'epartement de physique, de g\'enie physique et d'optique, Universit\'e Laval, Qu\'ebec (Qc), Canada}
\affiliation{Centre interdisciplinaire en mod\'elisation math\'ematique de l'Universit\'e Laval, Qu\'ebec (Qc), Canada} 

\author{Patrick Desrosiers}\email[]{patrick.desrosiers@phy.ulaval.ca}
\affiliation{D\'epartement de physique, de g\'enie physique et d'optique, Universit\'e Laval, Qu\'ebec (Qc), Canada} 
\affiliation{Centre interdisciplinaire en mod\'elisation math\'ematique de l'Universit\'e Laval, Qu\'ebec (Qc), Canada}
\affiliation{Centre de recherche CERVO, Qu\'ebec (Qc), Canada}

\maketitle

\let\oldaddcontentsline\addcontentsline
\renewcommand{\addcontentsline}[3]{}

\onecolumngrid


\vspace{-1.5\baselineskip}
\noindent\textbf{%
Complex systems are high-dimensional nonlinear dynamical systems with heterogeneous interactions among their constituents. To make interpretable predictions about their large-scale behavior, it is typically assumed that these dynamics can be reduced to a few equations involving a low-rank matrix describing the network of interactions. Our paper sheds light on this low-rank hypothesis and questions its validity. Using fundamental theorems on singular value decomposition, we probe the hypothesis for various random graphs, either by making explicit their low-rank formulation or by demonstrating the exponential decrease of their singular values. We verify the hypothesis for real networks by revealing the rapid decrease of their singular values, which has major consequences on their effective ranks. We then evaluate the impact of the low-rank hypothesis for general dynamical systems on networks through an optimal dimension reduction. This allows us to prove that recurrent neural networks can be exactly reduced, and to connect the rapidly decreasing singular values of real networks to the dimension reduction error of the nonlinear dynamics they support. Finally, we prove that higher-order interactions naturally emerge from the dimension reduction, thus providing insights into the origin of higher-order interactions in complex systems.
}
\vspace{\baselineskip}

\twocolumngrid

Unraveling the emergent phenomena that drive the functions of complex systems requires to rally the microscopic mechanisms with the macroscopic ones.
Rather than decomposing complex systems in as many components as possible, dimension reduction seeks a reduced system of macrostates or observables with a small enough dimension to get an insightful description, but large enough to preserve the phenomena of interest. 
Yet, complex systems are characterized by extremely high dimensions---perhaps some sort of curse of dimensionality~\cite{Bellman1957, Ganguli2012, Abbott2020}---and finding such reduced system remains a challenge in several scientific disciplines. 

In the paradigm ``More is different''~\cite{Anderson1972, Strogatz2022}, it could appear contradictory to look for simple representations of complex systems. But ``simple model'' does not mean ``simple behavior'': the logistic equation~\cite{May1976}, cellular automata~\cite{vonNeumann1963, Wolfram1984}, or spin glasses~\cite{Parisi1993, Stein2013} exhibit complex behaviors such as chaos, and 
recurrent neural networks can approximate any finite trajectory of $N$-dimensional dynamical systems~\cite{Funahashi1993}.

\begin{figure*}[t]
    \centering
    \includegraphics[width=\textwidth]{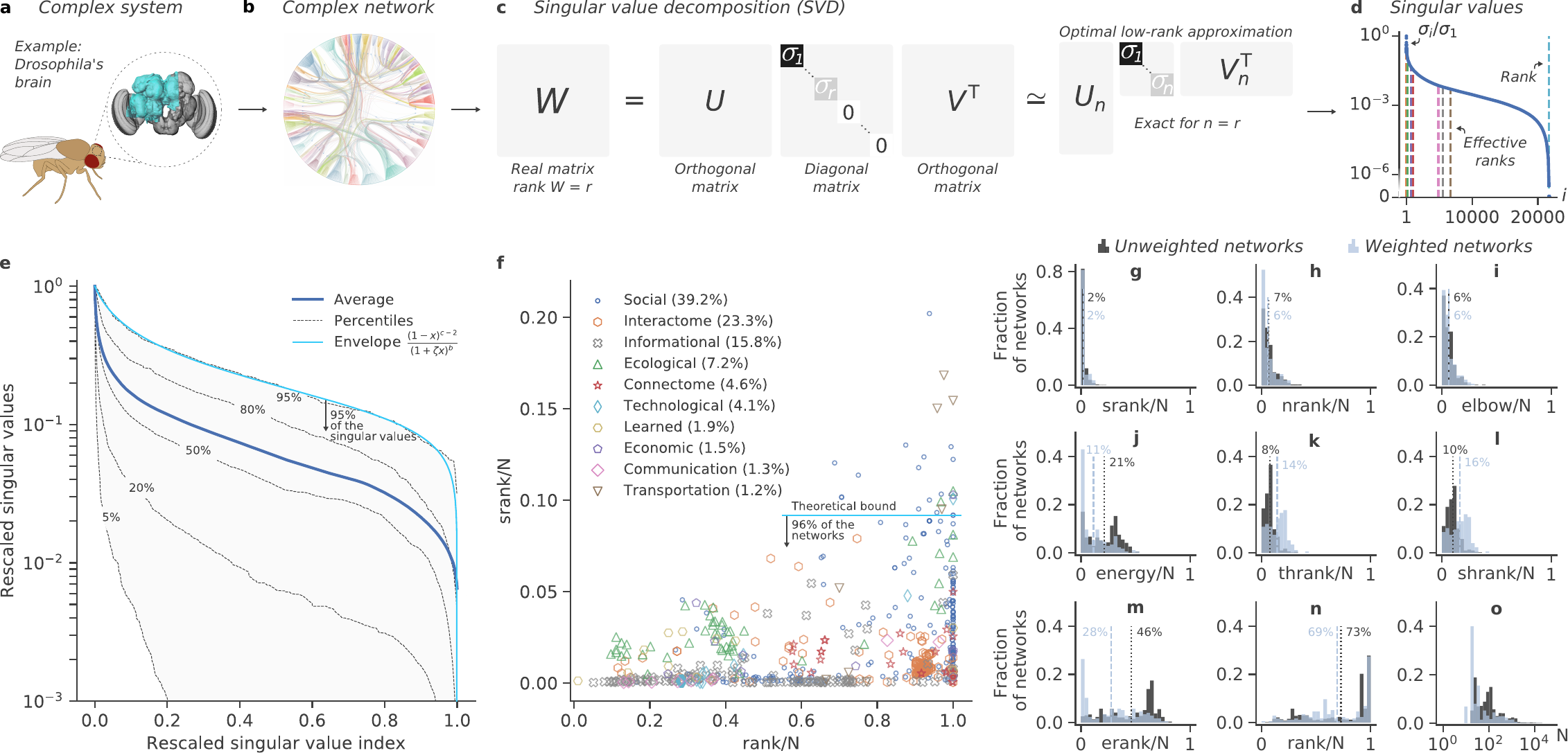}
    \vspace{-0.5cm}
    \caption{\textbf{Experimental verification of the low-rank hypothesis for real networks}.
    \textbf{a},~\textit{Drosophila melanogaster}'s hemibrain as an example of complex system. The open-source image of the hemibrain is from Ref.~\cite{Scheffer2020}. \textbf{b},~A complex network illustration of \textit{Drosophila melanogaster}'s connectome~\cite{Scheffer2020} where only 5\% of the $21 733$ vertices were randomly selected for the sake of visualization. \textbf{c},~The singular value decomposition of a real matrix of rank $r$. The truncated SVD is the optimal low-rank approximation of a matrix, as guaranteed by the Schmidt-Eckart-Young-Mirsky theorem (Theorem~\ref{thm:SEYM}).\\ \textbf{d}, Rapid decrease of the singular values of the matrix describing the \textit{Drosophila melanogaster}'s connectome with the ordinates in logarithmic scale. The vertical dashed lines indicate the rank of the matrix as well as seven measures of effective rank 
    (Methods, Table~\ref{tab:effective_ranks}). \textbf{e}, The average and the percentiles of the singular value distribution of 679 real networks of different origins rescaled by their respective largest singular value (Methods). The shaded background is the region between the 5th and the 95th percentiles. The parameters of the singular-value (hypergeometric) envelope above 95\% of all the singular values are $b \approx 0.54$, $c \approx 2.3$, and $\zeta \approx 25$. \textbf{f}, The stable rank to dimension ratio vs. the rank to dimension ratio for real networks. The theoretical bound, above 96\% of the networks' stable ranks,  is obtained from the singular-value envelope in \textbf{e} and Theorem~\ref{thm:bounds_effective_ranks_hypergeometric_methods}. The approximate proportion of networks is in the parentheses beside the name of each category. Fraction of 679 real networks (502 unweighted networks and 177 weighted networks) vs. \textbf{g-m}, different effective ranks divided by $N$, \textbf{n}, the rank divided by $N$, and \textbf{o}, the number of vertices $N$ with the abscissa shown in log scale. The vertical dashed lines with their corresponding percentage are the averages of the distributions.}
    \label{fig:low_rank_hypothesis}
\end{figure*}

In network science, the topology of the interactions among the constituents of complex systems is typically simplified to a graph, defined by a set of vertices and a set of edges (Figs.~\ref{fig:low_rank_hypothesis}a-b).
Such representation allows the extraction of dominant properties of complex networks, such as their organization into modules~\cite{Fortunato2022}. An ongoing change of paradigm is to use hypergraphs or simplicial complexes rather than graphs to take into account the significant higher-order interactions observed in some real-world systems~\cite{Bianconi2021, Battiston2021}. In addition to finding an appropriate dimension to describe a complex system, one has to uncover the orders of its interactions. As shown later, both problems are intertwined. 

A graph can always be described as a matrix. This simple, yet essential, possibility unlocks several tools from linear algebra to characterize networks. Among them, spectral theory allows identifying the fundamental components of matrices through matrix decomposition. Eigenvalue decomposition has long been used to extract key properties of graphs, such as their invariants~\cite{Wilf1967}, their modular structure~\cite{Donath1973}, the centrality of their vertices \cite{Bonacich1972}, or the bifurcations of dynamical systems taking place on these networks~\cite{Restrepo2005}. 

One pressing challenge in network science is to efficiently adapt the tools of spectral theory to directed, weighted, and signed (e.g., excitatory-inhibitory) networks and hence, to general real matrices. Indeed, eigenvalue decomposition yields complex eigenvalues and complex-valued eigenvectors in general, potentially causing methodological problems (\prettyref{SIsubsec:centrality} and \prettyref{SIsubsec:svd_dynamical_systems}). Worse still, it is not even guaranteed that the matrix representation of the network is diagonalizable. For instance, the trivial directed graph with two vertices connected by one directed edge or any network whose (real) matrix representation, $W$, is rectangular are not diagonalizable (e.g., incidence matrix, interlayer matrix in multilayer networks).

Yet, the matrices $WW^\top$ and $W^\top W$ are always square, symmetric, and thus diagonalizable, which lays the foundations of singular value decomposition (SVD, see Fig.~\ref{fig:low_rank_hypothesis}c and Theorem~\ref{thm:SVD}). 
Interestingly, the decomposition exists for any matrix, the singular vectors are real-valued, and the singular values $\sigma_1,...,\sigma_N$ are nonnegative real numbers. Notably, the number of nonzero singular values equals the rank of $W$. Moreover, SVD inherits various theorems from eigenvalue decomposition~\cite{Horn2013}, such as Weyl's theorem~\cite{Weyl1912, Fan1951}, but it also produces new fundamental results. In particular, SVD is a central tool for dimension reduction in general: the Schmidt-Eckart-Young-Mirsky theorem guarantees that the truncated SVD yields the best low-rank approximation of a matrix (Fig.~\ref{fig:low_rank_hypothesis}c and Theorem~\ref{thm:SEYM}).

The salient properties of SVD and its close relationship with the (effective) rank of a matrix have not yet been completely recognized in network science and spectral graph theory, if we compare to its ubiquity in data science (e.g., matrix completion~\cite{Cai2010}, dynamic mode decomposition~\cite{Kutz2016}, and optimal singular value shrinkage~\cite{Gavish2017}), control theory (e.g., Kalman criterion~\cite{Kalman1960proceeding,*Kalman1960, Yan2017}), random matrix theory (e.g., Mar{\v{c}}enko-Pastur's law~\cite{Marcenko1967}), and linear algebra (e.g., matrix norms~\cite{Horn2013}). SVD is not even mentioned in many of the main introductory textbooks of network science or spectral graph theory (\prettyref{SIsubsec:svd_random_graph}).

Throughout the paper, we leverage the key attributes of SVD to define and evaluate the impact of the low-rank hypothesis of complex systems. Before tackling the case of complex systems as high-dimensional nonlinear dynamical systems, we first expose theoretical evidence of the hypothesis for random graphs followed by an empirical verification of the hypothesis for real networks.

\vspace{\baselineskip} 
\noindent\textbf{Evidence of the hypothesis for network models}\\
It is first instructive to consider random graphs, i.e., sets of graphs equipped with a probability measure that depends on some properties, such as the degrees, the modules, or the distance between vertices in some metric space (\prettyref{SIsubsec:svd_random_graph} and~\prettyref{SIsubsec:exponential_decrease}). Mathematically, they can always be written as random matrices $W = \langle W \rangle + R$, where $\langle W \rangle$ is the expected weight matrix and $R$ is a random matrix with mean 0. 

By examining many widely used random graphs, we observed that their expected matrices involve low-rank matrices. Indeed, we highlight the---usually implicit---assumption that $\langle W \rangle$ is equal to a function $\Phi$ of a low-rank matrix $L$ (Fig.~\ref{fig:exposing_low_rank_hypothesis}a, Table~\ref{tab:random_graphs} in Methods, \prettyref{SIsubsec:svd_random_graph}). In many cases, $\Phi(L) = L$ and it is straightforward to see the low rank of $\langle W \rangle$ since it can be written into its rank-factorized form. A particular Weyl inequality already establishes an expected, but important, outcome of the hypothesis: a small random part $R$ ensures that each singular value of $W$ are close to those of $\langle W \rangle$, i.e.,
\begin{align}\label{eq:weyl}
    \Delta_i = \left|\sigma_i(W) - \sigma_i(\langle W \rangle)\right| \leq \|R\|_2 
\end{align}
for all $i\in\{1,...,N\}$, where $\sigma_i(A)$ denotes the $i$-th singular value of $A$ and $\|\cdot\|_2$ denotes the spectral matrix norm~(\prettyref{thm:weyl_vas} and Corollary~\ref{cor:weyl_vas}). Viewing $W = \langle W \rangle + R$ with $\langle W \rangle = L$ and $\rank(L) = r$ as a spiked random matrix \cite{feral2007largest,capitaine2009largest, benaych2011eigenvalues, Benaych-Georges2012, pizzo2013finite} offers an even more precise perspective. For such matrices, the singular values have a ``bulk'' related to the singular values of $R$ and the creation or annihilation of outlying singular values is asymptotically characterized by the Baik-Ben Arous-Péché (BBP) phase transition \cite{Baik2005}. Notably, the presence of $p \leq r$ singular values outliers in $W$ only depends upon a threshold on the dominant singular values of $\langle W \rangle$, namely $\sigma_1(\langle W \rangle), ..., \sigma_r(\langle W \rangle)$~\cite{Benaych-Georges2012} (\prettyref{SIsubsec:svd_random_graph}). Therefore, a low rank $r$ for $\langle W \rangle$ together with mild threshold conditions imply that the largest singular values of $W$ are located in the vicinity of $\sigma_1(\langle W \rangle), ..., \sigma_r(\langle W \rangle)$, which is a first indicator of the low-rank hypothesis.

However, the low rank of $\langle W \rangle$ is not always obvious, such as in the cases of the directed soft configuration model and its weighted version. Indeed, their expected weight matrices are nonlinear functions of rank-one matrices (Methods). Leveraging Weyl's inequalities, we demonstrated for both models that the singular values of $\langle W \rangle$ are bounded above by an exponentially decreasing term (Theorem~\ref{thm:upper_bound_singvals_scm_simplified} in Methods, Figs.~\ref{fig:exposing_low_rank_hypothesis}e~and~\ref{fig:exposing_low_rank_hypothesis}i). Figs.~\ref{fig:exposing_low_rank_hypothesis}b--\ref{fig:exposing_low_rank_hypothesis}i illustrate how the singular values of $W$ in four different weighted random graphs and two noise regimes inherit the decreasing trend of the dominant singular values of $\langle W \rangle$, while the subdominant ones are related to $R$. 
The rapid decrease of the dominant singular values of $W$ hints at the approximate low rank of a network and thus constitutes a second crucial indicator of the low-rank hypothesis. 

%
%
\begin{figure*}[t]
    \centering
    \includegraphics[width=0.9\textwidth]{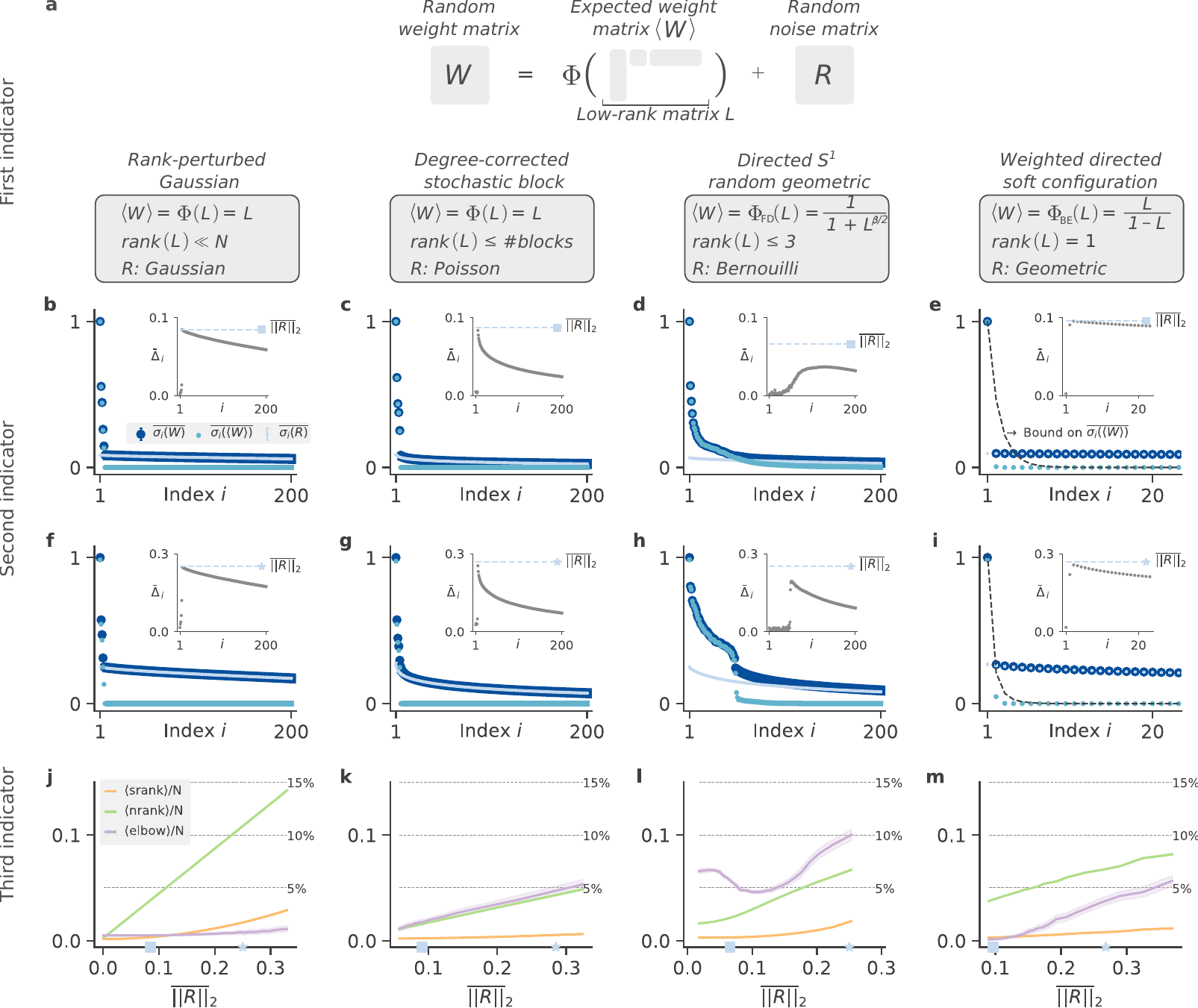}
    \caption{\textbf{Three indicators of the low-rank hypothesis for random graphs}.
    \textbf{a}, Many random graphs have a random matrix representation where the expected weight matrix $\langle W \rangle$ is a matrix-valued function $\Phi$ of a low-rank matrix $L$ plus a centered random part $R$. Four examples of random matrices with different weight distributions and functions $\Phi$ are illustrated and aligned with their subfigures below. The functions $\Phi_{\mathrm{FD}}$ and $\Phi_{\mathrm{BE}}$ respectively stand for a Fermi-Dirac distribution with inverse temperature $\beta$ and a Bose-Einstein distribution where the division is element-wise, e.g., the element $(i,j)$ of $L/(1 - L)$ is $L_{ij}/(1 - L_{ij})$. \textbf{b}-\textbf{i}, The rescaled and averaged singular values of the random weight matrix, its expected part, and its random part for each random graphs are shown in two noise regimes (square markers for $\overline{\|R\|}_2$ near 0.1 [\textbf{b}-\textbf{e}] and star markers for $\overline{\|R\|}_2$ near 0.3 [\textbf{f}-\textbf{i}]). The singular values are respectively denoted $\overline{\sigma_i(W)}$, $\overline{\sigma_i(\langle W \rangle)}$, and $\overline{\sigma_i(R)}$ (from darker to lighter blue markers) where $\overline{x} = \langle x \rangle/\langle\,\|W\|_2\,\rangle$ and $\langle\,\rangle$ denotes the average over the ensemble of graphs. Error bars indicate the standard deviation of the singular values, but are too small to be seen. The random graphs have $N = 10^3$ vertices and only the first 200 (or 20 in \textbf{e} and \textbf{i}) singular values are shown for the sake of visualization. The dashed black lines in \textbf{e} and \textbf{i} are the rescaled upper bounds on the singular values of $\langle W \rangle$ in Theorem~\ref{thm:upper_bound_singvals_scm_simplified} (Methods) with root-mean-square errors over all $i\in\{1,...,N\}$ of 0.02 in \textbf{e} and 0.006 in \textbf{i}. The insets show the rescaled and averaged $\Delta_i$ and its upper bound defined in Eq.~\eqref{eq:weyl}. \textbf{j}-\textbf{m}, The evolution of three effective ranks (averaged over the ensemble of graphs and rescaled by $N$) according to the strength of the noise $\|R\|_2$ is shown. The shaded areas are the standard deviations of the effective ranks. 
    The parameters used for each random graphs can be found in Methods.
    }
    \label{fig:exposing_low_rank_hypothesis}
\end{figure*}

The attributes ``rapid decrease'' and ``approximate low rank'' remain to be quantified, however. To do so, we invoke the notion of \textit{effective ranks}. For instance, the stable rank measures the relative importance of the squared singular values with respect to $\sigma_1^2$ (Methods, Table~\ref{tab:effective_ranks}). In Figs.~\ref{fig:exposing_low_rank_hypothesis}j--\ref{fig:exposing_low_rank_hypothesis}m, we depict its persistence with the increase of the noise level in four random graphs. How ``low'' is an effective rank of a random graph is better understood through its asymptotic behavior as $N\to \infty$ (Methods).
Different singular value decreases lead to different asymptotic behaviors for the effective ranks, from constant $O(1)$ and sub-linear growth $O(N^{1-\epsilon})$ with $\epsilon \in (0, 1]$ to linear growth $O(N)$ (\prettyref{SIsubsec:impact_singvals_effrank}). Notably, sub-linear growth implies that the effective ranks to dimension ratio fall to zero asymptotically as $O(N^{-\epsilon})$~:~we will thus say that an effective rank is low if it grows at most sub-linearly.
For example, we demonstrate that any growing network model with exponentially decreasing singular values (e.g., soft configuration models) imply the lowest asymptotic behavior $O(1)$ for the stable rank and two other effective ranks (Methods, Corollary~\ref{cor:expo_o1_methods}). However, when dealing with a single instance of a random graph or with a real network, $N$ should be kept fixed and the above asymptotic perspective is not applicable. Yet, we can give a more subtle, graded, response to the question ``how low ?'' with effective rank to dimension ratios: values much smaller than 1 indicate that few singular values contribute significantly in the SVD, meaning that $W$ can be well approximated by a low-rank matrix. Having small effective rank to dimension ratios is thus a third indicator, this time quantitative, of the low-rank hypothesis.

Recapitulating, the low-rank hypothesis has been described with three indicators for random graphs. The second one, the rapid decrease of the singular values, is the central indicator of the hypothesis: the first indicator being a theoretical cause for the decrease and the third indicator being a consequence. The second and third indicators are not tied to any theoretical model and can be applied to any type of networked data. We hence adopt the following general, yet workable, definition of the low-rank hypothesis: it is the assumption that the singular values of the network's weight matrix decrease rapidly, implying low effective ranks. We now put this hypothesis to the test. 

\vspace{\baselineskip} 
\noindent\textbf{Verification of the hypothesis for real networks}\\
Despite its frequent use---often implicit, but sometimes very explicit~\cite{Valdano2019, Beiran2021}---the low-rank hypothesis has yet to be verified experimentally for real networks in all their diversity. 

%
%
%
%
%
%
\begin{figure*}[t]
    \centering
    \includegraphics[width=0.7\linewidth]{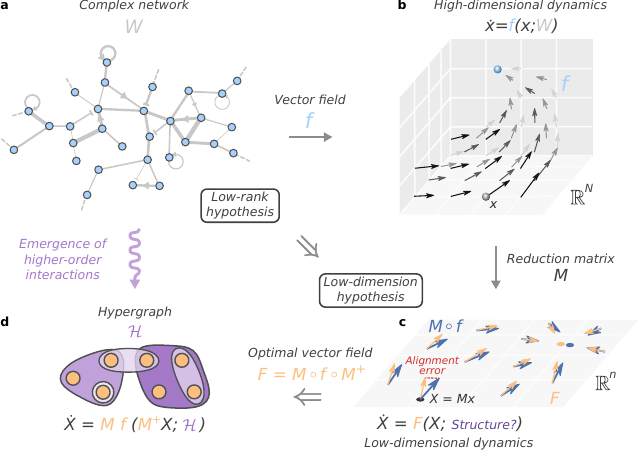}
    \caption{\textbf{The low-rank hypothesis of complex systems and the emergence of higher-order interactions.}\\ \textbf{a}, A complex network represented as a weighted (edges' width), signed, and directed (edges with arrows or a perpendicular line for inhibition) graph with weight matrix $W$. \textbf{b}, A vector field $f$ of a $N$-dimensional dynamical system on a network converging to an equilibrium point. \textbf{c}, Dimension reduction of a dynamical system through the reduction matrix---a linear transformation $M: \mathbb{R}^N \to \mathbb{R}^n$; $x \mapsto X = Mx$. The blue arrows illustrate the exact vector field $M\circ f$ in $\mathbb{R}^n$ (where $\circ$ is the function composition) while the orange arrows represent an approximate vector field $F$. Dimension reduction is about aligning the vector fields, i.e., minimizing alignment errors. \textbf{d}, The least-square optimal vector field $M\circ f \circ M^+$ yields higher-order interactions between the observables $X_1,...,X_n$ represented by some general hypergraph $\mathcal{H}$ with $n$ vertices. The hyperedges are represented by the shaded regions, their weight and their orientation (\prettyref{SIsubsec:emergence}) are not illustrated to avoid cluttering the figure. Note that we make a slight abuse of notation by considering $x$ (resp. $X$) as a function of time and also as a point in $\mathbb{R}^N$ (resp. $\mathbb{R}^n$).} 
    \label{fig:low_dimension_hypothesis}
\end{figure*}

Our experiments revealed that the rapid decay of the singular values in real networks is the norm. As an example, we illustrate the singular value profile of the connectome of \textit{Drosophila melanogaster} in Fig.~\ref{fig:low_rank_hypothesis}d. Figure~\ref{fig:low_rank_hypothesis}e presents a coalesced view of the singular value profiles for 679 real networks from 10 different origins. 
As a guide to appreciate the decreases, we trace a general singular-value envelope below which 95\% of the singular values of all the networks belong. 

Having an explicit form for the singular-value envelope allows interpreting the stable rank as the area under a curve (\prettyref{SIsubsec:impact_singvals_effrank}) and then to find a theoretical bound below which most of the networks' stable ranks lie (Methods, Theorem~\ref{thm:bounds_effective_ranks_hypergeometric_methods}). In Fig.~\ref{fig:low_rank_hypothesis}f, we illustrate the stable rank of the real networks along with the theoretical bound above 96\% of the networks, which indicates that the stable rank is generally expected to be less than 10\% of the number of vertices $N$. 

To ensure that this observation is not limited to the stable rank, we report in Figs.~\ref{fig:low_rank_hypothesis}g--\ref{fig:low_rank_hypothesis}m similar observations for other effective ranks (Methods). Having larger values than srank is not surprising for nrank and erank. In fact, it is easily shown that $\mathrm{srank} \leq \mathrm{nrank} \leq \mathrm{erank} \leq \mathrm{rank}$ (Methods). Contrarily to the effective ranks, the rank of real networks is often comparable to their dimension (Fig.~\ref{fig:low_rank_hypothesis}n). This observation is expected, especially for weighted networks with real weights, since non-invertible matrices form a set of measure 0.

The datasets considered consist in real networks with fixed $N$, but the asymptotic behaviors of their effective ranks can still be evaluated as if there was a related growing graph whose singular values remain within experimental singular-value envelopes as $N$ grows. Using this approach, we prove that singular-value envelopes such as the one in Fig.~\ref{fig:low_rank_hypothesis}e admits constant and sublinear growth for srank, nrank, and erank (Methods). 

All in all, we show that many real networks have rapidly decreasing singular values, leading to low effective ranks. Interestingly, such observation seems to be widespread for big data matrices~\cite{Gao2015a, Beckermann2017, Udell2019}, but it remains a puzzling phenomenon. In particular, the consequences of these observations for high-dimensional nonlinear dynamics on networks are still to be untangled, which is the subject of the next section.

\begin{figure*}[t]
    \centering
    \includegraphics[width=1\linewidth]{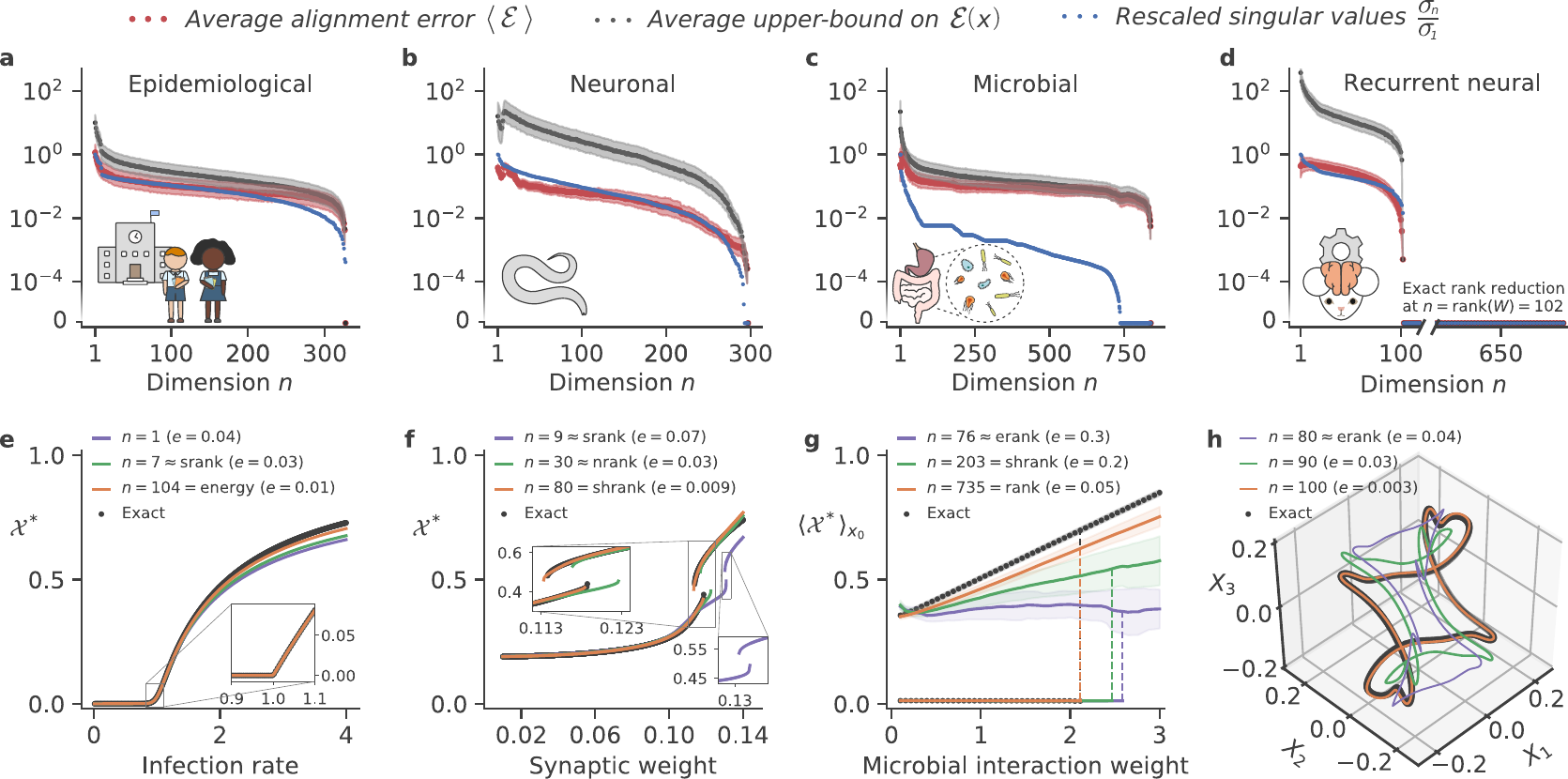}
    \caption{\textbf{Dimension reduction errors for nonlinear dynamics on real complex networks in relation with their singular values and effective ranks.} \textbf{a}-\textbf{d}, The decrease of the alignment error $\mathcal{E}(x)$ (red markers) is in accordance with the rapid decrease of singular values (blue markers) as expected by the analytical upper bound in Eq.~\eqref{eq:upper_bound_rapid_decrease_letter} (solid black line). The shaded regions in gray and light red represent the standard deviation of the upper bound and the error respectively. We have $10^3$ different samples for $x$ and the parameters for each $n$ and the upper bounds are computed exactly in \textbf{a}~and~\textbf{c}, while they are approximated in \textbf{b}~and~\textbf{d} (details in \prettyref{SIsubsec:evaluate_bound}).
    \textbf{e}-\textbf{h}, Comparison of the bifurcation diagrams (resp. trajectories in \textbf{h}) for the global observable, denoted $\mathcal{X}^* = w\cdot X^*$ at equilibrium where $w$ is a $n\times 1$ real vector specific to the dynamics, of the complete dynamics (black markers) vs. the reduced dynamics (solid colored lines) at different dimensions $n$ with root-mean-square errors $e$ shown in parentheses (Methods).
    \textbf{a} and \textbf{e}, Epidemiological dynamics (quenched mean-field SIS) on a high-school contact network ($N = 327$, undirected, binary) rescaled by the largest singular value.
    \textbf{b} and \textbf{f}, Neuronal (Wilson-Cowan) dynamics on the \textit{C. elegans} connectome ($N = 297$, signed, weighted, directed).
    \textbf{c} and \textbf{g}, Microbial population dynamics on a human gut microbiome network ($N = 838$, signed, weighted, directed). Note that there are multiple stable upper branches depending on the initial condition $x_0$ (Methods). Here we show an average on $x_0$ of the upper branches (black markers and solid colored lines) with the standard deviation (shaded regions) and we show one lower branch. The loss of stability of the lower branch is indicated by a dashed vertical line that connects it, for visualization purpose, to the average of the upper branches.
    \textbf{d} and \textbf{h}, Recurrent neural network (RNN) dynamics on a learned network ($N = 669$, signed, weighted, directed) for which we have shrunk its singular values using optimal shrinkage with the Frobenius norm~\cite{Gavish2017} to emphasize the fact that dimension reduction for the RNN dynamics is exact when $n$ is the rank of the network~(Methods).}
    \label{fig:error_vector_fields}
\end{figure*}

\vspace{\baselineskip}
\noindent\textbf{Induced low-dimension hypothesis}\\
\noindent 
Intuitively, we expect that having low (effective) rank networks gives grounds to dimension reduction of dynamics on these networks. Consider the complete dynamics $\dot{x} = f(x\,;\,W)$, where $x(t)\in\mathbb{R}^N$ is the system's state at time $t$, $f:~\mathbb{R}^N~\to~\mathbb{R}^N$ is a continuously differentiable vector field, and $W$ is a $N\times N$ weight matrix describing the network (Figs.~\ref{fig:low_dimension_hypothesis}a-~\ref{fig:low_dimension_hypothesis}b). More specifically, given $g:\mathbb{R}^N\times \mathbb{R}^N \to \mathbb{R}^N$ and $W$ ($x(t)$ is unknown), we examine the subclass of dynamics $\dot{x} = g(x, y)$ where $y=Wx$. 

Considering this subclass of dynamics already highlight an important implication of the low-rank hypothesis. The linear function $x \mapsto y = Wx$ in $g$ has a very special role: even if $x$ is part of a $N$-dimensional manifold, when $W$ has a low rank, the vector in the image of $W$ will be part of a low-dimension submanifold. Even if $W$ has full rank, our experimental observations in Fig.~\ref{fig:low_rank_hypothesis} show that it is likely to have a low effective rank. We can hence say that $Wx$ will be part of an effectively low-dimension submanifold.

Just as some random graph models are crafted from a nonlinear function $\Phi$ of a low-rank matrix $L$ (see Fig.~\ref{fig:exposing_low_rank_hypothesis}a), the vector field $g$ depends nonlinearly on $Wx$, making it challenging to assess the low dimensionality of $g(x,y)$. Despite recent developments~\cite{Gao2016, *Tu2017, *Jiang2018, *Laurence2019, *Vegue2023, *Kundu2022, Thibeault2020}, it remains unclear how to choose a dimension for the reduced dynamics and how to quantify the reduction error for nonlinear dynamics on complex networks.

Dimension reduction of dynamical systems can be imagined as the problem of aligning a low-dimensional vector field with its high-dimensional counterpart (Fig.~\ref{fig:low_dimension_hypothesis}c and \prettyref{SIsubsec:generalities}). 
This involves selecting a $n\times N$ reduction matrix $M$ that maps the elements of the complete system to the reduced system, as well as a vector field $F$ describing the evolution of a set of observables $(X_{\mu})_{\mu=1}^n$ in $\mathbb{R}^n$. The alignment error in $\mathbb{R}^n$ at $x\in\mathbb{R}^N$, denoted $\mathcal{E}(x)$, can then be defined as the error between the vector fields $M\circ f$ and $F\circ M$ (Methods).



Minimizing the alignment error to find the optimal pair $(M, F)$ is challenging in general (\prettyref{SIsubsec:generalities}) and the best choice hinges on the modeler's objective. For instance, selecting $M$ to ensure that the temporal evolution of $X$ remains interpretable throughout time (e.g., synchronization observables~\cite{Thibeault2020}), might further complicate the optimization problem. 

Let us concentrate on identifying $F$ without taking into account $M$ for now.
Using least squares, we proved that $M\circ f \circ M^+$ minimizes an alignment error in $\mathbb{R}^N$, where ${}^+$ denotes pseudoinversion (Methods). Doing so allowed us to show, for $\dot{x} = g(x, y)$, that the alignment error $\mathcal{E}(x)$ caused by the least-square vector field satisfies
 \begin{multline}\label{eq:upper_bound_unsimplified}
     \sqrt{n}\,\mathcal{E}(x) \leq  \|MJ_x'(I-M^+M)x\| \\
     + \|W(I-M^+M)\|_2\|MJ_y'\|_2\|x\|\,, 
 \end{multline}
where $J_x'$ and $J_y'$ are Jacobian matrices (Methods). 

Interestingly, the previous inequality suggests a non-arbitrary way of selecting the reduction matrix. Indeed, \begin{equation}\label{eq:choiceM}
    M=V_n^\top
\end{equation}
minimizes the factor $\|W(I-M^+M)\|_2$ related to the interactions in the system, generally making each observable $X_\mu$ global, i.e., containing information on most vertices (Methods).

The choice made in Eq.~\eqref{eq:choiceM} prompted us to derive another inequality revealing the contribution of the network singular values to the alignment error (Methods, Theorem~\ref{thm:upper_bound_rapid_decrease_simplified}):
\begin{align}\label{eq:upper_bound_rapid_decrease_letter}
    \sqrt{n}\,\mathcal{E}(x)& \leq\|V_n^{\top}J_x'(I-P)x\| +  {\sigma_{n+1}} \|V_n^{\top}J_y'\|_2\|x\|,
\end{align}
where $P=V_nV_n^\top$. Notably, the inequality provides a criterion for exact dimension reduction: if $J_x' = dI$ for $d\in \mathbb{R}$ and $n = \rank(W)$, the upper bound vanishes to zero and the dimension reduction is exact (Methods). Consequently, a general class of dynamics, including recurrent neural networks and the Wilson-Cowan neuronal dynamics, can be exactly reduced (Methods). 
The upper bound~\eqref{eq:upper_bound_rapid_decrease_letter} is meant to be intuitive (not necessarily tight): it connects the swift decay of singular values of a network with the dimension reduction error.
As a basic example, the relative alignment error $\mathcal{E}(x)/\|x\|$ for the linear system $\dot{x} = Wx$ is simply upper-bounded by $\sigma_{n+1}/\sqrt{n}$, meaning that a rapid decrease of the singular values of $W$, be it related to an arbitrarily weighted network, directly induces a rapid decrease of the alignment error.

Figure~\ref{fig:error_vector_fields}a-d illustrates the decrease of the alignment error with $n$ ---the latter being in accordance with the rapid decay of the upper bound and of the singular values---in four dynamics on real networks. We show how $n$ can be tuned to predict an epidemic in an epidemiological dynamics (Fig.~\ref{fig:error_vector_fields}e), a hysteresis in a neuronal dynamics (Fig.~\ref{fig:error_vector_fields}f), stable branches in a microbial dynamics (Fig.~\ref{fig:error_vector_fields}g), or a limit cycle in a recurrent neural network (Fig.~\ref{fig:error_vector_fields}h). While effective ranks can help select a suitable dimension $n$ to describe a collective phenomenon, we use them only as an indication: $n$ should be chosen according to the modeler's tolerance to qualitative (e.g., is the hysteresis preserved?) or quantitative (e.g., is the predicted transition accurate?) errors. It thus becomes clear that having low (effective) rank matrices describing complex networks gives ground to dimension reduction of nonlinear dynamics on these networks.

The reduced system is akin to a low-dimensional dynamics taking place on a smaller structure, whose nature remains to be specified (Fig.~\ref{fig:low_dimension_hypothesis}c).
We show in the next section that dimension reduction ultimately leads to the emergence of higher-order interactions, as illustrated in Fig.~\ref{fig:low_dimension_hypothesis}d.

\vspace{\baselineskip}
\noindent\textbf{Emergence of higher-order interactions}\hfill \break
Theoretical and experimental evidence for the existence of higher-order interactions in various complex systems has been reported and its consequences---e.g., on explosive transitions~\cite{Kuehn2021} or mesoscopic localization~\cite{St-Onge2021_PRLmeso}---have been extensively studied~\cite{Battiston2020a}. However, their origin remains under active investigation, notably for oscillatory systems~\cite{Matheny2019, Nijholt2022} (\prettyref{SIsubsec:emergence}).

Using our framework, a simple example readily provides insights over the emergence of higher-order interactions. Consider the epidemiological dynamics $\dot{x}_i = -d_ix_i + \gamma\,(1 - x_i)\, y_i$ with $i\in\{1,...,N\}$, where $x_i$ is the probability for the vertex $i$ to be infected, $y = Wx$ while $d_i$ and $\gamma$ denote the recovery rate of vertex $i$ and the infection rate respectively. The reduced dynamics is then given by
\begin{align}\label{eq:reduced_qmf_sis}
    \dot{X}_{\mu} = &\sum_{\nu=1}^n \left(\mathcal{D}_{\mu\nu} + \mathcal{W}_{\mu\nu}\right)X_{\nu}  \\- \gamma &\sum_{i=1}^N M_{\mu i} \left(\sum_{\nu=1}^n M_{i\nu}^+X_{\nu}\right)\left( \sum_{j=1}^N\sum_{\kappa = 1}^n W_{ij} M_{j\kappa}^+X_{\kappa}\right)\nonumber
\end{align}
for all $\mu\in\{1,...,n\}$, where $\mathcal{D} = -MD M^{+}$ is a reduced $n\times n$ recovery rate matrix with $D = \diag(d_1,...,d_N)$, and $\mathcal{W} = \gamma MWM^+$ is a reduced $n\times n$ weight matrix.

Let us inspect the last term in Eq.~\eqref{eq:reduced_qmf_sis} more carefully. For simplicity, consider that $M^+ = M^\top$, i.e., $M$ has orthogonal rows. Then, $M_{\mu i}$ quantifies the influence of vertex $i$ on the $\mu$-th observable, $M_{i\nu}^\top X_{\nu}$ is the influence of the $\nu$-th observable weighted by its dependence over vertex $i$, and $W_{ij}M_{j\kappa}^\top X_{\kappa}$ is the influence of the $\kappa$-th observable weighted by its dependence over vertex $j$ that connects to vertex $i$. Altogether, these factors form a third-order interaction between the observables $X_{\mu}$, $X_{\nu}$, and $X_{\kappa}$, an observation that becomes more explicit by rearranging Eq.~\eqref{eq:reduced_qmf_sis} as
\begin{equation}
    \dot{X}_{\mu} = \sum_{\nu=1}^n \left(\mathcal{D}_{\mu\nu} + \mathcal{W}_{\mu\nu}\right)X_{\nu} +\sum_{\nu, \kappa = 1}^n \mathcal{T}_{\mu\nu\kappa}X_{\nu}X_{\kappa}\,,
\end{equation}
where the third-order interactions are encoded in a third-order tensor $\mathcal{T}$ with elements
\begin{align}\label{eq:third_order_interactions}
    \mathcal{T}_{\mu \nu \kappa} = - \gamma\sum_{i,j=1}^NM_{\mu i}M_{i\nu}^+ W_{ij}  M_{j\kappa}^+
\end{align}
for all $\mu,\nu,\kappa \in\{1,...,n\}$. Hence, the resulting structure of the reduced system is a hypergraph $\mathcal{H}$ with $n$ vertices (Fig.~\ref{fig:low_dimension_hypothesis}c-d; see \prettyref{SIsubsec:emergence}), which is generally directed~\cite{Gallo1993}, weighted, signed, and formed from $\mathcal{D}$, $\mathcal{W}$, and $\mathcal{T}$.

Beyond the influence of dynamical parameters like the weight matrix $W$, Eq.~\eqref{eq:third_order_interactions} highlights the crucial role of the reduction matrix $M$ in shaping higher-order interactions. Indeed, $M$ partially determines the directed, weighted, and signed nature of the hypergraph. Moreover, if the observables respectively depend on disjoint groups of vertices, i.e., $M_{\mu i} \propto \delta_{\mu\,s(i)}$, where $\delta$ is the Kronecker delta and $s$ maps each vertex $i$ to its group, then the tensor with elements in Eq.~\eqref{eq:third_order_interactions} can be exactly mapped to a matrix. In other words, in the epidemiological dynamics, the higher-order interactions emerge from observables depending on overlapping groups of vertices (e.g., $M = V_n^\top$ in general). Interestingly, such overlapping is a very common characteristic of complex networks such as social networks~\cite{Palla2005}.

These observations encouraged us to seek generic conditions for such emergence. For $\dot{x}_i = h_i(x_i, y_i)$, where $h_i~:~\mathbb{R}^2~\to~\mathbb{R}$ is an analytical scalar field for all $i\in\{1,...,N\}$, we proved that the least-square optimal vector field depends upon higher-order interactions between the observables $X_1,...,X_n$ (Methods, Proposition~\ref{prop:emergence_simplified}). We then deduced two insightful consequences. First, if the scalar field is a polynomial of total degree $\delta$ in $x_i$ and $y_i$ for all $i$, the hypergraph of the reduced system has interactions of maximal order $\delta + 1$ (Methods, Corollary~\ref{cor:emergence_polynomial}). Second, having observables depending on disjoint groups of vertices is not sufficient to avoid higher-order interactions in general: the nonlinearity in $y_i$ also plays its part (Methods, Corollary~\ref{cor:sufficient_condition_emergence}). Other worked-out examples for a microbial and an oscillator dynamics are given in Extended Data Table~1 to complement the previous observations on the epidemiological dynamics.  

All in all, our results suggest that many instances of higher-order interactions could be a byproduct of the low-dimensional (macroscopic) representation chosen to model a wide variety of complex systems. They clarify the essential role of the description dimension and of the nonlinearity of the original system in shaping the interactions of the ensuing reduced system.

\vspace{\baselineskip}
\noindent\textbf{Conclusions and outlook}\hfill\break
In this paper, we established the ubiquity of the low-rank hypothesis in complex systems and its consequences, from the dimension reduction of high-dimensional nonlinear dynamics on networks to the emergence of higher-order interactions. 

Our experimental results suggest that the low-rank hypothesis is perhaps not only a hypothesis, but something intrinsic to many real complex systems. Our findings hint at the possibility that some emergent collective phenomena are consequences of much fewer variables than what would be expected a priori, thanks to the low-rank nature of their complex network. However, the low-rank hypothesis should be used very carefully: the effective ranks of real networks are often at a non-negligible fraction of $N$ and adopting the low-rank hypothesis unknowingly can lead to an oversimplified model of a given complex system. It thus seems relevant to design new random graphs based on the observed singular values of real networks. Networks' singular values are not a mere abstraction from spectral theory: like the degree, the clustering or the reciprocity, they have an intuitive interpretation as indicators of the effective dimension of complex networks/systems.

Our theoretical framework also suggests that inferring the connections in complex systems from time series observed at a relatively coarse-grained resolution (e.g., local field potentials in the brain~\cite{Yu2011} or abundances in plant communities~\cite{Mayfield2017}) is likely to reveal significant higher-order interactions. We conjecture that monitoring complex systems at different scales experimentally will clarify the role of the dimension at which the measurements are done on the emergence of higher-order interactions.
Dimension reduction of dynamics on higher-order networks~\cite{FerrazdeArruda2021, Bianconi2021} is also to be pursued, perhaps through Tucker decomposition~\cite{Qi2017}. 

Nevertheless, determining the precise form of the dominant observables that drive the behavior of complex systems remains an open problem. While we focused on linear observables, there might exist a small set of nonlinear observables well suited for a given high-dimensional dynamics~\cite{Watanabe1994}. However, finding appropriate, intuitive, nonlinear observables is much harder~\cite{Brunton2022}. Our observations on the effective ranks of real networks also motivate further research on the inference of interpretable low-rank models from time series~\cite{Valente2022a}.

Finally, one defining property of complex systems that we have not addressed is their capacity for adaptation~\cite{Holland1995}. Our preliminary results suggest that the low effective rank of complex networks plays a central role for controlling~\cite{Montanari2022, Sanhedrai2022} and assessing the resilience of complex adaptive systems~\cite{Desrosiers2022}. This, alongside indications that maturation or learning could reduce network's effective ranks (\prettyref{SIsubsec:adaptive_networks} and Ref.~\cite{Martin2021a}), will be the topic of an upcoming publication.


\vspace{0.3cm}
 
\noindent\textbf{Acknowledgments}.
We are grateful to Gabriel Eilerstein for sharing the code to extract the weight matrices from the repository NWS, Gáspár Jékely for sharing the neuronal and desmosomal connectomes of \textit{Platynereis dumerilii}, Charles Murphy for useful discussions on artificial neural networks, Guillaume St-Onge for his comments on the preprint, and Xavier Roy-Pomerleau for helping to explore the microbial dynamics numerically. We thank Émile Boran for his fundamental contribution to linear algebra. This work was supported by the Fonds de recherche du Qu\'ebec -- Nature et technologies (V.T., P.D.), the Natural Sciences and Engineering Research Council of Canada (V.T., A.A., P.D.), and the Sentinelle Nord program of Universit\'e Laval, funded by the Canada First Research Excellence Fund (V.T., A.A., P.D.).

\vspace{0.3cm}

\noindent\textbf{Author contributions}.
All authors contributed to the formulation of the study, the interpretation of the results, and the edition of the paper. V.T. and P.D. obtained the mathematical results and conceived the conceptual basis of the project. V.T. led the writing of the manuscript, wrote the supplementary information with P.D., designed the figures, wrote the code, and performed the numerical experiments to generate the results. V.T., A.A., and P.D. contributed to the code and analyzed the data to generate Fig.~\ref{fig:low_rank_hypothesis}.

\vspace{0.3cm}

\noindent\textbf{Competing interests}. The authors declare no competing interests.

\clearpage

\noindent\textbf{Methods}
\vspace{0.2cm}

\footnotesize

\noindent\textbf{Random graphs.} 
A random graph can be described by a random matrix
\begin{equation}\label{eq:random_matrix_methods}
    W = \langle W \rangle + R\,,
\end{equation}
where $\langle W \rangle$ is the expected weight matrix and $R$ is a zero-mean random matrix. Even if one instance in a typical model is generally of full rank $N$, the expected weight matrix $\langle W \rangle$ is often defined as an element-wise function of a low-rank matrix $L$, i.e., 
\begin{equation}
    \langle W \rangle = \big(\phi(L_{ij})\big)_{i,j=1}^N\,,
\end{equation}
where $\phi$ is a real-valued function of a real variable. This is an alternative, but equivalent, way to write $\langle W \rangle = \Phi(L)$ as in the main text. In Table~\ref{tab:random_graphs}, we list some classical examples of random graphs and the corresponding low-rank matrices.

\begin{table}[ht]
\caption{\label{tab:random_graphs}
\footnotesize Low-rank matrix $L$ characterizing the expected adjacency matrix for different random graphs of $N$ vertices. 
SBM:~Stochastic Block Model, CL:~Chung-Lu, MD:~Metadegree, DSCM:~Directed Soft Configuration Model, RDPG:~Random Dot Product Graph, RGM:~Random Geometric Model, RPG:~Rank-Perturbed Gaussian, DCSBM:~Degree-Corrected Stochastic Block Model, ``W'' in front of an acronym stands for ``weighted''. For the $S^D$ RGM, the rank of $L$ is, more precisely, $D$, $D+1$, or $D+2$ which is a consequence of Ref.~\cite[Theorem 7]{Gower1985} and the inequality $\rank(A\circ B) \leq \rank(A)\rank(B)$. The parameters $q$, $r$, $d$ and $D$ are usually assumed to be small compared to $N$. More details about these random graphs and others are given in \prettyref{SIsubsec:svd_random_graph}.} 
\begin{ruledtabular}
\begin{tabular}{l c c c c c}
&\footnotesize\textbf{Model} & \footnotesize\textbf{Low-rank matrix $L$} & \footnotesize{rank($L$)}& \footnotesize{$\phi(L_{ij})$} \\
\colrule
&\footnotesize $\mathcal{G}(N, p)$ & \parbox{3.6cm}{$Np\,\hat{\bm{1}}\hat{\bm{1}}^\top$}& \parbox{1cm}{1} & \parbox{2cm}{$L_{ij}$}\\
\parbox[t]{2mm}{\multirow{3}{*}{\rotatebox[origin=c]{90}{Unweighted}}}&\footnotesize CL &  \parbox{4cm}{$\frac{\|\kappa\|^2}{2M}\bm{\hat{\kappa}}\bm{\hat{\kappa}}^\top$} & \parbox{1cm}{1} & \parbox{2cm}{$L_{ij}$}\\
&\footnotesize DSCM &  \parbox{4cm}{$\|\bm\alpha\|\|\bm\beta\|\,\hat{\bm{\alpha}}\,\hat{\bm{\beta}}^\top$} & \parbox{1cm}{1} & \parbox{2cm}{$\frac{L_{ij}}{1 + L_{ij}}$}\\
&\footnotesize MD &  \parbox{4cm}{$\sum_{\mu,\nu=1}^r\Delta_{\mu\nu}\,\bm{v}_{\mu}\bm{v}_{\nu}^\top$} & \parbox{1cm}{$r$}
& \parbox{2cm}{$L_{ij}$}\\ 
&\footnotesize SBM & \parbox{4cm}{$\sum_{\mu,\nu=1}^q\sqrt{n_{\mu}n_{\nu}}\,p_{\mu\nu}\,\bm{b}_{\mu}\bm{b}_{\nu}^\top$} &  \parbox{1cm}{$\leq q$} &  \parbox{2cm}{$L_{ij}$}\\
&\footnotesize $S^D$ RGM &  \parbox{4cm}{$\frac{R^2}{\mu^2}\left(\bar{\bm{\kappa}}_{\mathrm{in}} \,\bar{\bm{\kappa}}_{\mathrm{out}}^\top\right)\circ \bar\theta\,\,$} & \parbox{1cm}{\hspace{-0.5cm}$\leq D+2$}  & \parbox{2cm}{$\frac{1}{1 + L_{ij}^{\beta/2}}$} \vspace*{0.05cm}\\
\hline \vspace*{0.05cm}
&\footnotesize $\mathcal{G}(N, p, w)$ & \parbox{4cm}{$Npw\,\hat{\bm{1}}\hat{\bm{1}}^\top$} &  \parbox{1cm}{1} &  \parbox{2cm}{$L_{ij}$}\\
&\footnotesize WCL & \parbox{4cm}{$\bm{y}\bm{y}^\top$} &  \parbox{1cm}{1} &  \parbox{2cm}{$L_{ij}$}\\
\parbox[t]{2mm}{\multirow{3}{*}{\rotatebox[origin=c]{90}{Weighted}}}&\footnotesize WDSCM & \parbox{4cm}{$\bm{y}\bar{\bm{y}}^\top$} &  \parbox{1cm}{1} &  \parbox{2cm}{$\frac{L_{ij}}{1 - L_{ij}}$}\\
&\footnotesize RPG &  \parbox{3.6cm}{\phantom{$I^{I^I}$}$\sum_{\mu =1}^r \bm{m}_\mu \bm{n}_\mu^\top$\phantom{$I^{I^I}$}} & \parbox{1cm}{$r$} & \parbox{2cm}{$L_{ij}$}\\
&\footnotesize WSBM & \parbox{4cm}{$\sum_{\kappa,\nu=1}^q\sqrt{n_{\kappa}n_{\nu}}\,\mu_{\kappa\nu}\,\bm{b}_{\kappa}\bm{b}_{\nu}^\top$} &  \parbox{1cm}{$\leq q$} &  \parbox{2cm}{$L_{ij}$}\\
&\footnotesize DCSBM & \parbox{4cm}{$\Lambda \circ (\hat{\bm{\kappa}}_{\mathrm{in}}\hat{\bm{\kappa}}_{\mathrm{out}}^\top)$} &  \parbox{1cm}{$\leq q$} &  \parbox{2cm}{$L_{ij}$}\\
&\footnotesize RDPG &  \parbox{3.6cm}{$\sum_{\mu=1}^d \bm{X}_{\mu}\bm{X}_{\mu}^\top$} & \parbox{1cm}{$\leq d$} & \parbox{2cm}{$L_{ij}$}
\end{tabular}
\end{ruledtabular}
\end{table}
In \prettyref{SIsubsec:svd_random_graph}, we also report random network models involving two low-rank matrices, such as the general weighted soft configuration model, the general weighted directed soft configuration model, and the $S^1$ weighted random geometric model, along with other examples (and counter-examples) from network science (e.g., Watts-Strogatz model), random matrix theory, spin glasses, machine learning, and neuroscience. Based on these observations and those of Ref.~\cite{Valdano2019}, one can create many new random graphs with matrices of different ranks.

It is straightforward to assess the low rank of $L$, but it is harder to assess the low rank of $\langle W \rangle$ when $\phi$ is nonlinear. For example, in the directed soft configuration model (DSCM), $\phi = \phi_{\mathrm{FD}}$, a Fermi-Dirac distribution and in its weighted version (WDSCM), $\phi = \phi_{\mathrm{BE}}$, a Bose-Einstein distribution. For both models, the following theorem demonstrates that the singular values of their expected weight matrix are bounded above by an exponentially decreasing term.
\begin{theorem}[Simplified version of Theorems~\ref{thm:upper_bound_singvals_scm} and \ref{thm:upper_bound_singvals_wdscm}]\label{thm:upper_bound_singvals_scm_simplified}
\phantom{henri}
\newline
Let $\sigma_1\geq ...\geq \sigma_N$ be the singular values of $\langle W\rangle$. If $\langle W_{ij}\rangle = \phi_{\mathrm{FD}}(L_{ij})<1/2$ or $\langle W_{ij}\rangle = \phi_{\mathrm{BE}}(L_{ij})$ for all $i,j\in\{1,...,N\}$, where $L$ is a rank-one matrix, then
\begin{equation}\label{eq:upper_bound_infinite_sum_dscm_methods}
    \sigma_i \leq \sum_{k=i}^\infty \ell_k  \leq \frac{N\gamma^i}{1-\gamma}\qquad \forall\,\,i \in\{1,...,N\},
\end{equation}
 where $\ell_k = \sqrt{\sum_{i,j=1}^N  L_{ij}^{2k}}$ and $\gamma = \max_{i,j}L_{ij}$.
\end{theorem}
The proof is based on Weyl's inequalities~(Theorem~\ref{thm:weyl_vas} in \prettyref{SIsubsec:weyl_facto}) and the truncated geometric series. The bound for $\langle W_{ij}\rangle = \phi_{\mathrm{FD}}(L_{ij})>1/2$ is also given in Theorem~~\ref{thm:upper_bound_singvals_scm}. The upper bounds in Theorem~\ref{thm:upper_bound_singvals_scm_simplified} expose the low-rank formulation of soft configuration models and paves the way for new bounds on the singular values of other random graphs, such as random geometric models.

In Fig.~\ref{fig:exposing_low_rank_hypothesis}, the singular values of $W$, $\langle W\rangle$, and $R$ are shown for the RPG, DCSBM, $S^1$ RGM, and WDSCM. The upper bounds shown in Fig.~\ref{fig:exposing_low_rank_hypothesis}e and i are given by Eq.~\eqref{eq:upper_bound_infinite_sum_dscm_methods} which is computed by summing the constants $n_i > n_{i+1} > ...$ until $n_k$ is smaller than $10^{-12}$. For RPG, the vectors $\bm{m}_{\mu}$ and $\bm{n}_{\mu}$ are instances of different Gaussian distributions and $r = 5$. Instances of truncated Pareto distributions were used to generate the expected degrees (DCSBM and $S^1$ RGM) and $\bm{y}$, $\bar{\bm{y}}$ (WDSCM). The number of blocks $q$ is set to 5 for the DCSBM and the expected number of edges block matrix $\Lambda$ is defined such that there are more edges expected within the blocks than between them. To obtain the norm of the random part $R$ of the random weight matrices (except RPG, where $R$ is already set to be a Gaussian of mean 0), we have generated 100 instances of $W$, we have computed $R = W - \langle W \rangle$, and then its norm for each instance. The spectral norm of $R$ is increased by changing the variance of each Gaussian element in $R$ for RPG, the expected number of edges in DCSBM, the temperature $1/\beta$ in $S^1$ RGM, and the minimum value of $\bm{y}$ and $\bar{\bm{y}}$ in WDSCM. The detailed parameters are given in \prettyref{SIsubsec:svd_random_graph}.

\begin{table}[ht]
\caption{\label{tab:effective_ranks}
\footnotesize Different effective ranks of a matrix of dimension $N\times N$ and of rank $r$ expressed in terms of its singular values $\sigma_1\geq ...\geq \sigma_N$. For \textit{energy}, the constant $\tau$ is a threshold to be set between 0 and 1. For \textit{thrank}, $\sigma_{\mathrm{med}}$ is the median singular value and  $\mu_{\mathrm{med}}$ is the median of a Mar{\v{c}}enko-Pastur probability density function~\cite{Gavish2014}. For \textit{shrank}, $s^*$ denotes an optimal singular value shrinkage function~\cite{Gavish2017, Donoho2018}. The complete names and the details about each of the effective ranks are given in \prettyref{SIsubsec:effective_ranks}.}
\begin{ruledtabular}
\begin{tabular}{l c c}
\parbox{1.8cm}{\footnotesize\textbf{Abbreviation}} & \footnotesize\textbf{Expression} \\
\colrule
\footnotesize srank & \parbox{6.5cm}{$\phantom{\min\left[\left(\frac{\sum_{i=1}^\ell}{\sum_{j=1}^r} \right)\right]}\sum_{i=1}^r\sigma_i^2/\sigma_1^2\phantom{\min\left[\left(\frac{\sum_{i=1}^\ell}{\sum_{j=1}^r} \right)\right]}$} \\
\footnotesize nrank & \parbox{6.5cm}{$\sum_{i=1}^r\sigma_i/\sigma_1$} \\
\footnotesize energy &  \parbox{6.5cm}{$\min\left[\argmax_{\ell\in\{1,...,N\}}\left(\frac{\sum_{i=1}^\ell \sigma_i^2}{\sum_{j=1}^r \sigma_j^2} > \tau\right)\right]$}\\
\footnotesize elbow &  \parbox{6.5cm}{$\frac{1}{\sqrt{2}}\argmax_{i\in\{1,...,N\}}\,\left|\frac{i-1}{N-1} + \frac{\sigma_i - \sigma_N}{\sigma_1 - \sigma_N} - 1\right| - 1$}\\
\footnotesize erank & \parbox{6.5cm}{$\exp\left[-\sum_{i=1}^r \frac{\sigma_i}{\sum_{j=1}^r\,\sigma_j} \log \frac{\sigma_i}{\sum_{j=1}^r\,\sigma_j}\right]$}\\
\footnotesize thrank & \parbox{6.5cm}{$\#\left\{\sigma_i\,\Big|\,i\in\{1,...,N\}\,\mathrm{and}\,\sigma_i > \frac{4\sigma_{\mathrm{med}}}{\sqrt{\,3\,\mu_{\mathrm{med}}}}\right\}$}\\
\footnotesize shrank & \parbox{6.5cm}{$\#\{s^*(\sigma_i)\,|\,i\in\{1,...,N\}\,\text{and}\,\,s^*(\sigma_i)>0\,\}$}\\
\end{tabular}
\end{ruledtabular}
\end{table}
\noindent\textbf{Effective ranks.}
The idea of extracting the number of significant components in a matrix decomposition is an old theme (e.g., in factor analysis \cite{Malinowski1977, Sanchez1986} or PCA~\cite[How Many Components~?]{Abdi2010}), but is still subject to new interesting developments in random matrix theory, data science~\cite{Gavish2014, Gavish2017}, and in network science where hyperbolic geometry~\cite{Almagro2022} and information theory~\cite{Lynn2021} are used. Because of the close relationship of SVD with the rank, many effective ranks are defined using the singular values. Intuitively, these effective ranks are numbers that indicate how many singular values are significant when decomposing a matrix.  
Table~\ref{tab:effective_ranks} presents the list of different effective ranks that we have inventoried. The effective ranks thrank and shrank are defined from matrix denoising techniques such as the ones introduced by Refs.~\cite{Perry2009, Gavish2014, Gavish2017}, which rely on the spectral theory of infinite random matrices~\cite{Benaych-Georges2012} to determine optimal ways of shrinking the singular values~(see \prettyref{SIsubsec:effective_ranks}). In Fig.~\ref{fig:low_rank_hypothesis}l, the Frobenius norm is used to obtain shrank and a threshold of 0.9 is used for the energy ratio in Fig.~\ref{fig:low_rank_hypothesis}j.

As shown in Lemma~\ref{lem:ordering_ens_ranks}, the following ordering of the effective ranks holds: $\mathrm{srank} \leq \mathrm{nrank} \leq \mathrm{erank} \leq \mathrm{rank}$. Because of their simple forms, $\mathrm{srank}$, $\mathrm{nrank}$, and $\mathrm{erank}$ are amenable to analytic calculations.
In particular, we prove that these effective ranks are of order $O(1)$ for singular values with exponentially decreasing envelopes (only stated for srank below).
\begin{corollary}[Simplified version of Corollary~\ref{cor:expo_o1}]\label{cor:expo_o1_methods}
Let $(\,W_N\,)_{N\in \mathbb{Z}_+}$ be an infinite sequence of matrices in which $W_N$ has size $N\times N$. Suppose that there are parameters $\alpha$ and $\omega$ such that $0<\alpha\leq \omega<1$ and for each $N$,  the singular values $\sigma_1\geq \sigma_2\geq \cdots \sigma_N\geq 0$ of $W_N$ satisfy the inequalities 
\begin{equation}
    \alpha^{i-1}\,\leq \,\frac{\sigma_i}{\sigma_1}\,\leq \,\omega^{i-1},\qquad i\in\{1,\ldots, N\}\,. 
\end{equation}
 Then, as $N\to \infty$,
\begin{align}
   \frac{1}{1-\alpha^2}+O(\alpha^{2N})\,\leq \, &\, \mathrm{srank}(W_N)\,\leq   \frac{1}{1-\omega^2}+O(\omega^{2N})\,,
\end{align} 
\end{corollary}
Combined with Theorem~\ref{thm:upper_bound_singvals_scm_simplified}, the latter theorem implies that the expected weight matrices for the directed soft configuration model and its weighted version have $O(1)$ effective ranks.

Moreover, we show in Lemma~\ref{lem:effrank_area} that $\mathrm{srank}$, $\mathrm{nrank}$, and $\mathrm{erank}$ all have an interpretation in terms of area under the normalized singular value scree plots. This point of view allows considering a more general family of singular-value envelopes, such as the one in Fig.~\ref{fig:low_rank_hypothesis}e, to bound the effective ranks. Interestingly, the bounds are related to Gaussian hypergeometric functions, as shown in the next theorem (only stated for srank below, for simplicity).
\begin{theorem}[Simplified version of Theorem~\ref{thm:bounds_effective_ranks_hypergeometric}]\label{thm:bounds_effective_ranks_hypergeometric_methods}
Suppose that the singular values of matrix $W$, $\sigma_1\geq \sigma_2\geq \cdots \sigma_N\geq 0$, satisfy the inequality 
\begin{equation}\label{eq:ratio_bounds}
     \frac{\left(1-x_i\right)^{c^*-2}}{\left(1+\zeta^*x_i\right)^{b^*}} \,\leq \,\frac{\sigma_i}{\sigma_1}\,\leq \,  \frac{\left(1-x_i\right)^{c_*-2}}{\left(1+\zeta_*x_i\right)^{b_*}}
\end{equation}
where $x_i = (i-1)/(N-1)$ and for some $0\leq b_*\leq b^* $, $2\leq c_*\leq c^*$, $0<\zeta_*\leq \zeta^*$, and for all $i\in\{1,\ldots, N\}$. Then,
\begin{align}\label{eq:srank_hypergeo_bound} 
     \frac{N-1}{2c^*-3}\; H(b^*, c^*, \zeta^*)\,\leq \, \, \mathrm{srank}(W)\,\leq  1+  \frac{N-1}{2c_*-3}\; H(b_*, c_*, \zeta_*)\,,
\end{align}
where $H(b, c, \zeta) := {}_2F_1(1,2b;2(c-1);-\zeta)$ and ${}_2F_1$ being the Gaussian hypergeometric function.
\end{theorem}
In Fig.~\ref{fig:low_rank_hypothesis}e, each singular value distribution of the real networks is interpolated linearly with 1000 points and the indices are then divided by 1000. The singular-value envelope is then obtained by fitting the upper bound in Eq.~\eqref{eq:ratio_bounds} to the 95th percentile of the singular values. The fit is done by minimizing the L2 norm for the parameters $b := b_* \in [0.01, 10]$, $c := c_* \in [2, 10]$, and $\zeta := \zeta_* \in [0.01, 1000]$ and the minimization gives $b \approx 0.54$, $c \approx 2.3$, and $\zeta \approx 25$. We then use those parameters to evaluate the upper bound in Eq.~\eqref{eq:srank_hypergeo_bound} divided by $N$ (where we neglect the terms $1/N$), which is shown in Fig.~\ref{fig:low_rank_hypothesis}f. 

Corollary~\ref{cor:hypergeometric} shows that if there is a growing graph whose singular values remain bounded within hypergeometric envelopes, then srank, nrank, and erank are of order $O(N^{1 - \epsilon})$ with $\epsilon \in (0, 1]$ in different asymptotic regimes for the parameters $b$ and $\zeta$, meaning that the effective rank to dimension ratios become negligible asymptotically. SI~\ref{SIsubsec:impact_singvals_effrank} clarifies how various singular-value envelopes can lead to very distinct asymptotic behaviors (see Fig.~\ref{fig:asymptotic-ranks-theory}).

When the asymptotic perspective is no longer applicable (e.g., for real networks), we cannot classify an effective rank as either ``low'' or ``high''. Yet, as explained in the main text, we can use effective ranks to dimension ratios, which are well defined for all $N$ and their values range from 0 ($W$ has rank 0) to 1 ($W$ has full rank).

\vspace{0.2cm}
\noindent\textbf{Dimension reduction of dynamical systems}. Dimension reduction of high-dimensional nonlinear dynamics is a fundamental approach to get analytical and numerical insights on complex systems. Low-dimensional dynamics can be obtained from an optimization problem, where some error is minimized under a set of constraints to preserve the salient properties of the original system. For dynamical systems, a natural optimization variable is the reduced vector field $F$ itself, which is chosen to represent approximately the complete vector field $f$. Yet, it is rather puzzling to find how the different vector field errors are related to each other and which one can be minimized analytically. In \prettyref{SIsubsec:least_square_vf}, we provide a useful diagram (see Diagram~\ref{diag:alignment_errors}) that sheds light on the links between the different ways to define \textit{alignment errors} between vector fields.

More precisely, let $f$ be a complete vector field in $\mathbb{R}^N$, $F$ be a reduced vector field in $\mathbb{R}^n$, and $M$ be the $n\times N$ reduction matrix. At $x \in \mathbb{R}^N$, the alignment error in $\mathbb{R}^N$ is the RMSE between the vector fields $f$ and $M^+\circ F\circ M$,
\begin{align}\label{eq:alignment_error_N_methods}
    \varepsilon(x) = \|f(x) - M^+F(Mx)\|/\sqrt{N}\,;
\end{align}
and the alignment error in $\mathbb{R}^n$ is the RMSE between the vector field $M\circ f$ and $F\circ M$,
\begin{align}\label{eq:alignment_error_n_methods}
    \mathcal{E}(x) = \|Mf(x) - F(Mx)\|/\sqrt{n}\,,
\end{align}
where $\|\,\|$ is the Euclidean vector norm. By applying the definition of alignment errors on the projected complete vector field $f\circ P$ instead of $f$ only, we also define the alignment errors
\begin{align}
    \varepsilon'(x) &= \|f(Px) - M^+F(Mx)\|/\sqrt{N}\,\label{eq:alignment_error_N_projected_methods}\\
    \mathcal{E}'(x) &= \|Mf(Px) - F(Mx)\|/\sqrt{n}\,\label{eq:alignment_error_n_projected_methods}
\end{align}
with $P = M^{+}M$ being a projector and $M^+$ being the Moore–Penrose pseudoinverse of $M$.
In principle, the alignment error $\mathcal{E}(x)$ in $\mathbb{R}^n$ is to be minimized in order to be as close as possible to an exact dimension reduction (see Definition~\ref{def:dimension_reduction}, Theorem~\ref{thm:dimension_reduction}, and Diagram~\ref{diag:dimension_reduction}), but this is far from a simple task. However, as shown in Theorem~\ref{thm:least_square_optimal_vector_field}, one can use least squares to show that the vector field of the reduced dynamics
\begin{equation}\label{eq:least_square_optimal_vector_field_methods}
    \dot{X} = Mf(M^+X)
\end{equation}
is optimal in the sense that it minimizes the alignment error $\varepsilon'(x)$ in $\mathbb{R}^N$. As a consequence, the alignment error $\mathcal{E}'(x)$ is exactly 0. 

In Extended Data Table~1, we carry out the optimal dimension reduction on five dynamics from different fields of application. For the RNN and the neuronal dynamics, we have $\mathcal{D}^{(2)} = -MDM^+$ where $D = \diag(d_1,...,d_N)$ and $\mathscr{W}_{j\nu} = \sum_{k=1}^N W_{jk}M_{k\nu}^+$ and we discuss about the other dynamics in the next part of the Methods. With the optimal vector field in Eq.~\eqref{eq:least_square_optimal_vector_field_methods} and for dynamics of the general form $\dot{x} = g(x, y)$ (see Assumptions~\ref{ass:error_bound}), we find an upper bound on the alignment error $\mathcal{E}(x)$ related to the singular values of $W$. 
\begin{theorem}[Simplified version of Theorem~\ref{thm:upper_bound_rapid_decrease}]\label{thm:upper_bound_rapid_decrease_simplified}
The alignment error $\mathcal{E}(x)$ in $\mathbb{R}^n$ at $x \in \mathbb{R}^N$ is upper-bounded as
\begin{equation}\label{eq:upper_bound_rapid_decrease_simplified}
    \sqrt{n}\,\mathcal{E}(x) \leq\|V_n^{\top}J_x'(I-V_nV_n^\top)x\| +  {\sigma_{n+1}} \|V_n^{\top}J_y'\|_2\|x\|,
\end{equation}
where $y' = Wx'$ with $x'$ being some point between $x$ and $V_nV_n^\top x$, $\sigma_i$ is the $i$-th singular value of $W$, and $J_x' = J_x(x',y')$, $J_y' = J_y(x',y')$ are the Jacobian matrices of $f$ with derivatives according to the vectors $x$ and $y$ respectively. Moreover, for any $x$ not at the origin of $\mathbb{R}^N$, the following upper bound holds:
\begin{equation}\label{eq:upper_bound_rapid_decrease_relative_theo_methods}
\frac{\mathcal{E}(x)}{\|x\|} \leq \frac{1}{\sqrt{n}}\Big[\alpha(x',y')+ \sigma_{n+1}\beta(x',y')\Big]\,,
\end{equation}
where $\alpha(x',y')=\sigma_1(J_x(x',y'))$ and $\beta(x',y')=\sigma_1(J_y(x',y'))$.
\end{theorem}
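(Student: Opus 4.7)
The plan is to start from the least-square optimal reduced vector field $F(X) = Mf(M^+X)$, which by Theorem~\ref{thm:least_square_optimal_vector_field} achieves $\mathcal{E}'(x)=0$ but a generically nonzero $\mathcal{E}(x)$. With the prescribed choice $M = V_n^\top$, one has $M^+ = V_n$ and $P = M^+M = V_nV_n^\top$, so the definition in \eqref{eq:alignment_error_n_methods} rewrites as
\begin{equation}
\sqrt{n}\,\mathcal{E}(x) \;=\; \bigl\|V_n^\top\bigl[f(x) - f(Px)\bigr]\bigr\|,
\end{equation}
since $F(Mx)= V_n^\top f(V_nV_n^\top x)=V_n^\top f(Px)$. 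This reduces the theorem to controlling the norm of a single vector-valued increment.

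Next, I would exploit the structure $f(x) = g(x, Wx)$ and apply a mean value argument to the composition. The chain rule gives $J_f = J_x + J_y W$, so from the integral form $f(x) - f(Px) = \bigl(\int_0^1 J_f(Px + t(I-P)x)\,dt\bigr)(I-P)x$ one obtains, after the standard sup-extraction across $t \in [0,1]$, the inequality
\begin{equation}
\|f(x) - f(Px)\| \;\leq\; \bigl\|\bigl[J_x(x',y') + J_y(x',y')\,W\bigr](I-P)x\bigr\|
\end{equation}
for some intermediate point $x'$ on the segment $[Px,x]$ with $y' = Wx'$. Left-multiplying by $V_n^\top$, applying the triangle inequality, and using submultiplicativity of the spectral norm then yields
\begin{equation}
\sqrt{n}\,\mathcal{E}(x) \;\leq\; \|V_n^\top J_x'(I-P)x\| + \|V_n^\top J_y'\|_2\,\|W(I-P)\|_2\,\|x\|.
\end{equation}

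The crux is then evaluating $\|W(I - V_nV_n^\top)\|_2$. Writing the full SVD $W = U\Sigma V^\top$ and using that $V_n$ collects the first $n$ columns of $V$, the product $WV_nV_n^\top$ is precisely the rank-$n$ truncated SVD of $W$. By the Schmidt-Eckart-Young-Mirsky theorem (Theorem~\ref{thm:SEYM}), this is the best rank-$n$ approximation of $W$ in spectral norm, and the residual equals $\sigma_{n+1}$. Substituting this into the previous display gives the first claimed inequality \eqref{eq:upper_bound_rapid_decrease_simplified}.

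For the relative bound \eqref{eq:upper_bound_rapid_decrease_relative_theo_methods}, I would further bound the first term by $\|V_n^\top\|_2\,\|J_x'\|_2\,\|(I-P)x\| \leq \sigma_1(J_x')\,\|x\| = \alpha(x',y')\|x\|$, using that $V_n$ has orthonormal columns (so $\|V_n^\top\|_2 = 1$) and $I - P$ is an orthogonal projector of spectral norm at most one; likewise $\|V_n^\top J_y'\|_2 \leq \sigma_1(J_y') = \beta(x',y')$. Dividing by $\sqrt{n}\,\|x\|$ delivers the stated relative bound. The main obstacle is the careful handling of the vector-valued mean value statement: a genuine identity holds only componentwise with potentially distinct intermediate points, so one must either retain the integral form and extract $x'$ via a continuity-plus-compactness supremum argument, or directly work with the integral and absorb it into the Jacobian norms; the smoothness of $g$ embedded in Assumptions~\ref{ass:error_bound} ensures either approach is valid.
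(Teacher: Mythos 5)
Your proposal is correct and follows essentially the same route as the paper's proof of Theorem~\ref{thm:upper_bound_rapid_decrease}: expand $g$ along the segment from $Px$ to $x$ via the chain rule $Du = J_x + J_yW$, split by the triangle inequality, use submultiplicativity, and identify $\|W(I-V_nV_n^\top)\|_2 = \sigma_{n+1}$. The only (harmless) differences are that you keep the integral form of the mean-value step and extract $x'$ by a supremum argument---which is actually more careful than the paper's appeal to a single-point Lagrange remainder for a vector-valued map, provided you take the supremum of $\|V_n^\top J_f(x_t)(I-P)x\|$ rather than of $\|J_f(x_t)(I-P)x\|$ before ``left-multiplying''---and that you compute the residual $\sigma_{n+1}$ directly from the SVD of $W$ rather than citing the optimal-projection result (Theorem~\ref{thm:optimal_projection_general}) the paper uses to motivate the choice $M=V_n^\top$.
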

As a bonus, the proof of the theorem suggests choosing $M$ as the truncated right singular vectors $V_n^\top$, since it allows minimizing a part of the bound.
This is a consequence of the Schmidt-Eckart-Young-Mirsky theorem and more specifically, Theorem~\ref{thm:optimal_projection_general}. This choice for $M$ also has a notable consequence: each observable $X_{\mu}$ generally becomes a global observable in that it contains information on most vertices. This characteristic, alongside that it is a finite-size dimension reduction, make our approach stands out from many mean-field modeling approaches used in network science in which vertices are coarse-grained according to their degree (local property) or to some other mesoscopic property of the network.

Theorem~\ref{thm:upper_bound_rapid_decrease_simplified} also provides a criterion for exact dimension reduction: if $J_x(x', y') = dI$ for some real constant $d$ and $n$ is the rank of $W$, then $\mathcal{E}(x)= 0$ (see Corollary~\ref{cor:exact_rank} in \prettyref{SIsubsec:error_bound}). For example, we find that the class of dynamics of matrix form
\begin{equation}
    \dot{x} = d\,x + s(Wx)\,,
\end{equation}
where $s$ is a vector of $N$ functions $s_i:\mathbb{R} \to \mathbb{R}$ and $W$ has rank $r$ and compact SVD $U_r\Sigma_rV_r^\top$, can be exactly reduced to the $r$-dimensional reduced dynamics
\begin{align}\label{eq:rank_reduced_methods}
    \dot{X} &= d\,X + V_r^\top s(U_r\Sigma_r X),
\end{align}
where $X = V_r^\top x$. For any $n$ and $X = V_n^\top x$, the vector field in Eq.~\eqref{eq:rank_reduced_methods} is the least-square optimal one in the sense described in Theorem~\ref{thm:least_square_optimal_vector_field} of the \prettyref{SIsubsec:least_square_vf}. This result implies that any RNN or any neuronal dynamics (with $a = 0$) having the forms given in Extended Data Table~1 can be exactly reduced (see Examples~\ref{ex:exact_rnn}-\ref{ex:exact_wilson_cowan} in \prettyref{SIsubsec:error_bound}). 

A simple corollary of the latter theorem (Corollary~\ref{cor:upper_bound_linear_system}) shows that if the dynamics is a linear system, the relative alignment error in $\mathbb{R}^n$ at $x\in\mathbb{R}^N$ is 
\begin{equation}\label{eq:upper_bound_lineear_system_method}
    \frac{\mathcal{E}(x)}{\|x\|} \leq \frac{\sigma_{n+1}}{\sqrt{n}},
\end{equation}
implying that a rapid decrease of the singular values of $W$ directly induces a rapid decrease of the alignment error.

\vspace{0.2cm}



\noindent\textbf{Emergence of higher-order interactions.}
All the $N$-dimensional (complete) dynamics on a network in Extended Data Table~1 (and many more, see \prettyref{SIsubsec:emergence}) have the general form $\dot{x}_i = h_i(x_i, y_i)$ for all $i\in\{1,...,N\}$, where $x_i:[0,\infty) \to \mathbb{R}$, $y_i = \sum_{j=1}^N W_{ij}x_j$, and $h_i: \mathbb{R}^2 \to \mathbb{R}$ is an analytic function. 

\begin{proposition}[Simplified version of Proposition~\ref{prop:emergence}]\label{prop:emergence_simplified}
The least-square reduced dynamics can be expressed in terms of higher-order interactions between the observables as
 \begin{align*}
      \dot{X}_{\mu} = \mathcal{C}_{\mu} &+ \textstyle{\sum_{d_x=1}^\infty\sum_{\bm{\alpha}}}\mathcal{D}_{\mu\bm{\alpha}}^{(d_x + 1)} X_{\bm{\alpha}} + \textstyle{\sum_{d_y=1}^\infty\sum_{\bm{\beta}}} \mathcal{W}_{\mu\bm\beta}^{(d_y + 1)} X_{\bm\beta} \\ &+ \textstyle{\sum_{d_x, d_y=1}^\infty\sum_{\bm\alpha, \bm\beta}} \mathcal{T}_{\mu \bm\alpha \bm\beta}^{(d_x+d_y+1)} X_{\bm\alpha\bm\beta},
\end{align*}
where we have introduced the multi-indices $\bm{\alpha} = (\alpha_1,...,\alpha_{d_x})$ and $\bm{\beta} = (\beta_1,...,\beta_{d_y})$ with $\alpha_p,\beta_q \in \{1,...,n\}$, the compact notation for  products $X_{\bm{\gamma}} = X_{\gamma_1}...X_{\gamma_d}$, while $\mathcal{C}_{\mu}$ denotes a real constant and $\mu\in\{1,\ldots, n\}$. The higher-order interactions are described by three tensors of respective order $d_x+1$, $d_y+1$, $d_x+d_y+1$, and whose elements are 
\begin{align*}
    \mathcal{D}_{\mu\bm\alpha}^{(d_x + 1)} &= \textstyle{\sum_{i=1}^N} c_{i d_x 0}M_{\mu i}M_{i\bm{\alpha}}^+,\\
    \mathcal{W}_{\mu\bm\beta}^{(d_y + 1)} &= \textstyle{\sum_{i=1}^N \sum_{\bm{j}}} c_{i 0 d_y}M_{\mu i} W_{i\bm{j}}M_{\bm{j}\bm{\beta}}^+,\\
    \mathcal{T}_{\mu \bm\alpha \bm\beta}^{(d_x+d_y+1)} &= \textstyle{\sum_{i=1}^N \sum_{\bm{j}}} c_{id_xd_y}M_{\mu i} M_{i\bm{\alpha}}^+W_{i\bm{j}}M_{\bm{j}\bm{\beta}}^+,
\end{align*}
for some real coefficients $c_{id_xd_y}$ with $i \in \{1,...,N\}$, $d_x,d_y\in\mathbb{Z}_+$ and the multi-index $\bm{j}$ in the sums is in $\{1,...,N\}^{d_y}$.
\end{proposition}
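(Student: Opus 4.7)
The plan is to expand the least-square optimal reduced vector field from Eq.~\eqref{eq:least_square_optimal_vector_field_methods} as a multivariate power series in the observables $X_1,\ldots,X_n$ and read off the coefficients. Specializing Eq.~\eqref{eq:least_square_optimal_vector_field_methods} to $f_i(x) = h_i(x_i, y_i)$ with $y_i = (Wx)_i$ yields
\[
\dot{X}_\mu \;=\; \sum_{i=1}^N M_{\mu i}\, h_i\!\bigl(u_i(X),\, v_i(X)\bigr),
\]
where $u_i(X) := (M^+X)_i = \sum_{\alpha=1}^n M^+_{i\alpha} X_\alpha$ and $v_i(X) := (WM^+X)_i = \sum_{j=1}^N \sum_{\beta=1}^n W_{ij} M^+_{j\beta} X_\beta$ are linear in $X$. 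Since each $h_i$ is analytic at $(0,0)$, on a common neighborhood of the origin (assumed to contain the relevant reduced states) it admits the absolutely convergent Taylor expansion
\[
h_i(u,v) \;=\; \sum_{d_x, d_y = 0}^\infty c_{i d_x d_y}\, u^{d_x} v^{d_y}, \qquad c_{i d_x d_y} \;=\; \frac{1}{d_x!\, d_y!}\,\partial_u^{d_x} \partial_v^{d_y} h_i(0,0).
\]

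Next I would substitute the linear forms $u_i(X)$ and $v_i(X)$ into this series and use the multinomial theorem in multi-index form:
\[
u_i(X)^{d_x} = \sum_{\bm\alpha \in \{1,\ldots,n\}^{d_x}} M^+_{i\bm\alpha}\, X_{\bm\alpha}, \qquad v_i(X)^{d_y} = \sum_{\bm j \in \{1,\ldots,N\}^{d_y}} \sum_{\bm\beta \in \{1,\ldots,n\}^{d_y}} W_{i\bm j}\, M^+_{\bm j\bm\beta}\, X_{\bm\beta},
\]
using the paper's compact conventions $M^+_{i\bm\alpha} = \prod_p M^+_{i\alpha_p}$, $W_{i\bm j} = \prod_q W_{i j_q}$, $M^+_{\bm j\bm\beta} = \prod_q M^+_{j_q \beta_q}$, and $X_{\bm\gamma} = \prod_\ell X_{\gamma_\ell}$.

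Third, I would partition the double sum over $(d_x, d_y)$ into the four cases $(0,0)$, $d_x\geq 1\wedge d_y=0$, $d_x=0\wedge d_y\geq 1$, and $d_x,d_y\geq 1$, and collect the corresponding contributions to $\dot X_\mu$. The corner $(0,0)$ produces the constant $\mathcal{C}_\mu = \sum_{i} M_{\mu i}\, c_{i00}$; the $d_x\geq 1$ axis yields $\sum_{d_x\geq 1}\sum_{\bm\alpha} \mathcal{D}_{\mu\bm\alpha}^{(d_x+1)} X_{\bm\alpha}$ with $\mathcal{D}_{\mu\bm\alpha}^{(d_x+1)} = \sum_i c_{id_x0}\, M_{\mu i} M^+_{i\bm\alpha}$; the $d_y\geq 1$ axis symmetrically gives $\mathcal{W}_{\mu\bm\beta}^{(d_y+1)} = \sum_i \sum_{\bm j} c_{i0d_y}\, M_{\mu i} W_{i\bm j} M^+_{\bm j\bm\beta}$; and the interior produces $\mathcal{T}_{\mu\bm\alpha\bm\beta}^{(d_x+d_y+1)} = \sum_i \sum_{\bm j} c_{id_xd_y}\, M_{\mu i} M^+_{i\bm\alpha} W_{i\bm j} M^+_{\bm j\bm\beta}$, matching the statement term by term.

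The main technical care concerns the interchange of summations: one needs to operate on a domain where the resulting multivariate series in $X$ converges absolutely, so that Fubini allows the $i$-sum and the $(d_x,d_y)$-sum to be commuted; this is granted by the standing analyticity hypothesis together with a suitable restriction on the reduced state. Alternatively, the identity can be interpreted as an equality of formal power series in $(X_1,\ldots,X_n)$, which is what the statement effectively requires. A secondary notational remark is that, since the $X_\alpha$ commute, the free indices $\bm\alpha$ (and independently $\bm\beta$) can be symmetrized, but this is not necessary for the stated form. The remainder is routine multi-index bookkeeping with no further ideas required.
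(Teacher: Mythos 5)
Your proposal is correct and follows essentially the same route as the paper's proof of Proposition~\ref{prop:emergence}: expand each analytic $h_i$ as a power series in $(x_i,y_i)$ about the origin, substitute the linear forms $M^+X$ and $WM^+X$, split the $(d_x,d_y)$ sum into the four cases, and read off $\mathcal{C}_\mu$, $\mathcal{D}$, $\mathcal{W}$, $\mathcal{T}$ via multi-index expansion of the powers. Your added remark on absolute convergence and the interchange of sums is a point the paper leaves implicit, but it does not change the argument.
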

This proposition led us to two corollaries. First, if $h_i(x_i, y_i)$ is a polynomial of total degree $\delta$ in $x_i$ and $y_i$, then the reduced dynamics has a polynomial vector field of total degree $\delta$ with interactions of maximal order $\delta + 1$ (Corollary~\ref{cor:emergence_polynomial}). Second, if $M$ is block diagonal and $h_i$ linearly depends on $y_i$, then there are solely pairwise interactions in the reduced system, which doesn't hold in general for nonlinear dependencies of $h_i$ over $y_i$ (Corollary~\ref{cor:sufficient_condition_emergence}).

In Extended Data Table 1, we apply Proposition~\ref{prop:emergence_simplified} and Corollary~\ref{cor:emergence_polynomial} to the QMF SIS dynamics, the microbial dynamics, and the Kuramoto-Sakaguchi dynamics, which illustrates concretely the emergence of higher-order interactions through dimension reduction.
More details are given in \prettyref{SIsubsec:emergence}.

\vspace{0.2cm}

\noindent\textbf{Integration and properties of the dynamics}.
The trajectories of the dynamics on the real networks presented in Fig.~\ref{fig:error_vector_fields} were obtained with \href{https://docs.scipy.org/doc/scipy/reference/generated/scipy.integrate.solve_ivp.html}{\texttt{solve\_ivp}} from \texttt{scipy.integrate}.  We used the backward differentiation formula (BDF), an implicit method with variable step length and order, which is known to be well suited for stiff problems, such as the microbial dynamics on the gut microbiome. We observed that a relative tolerance $\mathrm{rtol} = 10^{-8}$ and an absolute tolerance of $\mathrm{atol} = 10^{-12}$ for the complete microbial dynamics ($\mathrm{rtol} = 10^{-6}$ and $\mathrm{atol} = 10^{-10}$ for the reduced dynamics) gave reliable results with decent integration time while being in line with the recent benchmarks of Ref.~\cite{Stadter2021}. Moreover, we have provided the Jacobian matrices of the complete and reduced dynamics to the integrator as recommended in the documentation of solve\_ivp for the BDF method. We also integrated the other dynamics with the BDF method with a relative tolerance of $10^{-8}$ and an absolute tolerance of $10^{-12}$.  

For the epidemiological dynamics, the phenomenon of critical slowing down appears, but it is easily dealt with by increasing the number of time steps near the transcritical bifurcation (at the infection rate of 1, that is, the largest singular value of the rescaled network) as we have done in the inset of Fig.~\ref{fig:error_vector_fields}e. Note that increasing the dimension improves the prediction for higher infection rates. In Fig.~\ref{fig:error_vector_fields}f, we observe a hysteresis for the global observable of the neuronal dynamics vs. the synaptic weight. In Fig.~\ref{fig:error_vector_fields}e-f, the root-mean-square errors (RMSE) are simply computed between the global equilibrium points of the complete and the reduced dynamics at different $n$.

As illustrated in Fig.~\ref{fig:error_vector_fields}g, multiple branches of stable equilibrium points for the global observables of the microbial dynamics arise. 
We proceeded as follows to get a simplified picture involving only some equilibrium point branches. We focused on one forward branch obtained with initial conditions $x_0$ sampled from a uniform distribution between 0 and 1 and showed its loss of stability when incrementally increasing the microbial interaction weight in Fig.~\ref{fig:error_vector_fields}g. To obtain one backward branch, we sampled the initial condition $x_0$ from a uniform distribution between 0 and $z$ where $z$ is a random integer between 1 and 15, we integrated the dynamics to get the equilibrium point, we decreased the microbial interaction weight and used the last equilibrium point as the initial condition for the integration, and repeated these last two steps until the minimum coupling value (0.1 in Fig.~\ref{fig:error_vector_fields}g) is reached. We repeated all these steps 100 times (300 for $n=76$) to generate different initial conditions and stable branches. At each iteration, we ensured that the vector fields evaluated at the equilibrium points gave a vector with elements below the tolerance $10^{-7}$ and that the equilibrium points were positive (see \prettyref{SIsubsec:numerical_integration}). In this case, the RMSE is computed between the average upper and lower branches of the complete and reduced dynamics.

For the (finite-size) recurrent neural network, similar to the observations in the conclusion of Ref.~\cite{sompolinsky1988chaos}, there is a stable equilibrium point at zero for lower coupling and increasing the coupling eventually gives rise to limit cycles of increasing complexity such as the one in Fig.~\ref{fig:error_vector_fields}h. We illustrate a 3-dimensional projection of this high-dimensional limit cycle in the complete dynamics and the ones in the reduced dynamics as the dimension $n$ approaches the rank of the learned network. The RMSE is computed between the points of the limit cycle for the complete recurrent neural dynamics and the closest points on the limit cycles of the reduced dynamics. 

The choices of global observables used in Fig.~\ref{fig:error_vector_fields} are justified in \prettyref{SIsubsec:global_observable} and the parameters of the dynamics are in the Extended Data Table~1.

\vspace{0.2cm}

\noindent\textbf{Data availability}.
All the details about the real networks data used in the paper, mostly from the network repository \href{https://networks.skewed.de/}{Netzschleuder}, are given in \prettyref{SIsec:real_network_dataset}. The data to generate Fig.~1, 2 and 4 are available on Zenodo~(\url{https://doi.org/10.5281/zenodo.8342130}).

\vspace{0.2cm}

\noindent\textbf{Code availability}.
The \textit{Python} code used to generate the results of the paper is available on Zenodo~(\url{https://doi.org/10.5281/zenodo.8342130}). The code for the optimal shrinkage of singular values is a \textit{Python} implementation of the \textit{Matlab} codes \href{https://purl.stanford.edu/vg705qn9070}{optimal\_singval\_threshold}~\cite{Gavish2014} and \href{http://purl.stanford.edu/kv623gt2817}{optimal\_singval\_shrink}~\cite{Gavish2017}, which is partly based on the repository \href{https://github.com/erichson/optht}{optht} by B. Erichson. 


\clearpage


\setcounter{equation}{0}
\setcounter{figure}{0}
\setcounter{table}{0}
\setcounter{page}{1}
\renewcommand{\theequation}{S\arabic{equation}}
\renewcommand{\thefigure}{S\arabic{figure}}
\renewcommand{\thetable}{S\Roman{table}}
\renewcommand{\thetheorem}{S\arabic{theorem}}
\thispagestyle{empty}

\onecolumngrid

\normalsize
\let\addcontentsline\oldaddcontentsline

\centerline{\large\textbf{The low-rank hypothesis of complex systems}}
\centerline{\textbf{--- Supplementary information ---}}

\tableofcontents



\section{Preliminaries on singular value decomposition}
\label{SIsec:svd}
Singular Value Decomposition (SVD) goes back to Beltrami (1873) and Jordan (1874) 
and has become a central linear algebra tool in many areas of science, partly because of its fundamental role in dimension reduction \cite{Schmidt1907, Eckart1936, Stewart1993}\cite[Chapter 1]{Brunton2019}. Although one must be careful with the comparisons, which have led to abuses of language~\cite{Gerbrands1981}, SVD possesses some similarities with techniques such as Principal Component Analysis (PCA)~\cite{Hotelling1933a, Hotelling1933b, Wold1987,Ferre1995, Abdi2010, Johnstone2018, Cook2022}, Karhunen-Loève Transform (KLT)~\cite{Karhunen1947, Loeve1955, Everson1995}, Proper Orthogonal Decomposition (POD)~\cite{Kerschen2005, Volkwein2013, Kutz2016}, and Empirical Orthogonal Function (EOF) \cite{lorenz1956, Monahan2009}. In machine learning, some autoencoders have been shown to be at best equivalent to SVD~\cite{Bourlard1988, Bourlard2022}. Even if the subject is old in itself, there are still many interesting developments about SVD, notably in random matrix theory~\cite{Bai2010, Benaych-Georges2012, Tao2012, tao2012random, Gavish2014, bloemendal2016limits, Gavish2017, Beckermann2017, Donoho2018} where the singular value distribution is often called the eigenvalue distribution of the Wishart, chiral or Laguerre matrix ensembles \cite[Chap.~3]{Forrester2010} or of sample covariance matrices~\cite[Chap.~3]{Bai2010}. Because of its importance in our work and for the sake of completeness, we gather fundamental theorems related to SVD which will be useful to prove the main mathematical results of the paper. We begin this section by recalling the definition of SVD and its close relationship with the rank, i.e., the maximal number of linearly independent rows or columns of a matrix.

\subsection{Definition of SVD and its link to the rank}
\label{SIsubsec:svd_def}
First of all, any matrix admits a factorization based on its rank. Indeed, if $A$ is a matrix of dimension $m\times n$ and of rank $r$, then there exists a rank factorization of $A$, i.e., a decomposition of the form $A = LM$, where $L$ and $M$ are matrices of dimension $m\times r$ and $r \times n$, respectively. Moreover, the rank factorization $A = LM$ is not unique. One very popular rank factorization valid, in particular, for real symmetric matrices is the eigenvalue decomposition. Yet, an arbitrary matrix $A$ is not always diagonalizable by a similarity relation $A = PDP^{-1}$ (e.g., any rectangular matrix). Note, however, that the matrices $AA^{\dagger}$ and $A^{\dagger}A$ ($^\dagger$ denoting the Hermitian conjugation) are square and diagonalizable by a unitary matrix since they are Hermitian (hence, normal). Using this important remark, it can be shown that there always exists a unitary equivalence relation between a matrix and a diagonal matrix of nonnegative elements, the singular value decomposition.
\begin{theorem}\label{thm:SVD}
Let $A$ be a complex matrix of dimension $m\times n$ and rank $r$. Then, there exists a SVD of $A$, i.e., a factorization of the form
\begin{equation}\label{eq:svd}
    A = U\Sigma V^\dagger
\end{equation}
where $U = (u_1,...,u_m)$ and $V = (v_1,...,v_n)$ are unitary matrices of dimension $m\times m$ and $n\times n$, containing respectively the eigenvectors $u_i$ of $AA^{\dagger}$ and the eigenvectors $v_i$ of $A^{\dagger}A$. Moreover, the matrix $\Sigma$ is a rectangular diagonal matrix of size $m\times n$ defined as
\begin{equation}
    \Sigma = \begin{pmatrix}\sigma_1&0& ... \\0&\sigma_2& ... \\ \vdots&\vdots& \ddots\end{pmatrix}\quad \text{with} \quad \begin{array}{lll}
        \sigma_1 \geq \sigma_2 \geq ... \geq \sigma_r > 0  \\
         \sigma_{r+1} = ... = \sigma_q = 0
    \end{array} 
\end{equation}
where $q = \min(m,n)$ and $\sigma_i = \sqrt{\lambda_i}$ with $\lambda_i$ being the $i$-th eigenvalue of $A^{\dagger}A$ or $AA^{\dagger}$. If additionally all the elements of $A$ are real, then $U$ and $V$ are real orthogonal matrices. 
\end{theorem}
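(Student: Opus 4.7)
My plan is to reduce the construction of the SVD to the spectral theorem applied to the Hermitian positive semi-definite matrix $A^\dagger A$. Since $(A^\dagger A)^\dagger = A^\dagger A$ and $v^\dagger A^\dagger A v = \|Av\|^2 \geq 0$ for every $v \in \mathbb{C}^n$, the spectral theorem yields a unitary matrix $V = (v_1,\ldots,v_n)$ with $A^\dagger A = V\Lambda V^\dagger$, where $\Lambda = \diag(\lambda_1,\ldots,\lambda_n)$ and $\lambda_1 \geq \cdots \geq \lambda_n \geq 0$. I would then invoke the identity $\rank(A^\dagger A) = \rank(A) = r$ (which follows from $\ker A = \ker A^\dagger A$, itself a consequence of $\|Av\|^2 = v^\dagger A^\dagger A v$) to deduce that exactly $r$ of the eigenvalues are strictly positive, so that $\lambda_1\geq\cdots\geq\lambda_r > 0 = \lambda_{r+1} = \cdots = \lambda_n$. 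Setting $\sigma_i := \sqrt{\lambda_i}$ then yields the required ordering of the singular values.

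Next I would construct the left singular vectors. For $i \in \{1,\ldots,r\}$, define $u_i := \sigma_i^{-1}\,A v_i$. Orthonormality follows directly from
\begin{equation*}
u_i^\dagger u_j \;=\; \frac{v_i^\dagger A^\dagger A v_j}{\sigma_i\sigma_j} \;=\; \frac{\lambda_j\, v_i^\dagger v_j}{\sigma_i\sigma_j} \;=\; \delta_{ij},
\end{equation*}
using the orthonormality of the $v_i$'s and the eigenvalue equation $A^\dagger A v_j = \lambda_j v_j$. To produce the full $m\times m$ unitary $U$, I would extend $\{u_1,\ldots,u_r\}$ to an orthonormal basis of $\mathbb{C}^m$ via Gram--Schmidt, which is possible since $r \leq m$.

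The factorization is then verified column by column. For $i \leq r$, $A v_i = \sigma_i u_i$ by construction; for $i > r$, $\|A v_i\|^2 = v_i^\dagger A^\dagger A v_i = \lambda_i = 0$, so $A v_i = 0$, matching the zero columns of $\Sigma$. Hence $AV = U\Sigma$, and right-multiplying by $V^\dagger$ gives $A = U\Sigma V^\dagger$. The claim that the columns of $U$ are eigenvectors of $AA^\dagger$ follows at once, since $AA^\dagger u_i = \sigma_i^{-1} A A^\dagger A v_i = \sigma_i^2 u_i$ for $i \leq r$, while the completion vectors lie in $\ker A^\dagger$. For the real case, $A^\dagger A = A^\top A$ is a real symmetric matrix, so the spectral theorem gives a real orthogonal $V$; the definition of the $u_i$'s and the Gram--Schmidt completion then remain inside $\mathbb{R}^m$, producing a real orthogonal $U$. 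The only subtle step is the rank correspondence $\rank(A^\dagger A) = \rank(A)$, which is precisely what ties the block structure of $\Sigma$ to the rank of $A$ and guarantees the sharp count of nonzero singular values; once this is in place, everything else is orderly bookkeeping.
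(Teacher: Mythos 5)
Your construction is correct and follows exactly the route the paper indicates: it proves the theorem only by citation to standard references, after motivating SVD via the observation that $A^\dagger A$ and $AA^\dagger$ are Hermitian and hence unitarily diagonalizable, which is precisely the spectral-theorem argument you carry out in full (including the key rank identity $\rank(A^\dagger A)=\rank(A)$ and the extension of $\{u_1,\ldots,u_r\}$ to an orthonormal basis). No gaps; your proof supplies the details the paper delegates to the cited textbooks.
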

\begin{proof}
    See theorem 3.1.1 of Ref~\cite{Horn1991}, theorem 2.6.3 of Ref.~\cite{Horn2013}, or theorem 1.3.9 of Ref.~\cite{Tao2012}.
\end{proof}
\begin{remark}
The nonnegative numbers $\sigma_1, ..., \sigma_q$ in the previous theorem are called the \emph{singular values} of $A$ while the vectors $u_1, ..., u_m$ and $v_1,...,v_n$ are respectively called the \emph{left and right singular vectors} of $A$. 
For clarity, especially when the singular values of multiple matrices are involved, we will define $\sigma_i$ as a function of $A$ and write its values as $\sigma_i(A)$.
\end{remark}

\begin{remark}
In general, there is no obvious relationship between the eigenvalues and the singular values of a (square) matrix. However, for the family of normal matrices (including hermitian, anti-hermitian, unitary, and anti-unitary matrices), the singular values are given by the module of the eigenvalues.
%
To visualize the singular values, it is typical to plot them in a decreasing order, which is called a scree plot in the context of PCA \cite{Ferre1995, Abdi2010}, or illustrate them in a histogram.
\end{remark}

The SVD is thus closely related to the notion of rank, since the number of nonzero singular values of a matrix is \textit{equal} to its rank (while the number of its nonzero eigenvalues is lower or equal to its rank~\cite[p.151]{Horn2013}). Its relation to dimension reduction then becomes obvious: one can truncate the matrices $U$, $V$, and $\Sigma$ by removing their last columns (and rows for $\Sigma$) to get smaller matrices $U_r = (u_1\,...\,u_r)$, $V_r = (v_1\,...\,v_r)$, and $\Sigma_r = \diag(\sigma_1,...,\sigma_r)$ with $r = \rank{A}$, and obtain a rank factorization:
\begin{equation}\label{eq:compact_svd}
    A = U_r \Sigma_r V_r^\dagger,
\end{equation}
which is sometimes called the compact singular value decomposition. More importantly for dimension reduction, when the matrices $U$, $V$, and $\Sigma$ are truncated to $U_k$, $V_k$, $\Sigma_k$ with $k<n$, the truncated SVD is the optimal low-rank factorization as it will be seen in the next subsection.

\begin{remark}
It is often more convenient to rewrite the SVD in Eq.~\eqref{eq:svd} or equivalently in Eq.~\eqref{eq:compact_svd} as
\begin{equation}\label{eq:svd_sum}
    A = \sum_{i=1}^r \sigma_i \, u_i\,v_i^\dagger\,.
\end{equation}
This shows that any matrix of rank $r$ is equal to the sum of $r$ linearly independent unitary matrices, each being of rank 1 and having a (Frobenius or spectral) norm equal to 1. If all the singular values are distinct, then $\sigma_1 \, u_1\,v_1^\dagger$ and $\sigma_r \, u_r\,v_r^\dagger$ respectively constitute the most and the least important contributions to the matrix $A$. Moreover, Eq.~\eqref{eq:svd_sum} implies an explicit formula for the Moore-Penrose pseudo-inverse of $A$,
\begin{equation}\label{eq:pseudo_sum}
    A^+ = \sum_{i=1}^r \frac{1}{\sigma_i} \, v_i\,u_i^\dagger\,,
\end{equation}
proving that $A$ and $A^+$ share the same rank.
\end{remark}

\subsection{Weyl's theorem and optimal low-rank factorization}
\label{SIsubsec:weyl_facto}
The SVD shares many equivalent theorems with the eigenvalue decomposition \cite{Horn2013}, such as Rayleigh's theorem, the Courant-Fischer theorem, Cauchy's interlacing theorem, and, in particular, Weyl's theorem, which is of fundamental importance in the paper. The following result was obtained in 1951 by Fan~\cite[Theorem 2]{Fan1951}. 
\begin{theorem}\label{thm:weyl_vas}
  Let $A$ and $B$ be two matrices of dimension $m \times n$ and let $q = \min(m, n)$. Then, 
  \begin{equation}\label{eq:inegalite_weyl_vas}
      \sigma_{i+j-1}(A+B) \leq \sigma_i(A) + \sigma_j(B) \qquad \forall \; 1 \leq i,\,j,\, i+j - 1 \leq q,
  \end{equation}
  where $\sigma_k(X)$ is the $k$-th singular value of $X$ and the singular values are ordered in the usual decreasing order.
\end{theorem}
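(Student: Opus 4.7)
The plan is to prove Weyl's inequality via a subspace intersection argument built on the Courant--Fischer min-max characterization of singular values. Concretely, I would first derive (from the SVD given in Theorem~\ref{thm:SVD}) the variational formula
\begin{equation*}
\sigma_k(M) \;=\; \min_{\substack{S\subseteq \mathbb{C}^n \\ \dim S = n-k+1}} \; \max_{\substack{x\in S \\ \|x\|=1}} \|Mx\|,
\end{equation*}
valid for any $m\times n$ matrix $M$ and any $1\leq k\leq q$. The upper bound side of this formula is realised by the subspace $S_k^\star=\operatorname{span}(v_k,\ldots,v_n)$, where $v_1,\ldots,v_n$ are the right singular vectors of $M$; the matching lower bound is obtained by noting that any $(n-k+1)$-dimensional subspace $S$ must meet $\operatorname{span}(v_1,\ldots,v_k)$ nontrivially, so the max over $S$ is at least $\sigma_k(M)$.

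With this tool in hand, I would carry out the main argument in three steps. First, I would pick the candidate subspaces $S_A=\operatorname{span}(v^A_i,\ldots,v^A_n)$ and $S_B=\operatorname{span}(v^B_j,\ldots,v^B_n)$ coming from the right singular vectors of $A$ and $B$, with respective dimensions $n-i+1$ and $n-j+1$. By the Grassmann identity $\dim(S_A\cap S_B)\geq \dim S_A+\dim S_B - n = n-(i+j-1)+1$, so $S_A\cap S_B$ contains a subspace $T$ of dimension exactly $n-(i+j-1)+1$ that is admissible in the min-max formula for $\sigma_{i+j-1}(A+B)$. Second, for any unit vector $x\in T\subseteq S_A\cap S_B$, the variational upper bounds inherited from the two SVDs yield $\|Ax\|\leq \sigma_i(A)$ and $\|Bx\|\leq \sigma_j(B)$, so the triangle inequality gives
\begin{equation*}
\|(A+B)x\| \;\leq\; \|Ax\| + \|Bx\| \;\leq\; \sigma_i(A)+\sigma_j(B).
\end{equation*}
Third, inserting this $T$ as a test subspace in the min-max characterization of $\sigma_{i+j-1}(A+B)$ yields the desired inequality, since the minimum over all admissible $(n-(i+j-1)+1)$-dimensional subspaces is no larger than the maximum attained on $T$.

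The main obstacle I anticipate is a careful handling of the rectangular case to ensure that the min-max formula, the index range $i+j-1\leq q$, and the dimension count all line up. When $m<n$, the matrix $V$ still has $n$ columns but only $q=m$ singular values, so one should verify that the ``extra'' right singular vectors lying in the kernel of $M$ do not spoil either bound in the variational formula; they in fact enlarge $S_k^\star$ in a harmless way since $\|Mx\|=0$ there. A secondary subtlety is that the intersection $S_A\cap S_B$ could strictly exceed the dimension required by the min-max, which is why I explicitly pass to a subspace $T$ of the exact admissible dimension before applying the formula. Once these technicalities are settled, the proof reduces to the clean Grassmann-plus-triangle-inequality argument sketched above, which is the core conceptual content of Weyl's and Fan's bound.
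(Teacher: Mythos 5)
Your proof is correct, but it follows the second of the two routes the paper merely cites rather than the one it actually sketches. The paper's primary argument (following Horn \& Johnson) embeds $A$ and $B$ into the $(m+n)\times(m+n)$ Hermitian dilations $\bigl(\begin{smallmatrix}0 & A\\ A^{\dagger} & 0\end{smallmatrix}\bigr)$ and $\bigl(\begin{smallmatrix}0 & B\\ B^{\dagger} & 0\end{smallmatrix}\bigr)$, whose nonzero eigenvalues are $\pm\sigma_k$; Weyl's classical inequalities for sums of Hermitian matrices then transfer to singular values after a careful re-indexing of the positive part of the spectrum. You instead work directly with the Courant--Fischer min-max characterization of $\sigma_k$, intersect the two right-singular-vector subspaces via the Grassmann dimension count $\dim(S_A\cap S_B)\geq n-(i+j-1)+1$, and finish with the triangle inequality on a common unit vector --- which is exactly the proof of Theorem~3.3.16 in Horn and Johnson's \emph{Topics in Matrix Analysis} that the paper points to as an alternative. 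Your route has the advantage of being self-contained once the variational formula is established (and you correctly flag the only real subtlety, namely that in the rectangular case the right singular vectors $v_{q+1},\ldots,v_n$ lie in the kernel and only help the upper bound); the dilation route buys the result as an immediate corollary of the Hermitian Weyl theorem at the cost of somewhat fiddly index bookkeeping. Both the dimension count, the passage to a subspace $T$ of the exact admissible dimension, and the final insertion into the min-max formula are handled correctly, so there is no gap.
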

\begin{proof}
A detailed proof based on Weyl's theorem can be done by following the steps of Horn~\&~Johnson~\cite{Horn2013}. A proof that uses the Courant-Fisher theorem for singular values is also given in Ref.~\cite[Theorem 3.3.16]{Horn1991}.
\end{proof}

\begin{remark}\label{rem:weyl}
If $i = j = 1$, then the previous theorem implies that the dominant singular values satisfy
\begin{equation}
      \sigma_1(A+B) \leq \sigma_1(A) + \sigma_1(B).
\end{equation}
The latter inequality was known before the generalization by Ky Fan and it is often attributed \cite{Marshall2011} to Wittmeyer~\cite[Eq.~(VIII)]{Wittmeyer1936}, but Wittmeyer himself writes in a footnote that the equation is in Wintner,``Spektraltheorie der unendlicheri Matrizen'', Leipzig 1929, p.~130. Nowadays, the result is, perhaps, not surprising: it is the triangle inequality for the spectral matrix norm.
\end{remark}
A first key corollary~\cite[Exercise 1.3.22 (iv)]{Tao2012} allows us to analyze random graphs through perturbation theory of random matrices. Indeed, the following result establishes that the strength (norm) of a matrix perturbation bounds the difference between each singular value of a matrix and the ones of its perturbed version.
\begin{corollary}\label{cor:weyl_vas}
Let $A$ and $B$ be two matrices of dimension $m \times n$ and let $q = \min(m, n)$. 
\begin{align}\label{eq:weyl_vas_cor}
    \left|\sigma_i(A+B) - \sigma_i(A)\right| \leq \|B\|_2 \qquad \forall \; 1 \leq i \leq q,
\end{align}
where $\sigma_i(X)$ is the $i$-th singular value of $X$ and the singular values are ordered in the usual decreasing order.
\end{corollary}
The importance of the Weyl theorem in the paper also relies on what it implies for dimension reduction. In particular, it allows proving the Schmidt-Eckart-Young-Mirsky theorem \cite{Schmidt1907, Eckart1936, Mirsky1960, Stewart1993, Ben-Israel2003, Antoulas2005} (often called the Eckart-Young theorem \cite[Theorem 2.4.8]{Golub2012} or the Eckart-Young-Mirsky theorem \cite{Markovsky2019}) which shows that the truncated SVD is the optimal low-rank approximation of a matrix according to unitarily invariant norms. In Theorem~\ref{thm:SEYM}, we present our formulation of the result (illustrated in Fig.~\ref{fig:SVD_low_rank}) for the Frobenius norm and the spectral norm.

\begin{figure}[t]
    \centering
    \includegraphics[width=0.9\linewidth]{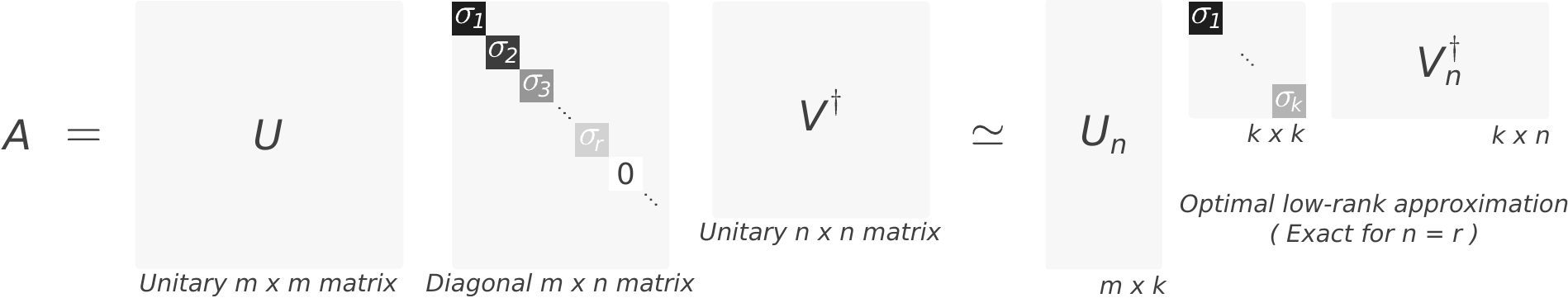}
    \caption{Truncated SVD is the optimal low-rank approximation of any matrix $A$ according to unitarily invariant norms.}
    \label{fig:SVD_low_rank}
\end{figure}
\begin{theorem}\label{thm:SEYM}
Let $A$ be a matrix of rank greater than or equal to $k$. Consider the optimization problem
  \begin{equation}
        \begin{array}{ll}
        \underset{B}{\operatorname{minimize} } & \|A - B\|^2 \tag{P0}\label{eq:P0}\\
        \operatorname{subject\;to} & \rank{B} \leq k \,,
        \end{array}
\end{equation}
where $\|\,\|$ denotes the spectral norm $\|\,\|_2$ or the Frobenius norm $\|\,\|_F$. Then, the minimum error of problem~\eqref{eq:P0} is
\begin{align}
   \min_{\substack{B, \rank{B} \leq k}}\|A - B\|_2^2 = \sigma_{k+1}^2 \quad \text{or} \quad \min_{\substack{B, \rank{B} \leq k}}\|A - B\|^2_F = \sum_{i=k+1}^q \sigma_i^2,
\end{align}
where $q=\min\{m,n\}$ and $\sigma_1\geq ...\geq \sigma_q$ are the singular values of $A$. Furthermore, in both cases, a solution to problem~\eqref{eq:P0} is provided by the $k$-truncated SVD of $A$, i.e.,
\begin{equation} 
    B^*  = \argmin_{_{\substack{B,\rank{B} \leq k}}} \|A - B\|_2^2 =  \argmin_{_{\substack{B,\rank{B} \leq k}}} \|A - B\|^2_F = \sum_{i=1}^{k}\sigma_{i}u_{i}v_{i}^{\dagger},
\end{equation}
where  $u_{i}$, $v_{i}$ are the $i$-th left and right singular vectors of $A$, respectively. The solution $B^*$ is unique if $\sigma_{k} > \sigma_{k+1}$.
\end{theorem}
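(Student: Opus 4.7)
The plan is to handle the lower bound (optimality) and the upper bound (achievability) separately, and then address uniqueness. The spectral-norm lower bound falls straight out of Weyl's inequality (Theorem~\ref{thm:weyl_vas}) already at our disposal. Given any $B$ with $\rank(B)\leq k$, write $A = B + (A-B)$ and apply Weyl's inequality with $i=k+1$, $j=1$:
\begin{equation*}
\sigma_{k+1}(A) \;\leq\; \sigma_{k+1}(B) + \sigma_1(A-B).
\end{equation*}
Since $\rank(B)\leq k$, the singular value $\sigma_{k+1}(B)$ vanishes, and because $\sigma_1(A-B) = \|A-B\|_2$, this already gives $\|A-B\|_2 \geq \sigma_{k+1}$ for every admissible $B$.

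For the Frobenius lower bound I would iterate the same idea: for any $i\geq 1$, Weyl's inequality yields $\sigma_{k+i}(A) \leq \sigma_{k+i}(B) + \sigma_i(A-B) = \sigma_i(A-B)$, again because $\sigma_{k+i}(B) = 0$ when $\rank(B)\leq k$. Squaring and summing over $i$ from $1$ to $q-k$ gives
\begin{equation*}
\sum_{i=k+1}^{q}\sigma_i(A)^2 \;\leq\; \sum_{i=1}^{q-k} \sigma_i(A-B)^2 \;\leq\; \sum_{i=1}^{q}\sigma_i(A-B)^2 \;=\; \|A-B\|_F^{\,2},
\end{equation*}
where the last equality is the identity $\|X\|_F^2=\sum_i \sigma_i(X)^2$ (a consequence of unitary invariance of the Frobenius norm applied to the SVD of $X$).

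For achievability, I would substitute the candidate $B^{*}=\sum_{i=1}^{k}\sigma_i u_i v_i^{\dagger}$ and use the SVD of $A$ in the form~\eqref{eq:svd_sum} to compute $A-B^{*}=\sum_{i=k+1}^{r}\sigma_i u_i v_i^{\dagger}$. Since the $u_i$'s and $v_i$'s are orthonormal, this expression is already a singular value decomposition of $A-B^{*}$ with singular values $\sigma_{k+1},\ldots,\sigma_r,0,\ldots,0$, so $\|A-B^{*}\|_2 = \sigma_{k+1}$ and $\|A-B^{*}\|_F^{\,2} = \sum_{i=k+1}^{q}\sigma_i^{\,2}$, matching the two lower bounds and proving $B^{*}$ is a minimizer in both norms.

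The subtle step is uniqueness under the gap condition $\sigma_k > \sigma_{k+1}$. For the Frobenius norm I would argue as follows: if $B$ achieves the minimum, then every Weyl inequality used above must be saturated, which forces $\sigma_i(A-B)=\sigma_{k+i}(A)$ for $i=1,\ldots,q-k$ and $\sigma_i(A-B)=0$ for $i>q-k$, hence $\rank(A-B)\leq q-k$ and $\rank(B)=k$. Working in a basis compatible with the SVD of $A$ and using that the top-$k$ singular subspace of $A$ is uniquely determined (a consequence of the gap $\sigma_k>\sigma_{k+1}$), one shows that the image of $B$ must coincide with $\mathrm{span}(u_1,\ldots,u_k)$ and similarly its coimage with $\mathrm{span}(v_1,\ldots,v_k)$, which pins down $B=B^{*}$. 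The spectral-norm case does not admit uniqueness in general (many perturbations of the tail leave $\|A-B\|_2$ at $\sigma_{k+1}$), so uniqueness really needs the Frobenius norm or the additional gap; this asymmetry is the main place where I would need to be careful in the write-up.
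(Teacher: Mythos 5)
Your treatment of the lower bounds and of achievability is essentially the paper's own route: the paper likewise applies Weyl's inequality (Theorem~\ref{thm:weyl_vas}) to the splitting $A=(A-B)+B$ to get $\sigma_{i+k}(A)\leq \sigma_i(A-B)$ once $\sigma_j(B)=0$ for $j\geq k+1$, then sums squares for the Frobenius case and takes $i=1$ for the spectral case, and verifies attainment by writing $A-B^*=\sum_{i=k+1}^{r}\sigma_i u_i v_i^{\dagger}$ and reading off its singular values. One small slip: the intermediate inequality you quote, $\sigma_{k+i}(A)\leq\sigma_{k+i}(B)+\sigma_i(A-B)$, is not a literal instance of Weyl (the correct indices give $\sigma_{k+i}(A)\leq\sigma_{k+1}(B)+\sigma_i(A-B)$), but since both $\sigma_{k+1}(B)$ and $\sigma_{k+i}(B)$ vanish for $\rank B\leq k$, the conclusion you need, $\sigma_{k+i}(A)\leq\sigma_i(A-B)$, is unaffected.

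Where you genuinely diverge is uniqueness. The paper's argument only shows that if $\sigma_k=\sigma_{k+1}$ then the minimizer is not unique (exhibiting a second truncation), i.e., it establishes the contrapositive of ``unique $\Rightarrow$ gap'' rather than the stated implication ``gap $\Rightarrow$ unique.'' You aim at the correct direction via saturation of the Weyl inequalities plus uniqueness of the top-$k$ singular subspaces, and you correctly flag that spectral-norm minimizers are generically non-unique even with the gap, a subtlety the paper glosses over. However, your argument is still a sketch at the decisive step: ``one shows that the image of $B$ must coincide with $\mathrm{span}(u_1,\ldots,u_k)$'' is asserted, not derived, and the inference from saturation to $\rank(B)=k$ and to the alignment of the singular subspaces of $B$ with those of $A$ needs to be carried out (e.g., by characterizing equality in Weyl's inequality or by a direct trace/projection computation in the Frobenius case). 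As written, the uniqueness claim is the one place where your proof is not yet complete.
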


For our paper, especially to find the upper bound on the alignment error [Theorem~\ref{thm:upper_bound_rapid_decrease}], Theorem~\ref{thm:SEYM} entails another important result: the projectors formed by the left and right singular vector matrices are optimal orthogonal projectors. This fact seems to be well known \cite[Fact 2]{Harvey2011} but, to the authors' knowledge, has not yet been presented in a comprehensive form accompanied by a detailed proof. We hence introduce the following theorem, which will be used later to prove Theorem~\ref{thm:upper_bound_rapid_decrease}.

\begin{theorem}\label{thm:optimal_projection_general}
Let $A$ be a $m \times n$ real matrix of rank $r$ with singular value decomposition $U\Sigma V^\top$ and $k$-truncated singular value decomposition $U_k \Sigma_k V_k^\top$. Let  $\|\,\|$ denote the spectral norm $\|\,\|_2$ or the Frobenius norm $\|\,\|_F$. Consider the optimization problem
 \begin{align}
    \text{\normalfont minimize} \,\,\|(I-M^+M)A\|^2
    \tag{P1}\label{eq:P1},
\end{align}
where the optimization variable $M$ is a $k\times m$ matrix such that
$k \leq m$.
\begin{enumerate}[wide, labelindent=0pt]
    \item If $k = n \leq m$, then $M = A^+$ solves the problem~\eqref{eq:P1} with error 0.
    \item If $k = m$, then any matrix $M$ with rank $m$ solves the problem~\eqref{eq:P1} with error 0.
    \item If $k \leq r < \min(m, n)$, then $M = U_k^\top$ minimizes problem~\eqref{eq:P1} with errors
    \begin{align}\label{eq:optimal_projection_errors}
        \min_{M, \rank{M} \leq k}\|(I-M^+M)A\|_2^2 = \sigma_{k+1}^2\quad\text{and}\quad\min_{M, \rank{M} \leq k}\|(I-M^+M)A\|^2_{F} = \sum_{i=k+1}^{\min(m,n)} \sigma_i^2,
    \end{align}
    which are equal to zero if $k = r$. 
\end{enumerate}
Similarly, let $B$ be a $\ell \times m$ real matrix of rank $r$ with singular value decomposition $LSR^\top$ and $k$-truncated singular value decomposition $L_k S_k R_k^\top$. Consider the optimization problem
 \begin{align}
    \text{\normalfont minimize} \,\,\|B(I-M^+M)\|^2
    \tag{P2}\label{eq:P2}
\end{align}
where, again, the optimization variable $M$ is a $k\times m$ matrix with
$k \leq m$. 
\begin{enumerate}[wide, labelindent=0pt]\addtocounter{enumi}{3}
    \item If $k = \ell \leq m$, then $M = B$ solves the problem~\eqref{eq:P2} with error 0.
    \item If $k = m$, then any matrix $M$ with rank $m$ solves the problem~\eqref{eq:P2} with error 0.
    \item If $k \leq r < \min(\ell, m)$, then $M = R_k^\top$ minimizes problem~\eqref{eq:P2} with errors
    \begin{align}
        \min_{M, \rank{M} \leq k}\|B(I-M^+M)\|_2^2 = \sigma_{k+1}^2 \quad\text{and}\quad\min_{M, \rank{M} \leq k}\|B(I-M^+M)\|^2_{F} = \sum_{i=k+1}^{\min(\ell, m)} \sigma_i^2,
    \end{align}
which are equal to zero if $k = r$. 
\end{enumerate}
\end{theorem}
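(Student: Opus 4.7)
My plan is to reduce both problems~\eqref{eq:P1} and~\eqref{eq:P2} to the Schmidt-Eckart-Young-Mirsky theorem (Theorem~\ref{thm:SEYM}) by exploiting the fact that $M^+M$ is the orthogonal projector onto $\operatorname{row}(M)$, whose rank is at most $k$. The central observation is therefore that $M^+M\,A$ (resp.\ $B\,M^+M$) is a matrix of rank at most $k$, which is exactly the class of competitors in problem~\eqref{eq:P0} of Theorem~\ref{thm:SEYM}. This gives a lower bound for free; the rest of the work is then to exhibit matrices $M$ that attain the bound.

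For problem~\eqref{eq:P1} in Case 3 ($k \leq r < \min(m,n)$), I would argue as follows. Writing $C = M^+M\,A$, we have $\rank(C) \leq \rank(M^+M) \leq k$, so Theorem~\ref{thm:SEYM} applied to $A$ gives
\begin{equation*}
\|(I-M^+M)A\|_2^2 \geq \sigma_{k+1}^2\,, \qquad \|(I-M^+M)A\|_F^2 \geq \sum_{i=k+1}^{\min(m,n)}\sigma_i^2\,.
\end{equation*}
Then I would verify that $M = U_k^\top$ attains these bounds. The key small computation is that $U_k$ has orthonormal columns ($U_k^\top U_k = I_k$), hence $(U_k^\top)^+ = U_k$, so $M^+M = U_kU_k^\top$. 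Using the SVD representation $A = \sum_{i=1}^{r}\sigma_i u_i v_i^\top$ and the orthonormality relations $U_k^\top u_i = e_i$ for $i \leq k$ and $U_k^\top u_i = 0$ for $i > k$, I get $U_kU_k^\top A = \sum_{i=1}^k \sigma_i u_i v_i^\top$, so $(I-U_kU_k^\top)A = \sum_{i=k+1}^{\min(m,n)}\sigma_i u_i v_i^\top$. The spectral and Frobenius norms of this truncated SVD give exactly the bounds of Eq.~\eqref{eq:optimal_projection_errors}. Cases 1 and 2 I would dispatch separately as degenerate instances. Case 2 ($k=m$) is immediate since any full-rank $m\times m$ matrix has $M^+ = M^{-1}$, so $M^+M = I$ and the residual vanishes. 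Case 1 ($k=n\leq m$, $M=A^+$) uses $(A^+)^+ = A$, yielding $M^+M = AA^+$, and then the defining identity $AA^+A = A$ of the Moore-Penrose pseudoinverse gives $(I-AA^+)A = 0$.

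Problem~\eqref{eq:P2} is handled by the same strategy, but applied on the right. Since $\rank(B\,M^+M) \leq k$, Theorem~\ref{thm:SEYM} applied to $B$ gives the two lower bounds in terms of the singular values of $B$. For Case 6, I would pick $M = R_k^\top$ so that $M^+M = R_kR_k^\top$, and compute $B\,R_kR_k^\top = \sum_{i=1}^k \sigma_i l_i r_i^\top$ using $R^\top R_k = \bigl(\begin{smallmatrix}I_k\\0\end{smallmatrix}\bigr)$; the residual $B(I-R_kR_k^\top) = \sum_{i=k+1}^{\min(\ell,m)}\sigma_i l_i r_i^\top$ attains the bounds. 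Cases 4 and 5 are again degenerate: Case 5 is the same full-rank square argument as Case 2, and Case 4 ($M=B$) follows from $BB^+B = B$, giving $B(I-B^+B) = 0$.

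The only subtle step I anticipate is the verification that the constructed $M$ indeed attains the lower bound, because this hinges on identifying $M^+M$ explicitly; this is painless when $M$ has orthonormal rows (as for $U_k^\top$ and $R_k^\top$) but requires care for $M = A^+$ or $M = B$, where one must invoke the Penrose identities rather than a direct formula. Once this identification is carried out, the bound from Theorem~\ref{thm:SEYM} is tight, and the theorem follows. No deeper obstacle seems to arise: the structure of the proof is essentially "SEYM gives a lower bound, orthonormal singular vectors achieve it."
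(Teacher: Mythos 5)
Your proposal is correct and follows essentially the same route as the paper's proof: lower-bound the objective via the Schmidt-Eckart-Young-Mirsky theorem using $\rank(M^+MA)\leq k$, then show $M=U_k^\top$ (resp.\ $R_k^\top$) attains it, with the degenerate cases dispatched through the Penrose identities. The only cosmetic difference is that you read both norms directly off the residual $\sum_{i=k+1}^{\min(m,n)}\sigma_i u_iv_i^\top$, whereas the paper evaluates the Frobenius case by a trace computation; these are equivalent.
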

\begin{proof}
We first consider problem~\eqref{eq:P1} and prove items 1--3.
\begin{enumerate}[wide, labelindent=0pt, itemsep=1em]

\item If $k = n \leq m$, then the dimensions of the matrices $M$ and $A^+$ coincide and one can choose $M = A^+$. 
Hence
\begin{equation*}
    \|(I-M^+M)A\|^2= \|A-AA^+A\|^2 = \|A - A\|^2= 0,
\end{equation*}
since $(A^+)^+ = A$ and $AA^+A = A$ by the defining properties of the Moore-Penrose pseudo-inverse \cite{Penrose1955}.

\item If $k = m$, then $M$ is square. Any rank $m$ matrix $M$ of dimension $m\times m$ is invertible, so $M^+ = M^{-1}$ and $I - M^{-1}M = 0$, which implies that $\|(I-M^+M)A\|^2=0$.

\item We first prove that 
\begin{equation*}
    \min_{M, \rank{M} \leq k}\|(I-M^+M)A\|^2 =   \min_{\substack{C\\ C=M^+MA, \rank{M} \leq k}}\|A-C\|^2.
\end{equation*}
Indeed, due to Sylvester's rank inequality \cite[Section 0.4.5 (c)]{Horn2013} and the inequality $\rank{M} \leq k$, 
\begin{equation*}
    \rank(M^+M)\leq \min\{\rank{M},\rank{M^+}\}\leq k,
\end{equation*}
which in turn implies that
\begin{equation*}
    \rank(M^+MA)\leq \min\{\rank(M^+M),r\} = \rank(M^+M) \leq k,
\end{equation*}
where the equality follows from $k\leq r$. Thus, 
\begin{equation}\label{eq:equiv_prob}
    \min_{M, \rank{M} \leq k}\|(I-M^+M)A\|^2 =   \min_{\substack{C\\ C=M^+MA,\, \rank{C} \leq k}}\|A-C\|^2.
\end{equation}
Let us now focus on the Frobenius norm. The new form of the problem in Eq.~\eqref{eq:equiv_prob} is compatible with Theorem~\ref{thm:SEYM}, but with the additional equality constraint that $C = M^+MA$, which directly implies the inequality
\begin{equation*}
    \min_{\substack{C\\ C=M^+MA,\, \rank{C} \leq k}}\|A-C\|^2 \geq \min_{\substack{C\\\rank{C} \leq k}}\|A-C\|^2 = \sum_{i=k+1}^{\min\{m,n\}} \sigma_i^2
\end{equation*}
or equivalently, from Eq.~\eqref{eq:equiv_prob},
\begin{equation}\label{eq:lowerbound_C}
    \min_{M, \rank{M} \leq k}\|(I-M^+M)A\|^2_F \geq \sum_{i=k+1}^{\min\{m,n\}} \sigma_i^2.
\end{equation}
Therefore, if we find a matrix $M$ that reaches the lower bound of inequality \eqref{eq:lowerbound_C}, then the minimization problem is solved. Below, we prove that $M=U_k^\top$ is such a solution.

The matrix $I-M^+M$ is an orthogonal projector (this is directly proven from the properties of the Moore-Penrose pseudoinverse) and therefore,
\begin{equation*}
    (I-M^+M)^\top(I-M^+M) = (I-M^+M)^2 = I-M^+M.
\end{equation*}
The cyclic property of the trace and the eigenvalue decomposition of $AA^\top$ from the SVD imply
\begin{align*}
    \|(I-M^+M)A\|^2_{F} = \tr\left[AA^\top(I-M^+M)\right] = \tr\left[U\Sigma^2U^\top(I-M^+M)\right].
\end{align*}
Let $M$ be equal to $U_k^\top$. Then, $M^+M = U_k U_k^\top$ and
\begin{align*}
    \|(I-M^+M)A\|^2_{F} = \tr\left[\sum_{i=1}^{\min(m,n)} \sigma_i^2 u_i u_i^\top - \sum_{i=1}^{\min(m,n)}\sum_{j=1}^k \sigma_i^2 u_i u_i^\top u_j u_j^\top\right].
\end{align*}
Since $u_i^\top u_j = \delta_{ij}$, we obtain
\begin{equation*}
    \sum_{i=1}^{\min(m,n)}\sum_{j=1}^k \sigma_i^2 u_i u_i^\top u_j u_j^\top = \sum_{i=1}^{\min(m,n)}\sum_{j=1}^k \sigma_i^2 u_i \delta_{ij} u_j^\top =  \sum_{j=1}^k \sigma_j^2 u_j u_j^\top
\end{equation*}
and thus,
\begin{align*}
    \|(I-M^+M)A\|^2_{F} &= \tr\left[\sum_{i=k+1}^{\min(m,n)} \sigma_i^2 u_i u_i^\top\right]
    = \sum_{i=k+1}^{\min(m,n)} \sigma_i^2 \tr\left[u_i u_i^\top\right]
    = \sum_{i=k+1}^{\min(m,n)} \sigma_i^2 .
\end{align*}
Hence, $B^* = U_kU_k^+A$ is a solution to the problem~\eqref{eq:P0}. If $k = r$, then $\|(I-M^+M)A\|^2_{F} = \sum_{i=r+1}^{\min(m,n)} \sigma_i^2 = 0$, because $\sigma_i = 0$ for all $i>r$. 

For the spectral norm, with $M = U_k^\top$, we have $(I-M^+M)A = (I-U_kU_k^\top) U\Sigma V^\top$ which is equal to
\begin{align*}
    \sum_{i=1}^{\min(m,n)}\sigma_i u_iv_i^\top -  \sum_{i=1}^k \sigma_i \sum_{j=1}^k u_ju_j^\top u_iv_i^\top -  \sum_{i=k+1}^{\min(m,n)}\sigma_i \sum_{j=1}^k u_ju_j^\top u_iv_i^\top = \sum_{i=k+1}^{\min(m,n)}\sigma_i u_iv_i^\top,
\end{align*}
where we have used $u_j^\top u_i = \delta_{ij}$ in the last two terms and the fact that $i$ is never equal $j$ in the last term. We conclude that $\|(I-M^+M)A\|_2 = \|\sum_{i=k+1}^{\min(m,n)}\sigma_i u_iv_i^\top\| = \sigma_{k+1}$ which shows that $M = U_k^\top$ minimizes the error in problem~\eqref{eq:P1}.

\smallskip

\end{enumerate}

The proofs of items 4--6 related to problem~\eqref{eq:P2} closely follow that of items 1--3.
\end{proof}
\noindent There is an interesting data science application for Theorem~\ref{thm:optimal_projection_general} as explained in the following example. 
\begin{example}\label{ex:optimal_projection_data_matrix_dynamical_system}
Let $X$ be a $m\times T$ data matrix where $m$ is the number of variables (features) and $T$ is the number of time steps (samples). Then, choosing $M = U_n^\top$ where $U_n = (u_1\,\,...\,\,u_n)$ with $u_\mu$ being the $\mu$-th left singular vector of the data matrix $X$ gives the minimal error to the optimization problem~\eqref{eq:P1} with $A = X$ and $d = T$. This particular example is related to the so-called proper orthogonal decomposition~\cite[p.278-279]{Antoulas2005}.
\end{example}

\subsection{Effective ranks}
\label{SIsubsec:effective_ranks}

In this section, we give more details about the different effective ranks presented in Table~\ref{tab:effective_ranks}.
\begin{itemize}
\item The \textit{stable rank}~\cite[Definition 7.6.7]{Vershynin2018}, also called numerical rank~\cite{Rudelson2007}, is defined as
\begin{equation}
    \mathrm{srank}(A) = \frac{\|A\|_F^2}{\|A\|_2^2} = \frac{\sum_{i=1}^r\sigma_i^2}{\sigma_1^2}\,.
\end{equation}
It thus measures the relative importance of the sum of the squared singular values with respect to the squared largest singular value. More colloquially, $\mathrm{srank}(A)$ compares the total energy of $A$ with the energy contained in the first component (first singular vectors) of $A$. Note that 
\begin{equation}\label{eq:norm_inequality_2F}
    \|A\|_2\leq\|A\|_F\leq\sqrt{r}\|A\|_2\,,
\end{equation}
where $r=\rank(A)$.
From the second inequality, we easily deduce the following upper bound: 
\begin{equation*}
    \mathrm{srank}(A) \leq r\,.
\end{equation*}
The stable rank is stable in the sense that it remains essentially unchanged under a small perturbation of the matrix $A$, contrarily to the rank \cite{Rudelson2007}. It is used in the design of fast (randomized) algorithms for low-rank approximations~\cite{Harvey2011, Cohen2016}. Because it also quantifies to what extent the elements of the matrix are gathered around the diagonal, the stable rank also measures the complexity of the connection patterns between the modules of a network~\cite{Desy2023}. 

\item The \textit{nuclear rank}~\cite[p.2183]{Kyrillidis2014} is defined as 
\begin{align}
    \mathrm{nrank}(A) = \frac{\|A\|_*}{\|A\|_2} = \frac{\sum_{i=1}^r\sigma_i}{\sigma_1}\,,
\end{align}
where $\|\,\|_*$ is the nuclear norm, also known as the trace norm or the Ky Fan norm. Similarly to the stable rank, it measures the relative importance of the sum of the singular values with respect to the largest singular value. The nuclear norm is upper-bounded such that
\begin{equation}
    \|A\|_*\leq \sqrt{r}\|A\|_F\leq r\|A\|_2\,,
\end{equation}
where we have used the first inequality of Eq.~\eqref{eq:norm_inequality_2F}. Therefore, we find that
\begin{equation}
    \mathrm{nrank}(A) \leq \sqrt{r}\,\,\mathrm{srank}^{1/2}(A) \leq r\,.
\end{equation}

\item The energy ratio, also called the cumulative explained variance, the reconstructed proportion, or the $R_v$ coefficient \cite{Abdi2010}, is 
\begin{equation}
    E(\ell) = \frac{\|A_\ell\|_F^2}{\|A\|_F^2} = \frac{\sum_{i=1}^\ell \sigma_i^2}{\sum_{j=1}^r \sigma_j^2}\,,
\end{equation}
where $A_\ell$ is the $\ell$-truncated SVD of $A$. The \textit{energy ratio} effective rank is
\begin{equation}
    \mathrm{energy}(A) = \min\Big(\argmax_{\ell\in\{1,...,N\}}\big(E(\ell) > \tau\big)\Big)\,,
\end{equation}
where $\tau \in (0, 1)$ is a threshold to be chosen. Note that this ``graph energy" differs (but is related) to the ones introduced in combinatorics by Gutman and Nikiforov that have applications in theoretical chemistry and spectral graph theory~\cite{Gutman2001, Nikiforov2007, Nica2018_SI}. 

\item Let the coordinate $(x_i, y_i)$ of the $i$-th singular values be given by
\begin{equation}
    x_i =\frac{i-1}{N-1}\quad \text{and} \quad y_i = \frac{\sigma_i - \sigma_N}{\sigma_1 - \sigma_N}
\end{equation}
for all $i\in\{1,...,N\}$, such that the largest singular value is at $(0,1)$ and the smallest singular value is at $(1,0)$. The distance between the line $L = \{(x, y)\,|\,x + y = 1\}$, passing through the largest and the smallest singular value, and the position $(x_i, y_i)$ of the $i$-th singular value is
\begin{equation}
    d_i = \frac{1}{\sqrt{2}}\,\left|x_i + y_i - 1\right|.
\end{equation}
The \textit{elbow position} is the largest distance between in $\{d_1,...,d_N\}$, i.e.,
\begin{equation}
    i_{\mathrm{elbow}} = \argmax_{i\in\{1,...,N\}}\,d_i.
\end{equation}
The \textit{elbow rank} is thus defined as the number of singular values above the position of the elbow, which is described by
\begin{equation}
    \mathrm{elbow}(A) = i_{\mathrm{elbow}} - 1 = \,\frac{1}{\sqrt{2}}\argmax_{i\in\{1,...,N\}}\,\left|\frac{i-1}{N-1} + \frac{\sigma_i - \sigma_N}{\sigma_1 - \sigma_N} - 1\right| - 1.
\end{equation}
This effective rank is often used as a rule of thumb to truncate the singular value distribution \cite{Shabalin2013, Gavish2014}. It is also named the ``scree'' or elbow test~\cite{Abdi2010} and may be computed in different ways than above~\cite{Ferre1995}.

\item Roy and Vetterli's effective rank \cite{Roy2007} or Cangelosi and Goriely's information dimension \cite{Cangelosi2007} is here called \textit{erank}. It is defined as
\begin{equation}
    \mathrm{erank}(A) = \exp\left[H(p_1,...,p_r)\right]
\end{equation}
where $H(p_1,...,p_r) = -\sum_{i=1}^r p_i \log p_i$ is the Shannon (spectral) entropy, measured in nat, as a function of the singular value mass function
\begin{equation}
    p_i =\frac{\sigma_i}{\|A\|_*} = \frac{\sigma_i}{\sum_{j=1}^r\,\sigma_j}, \quad \forall i \in \{1,...,r\}.
\end{equation}
Note that the square of the singular values could be used to define the singular value mass function, as in Ref.~\cite{Alter2000}. Among other interesting properties, the erank satisfies $1 \leq \mathrm{erank}(A) \leq r$ and it is naturally related to the minimum coefficient rate~\cite{Campbell1960} (see Ref.~\cite[Sec. 3]{Roy2007} for more details). Moreover, the maximum Shannon entropy is reached for a distribution of identical singular values ($\mathrm{erank}(A) = r \leq N$). Intuitively, this means that the erank measures the uniformity of the singular value distribution. For example, the erank of the singular values $(1,1,1,1,1,0,0,0,0,0)$ is 5, the erank of $(5, 1, 1, 1, 1, 1, 1, 1, 1, 1)$ is approximately 7.9, and the erank of $(30, 1, 1, 1, 1, 1, 1, 1, 1, 1)$ is approximately 2.8. 

\item 
Let $A = A_\ell + R$ where $A_\ell$ is a (deterministic) matrix of unknown rank $\ell$ and $R$ is some noise random matrix. Based on Ref.~\cite[Definition 4.2]{Perry2009} and Ref.~\cite{Gavish2014}, the \textit{optimal threshold} $\tau^*(A)$ is defined as
\begin{equation}
    \tau^*(A) = \argmin_\tau \|A_\ell - \hat{A}(\tau)\|,
\end{equation}
where $\hat{A}(\tau)$ is the $\tau$-truncated SVD of $A$ and $\|\cdot\|$ is some matrix norm (e.g., spectral norm, Frobenius norm). Intuitively, the problem of finding $\tau^*(A)$ is the problem of finding the singular values of the rank--$\ell$ matrix $A_\ell$ (signal matrix) by removing the ``noisy" singular values of $A$ due to $\gamma R$. When the level of noise is unknown, under some conditions on $R$, the optimal threshold
\begin{equation}\label{eq:gavish_donoho_unknown}
    \tau^*(A) = \frac{4\sigma_{\mathrm{med}}}{\sqrt{\,3\,\mu_{\mathrm{med}}}},
\end{equation}
minimizes the Frobenius norm $\|A_\ell - \hat{A}(\tau)\|_F$ in the limit of infinite matrices~\cite[Corollary 3 and Theorem 1]{Gavish2014}, where $\sigma_{\mathrm{med}}$ is the median of the observed singular value distribution of the weight matrix $A$ and $\mu_{\mathrm{med}}$ is the median of a Mar{\v{c}}enko-Pastur probability density function. 
The median $\mu_{\mathrm{med}}$ is generally unknown, but can be computed as explained in Ref.~\cite[p.5046]{Gavish2014}. These results, based on random matrix theory~\cite{Benaych-Georges2012}, are all rigorous in an asymptotic framework under specific conditions given in Ref.~\cite{Gavish2014}. We define $\mathrm{thrank}(A)$ has the number of singular values above the optimal singular value threshold $\tau^*(A)$, i.e.,
\begin{equation}
    \mathrm{thrank}(A) = \#\big\{\sigma_i\,\big|\,i\in\{1,...,N\}\text{ and }\sigma_i > \tau^*(A)\big\},
\end{equation}
where $\#$ is the cardinal of a set.

\item In a similar spirit as the optimal threshold, one can consider the optimal shrinkage of singular values \cite{Shabalin2013, Gavish2017, Leeb2022} to define an effective rank. Let $A = A_\ell + R$ where $A_\ell$ is a (deterministic) matrix of unknown rank $\ell$ and $R$ is some noise random matrix. 
Shortly, given the singular values of $A$, the scalar function $s: [0,\infty) \to [0,\infty), \,\, \sigma_i \mapsto s(\sigma_i)$ is called a \textit{shrinker} or a \textit{denoiser} of singular values. From Refs.~\cite{Shabalin2013, Gavish2017, Donoho2018, Leeb2022}, one can find analytically the optimal denoiser $s^*$ that minimizes different errors defined from the Frobenius norm, the spectral (operator) norm, or the nuclear norm. We define \textit{shrank} has the rank of the matrix with optimally shrinked singular values, i.e.,
\begin{equation}
    \mathrm{shrank}(A) = \#\big\{s^*(\sigma_i)\,\big|\,i\in\{1,...,N\}\text{ and }s^*(\sigma_i)>0\,\big\}.
\end{equation}
Note that this effective rank also depends on the median of the Mar{\v{c}}enko-Pastur distribution when the level of noise is unknown and estimated as in Ref.~\cite{Gavish2017}.

\end{itemize}

\begin{remark}
\phantom{henri}
\begin{itemize}
\item A simple criterion to determine whether a matrix is low rank can be formulated in terms of the minimal number of elements that are needed to fully describe the matrix by a rank decomposition. More precisely, a $N\times N$ matrix of rank $r$ can be defined to be of low rank if $2rN < N^2$ or identically, $r < N/2$.  
Similarly, a $N\times N$ matrix of effective rank $e$ can be defined to be of low effective rank if $e < N/2$. 
However, in the paper, we do not set one criterion to say that a matrix has a low (effective) rank: we rather compare different effective ranks of a graph with the actual rank and dimension of the corresponding matrix.
\item To compute the effective ranks exactly, the complete set of singular values is needed, which might not be possible to have for very large matrices (networks). Even if we did not use it in the paper, we acknowledge the fact that the singular values can be approximately obtained by using randomized SVD~\cite{Mahoney2011} (e.g., with sklearn.utils.extmath.randomized\_svd in \textit{Python}). Note also that the rank is computed from the singular values with a numerical tolerance of $10^{-13}$ in the paper. 

\item Among the all the effective ranks listed above, $\mathrm{erank}$, $\mathrm{nrank}$, and  $\mathrm{srank}$ distinguish themselves by their simple analytic formulation and their clear upper bounds. As we show below, they also enjoy a natural ordering.
\end{itemize}
\end{remark}

\begin{lemma}[Effective ranks ordering]\label{lem:ordering_ens_ranks}For any matrix $A$, 
    \begin{equation}
    \mathrm{srank}(A)\leq \mathrm{nrank}(A)\leq \mathrm{erank}(A)\leq \mathrm{rank}(A)\,.
\end{equation}
\end{lemma}
\begin{proof}
    Let $r=\mathrm{rank}(A)$. Then $A$ has exactly $r$ positive singular values: $\sigma_1\geq \sigma_2\geq \ldots\geq \sigma_r>0$. 

    To prove the first inequality, we recall that $\mathrm{srank}(A)=\sum_{i=1}^r(\sigma_i/\sigma_1)^2$ while $\mathrm{nrank}(A)=\sum_{i=1}^r\sigma_i/\sigma_1$. Now, $\sigma_i/\sigma_1\leq 1$ for all $i$, so $(\sigma_i/\sigma_1)^2\leq \sigma_i/\sigma_1$ for all $i$. Therefore, $\mathrm{srank}(A)\leq \mathrm{nrank}(A)$.

    The second and third inequalities both involve $\mathrm{erank}(A)$, which is defined as $e^{H}$, where $H$ is the entropy of the probability vector associated with the singular values of $A$, that is
    \begin{equation*}
        H=\sum_{i=1}^r\,p_i\ln\frac{1}{p_i},\qquad  p_i=\frac{\sigma_i}{\sum_{j=1}^r\sigma_j}.
    \end{equation*}
    Going back to the definition of  $\mathrm{nrank}$ allows us to write
    \begin{equation*}
        p_i=\frac{\sigma_i}{\sigma_1\sum_{j=1}^r\frac{\sigma_j}{\sigma_1}}=\frac{\sigma_i}{\sigma_1 \mathrm{nrank}(A)}.
    \end{equation*}
    Hence,
    \begin{align*}
        H=\sum_{i=1}^r \frac{\sigma_i}{\sigma_1 \mathrm{nrank}(A)}\ln\left(\frac{\sigma_1 \mathrm{nrank}(A)}{\sigma_i}\right)
     = \ln \big(\mathrm{nrank}(A)\big)+\frac{1}{\mathrm{nrank}(A)}\sum_{i=1}^r\frac{\sigma_i}{\sigma_1}\ln\frac{\sigma_1 }{\sigma_i}.
    \end{align*}
    Therefore, 
    \begin{equation}\label{eq:erank_expH}
        \mathrm{erank}(A)=e^{H}=\mathrm{nrank}(A)\cdot\exp\left(\frac{1}{\mathrm{nrank}(A)}\sum_{i=1}^r\frac{\sigma_i}{\sigma_1}\ln\frac{\sigma_1 }{\sigma_i}\right)\geq \mathrm{nrank}(A),
    \end{equation}
    as expected. Note that the equality holds only when $r=1$. Finally, using the concavity of the logarithm and Jensen's inequality, we deduce that
    \begin{equation*}
        H=\sum_{i=1}^r\,p_i\ln\frac{1}{p_i}\leq \ln\left(\sum_{i=1}^r\,p_i\frac{1}{p_i}\right)=\ln(r),
    \end{equation*}
    which readily implies the last inequality of the lemma. 
\end{proof}

\begin{remark}
   There is another type of effective rank, with the same form as the erank, that is related to the srank rather than the nrank (see the proof of the latter lemma). Indeed, we can define the ``stable erank'' as $\mathrm{serank}(A) = e^H$ with $p_i = \sigma_i^2/(\sum_{j=1}^r\sigma_j^2) = \sigma_i^2/(\sigma_1^2\mathrm{srank}(A))$. However, we do not use this effective rank in the paper.
\end{remark}

\noindent Regarding the optimal threshold and shrinkage, we also make the following remarks.
\begin{remark}
\phantom{henri}
\begin{itemize}
\item In the definition of the optimal threshold and shrinkage, it is assumed that the rank of the signal matrix is finite in the limit $N\to \infty$. 
\item If the type of noise is unknown, the assumptions of Refs.~\cite{Gavish2014, Gavish2017} do not necessarily hold and it is not guaranteed that the threshold and the shrinkage effective ranks are optimal.
\item In the GitHub repository  \href{https://github.com/VinceThi/low-rank-hypothesis-complex-systems}{low-rank-hypothesis-complex-systems}, module singular\_values/optimal\_shrinkage.py, we provide a Python translation (the first to our knowledge) of the Matlab script \href{http://purl.stanford.edu/kv623gt2817}{optimal\_shrinkage.m} from Ref.\cite{Gavish2017}. Moreover, we correct an error made in Ref.\cite{Gavish2017} concerning the optimal singular value shrinkage for operator norm loss with the Theorem 3.1 of W. Leeb \cite{Leeb2022}. We also merge and adapt for our purpose the Github repository \href{https://github.com/erichson/optht}{optht}, which is a Python implementation of the Matlab script \href{https://purl.stanford.edu/vg705qn9070}{optimal\_SVHT\_coef.m} \cite{Gavish2014}. Note that, when applied to a data matrix with unknown noise and a median smaller than the numerical zero (set to 1e-13), the optimal threshold and the optimal shrinkage effective ranks are computed for the singular values greater than 1e-13 only to ensure that the estimated noise is not zero.                
\end{itemize}
\end{remark}
We believe that the techniques that lead to the optimal threshold and shrinkage~\cite{Benaych-Georges2012, Shabalin2013, Gavish2014, Gavish2017, Leeb2022} will have a considerable impact on network science (besides, it already has an impact in neuroscience~\cite{Gao2015a}). Indeed, considering noisy networks is a long-standing challenge in network science (e.g., in sociology \cite{Killworth1976}) that has been addressed, for instance, with Bayesian inference~\cite{Peixoto2018, Newman2018nature, Young2020a, Young2021}. We think that there is still plenty of work to do to denoise or evaluate the level of noise of a network from its singular values. In particular, it would be interesting to find optimal singular value shrinkage functions~\cite{Gavish2017} with noise types that are more specific to real networks and random graphs.

All in all, we have gathered some important results on SVD. We will show how these results can be leveraged in network science, spectral graph theory, and dynamical systems.

\section{SVD in the study of complex systems}
\label{SIsec:apologiaSVD}

In this section, we present applications of SVD in the study of complex systems. 
First, we highlight the ubiquity of the low-rank hypothesis in random graph theory. Second, we present original theorems for the rapid decrease of the singular values in the directed soft configuration model and its weighted version. Third, inequalities and scaling behaviors for the effective ranks are deduced from different decreasing behaviors of the singular values. Fourth, we recall how SVD yields centrality measures for directed networks. Fifth, we discuss about preliminary results concerning the evolution of the effective rank in adaptive systems. Finally, we give a short overview of the use of SVD in dynamical systems. 

\subsection{SVD of random graphs}
\label{SIsubsec:svd_random_graph}
Random graphs and their eigenvalue spectrum have a long and rich history~\cite{Furedi1981, Bonacich1987, sompolinsky1988chaos, Chung1994, chung2003spectra, Dorogovtsev2003, VanMieghem2011, chung2011spectra, Nadakuditi2012, Peixoto2013, Castellano2017a, Nica2018_SI,  Newman2019, tang2022eigenvalues}, but less attention has been given to their singular value decomposition. Indeed, SVD is not mentioned in many of the main introductory textbooks of network science~\cite{Estrada2015, *Barabasi2016, *Latora2017, *Newman2018} or spectral graph theory~\cite{Cvetkovic1980, *Chung1994, *Nica2018_SI}. This phenomenon is somewhat expected, since both fields needed to develop their own set of tools, but we believe that SVD deserve much more attention. In the following, we present the low-rank formulation in a wide variety of models ranging from network science and random matrix theory to machine learning and neuroscience.

The adjacency or the weight matrix of a random network model can always be written as
\begin{equation}\label{eq:random_matrix}
    W = \langle W \rangle + R\,,
\end{equation}
where $R$ is a zero mean random matrix and $\langle W \rangle$ is the (deterministic) expected weight matrix. Typically, $\langle W \rangle$ depends upon a low-rank matrix $L$:
\begin{equation}\label{eq:Phi}
    \langle W \rangle = \Phi(L)\,,
\end{equation} 
where $\Phi$ a matrix-valued function of a matrix variable. In all the cases studied below, the $(i,j)$ element $\Phi(L)$ is equal to $\phi(L_{ij})$, with $\phi$ being a real scalar function of a real variable. To expose the low-rank formulation of $\langle W \rangle$, recall that there always exists a rank factorization 
\begin{equation}
    \langle W \rangle = LR^\top\,,
\end{equation}
where $L, R$ are $N\times r$ matrices and $r$ is the rank of $\langle W \rangle$. Another convenient form is the sum of rank one matrices 
\begin{equation}
    \langle W \rangle = \sum_{\mu, \nu = 1}^s \alpha_{\mu\nu}\bm{a}_{\mu}\bm{c}_{\nu}^{\top} = \sum_{\mu=1}^s \bm{a}_{\mu}\bm{b}_{\mu}^{\top}\,,
\end{equation}
where $\bm{a}_{\mu}$, $\bm{c}_{\mu}$ are $N\times 1$ vectors, $\alpha_{\mu\nu}$ is a real constant for all $\mu, \nu$, and $\bm{b}_{\mu} = \sum_{\nu=1}^s \alpha_{\mu\nu}\bm{c}_{\nu}$. Indeed, defining the $N\times s$ matrices $A = (\bm{a}_1,...,\bm{a}_s)^\top$ and $B = (\bm{b}_1,...,\bm{b}_s)^\top$ yields $\langle W \rangle = AB^\top$ and ensures that the rank of $W$ is at most $s$. In the next examples, we provide the details about the random graphs of Table~\ref{tab:random_graphs} in the Methods. 

\begin{example}[Network science---unweighted graphs]\label{ex:network_science} A large class of binary random graphs are described by Bernouilli random matrices, $R_{ij}$ being equal to either $-\langle W_{ij} \rangle$ or $1 -\langle W_{ij} \rangle$. The expected adjacency matrix for...
\begin{itemize}
    \item ...the \textit{$\mathcal{G}(N, p)$ model} \cite{Solomonoff1951, Gilbert1959, Erdos1960a} with self-loops is
\begin{equation}\label{eq:gnp}
    \langle W \rangle = L = Np\,\hat{\bm{1}}\hat{\bm{1}}^\top\,,
\end{equation}
where $Np\hat{\bm{1}}\hat{\bm{1}}^\top$ is the (exact) SVD of the mean adjacency matrix, which is a rank one matrix with singular value $Np$ and $N\times 1$ singular vectors $\hat{\bm{1}} = (1\,...\,1)^\top/\sqrt{N}$. The model is also called Poisson random graph, Erd\H{o}s-Rényi model \cite{Newman2003}, Bernouilli random matrix~\cite{Guionnet2021}, or spiked Wigner matrix~\cite{Perry2018}.

\item ...the \textit{stochastic block model} (SBM)~\cite{Holland1983, Young2018} with $q$ communities (generalization of $\mathcal{G}(N, p)$) is 
\begin{equation}
    \langle W \rangle = L = \sum_{\mu,\nu=1}^q \sqrt{n_{\mu}n_{\nu}}\,p_{\mu\nu}\,\bm{b}_{\mu}\bm{b}_{\nu}^\top\,,
\end{equation}
where $p_{\mu\nu}$ is the probability for a vertex in the $\mu$-th block of size $n_{\mu}$ to be connected to a vertex in the $\nu$-th block of size $n_\nu$ and $b_{\mu}$ is a block vector with $1/\sqrt{n_{\mu}}$ at the indices of the $\mu$-th block and zeros elsewhere.

\item ... the \textit{Chung-Lu model} \cite{Chung2002a, Chung2002pnas} is
\begin{equation}\label{eq:SI_ChungLu}
    \langle W \rangle = L = \frac{\|\bm\kappa\|^2}{2M}\hat{\bm{\kappa}}\hat{\bm{\kappa}}^\top\,,
\end{equation}
where $\kappa$ is a vector of expected degrees. Note that the annealed approximation, omnipresent in epidemiology \cite{Wang2017} (or for spin models \cite{Dorogovtsev2008a}), is thus a very strong low-rank hypothesis.

\item ... the \textit{metadegree model}~\cite{Valdano2019} is
\begin{equation}
    \langle W \rangle = L = \sum_{\mu,\nu=1}^r \Delta_{\mu\nu}\,\bm{v}_{\mu} \bm{v}_{\nu}^\top,
\end{equation}
where $(v_{\mu})_{\mu=1}^r$ are the $N$-dimensional vectors of metadegree and $\Delta$ is a $r \times r$ nonsingular matrix that contains the ``coefficients of mixing'' among metadegrees. In Ref.~\cite[p.2, 2nd column, 2nd paragraph]{Valdano2019}, a low-rank hypothesis is explicitly made as they assume that the rank of $V\Delta V^{\top}$ is much smaller than the size of the system. One must say, however, that the model is flexible and generalizes, in particular, the Chung-Lu model. The framework developed by the authors of Ref.~\cite{Valdano2019} offers a solid discussion and a strong theoretical ground to better understand, classify, and design random graphs in the future. Their work has inspired our preliminary thoughts on the low-rank hypothesis.

\item ... the directed $S^1$ model of \textit{random geometric networks} \cite{Krioukov2010, Allard2023} has elements
\begin{equation}
    \langle W_{ij} \rangle = \phi(L_{ij}) = \frac{1}{1 + L_{ij}^{\beta/2}} \,,
\end{equation}
where $\beta>0$ (inverse temperature of the Fermi-Dirac distribution). The elements of the matrix $L$ are defined as
\begin{equation}
    L_{ij} = \frac{R^2\theta_{ij}^2}{\mu^2 (\bm\kappa_{\mathrm{in}})_i^2(\bm\kappa_{\mathrm{out}})_j^2}\,,
\end{equation}
where $\mu$ and $R$ are positive constants, the latter representing the radius of the circle on which the vertices are distributed, $\theta_{ij}$ is the angular distance between the vertices $i$ and $j$ on the circle, and $(\bm\kappa_{\mathrm{in}})_i, (\bm\kappa_{\mathrm{out}})_i$ denote the $i$-th latent in- and out-degrees respectively (positive constants). To estimate the rank of the matrix $L$, it is more convenient to rewrite it using the Hadamard product:
\begin{equation}
    L = \frac{R^2}{\mu^2}\left(\bar{\bm\kappa}_{\mathrm{in}} \bar{\bm\kappa}_{\mathrm{out}}^{\top}\right)\circ \bar{\theta}\,,
\end{equation}
where 
\begin{equation}
     \bar{\bm\kappa}_{\mathrm{in}}=(\,1/(\bm\kappa_\mathrm{in})_i^2\,)_{i=1}^N,\qquad\bar{\bm\kappa}_{\mathrm{out}}=(\,1/(\bm\kappa_\mathrm{out})_i^2\,)_{i=1}^N,\qquad \bar\theta=(\,\theta_{ij}^2\,)_{i,j=1}^N\,.
\end{equation}
Clearly, $\bar{\bm\kappa}_{\mathrm{in}} {\bar{\bm\kappa}_{\mathrm{out}}}^\top$ is a rank-one matrix. Now, according to Ref.~\cite[Theorem 7]{Gower1985}, the rank of a distance matrix (with squared elements) is $D$, $D+1$, or $D+2$, where $D$ is the dimension of the manifold where the points are embedded. Here $D=1$, which means that the rank of $\bar\theta$ is at most $3$. Recalling the well-known inequality  $\rank(A\circ B)\leq \rank(A)\rank(B)$, we conclude that the matrix $L$ defining the expected adjacency matrix of the $S^1$ model has a rank of at most $3$. For the $S^D$ model~\cite{Desy2023}, one can proceed similarly to conclude that its expected adjacency matrix has a rank of at most $D + 2$.

\item ... the \textit{soft directed configuration model} also has elements following a Fermi-Dirac distribution such that
\begin{equation}\label{eq:SI_expectedW_SDCM}
    \langle W_{ij} \rangle = \phi(L_{ij}) = \frac{L_{ij}}{1 + L_{ij}}\,,\qquad L_{ij}=\alpha_i\beta_j
\end{equation}
for some positive parameters $\alpha_i,\beta_j$. Thus, $L = \bm{\alpha}\bm{\beta}^T$ is a rank-one matrix and $\bm{\alpha}$, $\bm{\beta}$ are positive vectors defined in subsection~\ref{SIsubsec:exponential_decrease}.

\item ... of the \textit{Barabási-Albert model} (BA) \cite{Barabasi1999}, a model of preferential attachment and a particular case of Price's model~\cite{Price1976}, does not possess an explicit formula of the form $ \langle W \rangle =\Phi(L)$. However, in Fig.~\ref{fig:svd_graphs_random_graphs}b, we show the singular values of the model for increasing values of $m$, the number of edges to which a new vertex is attached, and we observe rapid decreases.
\end{itemize}
\end{example}

\begin{example}[Network science---weighted graphs]\label{ex:network_science_weighted}
All the above-mentioned unweighted graph models can be generalized to include weights. 
\begin{itemize}
    \item The simplest procedure consists in posing
\begin{equation}\label{eq:get_weighted_graphs}
    W = A \circ \mathcal{W}\,,
\end{equation}
where $A$ and $\mathcal{W}$ are two independent $N\times N$ random matrices. The first matrix, $A$, corresponds to the Bernoulli matrices introduced in Example \ref{ex:network_science} that control the existence of edges, while $\mathcal{W}$ is a (possibility continuous) random matrix that only encodes the values of the weights. Due to the independence of $A$ and $\mathcal{W}$, the expected weight matrix factorizes as 
\begin{equation}\label{eq:expected_weighted_graphs}
    \langle \,W \,\rangle= \langle \,A \,\rangle\circ \langle \,\mathcal{W}\,\rangle\,.
\end{equation}
For instance, supposing that the elements of $A$ are i.i.d. such that $\mathrm{Prob}(A_{ij}=1)=p$, we get the \textit{weighted} $\mathcal{G}(N,p)$ \textit{model} satisfying 
\begin{equation}\label{eq:weighted_prob_w}
    \mathrm{Prob}(W_{ij}=0)=1-p,\qquad \mathrm{Prob}(W_{ij}\neq 0)=p,\qquad \langle W \rangle = p \langle \mathcal{W} \rangle, 
\end{equation}
meaning that the average matrix of weights completely determine the rank of $\langle W \rangle$.
If we additionally impose that all $W_{ij}$'s are i.d.d. with mean $w$, then we conclude that 
\begin{equation}\label{eq:gnp-weighted}
    \langle W \rangle =  Npw\,\hat{\bm{1}}\hat{\bm{1}}^\top\,,
\end{equation}
thus corresponding to a rank-one model that we denote $\mathcal{G}(N, p, w)$ in Table~\ref{tab:random_graphs}.

\item When $p=1$ in Eq.~\eqref{eq:weighted_prob_w}, all edges exist and one recovers the models of complete weighted graphs, such as the simplest form of the \textit{weighted stochastic block model} (WSBM) with $q$ communities (groups) \cite[Eq.~(2.3)]{aicher2015learning} (where $q = K$), for which the probability density function of the weights is 
\begin{equation}
    f_W(w)=\prod_{i,j=1}^N\frac{1}{\sqrt{2\pi \Sigma_{ij}^2}}e^{-\frac{(w_{ij}-M_{ij})^2}{2\Sigma_{i,j}^2}},
\end{equation}
where $M=(M_{ij})_{i,j=1}^N$ and $\Sigma=(\Sigma_{ij})_{i,j=1}^N$ are block matrices whose elements can only take a few different values encoded in the $q\times q$ matrices $\mu$ and $\sigma$ as
\begin{equation}
    M_{ij} = \mu_{g_ig_j}\qquad \text{and}\qquad\Sigma_{ij}=\sigma_{g_ig_j}.
\end{equation}
In the last equations, $g_i\in\{1,\ldots,q\}$ is the group label of vertex $i$. The expected weight matrix of this WSBM is simply
\begin{equation}
    \langle W \rangle =L = M = \sum_{\kappa,\nu=1}^q\sqrt{n_{\kappa}n_{\nu}}\,\mu_{\kappa\nu}\,\bm{b}_{\kappa}\bm{b}_{\nu}^\top\,,
\end{equation}
where $n_\kappa$ and $\bm{b}_{\kappa}$ are defined as in Example~\ref{ex:network_science} for the SBM. Since the vertices belonging to the same group produce the same rows in matrix $M$, we conclude that the rank of $L$ is at most $q$.

\item Random models as in Eq.~\eqref{eq:get_weighted_graphs} were also used to define weighted versions of the SBM \cite{ng2021weighted} and planted-partition model \cite{brandes2009structural}.
Moreover, they served to study synchronization in random weighted directed networks \cite{porfiri2008synchronization} as well as the spectral properties of neuronal networks with inhibition \cite{rajan2006eigenvalue} and the transitions to chaos of dilute random neuronal networks~\cite{kadmon2015transition}. Separating edges and their weights as in Eq.~\eqref{eq:get_weighted_graphs} is also common practice in random matrix theory when studying the spectral properties of random weighted directed graphs \cite{tao2008random, gotze2010circular, costello2010rank,wood2012universality,cook2017circular}.

\item Although variables $A$ and $\mathcal W$ are independent in Eq.~\eqref{eq:get_weighted_graphs}, variables $A$ and $W= A\circ \mathcal W$ depend on each other. Indeed, they have a non-factorizable joint probability density function (pdf) of the form
\begin{equation}\label{eq:pdf_AW}
    F_{A,W}(a,w) = \prod_{1\leq i<j\leq N}\Big(p_{ij}\delta(a_{ij}-1)f_{\mathcal{W}_{ij}}(w_{ij})+(1-p_{ij})\delta(a_{ij})\delta(w_{ij})\Big)\,,
\end{equation}
where $\delta$ denotes Dirac's delta distribution, $p_{ij}$ is the marginal probability for $A_{ij}=1$, and $f_{\mathcal{W}_{ij}}$ is the pdf of the independent variable $\mathcal{W}_{ij}$. Note that we have assumed that the graph is undirected for simplicity. We see from the previous equation that $\mathcal W_{ij}$ can interpreted as the random variable $W_{ij}$
given the existence of an edge from $j$ to $i$, i.e.,  $W_{ij}\,|\,A_{ij}=1$. Setting $\mu_{ij}=\langle \mathcal W_{ij}\rangle =\int wf_{\mathcal{W}_{ij}}(w)dw$ and returning to Eq.~\eqref{eq:expected_weighted_graphs}, we conclude that any model defined using Eq.~\eqref{eq:pdf_AW} also satisfies 
\begin{equation}\label{eq:expected_weighted_graphs_2nd}
    \langle \,W_{ij} \,\rangle= p_{ij}\mu_{ij}.
\end{equation}

A good example of such a model is the $S^1$ version of the \textit{weighted random geometric model} (WRGM)~\cite{Allard2017} with fixed hidden variables 
\begin{equation}
    \bm \theta\in [0,2\pi)^N,\qquad 
    \bm \kappa\in\mathbb{R}_+^N,\qquad 
    \bm \sigma\in\mathbb{R}_+^N,\qquad 
\end{equation}
and parameters 
\begin{equation} 
    \alpha\in[0,1),\qquad 
    \beta>1,\qquad 
    \mu>0,\qquad 
    \nu>0,\qquad
    R>0\,.
\end{equation}
One can prove that the pdf of this model is given by Eq.~\eqref{eq:pdf_AW} with 
\begin{equation}
    p_{ij}=\frac{1}{1+L_{ij}^\beta},\qquad L_{ij}=\frac{R\theta_{ij}}{\mu \kappa_i\kappa_j},\qquad f_{\mathcal{W}_{ij}}(w)=\frac{\nu\sigma_i\sigma_j}{ (\kappa_i\kappa_j)^{1-\alpha}(R \theta_{ij})^\alpha}f(\epsilon)\,,
\end{equation}
where $R$ stands for the radius of the circle on which the vertices are distributed, $\theta_{ij}$ is the angular distance between the vertices $i$ and $j$, respectively placed at angle $\theta_i$ and $\theta_j$ on the circle, and $\epsilon$ is an auxiliary random variable whose pdf is $f$ and whose mean is equal to 1.
The expected weighted matrix is thus
\begin{equation}\label{eq:weight_WRGM}
    \langle \,W_{ij} \,\rangle= \phi(L_{ij},M_{ij})=\frac{1}{L_{ij}^\alpha(1+L_{ij}^\beta)}\, M_{ij},\qquad M_{ij} = \frac{\nu\sigma_i\sigma_j}{\mu^\alpha\kappa_i\kappa_j}\,.
\end{equation}
The rank of the corresponding matrices $L$ and $M$ are at most 3 and equal to 1, respectively. 

\item All multigraphs can be interpreted as weighted graphs in which the weights can only take nonnegative integer values. One of the best known and most widely used examples of a random multigraph, and thus a weighted random graph, is the \textit{degree-corrected stochastic block model} (DCSBM)~\cite{karrer2011stochastic,peixoto2018nonparametric} with $q$ communities whose probability mass function is defined as 
\begin{equation}
    \mathrm{Prob}(W = w)=\prod_{1\leq i, j\leq N} \frac{ e^{-\alpha_{ij}}\alpha_{ij}^{w_{ij}} }{w_{ij}!}\,,
\end{equation}
that is, the random variable $W_{ij}$ follows a Poisson distribution with values 
$w_{ij}\in \mathbb{N}$ for all $i,j$ and parameter 
\begin{equation}
    \alpha_{ij} = \lambda_{g_ig_j}(\hat{\bm{\kappa}}_{\mathrm{in}})_i (\hat{\bm{\kappa}}_{\mathrm{out}})_j\,.
\end{equation}
In the last equation, $g_i\in\{1,\ldots,q\}$ denotes the group (community) label to which vertex $i$ belongs and $\lambda_{g_ig_j}$ is the expected number of edges from group $g_j$ to $g_i$. Moreover, $(\hat{\bm{\kappa}}_{\mathrm{in}})_i$ and $(\hat{\bm{\kappa}}_{\mathrm{out}})_i$ are group-normalized expected in- and out-degrees, i.e.,
\begin{align}\label{eq:group_normalization}
    \sum_{i=1}^N (\hat{\bm{\kappa}}_{\mathrm{in}})_i \delta_{g_i,\,\mu} = 1\,,\qquad
    \sum_{i=1}^N (\hat{\bm{\kappa}}_{\mathrm{out}})_i \delta_{g_i,\,\mu} = 1
\end{align}
for all $\mu \in\{1,\ldots,q\}$. The expected weight matrix is thus
\begin{equation}
    \langle W \rangle = L = \Lambda \circ (\hat{\bm{\kappa}}_{\mathrm{in}} \hat{\bm{\kappa}}_{\mathrm{out}}^\top),
\end{equation}
where $\Lambda=(\lambda_{g_ig_j})_{i,j=1}^N$ is a block matrix of rank at most $q$ while $\hat{\bm{\kappa}}$ is the vector of expected degrees. Now, according to the rank inequality for the Hadamard product,  
\begin{equation}
    \rank(L) = \rank\left(\Lambda \circ (\hat{\bm{\kappa}}_{\mathrm{in}} \hat{\bm{\kappa}}_{\mathrm{out}}^\top)\right)\leq \underbrace{\rank\left(\Lambda\right)}_{\leq q} \underbrace{\rank \left(\hat{\bm{\kappa}}_{\mathrm{in}} \hat{\bm{\kappa}}_{\mathrm{out}}^\top\right)}_{1}=q.
\end{equation}
Thus, including a ``rank-one correction" for better describing vertex degrees in SBM does not affect its low-rank property. 

\item ...the \textit{random dot product graph} \cite{Athreya2018} is
\begin{equation}
    \langle W \rangle = L = XX^\top\,,  
\end{equation}
where $X$ is a $N\times d$ matrix where the rows are the latent positions of each vertex of the graph. The rank of $\langle W \rangle$ is obviously less or equal to $d$ and from Ref.~\cite[p.14]{Athreya2018}: ``[...]in the RDPG case, where $\bm{P}$ [(i.e., $\langle W \rangle$)] is of low rank [...]''. The model generalizes the SBM, the degree-corrected SBM and the mixed-membership SBM as shown in Theorem~15 of Ref.~\cite{Athreya2018}, which is why we classify it in the weighted networks.

\item  The authors \cite{garlaschelli2009generalized,garlaschelli2009weighted} introduced yet another simple method for providing weights to many classical random network models. Using entropy maximization, they defined exponential families of random weighted graphs, such as those whose weights $w_{ij}$ are nonnegative integers and whose probability mass function is  \begin{equation}\label{eq:weighted_garlaschelli}
    \mathrm{Prob}(W = w)=\prod_{1\leq i<j\leq N}(1-y_iy_j)(y_iy_j)^{w_{ij}}
\end{equation}
where $y_i$ and $y_j$ satisfy $0 < y_i y_j < 1$ for all $i,j$ and are related in a nonlinear way to the expected strengths while controlling the probability of having an edge from $j$ to $i$. Indeed,
\begin{equation}
    \mathrm{Prob}(W_{i,j}=0)= 1-y_iy_j\qquad \text{and}\qquad\mathrm{Prob}(W_{i,j}\neq 0)= y_iy_j\,.
\end{equation}
We see that the random variable $W_{i,j}+1$ follows a geometric distribution of parameter $1-y_iy_j$. One easily shows that the elements of the expected weight matrix follow the Bose-Einstein distribution 
\begin{equation}
    \langle W_{ij} \rangle =\phi(L_{ij}) = \frac{L_{ij}}{1-L_{ij}}\,, \qquad L_{ij}=y_iy_j,
\end{equation}
meaning that the expected weight matrix $\langle W\rangle$ is a function of the rank-one matrix 
\begin{equation}
    L=\bm y\bm y^\top\,.
\end{equation}
The above random graph is somewhat analogous to the model presented in Eq.~\eqref{eq:SI_expectedW_SDCM}. For this reason, we call the weighted graph satisfying Eq.~\eqref{eq:weighted_garlaschelli} the \textit{weighted soft configuration model} (WSCM). 

\item The weighted random graph defined in Eq.~\eqref{eq:weighted_garlaschelli} can be related to the Chung-Lu model. Indeed,  supposing $y_iy_j\ll 1$, we get
\begin{equation}
    \langle W_{ij}\rangle = y_iy_j+ O\big((y_iy_j)^2\big)\,.
\end{equation}
Comparing with Eq.~\eqref{eq:SI_ChungLu}, we conclude that the limit case $\langle W \rangle = \bm y \bm y^\top$, where the rank of the expected weight matrix is exactly equal to one, corresponds to a \textit{weighed Chung-Lu model}, sometimes called weighted configuration model (cf. \cite[Eq.~5]{serrano2005weighted}). 

\item To define the \textit{weighted directed soft configuration model} (WDSCM), we need to modify Eq.~\eqref{eq:weighted_garlaschelli}. First, we introduce new positive parameters, say $\bar y_1,\ldots, \bar y_N$. Then we set 
\begin{equation}\label{eq:weighted_garlaschelli_directed}
    \mathrm{Prob}(W = w)=\prod_{1\leq i,j\leq N}(1-y_i\bar{y}_j)(y_i\bar{y}_j)^{w_{ij}},\qquad w_{ij}\in\mathbb{N},
\end{equation}
which means that $W_{ij}+1$ a random variable having a geometric distribution of parameter $1-y_i\bar{y}_j$. Therefore,
\begin{equation}\label{eq:expectedW_wdscm}
    \langle W_{ij} \rangle =\phi(L_{ij})=\frac{L_{ij}}{1-L_{ij}}\,, \qquad L_{ij}=y_i\bar{y}_j,\,,
\end{equation}
meaning that the edge directionality has no impact on the rank of the expected weight matrix.

\item The model defined by Eq.~\eqref{eq:weighted_garlaschelli} is a special case of the following random graph, also introduced in \cite{garlaschelli2009generalized}:
\begin{equation}\label{eq:weighted_garlaschelli_xy}
    \mathrm{Prob}(W = w)=\prod_{1\leq i<j\leq N}\frac{(x_ix_j)^{\Theta(w_{ij})}(1-y_iy_j)(y_iy_j)^{w_{ij}}}{1-y_iy_j+x_ix_jy_iy_j},\qquad w_{ij}\in\mathbb{N},
\end{equation}
where $\Theta$ denotes the Heaviside function and the $x_i$'s are positive parameters. We call the corresponding random graph the \textit{general weighted soft configuration model} (GWSCM). For this general model, the expected weight matrix depends upon two rank-one matrices, $L$ and $M$, and its elements follow a Bose-Fermi distribution~\cite{garlaschelli2009generalized}. Explicitly, 
\begin{equation}
    \langle \,W_{ij} \,\rangle =\phi(L_{ij}, M_{ij})\,,
\end{equation}
where 
\begin{equation}
    \phi(L_{ij}, M_{ij}) = \frac{L_{ij}M_{ij}}{(1-L_{ij}+L_{ij}M_{ij})(1-L_{ij})}\,, \qquad L_{ij}=y_iy_j,\qquad M_{ij}=x_ix_j\,.
\end{equation}

\item Eqs.~\eqref{eq:weighted_garlaschelli_xy} can be easily generalized to include directed edges as follows:
\begin{equation}\label{eq:weighted_garlaschelli_xy_directed}
    \mathrm{Prob}(W = w)=\prod_{1\leq i,j\leq N}\frac{(x_i\bar x_j)^{\Theta(w_{ij})}(1-y_i\bar y_j)(y_i\bar y_j)^{w_{ij}}}{1-y_i\bar y_j+x_i\bar x_jy_i\bar y_j},\qquad w_{ij}\in\mathbb{N},
\end{equation}
where all parameters are positive. We call the random graph whose weight matrix satisfies \eqref{eq:weighted_garlaschelli_xy_directed} the \textit{general weighted directed soft configuration model} (GWDSCM). Its expected weight matrix is 
\begin{equation}
    \langle \,W_{ij} \,\rangle =\phi(L_{ij}, M_{ij})\,,
\end{equation}
where 
\begin{equation}
    \phi(L_{ij}, M_{ij}) = \frac{L_{ij}M_{ij}}{(1-L_{ij}+L_{ij}M_{ij})(1-L_{ij})}\,, \qquad L_{ij}=y_i\bar y_j,\qquad M_{ij}=x_i\bar x_j\,.
\end{equation}
\end{itemize}

\end{example}

\begin{figure}[ht]
    \includegraphics[width=0.7\linewidth]{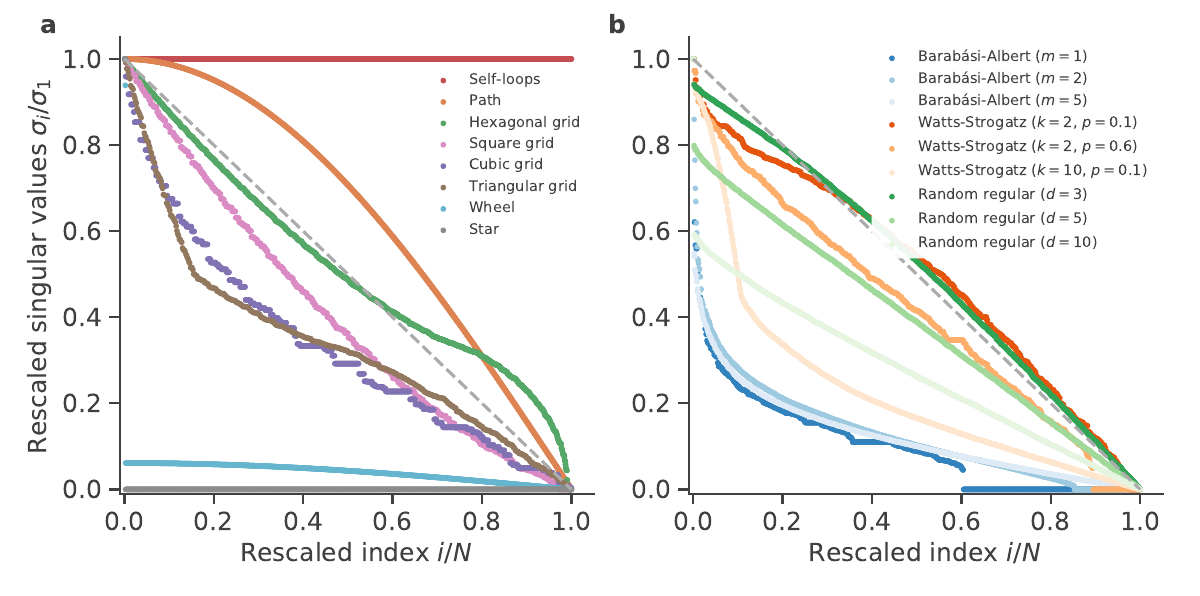}
    \vspace{-0.5cm}
    \caption{Rescaled singular values for \textbf{a} typical (non-random) graphs and \textbf{b} instances of three random graphs with different parameters. For the Barabási-Albert model, $m$ is the number of edges to which a new vertex is attached. For the (connected) Watts-Strogatz model, $k$ is the number of neighbors of each vertex in a ring structure and $p$ is the probability of rewiring an edge. For the random regular model, $d$ is the degree of each vertex. The graphs and random graphs are all available on networkx (see plots/plot\_fig\_SI\_singular\_values\_scree\_graphs.py on the Github repository), except the disconnected self-loop graph whose adjacency matrix is simply the identity matrix. The dashed gray lines are shown for the sake of visualization.
    }
    \label{fig:svd_graphs_random_graphs}
\end{figure}

There are also counter-examples of network model with high effective ranks. The most obvious examples are perhaps the Watts-Strogatz model, which had a considerable impact in the network science, and some non-random graphs.
\begin{example}
    The \textit{Watts-Strogatz model}~\cite{Watts1998} is described by the random matrix
    \begin{equation}
        W = D_k + R
    \end{equation}
    where $D_k$ is a band matrix of bandwidth $k$ whose $k$ up- and sub- diagonal entries are equal to 1 while $R$ is a matrix with -1's and 1's for each site that has been rewired with probability $p$. This is a perfect counter-example of the affirmation ``popular random network models are low-rank": the model is a sum of a high-rank matrix and a noise matrix (first indicator). Figure~\ref{fig:svd_graphs_random_graphs}b also shows that the singular values in the model can decrease linearly and even supralinearly for some parameters (as in the \textit{random regular graph}). These observations confirm that even though the Watt-Strogatz satisfy two interesting properties, namely small distances (small-world property) and a high-clustering coefficient, it doesn't generally enjoy the low-effective-rank property that we observe in real networks. 
\end{example}
\begin{example}[Graph theory]
    Although we often discuss about the rapid decrease of singular values in graphs, they can in fact have very different singular value distributions in general. As an intuitive example, we gather simple graphs, common in physics (e.g., path, grids), and illustrate how their singular values decrease differently, from supralinear to sublinear, in Fig.~\ref{fig:svd_graphs_random_graphs}a.
\end{example}
One can also find clear examples of the low-rank models in physics, machine learning, and neuroscience that are worth mentioning.
\begin{example}[Random matrix theory and spin glasses]
The typical random matrix ensembles used in physics (e.g., \textit{Gaussian Orthogonal Ensemble} \cite{Forrester2010}) are matrix models involving normally (Gaussian) distributed random variables and such that $\langle W\rangle= 0$, so they have a rank equal to zero. 
A counter-example is the \textit{Circular Unitary Ensemble} that is of full rank with all the singular values being 1. The random matrix $J$ encoding the interactions in the classical \textit{Sherrington-Kirkpatrick spin-glass model} \cite{Sherrington1975} is formed by i.i.d.~Gaussian variables of mean $J_0$, which implies that $\langle J\rangle = J_0\bm 1\bm 1^\top$, meaning that the effective rank of the model is one. Other well-known random matrix models, such as the \textit{Gaussian ensembles with finite-rank external source} \cite{Desrosiers2006} or \textit{spiked random matrices} \cite{Bloemendal2013}, satisfy Eq.~\eqref{eq:random_matrix} with $\langle W \rangle$ of rank $r=O(1)$, i.e., they have a low-rank formulation in the limit $N\to\infty$.  
\end{example}

\begin{example}[Machine learning and neuroscience]
\phantom{henri}
\begin{itemize}
\item In the \textit{Hopfield network} \cite{Hopfield1982}, one of the most influential models of artificial recurrent neural network, the weight matrix describing the connections between $N$ dynamical binary units is trained to memorize $n\ll N$ state vectors $\bm{v}_s\in\{0,1\}^N$. Starting with an initial random symmetric weight matrix of mean zero, $T$, the training consists in mapping $T\mapsto T+\sum_{s=1}^n\bm{v}_s \bm{v}_s^\top$, resulting in a final weight matrix of effective rank $\leq n$.
\item In an \textit{echo-state network}~\cite{Lukosevicius2009}, a random weighted directed graph of mean weight zero is used to generate a reservoir, which is the hidden recurrent part of the artificial neural network that is not affected by learning. The reservoir thus has a rank of zero (according to the first indicator, about the rank of the expected matrix defined in the paper). 
\item Training \textit{shallow undercomplete autoencoders} is essentially a low-rank approximation problem \cite{Bourlard2022}. The architecture formed by encoding/decoding weight matrices and a hidden layer thus form a low-rank model in itself. 
\item The \textit{chaotic random neural network} of Sompolinsky et al. \cite{sompolinsky1988chaos} is defined using random i.i.d. synaptic weights of mean 0 and variance $J^2/N$, thus corresponding to an expected weight matrix of rank 0. This model was later used to set the appropriate initial weights for training RNNs \cite{sussillo2009generating}.  It has been generalized to include $P$ distinct neuron populations, leading to a rank-$P$ expected weight matrix \cite{kadmon2015transition}
\item The synaptic weight matrix in the \textit{Rajan-Abbott random neural network} \cite{rajan2006eigenvalue} is given by the equation
\begin{equation*}
    W = \langle \,W \,\rangle + J\,,
\end{equation*}
where $J$ is a $N\times N$ random matrix whose elements are i.i.d. of mean $0$ and variance $1/N$, while 
\begin{equation*}
     \langle \,W \,\rangle  = \frac{1}{\sqrt{N}}\bm 1 {\bm v}^\top,\qquad {\bm v}^\top=(\mu_E,\ldots,\mu_E,\mu_I,\ldots,\mu_I)\,,
\end{equation*}
where $f$  denotes the fraction of excitatory neurons, whose mean synaptic weight is $\mu_E/\sqrt{N}>0$, while  $(1-f)$ denotes the fraction of inhibitory neurons, whose mean synaptic weight is $\mu_I/\sqrt{N}<0$. The rank of the expected weight matrix is thus equal to 1.
\item Another example is the \textit{Gaussian mixture low-rank network} \cite{Mastrogiuseppe2018, Schuessler2020_PRR, Beiran2021, Schuessler2020} whose weight matrix is defined as
\begin{equation*}
    W = \sum_{r=1}^R \bm{m}_r \bm{n}_r^\top + R,
\end{equation*}
where $R$ is a zero-mean Gaussian random matrix while $\bm{m}_r$ and $\bm{n}_r$ respectively denote the $r$-th left singular vector of $\langle W \rangle$ and its right singular vector multiplied by the $r$-th singular value. The rank of $\langle W \rangle$ is thus equal to $R$.
In Ref.~\cite{Beiran2021}, for example, the low-rank hypothesis is explicitly made: ``We restrict the connectivity matrix to be of low rank, i.e., the number of nonzero singular values of the matrix $J$ is $R\ll N$.'' In Table~\ref{tab:random_graphs}, we call this random graph  the ``\textit{rank-perturbed Gaussian model}'' (RPG) and we absorb the factor $1/N$ in $\bm{m}_\mu$ and $\bm{n}_\mu$. Moreover, in Ref.\cite{Mastrogiuseppe2018}, ``our theory suggests a simple conjecture: the low-dimensional structure in connectivity determines low-dimensional dynamics and computational properties of recurrent networks.'' Our paper proves partly this conjecture: by Corollary~\ref{cor:exact_rank}, if $\rank(W) \ll N$ (low-dimensional structure), then the recurrent neural dynamics evolves in a $\rank(W)$-dimensional space (low-dimensional dynamics) (more explanations in Example~\ref{ex:exact_rnn}).

\item It has also been observed experimentally that trained models have a low effective rank (the ones from NWS \cite{Eilertsen2020} in this paper and other references, such as Ref.~\cite{Martin2021a}).
\end{itemize}
\end{example}

To illustrate the three indicators of the low-rank hypothesis, we use four of the random graphs introduced above, that is RPG, DCSBM, $S^1$ RGM, and WDSCM, and present the results in Fig.~\ref{fig:exposing_low_rank_hypothesis} for $N = 10^3$ vertices. Below, we list the parameters used in each model to generate the figure. We denote a $N$-dimensional realization of a truncated Pareto density $\bm{x} = p(N, x_{\mathrm{\min}}, x_{\mathrm{\max}}, \gamma)$, where $N = 10^3$ is the number of vertices, $x_{\mathrm{\min}}$ is the minimum value of the distribution, $x_{\mathrm{\max}}$ is the maximum value, and $\gamma$ is the shape parameter. Similarly, an $N$-dimensional instance of a Gaussian density is denoted $n(N, m, v)$, where $m$ is the mean of the Gaussian and $v$ is its variance, and $u(N, x_{\mathrm{\min}}, x_{\mathrm{\max}})$ is an $N$-dimensional instance of discrete uniform distribution. 
\begin{itemize}
    \item For RPG, we set the rank to 5, $\bm{m}_\mu = n(N, 0, 1/N)$, and $\bm{n}_{\mu} = n(N, (\mu - 1)/10, \mu/10)$ for all $\mu\in\{1, 2, 3, 4, 5\}$. The $N\times N$ random noise matrix $R$ has elements following a Gaussian density with mean 0 and variance $g^2/N$ where we call $g$ the noise strength. We set $g = 1$ in Fig.~\ref{fig:exposing_low_rank_hypothesis}b and $g = 3$ in Fig.~\ref{fig:exposing_low_rank_hypothesis}f. The parameter $g$ is tuned from 0.01 to 4 in Fig.~\ref{fig:exposing_low_rank_hypothesis}j to increase $\overline{\|R\|}_2$.
    \item For DCSBM, we set the number of blocks to 5 with respective sizes $N/10$, $2N/5$, $N/10$, $N/5$, and $N/5$. The expected degree distributions are $\bm{\kappa}_{\mathrm{in}} = p(N, 2, 100, 2.5)$ and $\bm{\kappa}_{\mathrm{out}} = p(N, 1, 50, 2)$ which are then normalized by groups as defined in Eq.~\eqref{eq:group_normalization}. The expected number of edges is
    \begin{equation*}
        E = gN\begin{pmatrix}
                  0.40 & 0.10 & 0.10 & 0.02 & 0.13\\  
                  0.05 & 0.80 & 0.02 & 0.09 & 0.10\\  
                  0.02 & 0.02 & 0.30 & 0.05 & 0.02\\  
                  0.10 & 0.05 & 0.05 & 0.40 & 0.01\\  
                  0.10 & 0.09 & 0.05 & 0.05 & 0.30 
              \end{pmatrix}\,,
    \end{equation*}
    where $g$ is a multiplicative factor that allows increasing $\overline{\|R\|}_2$. We set $g = 100$ in Fig.~\ref{fig:exposing_low_rank_hypothesis}c and $g = 10$ in Fig.~\ref{fig:exposing_low_rank_hypothesis}g. The parameter $g$ is tuned from 200 to 6 to increase $\overline{\|R\|}_2$ in Fig.~\ref{fig:exposing_low_rank_hypothesis}k.
    \item For $S^1$ RGM, we set $R=N/2\pi$. To get the angular distance matrix, we set $t_i = 2\pi/[u(N,1, 50)]_i$ and then compute $\theta_{ij} = \pi - |\pi - |t_i - t_j||$. We observed that the numerical computation of the singular values (and, hence, the rank) of $\theta$ is particularly sensitive to the choice of angular matrix and taking a discrete uniform distribution reduced the sensitivity (see tests/test\_graphs/test\_generate\_s1\_random\_graph and the function ``test\_thetaij\_rank'' in the Github repository). We also have $\mu = \beta\sin(\pi/\beta)/(2\pi\langle\bm\kappa\rangle))$ where $\langle\,\rangle$ is the arithmetic mean and $\langle\bm\kappa\rangle = \langle\bm\kappa_{\mathrm{in}}\rangle = \langle\bm\kappa_{\mathrm{out}}\rangle$. The expected degree distributions are $\bm{\kappa}_{\mathrm{in}} = p(N, 2, 100, 2.5)$, $\bm{\kappa}_{\mathrm{out}} = p(N, 1, 50, 2)$, and then $\bm{\kappa}_{\mathrm{out}}$ is redefined to $\bm{\kappa}_{\mathrm{out}} -\langle\bm\kappa_{\mathrm{out}}\rangle\bm{1} + \langle\bm\kappa_{\mathrm{in}}\rangle\bm{1}$ [$\langle\bm\kappa_{\mathrm{out}}\rangle< \langle\bm\kappa_{\mathrm{in}}\rangle$] to ensure that $\langle\bm\kappa_{\mathrm{in}}\rangle = \langle\bm\kappa_{\mathrm{out}}\rangle$. The temperature $1/\beta$ allows increasing $\overline{\|R\|}_2$ and we tuned it from 0.01 to 0.96 in Fig.~\ref{fig:exposing_low_rank_hypothesis}l. The temperature is 0.2 in Fig.~\ref{fig:exposing_low_rank_hypothesis}d and 0.95 in Fig.~\ref{fig:exposing_low_rank_hypothesis}h.
    \item For WDSCM, we set $\bm{y} = p(N, x_{\mathrm{min}}, 0.8, 2.5)$ and $\bar{\bm{y}} = p(N, x_{\mathrm{min}}, 0.7, 3)$ where $y_{\mathrm{min}} = \bar{y}_{\mathrm{min}} = x_{\mathrm{min}}$ is the parameter that we tune to increase $\overline{\|R\|}_2$. We set $x_{\mathrm{min}} = 0.6$ in Fig.~\ref{fig:exposing_low_rank_hypothesis}e and $x_{\mathrm{min}} = 0.15$ in Fig.~\ref{fig:exposing_low_rank_hypothesis}i. To get Fig.~\ref{fig:exposing_low_rank_hypothesis}m, we increase $x_{\mathrm{min}}$ from 0.1 to 0.65.
\end{itemize}

In the section ``Evidence of the hypothesis for network models'' of the paper, we discuss how one can give a more precise perspective for spiked random matrices, such as RPG. Indeed, the singular values of spiked random matrices have a ``bulk'' related to the singular values of the noise matrix $R$ and the presence of outlying singular values is asymptotically characterized by the Baik-Ben Arous-Péché (BBP) phase transition \cite{Baik2005}. Notably, the appearance of $p \leq r$ singular values outliers in $W$ only depends upon a threshold $\bar\sigma$ on $\sigma_1(\langle W \rangle), ..., \sigma_r(\langle W \rangle)$~\cite{Benaych-Georges2012}. The simplest case is presented below.
\begin{example}
    Consider that the elements of $R$ are i.i.d.~Gaussian white noise of variance $1/N$, then as $N\to \infty$, the singular values of $R$ tend to densely fill the interval $[0,2]$, the threshold becomes $\bar\sigma=1$, and the $i$-th singular value of $W$ moves away from the bulk $[0,2]$ to reach $\sigma_i(\langle W \rangle)+1/\sigma_i(\langle W \rangle) $ whenever $\sigma_i(\langle W \rangle)>\bar \sigma$ for $i\in\{1, ..., r\}$.
\end{example}
Despite the clear dependence of $\langle W \rangle$ over a low-rank matrix $L$, it is not always clear whether $\langle W \rangle$ has an effective low-rank. This is the case of \textit{soft configuration models} (see Example~\ref{ex:network_science}) for which the expected adjacency matrix does not have the explicit form of a rank factorization. In the next section, we introduce the directed soft configuration model as a maximally entropic random graph. Then, we demonstrate that its singular values decrease exponentially rapidly.

\subsection{Exponential decrease of singular values in directed soft configuration models}
\label{SIsubsec:exponential_decrease}
In general, we only have partial information on complex networks. It is thus reasonable to define a set of networks where each network have a probability to describe the observed complex network. In order to do that in the least biased way, one can rely on the maximization of Shannon entropy to extract an adequate probability distribution \cite{Jaynes1957}. A lot of random graphs are defined from a maximally entropic model and although there is a large literature on the subject~\cite{Park2004, Bianconi2009, squartini2017maximum,cimini2019statistical}, we provide, for the sake of completeness, some important results and comments. We will later use them to demonstrate Theorem~\ref{thm:upper_bound_singvals_scm} and Theorem~\ref{thm:upper_bound_singvals_wdscm} on the exponential decrease of singular values in the directed soft configuration model and its weighted version, both maximally entropic random network models.

We begin by presenting general theorems about the use of Lagrange multipliers to obtain maximally entropic network models. Of course, the idea of Lagrange multipliers is old \cite{Caratheodory1937,Giorgi2014}. It goes back to Lagrange and even Euler, but in both of their work, the conditions in which the method applies are not clearly stated and no rigorous demonstration was provided. The first author who clearly stated the theorem is most likely Carathéodory, in the first German edition of his volume on the calculus of variations in 1935 \cite[186 and 187]{Caratheodory1989}.
\begin{theorem}[Lagrange multipliers]\label{thm:lagrange}\hfill\\
Let:
\begin{enumerate}[leftmargin=8ex]
\item $U$, be an open set in $\mathbb{R}^N$;
\item $f, g_1, \ldots, g_r$, be continuously differentiable real functions on $U$;
\item $E$, be a set such that $\bm x\in E$ iff $\bm x\in U$ and 
\(g_1(\bm x) = \cdots = g_r(\bm x) = 0\).
\end{enumerate}
If $\bm x^*$ maximizes or minimizes $f$ on $E$, then there exists a real vector $\bm \lambda = (\lambda_0,\ldots, \lambda_r)$ 
such that:
\begin{enumerate}[leftmargin=8ex]
\item $\bm \lambda \neq \bm 0$;
\item $\lambda_0\geq 0$;
\item $\displaystyle\lambda_0 \nabla f(\bm x^*)+\sum_{i=1}^r\lambda_i\nabla g_i (\bm x^*)=\bm 0$.
\end{enumerate}
Moreover, if $\nabla g_1 (\bm x^*), \ldots, \nabla g_r (\bm x^*) $ are linearly independent, then $\lambda_0>0$ and 
\begin{equation}\label{eq:lagrange}
    \nabla f(\bm x^*)+\sum_{i=1}^r\lambda_i^*\nabla g_i (\bm x^*)=\bm 0
\end{equation} 
for some nonzero vector $\bm \lambda^*=(\lambda_1^*,\ldots, \lambda_r^*)$ in $\mathbb{R}^{r}$.
\end{theorem}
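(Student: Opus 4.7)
The plan is to reduce everything to a linear dependence statement about the gradients by introducing the combined map $F:U\to\mathbb{R}^{r+1}$ defined by $F(\bm{x})=(f(\bm{x}),g_1(\bm{x}),\ldots,g_r(\bm{x}))$, whose Jacobian at $\bm{x}^*$ has as its rows precisely the vectors $\nabla f(\bm{x}^*),\nabla g_1(\bm{x}^*),\ldots,\nabla g_r(\bm{x}^*)$. The key intermediate claim is that $\mathrm{rank}\,DF(\bm{x}^*)\leq r$, which is exactly the statement that these $r+1$ gradients are linearly dependent in $\mathbb{R}^N$.

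To establish the claim I would argue by contradiction: assume $DF(\bm{x}^*)$ has full rank $r+1$. Then $F$ is a submersion at $\bm{x}^*$, so by the open mapping theorem for smooth submersions (a direct corollary of the implicit function theorem applied to $F$) the image $F(V)$ contains an open neighborhood of $F(\bm{x}^*)$ for every sufficiently small neighborhood $V\subset U$ of $\bm{x}^*$. In particular, for any small enough $\varepsilon>0$ there exist points $\bm{x}^\pm\in V$ with $F(\bm{x}^\pm)=(f(\bm{x}^*)\pm\varepsilon,0,\ldots,0)$. These points lie in $E$ since every $g_i$ vanishes there, yet $f(\bm{x}^+)>f(\bm{x}^*)>f(\bm{x}^-)$, contradicting the extremality of $\bm{x}^*$ on $E$ in both the maximization and minimization cases. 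Hence the rank is at most $r$, so a nontrivial relation $\lambda_0\nabla f(\bm{x}^*)+\sum_{i=1}^r\lambda_i\nabla g_i(\bm{x}^*)=\bm{0}$ exists with $\bm{\lambda}\neq\bm{0}$; multiplying the entire relation by $-1$ if necessary enforces $\lambda_0\geq 0$.

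For the refined conclusion, assume in addition that $\nabla g_1(\bm{x}^*),\ldots,\nabla g_r(\bm{x}^*)$ are linearly independent. If $\lambda_0$ were zero, then $\sum_{i=1}^r\lambda_i\nabla g_i(\bm{x}^*)=\bm{0}$ with at least one coefficient nonzero, directly contradicting independence. Therefore $\lambda_0>0$, and dividing through by $\lambda_0$ and setting $\lambda_i^{*}=\lambda_i/\lambda_0$ produces the standard Lagrange relation \eqref{eq:lagrange}.

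The analytic heart of the proof, and the only genuine obstacle, is the submersion-to-open-map step: one must invoke the implicit function theorem to produce, transversally to the constraint level set cut out by $g_i=0$, admissible perturbations of $\bm{x}^*$ that strictly increase and strictly decrease $f$. Once that local surjectivity is in hand, the remainder is bookkeeping---linear algebra for the existence of the multipliers, a sign flip to enforce $\lambda_0\geq 0$, and a rescaling to normalize $\lambda_0$ to unity under the independence hypothesis.
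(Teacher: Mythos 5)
Your proof is correct, and it is worth noting that the paper itself does not prove this theorem: it defers entirely to the literature (Carath\'eodory's multiplier rule as presented by Pourciau and by Chong), remarking only that the argument is ``long and often based on the local inversion theorem.'' Your proposal is a clean, self-contained instance of precisely that strategy. Packaging $f$ and the constraints into the single map $F=(f,g_1,\ldots,g_r)$ and showing $\rank DF(\bm x^*)\leq r$ is the standard route, and your identification of the one genuinely analytic step --- that a $C^1$ submersion is locally surjective, so a full-rank $DF(\bm x^*)$ would let you hit $(f(\bm x^*)\pm\varepsilon,0,\ldots,0)$ with admissible points and contradict extremality --- is exactly where the inverse/implicit function theorem enters; the rest (a sign flip for $\lambda_0\geq 0$, the independence argument forcing $\lambda_0>0$, and the normalization) is indeed linear-algebra bookkeeping. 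Two small remarks. First, when $r+1>N$ the dependence of the $r+1$ gradients is automatic, so your contradiction hypothesis simply cannot occur there; it costs nothing to say so explicitly. Second, the paper's final clause asserts that $\bm\lambda^*=(\lambda_1^*,\ldots,\lambda_r^*)$ is nonzero, which your normalization does not (and cannot in general) deliver --- if $\nabla f(\bm x^*)=\bm 0$ and the $\nabla g_i(\bm x^*)$ are independent, the only solution is $\bm\lambda^*=\bm 0$. That is an imprecision in the statement as written rather than a gap in your argument, but you should not claim to have established it.
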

\begin{proof}
The proof is long and often based on the local inversion theorem. See \textit{Carathéodory multiplicative rule} in \cite{Pourciau1980} or \cite[Theorem 20.3]{Chong2013}.
\end{proof}
\begin{remark}
On the one hand, the theorem is valid for a minimum or a maximum. This is an advantage that can turn out to be an inconvenience if we do not verify the nature of the point $\bm x^*$. On the other hand, Eq.~\eqref{eq:lagrange} is only a \textit{necessary condition} and it is not sufficient in general. We could, for example, find a solution of Eq.~\eqref{eq:lagrange} that does not correspond to the desired extremum. Moreover, the theorem supposes that there exists an extremum in $E$. If this is not assumed, one has to consider an open domain of $\mathbb{R}^N$, which excludes, for example, the compact domain $\bar D=[0,1]^N$. Finally, the gradients of the constraints must be linearly independent; otherwise $\lambda_0$ can be 0 and that does not help to find the extremum.
\end{remark}
 
The Lagrange multiplier method begins by solving Eq.~\eqref{eq:lagrange} by expressing all $x_i^*$ in terms of the multipliers $\lambda_i$. Then, the multipliers are written in terms of the known variables by solving the set of constraints $g_1(\bm x) = \cdots = g_n(\bm x) = 0$, which is generally the most difficult step. We finally verify that the solution $\bm x^*$ corresponds to the desired extremum. The following theorem illustrates how the first part of the method can be applied to find the necessary form for the probability mass function $P$ that maximizes the network entropy under (soft) structural constraints.
\begin{theorem}
\label{thm:probability_ergm}
Let $A$ be a $N\times N$ random adjacency matrix with support $\Omega_A$ that satisfies the soft equality constraints
\begin{equation}
    \mathbb{E}\left[h_{\mu}(A)\right] = 
    h_{\mu}(a^*), \quad \mu\in\{1,...,\ell\},
    \label{eq:soft_equality_constraint}
\end{equation}
where $\mathbb{E}$ is the expected value on $\Omega_A$, $a^*$ is some $N\times N$ non-random adjacency matrix, and each
$h_{\mu}: \{0,1\}^{N^2} \to \mathbb{R}^{\ell}$ is continuously differentiable. Then, the probability mass function $P$ that maximizes the entropy of $A$ under the equality constraints~\eqref{eq:soft_equality_constraint} must be of the form
\begin{equation}\label{eq:prob_ergm}
    P(a) = \frac{1}{Z(\lambda_1, ..., \lambda_\ell)}\,\exp\left[\sum_{\mu=1}^{\ell}\lambda_{\mu}h_{\mu}(a)\right],
\end{equation}
where $Z:\mathbb{R}^\ell\to \mathbb{R}$ is the partition function and $\lambda_{\mu} \neq 0$ for all $\mu$.
\end{theorem}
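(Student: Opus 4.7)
The plan is to view the Shannon entropy
$$H[P] = -\sum_{a\in\Omega_A} P(a)\log P(a)$$
as a differentiable function of the finitely many variables $\{P(a) : a\in\Omega_A\}$ (the support $\Omega_A\subseteq\{0,1\}^{N^2}$ is finite, so the optimization lives in the finite-dimensional probability simplex), and to maximize it subject to the $\ell+1$ equality constraints
$$g_0(P) = \sum_{a\in\Omega_A} P(a) - 1 = 0, \qquad g_\mu(P) = \sum_{a\in\Omega_A} P(a)\,h_\mu(a) - h_\mu(a^*) = 0,$$
for $\mu=1,\ldots,\ell$. All constraints are linear in the variables $P(a)$, hence continuously differentiable, so Theorem~\ref{thm:lagrange} applies as soon as one checks that the gradients $\nabla g_0,\nabla g_1,\ldots,\nabla g_\ell$ are linearly independent. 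This non-degeneracy is a standing assumption of the theorem, amounting to the soft constraints being non-redundant and not implied by normalization.

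Next, I would compute the gradient condition. Since $\partial H/\partial P(a) = -\log P(a)-1$ and $\partial g_\mu/\partial P(a) = h_\mu(a)$ (with $h_0\equiv 1$), Eq.~\eqref{eq:lagrange} becomes, for each $a\in\Omega_A$,
$$-\log P(a) - 1 - \lambda_0 - \sum_{\mu=1}^\ell \tilde\lambda_\mu\, h_\mu(a) = 0,$$
where I have written the Lagrange multipliers associated with the soft constraints as $\tilde\lambda_\mu$ to distinguish them from $\lambda_0$ attached to normalization. Solving for $P(a)$ yields
$$P(a) = \exp\!\Bigl(-1-\lambda_0 - \textstyle\sum_{\mu=1}^\ell \tilde\lambda_\mu h_\mu(a)\Bigr) = \frac{1}{Z(\lambda_1,\ldots,\lambda_\ell)}\exp\!\Bigl(\textstyle\sum_{\mu=1}^\ell \lambda_\mu h_\mu(a)\Bigr),$$
after relabeling $\lambda_\mu = -\tilde\lambda_\mu$ and absorbing the constant prefactor into the partition function $Z = e^{1+\lambda_0}$, which is then fixed by imposing $\sum_a P(a)=1$. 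This is the announced functional form.

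It remains to address two points. First, Theorem~\ref{thm:lagrange} only returns a necessary condition for an extremum, so one must argue that the critical point is in fact a maximum: this follows from strict concavity of $x\mapsto -x\log x$ on $[0,1]$, which makes $H$ strictly concave on the simplex, so any critical point satisfying the linear constraints is the unique maximizer. Second, one must justify the nonvanishing of each $\lambda_\mu$. If some $\lambda_\mu$ were zero, the corresponding constraint would be automatically satisfied by the exponential family generated by the other multipliers, contradicting the assumed independence of the $\ell$ soft constraints, so $\lambda_\mu\neq 0$ for all $\mu$. The main obstacle is really the cleanest formulation of this non-degeneracy assumption: everything else reduces to a routine application of Theorem~\ref{thm:lagrange}, concavity of entropy, and rewriting.
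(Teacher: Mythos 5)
Your proposal is correct and follows essentially the same route as the paper: form the Lagrangian from the entropy plus the normalization and soft constraints, invoke Theorem~\ref{thm:lagrange}, solve the stationarity condition for $P(a)$, and absorb the normalization multiplier into the partition function $Z$. The additional remarks on strict concavity (to upgrade the necessary condition to a maximum) and on the nonvanishing of the $\lambda_\mu$ go slightly beyond what the paper writes down, but they do not change the argument.
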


We can now provide the mathematical steps to show the rapid decrease of singular values in the directed soft configuration model. The next corollary is well known from Ref.~\cite{Park2004}.
\begin{corollary}
\label{cor:probability_sdcm}
    Let  $\bm k^{\mathrm{in}}$ and $\bm k^{\mathrm{out}}$ be two vectors with elements in  $\{1,2,\ldots, N\}$. Let $A$ be the adjacency matrix of a random directed graph of $N$ vertices, i.e., a random matrix of dimension $N\times N$ and support $\{0,1\}^{N\times N}$. Assume, moreover, that the following constraints are satisfied: 
    \begin{equation}\label{eq:directed_degree_constraints}
        \mathbb{E}[A\bm 1] = \bm k^{\mathrm{in}}\qquad \text{and}\qquad \mathbb{E}[A^\top\bm 1] = \bm k^{\mathrm{out}}\,.
    \end{equation}
    Then the probability mass function $P$ that maximizes the entropy of $A$ must be of the form
    \begin{equation}\label{eq:maxent_prob}
        P(a)=\prod_{i,j=1}^N{p_{ij}}^{a_{ij}}(1-p_{ij})^{1-a_{ij}},\qquad p_{ij}=\frac{\alpha_i\beta_j}{1+\alpha_i \beta_j},
    \end{equation}
    where $\alpha_i$ and $\beta_j$ are positive numbers $\forall\;i,j\in\{1,\ldots,N\}$. 
\end{corollary}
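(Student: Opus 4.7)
The plan is to view the two vector constraints in Eq.~\eqref{eq:directed_degree_constraints} as a collection of $2N$ scalar soft constraints and then invoke Theorem~\ref{thm:probability_ergm}. Writing $(A\bm 1)_i = \sum_j a_{ij}$ and $(A^\top\bm 1)_j = \sum_i a_{ij}$, the functions $h_\mu$ are the $2N$ affine (hence continuously differentiable) maps $a\mapsto (A\bm 1)_i$ for $i\in\{1,\ldots,N\}$ and $a\mapsto(A^\top\bm 1)_j$ for $j\in\{1,\ldots,N\}$. Denoting the associated Lagrange multipliers $\mu_1,\ldots,\mu_N$ and $\nu_1,\ldots,\nu_N$, Theorem~\ref{thm:probability_ergm} asserts that any entropy maximizer must take the form
\begin{equation*}
P(a)=\frac{1}{Z}\exp\!\left[\sum_{i=1}^N\mu_i\sum_{j=1}^N a_{ij}+\sum_{j=1}^N\nu_j\sum_{i=1}^N a_{ij}\right]
=\frac{1}{Z}\exp\!\left[\sum_{i,j=1}^N(\mu_i+\nu_j)\,a_{ij}\right].
\end{equation*}

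Next, I would perform the change of variables $\alpha_i:=e^{\mu_i}>0$ and $\beta_j:=e^{\nu_j}>0$, which rewrites the numerator as $\prod_{i,j}(\alpha_i\beta_j)^{a_{ij}}$. The key observation is that the support $\{0,1\}^{N\times N}$ is a Cartesian product over the entries $a_{ij}$, so the partition function factorizes:
\begin{equation*}
Z(\bm\alpha,\bm\beta)=\sum_{a\in\{0,1\}^{N\times N}}\prod_{i,j=1}^N(\alpha_i\beta_j)^{a_{ij}}
=\prod_{i,j=1}^N\sum_{a_{ij}\in\{0,1\}}(\alpha_i\beta_j)^{a_{ij}}
=\prod_{i,j=1}^N\bigl(1+\alpha_i\beta_j\bigr).
\end{equation*}
This factorization is really the only nontrivial step in the argument: it reflects the fact that the imposed constraints are additive across entries, so no coupling between the $a_{ij}$'s is introduced in the Gibbs measure beyond what is produced by the soft degree conditions.

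Combining the two displays gives
\begin{equation*}
P(a)=\prod_{i,j=1}^N\frac{(\alpha_i\beta_j)^{a_{ij}}}{1+\alpha_i\beta_j}
=\prod_{i,j=1}^N p_{ij}^{\,a_{ij}}(1-p_{ij})^{1-a_{ij}},\qquad p_{ij}=\frac{\alpha_i\beta_j}{1+\alpha_i\beta_j},
\end{equation*}
where the second equality follows from noticing that $1-p_{ij}=(1+\alpha_i\beta_j)^{-1}$ and $p_{ij}/(1-p_{ij})=\alpha_i\beta_j$, so that $p_{ij}^{a_{ij}}(1-p_{ij})^{1-a_{ij}}=(1-p_{ij})(\alpha_i\beta_j)^{a_{ij}}$. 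This is precisely the claimed form, with $\alpha_i,\beta_j>0$ by construction. I do not expect a genuine obstacle here; the only point deserving care is the linear-independence hypothesis of Theorem~\ref{thm:lagrange} (used implicitly in Theorem~\ref{thm:probability_ergm}), which one may enforce by dropping one redundant constraint coming from the identity $\sum_i k_i^{\mathrm{in}}=\sum_j k_j^{\mathrm{out}}$; this removal does not affect the functional form above since it only reduces the number of free multipliers by one. Uniqueness/existence of $\bm\alpha,\bm\beta$ realizing the prescribed degree sequences is a separate question and is not part of the present claim, which concerns only the form of $P$.
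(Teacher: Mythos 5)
Your proof is correct and follows essentially the same route as the paper's: invoke Theorem~\ref{thm:probability_ergm} with the $2N$ degree constraints, factorize the partition function over the independent binary entries to get $Z=\prod_{i,j}(1+\alpha_i\beta_j)$, and exponentiate the multipliers to obtain the Bernoulli product form. Your additional remark about dropping the redundant constraint arising from $\sum_i k_i^{\mathrm{in}}=\sum_j k_j^{\mathrm{out}}$ to secure linear independence of the constraint gradients is a valid refinement that the paper does not spell out.
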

\begin{remark}
The $2N$ scalars $(\alpha_1,...,\alpha_N, \beta_1,...,\beta_N)$ are such that $\alpha_i = e^{\lambda_i}$ and $\beta_j = e^{\lambda_{N+j}}$ for all $i,j \in \{1,...,N\}$ where $\lambda_1,...,\lambda_N$ are the Lagrange multipliers related to the in-degree constraints, while $\lambda_{N+1},...,\lambda_{2N}$ are the Lagrange multipliers related to the out-degree constraints.
\end{remark}
Having an explicit form for the probability of a graph in the ensemble allows finding an expression for the expected adjacency matrix, which turns out to have elements following a Fermi-Dirac distribution.
\begin{corollary}
\label{cor:expected_A}
    Let $A$ be the random matrix described in the previous corollary. Then, for all $i,j\in\{1,\ldots,N\}$, 
    \begin{equation}\label{eq:expectedA}
        \left\langle A_{ij}\right\rangle =\frac{\alpha_i\beta_j}{1+\alpha_i \beta_j} \,
    \end{equation}
    and $0 < \left\langle A_{ij}\right\rangle < 1$.
\end{corollary}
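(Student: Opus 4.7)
The plan is to read off the expectation directly from the factorized form of the probability mass function provided by Corollary~\ref{cor:probability_sdcm}. Since
\[
P(a)=\prod_{k,\ell=1}^N p_{k\ell}^{\,a_{k\ell}}(1-p_{k\ell})^{1-a_{k\ell}},\qquad p_{k\ell}=\frac{\alpha_k\beta_\ell}{1+\alpha_k\beta_\ell},
\]
the joint distribution factorizes over the $N^2$ entries, so the random variables $A_{k\ell}$ are mutually independent, and each $A_{ij}$ is a Bernoulli random variable taking the value $1$ with probability $p_{ij}$ and $0$ with probability $1-p_{ij}$.

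First, I would marginalize to isolate the distribution of a single entry $A_{ij}$. Summing $P(a)$ over all the other entries $a_{k\ell}$ with $(k,\ell)\neq(i,j)$, each factor $\sum_{a_{k\ell}\in\{0,1\}}p_{k\ell}^{a_{k\ell}}(1-p_{k\ell})^{1-a_{k\ell}}=1$ collapses, leaving the marginal $\mathrm{Pr}(A_{ij}=a_{ij})=p_{ij}^{a_{ij}}(1-p_{ij})^{1-a_{ij}}$. Then the expectation is simply
\[
\langle A_{ij}\rangle = 0\cdot(1-p_{ij})+1\cdot p_{ij}=p_{ij}=\frac{\alpha_i\beta_j}{1+\alpha_i\beta_j},
\]
which is exactly Eq.~\eqref{eq:expectedA}.

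Finally, for the strict bounds, I would invoke the positivity of the Lagrange-multiplier exponentials $\alpha_i=e^{\lambda_i}>0$ and $\beta_j=e^{\lambda_{N+j}}>0$ noted in the remark following Corollary~\ref{cor:probability_sdcm}. Positivity of $\alpha_i\beta_j$ gives $p_{ij}>0$, while the inequality $\alpha_i\beta_j<1+\alpha_i\beta_j$ gives $p_{ij}<1$; hence $0<\langle A_{ij}\rangle<1$. There is no genuine obstacle here: the result is a direct consequence of the independent Bernoulli structure made manifest by the factorization in Eq.~\eqref{eq:maxent_prob}, and the main ``work'' is the one-line marginalization argument.
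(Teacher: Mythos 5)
Your proof is correct and follows essentially the same route as the paper's: both read off the independent Bernoulli structure from the factorized probability mass function of Corollary~\ref{cor:probability_sdcm}, compute $\langle A_{ij}\rangle$ by summing $a_{ij}\,p_{ij}^{a_{ij}}(1-p_{ij})^{1-a_{ij}}$ over $a_{ij}\in\{0,1\}$, and deduce the strict bounds from $0<\alpha_i,\beta_j<\infty$. Your explicit marginalization step is just a slightly more detailed rendering of the same one-line argument.
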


The next lemma shows that, under some mild conditions, the expected adjacency matrix is an infinite sum of rank-one matrices with singular values equal to $\ell_1$, $\ell_2$, $\ldots$ or $N$, $m_1$, $m_2$, $m_3$, $\ldots$

\begin{lemma}\label{lem:expectedA_infinite_sum} 
    Let $A$ be a random matrix satisfying Eq. \eqref{eq:expectedA}. Let
    \begin{equation*}
       \bm \alpha = (\alpha_1,\ldots, \alpha_N)^\top,\qquad  \bm \beta = (\beta_1,\ldots, \beta_N)^\top.
\end{equation*}

\noindent (1) If $0< \alpha_i\beta_j<1$ for all $i,j\in\{1,\ldots,N\}$, then
            \begin{equation}\label{eq:series_L}
                \left\langle A\right\rangle = \sum_{k=1}^\infty L_k\,,
            \end{equation}
            where $L_k$ denotes a rank-one $N\times N$ matrix whose only nonzero singular value is 
            \begin{equation}\label{eq:series_factor_l}
                \ell_k = \sqrt{\sum_{i,j=1}^N(\alpha_i\beta_j)^{2k}}\,. 
            \end{equation}

\noindent (2) If $\alpha_i\beta_j>1$ for all $i,j\in\{1,\ldots,N\}$, then
            \begin{equation*}
                \left\langle A\right\rangle = N\,\hat{\bm 1}\hat{\bm 1}^\top+\sum_{k=1}^\infty M_k\,,
            \end{equation*}
            where $M_k$ denotes a rank-one $N\times N$ matrix whose only nonzero singular value is 
            \begin{equation*}
                m_k = \sqrt{\sum_{i,j=1}^N(\alpha_i\beta_j)^{-2k}}\,.
            \end{equation*}
\end{lemma}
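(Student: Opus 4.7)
The plan is to expand the scalar function $x\mapsto x/(1+x)$ as a geometric series in $x$ or $x^{-1}$, depending on the density regime, and then to recognise each power $(\alpha_i\beta_j)^k$ as the $(i,j)$-entry of an outer product of two positive vectors. In both cases the finite dimensionality of $N\times N$ matrices turns uniform entrywise control into convergence in Frobenius norm, so the series manipulations require no delicate analytical justification.

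For part (1), the hypothesis $0<\alpha_i\beta_j<1$ yields $\gamma:=\max_{i,j}\alpha_i\beta_j<1$. First I would apply the identity
\[
\frac{x}{1+x}=\sum_{k=1}^{\infty}(-1)^{k-1}x^{k},\qquad |x|<1,
\]
to each entry of $\langle A\rangle$ from Eq.~\eqref{eq:expectedA}; the $k$-th term is uniformly bounded by $\gamma^{k}$, hence the tail vanishes in Frobenius norm and termwise identification is legitimate. Introducing the vectors $\bm\alpha^{(k)}=(\alpha_1^k,\ldots,\alpha_N^k)^\top$ and $\bm\beta^{(k)}=(\beta_1^k,\ldots,\beta_N^k)^\top$, one recognises $(\alpha_i\beta_j)^k$ as the $(i,j)$-entry of $\bm\alpha^{(k)}(\bm\beta^{(k)})^\top$. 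Setting $L_k:=(-1)^{k-1}\bm\alpha^{(k)}(\bm\beta^{(k)})^\top$ then gives a nonzero real rank-one matrix whose unique nonzero singular value equals
\[
\|\bm\alpha^{(k)}\|\,\|\bm\beta^{(k)}\|=\sqrt{\textstyle\sum_i\alpha_i^{2k}}\sqrt{\textstyle\sum_j\beta_j^{2k}}=\sqrt{\textstyle\sum_{i,j}(\alpha_i\beta_j)^{2k}}=\ell_k,
\]
since any rank-one matrix $\bm u\bm v^\top$ admits the compact SVD $(\bm u/\|\bm u\|)\,(\|\bm u\|\,\|\bm v\|)\,(\bm v/\|\bm v\|)^\top$ and hence has $\|\bm u\|\,\|\bm v\|$ as its sole nonzero singular value.

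For part (2), the hypothesis $\alpha_i\beta_j>1$ lets me factor $\alpha_i\beta_j$ in the denominator and expand the reciprocal:
\[
\frac{\alpha_i\beta_j}{1+\alpha_i\beta_j}=\frac{1}{1+(\alpha_i\beta_j)^{-1}}=1+\sum_{k=1}^{\infty}(-1)^{k}(\alpha_i\beta_j)^{-k},
\]
which converges uniformly since $\max_{i,j}(\alpha_i\beta_j)^{-1}<1$. The constant entrywise term assembles to $\bm 1\bm 1^\top=N\hat{\bm 1}\hat{\bm 1}^\top$, and the remaining powers assemble to $M_k:=(-1)^{k}\bm\alpha^{(-k)}(\bm\beta^{(-k)})^\top$ with unique nonzero singular value $m_k=\sqrt{\sum_{i,j}(\alpha_i\beta_j)^{-2k}}$ by the same outer-product argument as in part (1). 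The main (and, in truth, only) obstacle is checking that the interchange of summation with entrywise assembly is warranted; the uniform geometric bounds make this a one-line argument, so the lemma reduces to careful bookkeeping of signs and exponents.
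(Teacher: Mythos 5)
Your proposal is correct and follows essentially the same route as the paper's proof: expand $\langle A_{ij}\rangle$ as a geometric series in $\alpha_i\beta_j$ (or its reciprocal, after extracting the constant term that becomes $N\hat{\bm 1}\hat{\bm 1}^\top$), identify each term as the signed outer product of the vectors of entrywise $k$-th powers, and read off the single nonzero singular value of a rank-one matrix $\bm u\bm v^\top$ as $\|\bm u\|\,\|\bm v\|$. Your explicit appeal to uniform geometric bounds to justify the termwise identification is a slight tightening of the paper's argument but not a different method.
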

\begin{proof}
This lemma is essentially a direct consequence of expanding the closed form of the geometric series and normalizing the vectors in each term. Indeed, if $0< \alpha_i\beta_j<1$ for all $i,j\in\{1,\ldots,N\}$, we can use the geometric series and get the following convergent series:
\begin{equation*}
    \left\langle A\right\rangle = {\bm \alpha}\,{\bm \beta}^\top-  ({\bm \alpha\circ \bm \alpha})\, ({\bm \beta \circ \bm \beta})^\top 
    + ({\bm \alpha\circ \bm \alpha\circ \bm \alpha})\, ({\bm \beta \circ \bm \beta\circ \bm \beta})^\top\;-+\ldots
\end{equation*}
Setting
\begin{equation*}
    L_k = (-1)^{k+1} (\underbrace{\bm \alpha \circ \cdots \circ \bm \alpha}_{k \text{ times}})(\underbrace{\bm \beta \circ \cdots \circ \bm \beta}_{k \text{ times}})^\top,
\end{equation*}
we get Eq.~\eqref{eq:series_L}. We see that each matrix $L_k$ is factorized as $\bm u \bm v^\top$, so we conclude that the rank of each element of the series is one.  Moreover, the SVD for such a matrix is simply $\bm u\bm v^\top = \rho \,\widehat{\bm u}\,\widehat{\bm v}^\top$, where $\rho = \|\bm u\|\|\bm v\|$, $\,\,\widehat{\bm u} = \bm u/\|\bm u\|$, $\,\,\widehat{\bm v} = \bm v/\|\bm v\|\,$. Hence, 
\begin{equation}\label{eq:Lk_rank_one}
    L_k = \ell_k\;\widehat{\bm \alpha_k}\;\widehat{\bm \beta_k}^\top, 
\end{equation}
where
\begin{align*}
    \ell_k = \|\underbrace{\bm \alpha \circ \cdots \circ \bm \alpha}_{k \text{ times}}\|\|\underbrace{\bm \beta \circ \cdots \circ \bm \beta}_{k \text{ times}}\|,\qquad
    \widehat{\bm \alpha_k}=(-1)^{k+1}\frac{\overbrace{\bm \alpha \circ \cdots \circ \bm \alpha}^{k \text{ times}}}{\|\underbrace{\bm \alpha \circ \cdots \circ \bm \alpha}_{k \text{ times}}\|}\,,\qquad
    \widehat{\bm \beta_k}=\frac{\overbrace{\bm \beta \circ \cdots \circ \bm \beta}^{k \text{ times}}}{\|\underbrace{\bm \beta \circ \cdots \circ \bm \beta}_{k \text{ times}}\|}\,\,.
\end{align*}
Simple calculations lead to Eq.~\eqref{eq:series_factor_l}, which completes the proof of the first part of the lemma. The second part is proved similarly starting with the geometric series of
\(\left\langle A_{ij}\right\rangle =1/(1+1/(\alpha_i \beta_j))\).
\end{proof}
The last lemma will allow us to find upper bounds on the singular values of the expected adjacency matrix by using Weyl inequalities. However, some technical results are required before deducing the upper bounds. In particular, the coefficients $\ell_k$ and $m_k$ in Lemma \ref{lem:expectedA_infinite_sum} are ordered and bounded as stated in the next lemma.
\begin{lemma}\label{lem:coeff_lk_mk} 
    Let $\ell_k$ and $m_k$ be the coefficients defined in Lemma \ref{lem:expectedA_infinite_sum}. \\
\noindent (1) If $0< \alpha_i\beta_j<1$ for all $i\in\{1,\ldots,N\}$, then 
    \begin{equation}\label{eq:decreasing_ell}
        \ell_{k+1}<\ell_k\,, \qquad \forall \;k\in \mathbb{Z}_+\,
    \end{equation}
and, with $\gamma = \max_{i,j} \alpha_i\beta_j$, 
    \begin{equation}\label{eq:upper_bound_ell}
        \ell_{k}\leq N \,\gamma^k\,, \qquad \forall \;k\in \mathbb{Z}_+\,.
    \end{equation}  

\noindent (2) If $\alpha_i\beta_j>1$ for all $i,j\in\{1,\ldots,N\}$, then 
    \begin{equation}
        m_{k+1}<m_k\,, \qquad \forall \;k\in \mathbb{Z}_+\,.
    \end{equation}
     and, with $\omega = \min_{i, j} \alpha_i\beta_j$,
    \begin{equation}
        m_{k}\leq N \,\omega^{-k}\,, \qquad \forall \;k\in \mathbb{Z}_+\,.
    \end{equation}  
\end{lemma}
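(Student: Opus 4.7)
The plan is to exploit the explicit closed-form expressions for $\ell_k$ and $m_k$ established in Lemma~\ref{lem:expectedA_infinite_sum} and reduce every claim to an elementary, term-by-term comparison inside the finite double sums $\sum_{i,j=1}^N (\alpha_i\beta_j)^{\pm 2k}$. Taking square roots at the end is safe because all quantities involved are strictly positive, so every strict inequality between the squared sums transfers to $\ell_k$ and $m_k$.

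For part (1), I would first prove the monotonicity. Since $0<\alpha_i\beta_j<1$ for every pair $(i,j)$, the scalar inequality
\begin{equation*}
(\alpha_i\beta_j)^{2(k+1)} = (\alpha_i\beta_j)^{2k}\,(\alpha_i\beta_j)^{2} < (\alpha_i\beta_j)^{2k}
\end{equation*}
holds for each $(i,j)$. Summing over the $N^2$ pairs preserves the strict inequality (all summands are positive), so $\ell_{k+1}^2 < \ell_k^2$, and taking square roots yields $\ell_{k+1}<\ell_k$. For the upper bound, the stronger hypothesis $\alpha_i\beta_j<\gamma$ with $\gamma\in(0,1)$ gives $(\alpha_i\beta_j)^{2k}<\gamma^{2k}$ uniformly in $(i,j)$, so $\sum_{i,j=1}^N (\alpha_i\beta_j)^{2k} < N^2\gamma^{2k}$, and hence $\ell_k<N\gamma^k$.

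For part (2), I would repeat exactly the same argument with $\alpha_i\beta_j$ replaced by its reciprocal. The assumption $\alpha_i\beta_j>1$ gives $(\alpha_i\beta_j)^{-2}<1$, so $(\alpha_i\beta_j)^{-2(k+1)}<(\alpha_i\beta_j)^{-2k}$ term by term, which yields $m_{k+1}<m_k$ after summing and taking square roots. Under the sharper hypothesis $\alpha_i\beta_j>\omega>1$, we have $(\alpha_i\beta_j)^{-2k}<\omega^{-2k}$, so $m_k^2<N^2\omega^{-2k}$, i.e., $m_k<N\omega^{-k}$.

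There is essentially no obstacle: each statement follows from monotonicity of $x\mapsto x^{n}$ on $(0,\infty)$ together with a trivial finite-sum majorization. The only point worth being explicit about in the write-up is that the uniform bounds $\alpha_i\beta_j<\gamma$ and $\alpha_i\beta_j>\omega$ are used to decouple the summand from the indices $(i,j)$, which is what produces the clean geometric factor $\gamma^k$ (resp.\ $\omega^{-k}$) once the square root of the $N^2$-term bound is taken.
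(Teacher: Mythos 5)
Your proof is correct and follows essentially the same route as the paper's: a term-by-term comparison of the summands $(\alpha_i\beta_j)^{\pm 2k}$ inside the finite double sums, followed by taking square roots, with the uniform bounds $\gamma$ and $\omega$ used to majorize the sum by $N^2\gamma^{2k}$ (resp.\ $N^2\omega^{-2k}$). No differences worth noting.
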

\begin{proof}
For the first case, for all $k\in\mathbb{Z}_+$ and from Eq.~\eqref{eq:series_factor_l},
\begin{equation*}
    \ell_{k+1} = \sqrt{\sum_{i,j = 1}^N (\alpha_i\beta_j)^{2(k+1)}}
    = \sqrt{\sum_{i,j = 1}^N (\alpha_i\beta_j)^{2k}(\alpha_i\beta_j)^2} < \sqrt{\sum_{i,j = 1}^N (\alpha_i\beta_j)^{2k}} = \ell_k,
\end{equation*}
where we have used $(\alpha_i\beta_j)^2 < 1$ since $0<\alpha_i\beta_j < 1$ for all $i,j\in\{1,...,N\}$. The first inequality of case (1) is thus established. Moreover, if $\alpha_i\beta_j\leq\gamma$ for all $i,j\in\{1,\ldots,N\}$, then
\begin{equation}
    \ell_k = \sqrt{\sum_{i,j = 1}^N (\alpha_i\beta_j)^{2k}} \leq \sqrt{\sum_{i,j = 1}^N \gamma^{2k}} = N \,\gamma^{k}\,.
\end{equation}
The second inequality of case (1) follows from the observation that $\gamma = \max_{i,j} \alpha_i\beta_j$ is the least value of $\gamma$ that satisfies  $\alpha_i\beta_j\leq\gamma$ for all $i,j\in\{1,\ldots,N\}$. Case (2) is proved similarly.
\end{proof}
Moreover, for a given bound on $\alpha_i\beta_j$, there is a corresponding bound for the elements of the expected adjacency matrix.
\begin{lemma}\label{lem:expectedA_alphabeta_bound}
Let $A$ be a random matrix satisfying Eq. \eqref{eq:expectedA}. Let $\gamma$ and $\omega$ be two positive constants. Then,
\begin{equation}
    \alpha_i\beta_j \leq \gamma < 1 \iff \langle A_{ij} \rangle \leq \frac{\gamma}{1 + \gamma} < \frac{1}{2}\qquad \text{and} \qquad \alpha_i \beta_j \geq \omega>1 \iff \langle A_{ij} \rangle \geq \frac{\omega}{1 + \omega} > \frac{1}{2}.
\end{equation}
\end{lemma}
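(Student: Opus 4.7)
The plan is to reduce both biconditionals to the monotonicity of a single scalar function. By Corollary~\ref{cor:expected_A} we already have $\langle A_{ij}\rangle = \alpha_i\beta_j/(1+\alpha_i\beta_j)$, so define $f:(0,\infty)\to(0,1)$ by $f(x)=x/(1+x)$. A direct computation gives $f'(x)=1/(1+x)^2>0$ for every $x>0$, which shows that $f$ is strictly increasing, continuous, and hence a bijection from $(0,\infty)$ onto $(0,1)$. This is the only analytic input I will need.

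For the first equivalence, I would simply apply $f$ on both sides. The forward direction follows from the strict monotonicity: $\alpha_i\beta_j<\gamma<1$ implies $\langle A_{ij}\rangle=f(\alpha_i\beta_j)<f(\gamma)=\gamma/(1+\gamma)$, and since $\gamma<1$ we further get $\gamma/(1+\gamma)<1/2$. The reverse direction uses the fact that $f$ is a bijection onto $(0,1)$, so from $\langle A_{ij}\rangle<\gamma/(1+\gamma)<1/2$ one can apply $f^{-1}$, which is also strictly increasing (explicitly $f^{-1}(y)=y/(1-y)$), to recover $\alpha_i\beta_j<\gamma<1$.

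For the second equivalence the argument is identical in spirit, applying strict monotonicity of $f$ to $\alpha_i\beta_j>\omega>1$ to obtain $\langle A_{ij}\rangle>\omega/(1+\omega)>1/2$, and using the inverse $f^{-1}$ on the converse direction.

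There is essentially no obstacle here: the whole statement is a one-variable monotonicity check combined with the elementary observation that $f(1)=1/2$, which is what places the threshold $1/2$ on the right-hand sides. The only thing worth being explicit about is that $\alpha_i,\beta_j>0$ (from Corollary~\ref{cor:probability_sdcm}) ensures we stay in the domain where $f$ is monotone, so no sign issues arise when inverting.
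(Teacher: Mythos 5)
Your proof is correct and is essentially the same argument as the paper's: the paper establishes the equivalences by an explicit chain of inequality manipulations (rewriting $\langle A_{ij}\rangle$ as $1/(1+1/(\alpha_i\beta_j))$ and inverting), which is just a hands-on way of proving the strict monotonicity of $x\mapsto x/(1+x)$ that you obtain via the derivative, together with the same observation that the value at $x=1$ is $1/2$.
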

\begin{proof}
Recall that all the parameters involved in this lemma are positive. The first part of both equivalences is obtained with basic inequality manipulations:
    \begin{align*}
         \alpha_i\beta_j \leq  \gamma\iff \frac{1}{\alpha_i\beta_j} \geq \frac{1}{\gamma} \iff \frac{1}{1 + \frac{1}{\alpha_i\beta_j}} \leq \frac{1}{1 + \frac{1}{\gamma}}\iff \langle A_{ij} \rangle \leq \frac{\gamma}{1 + \gamma}\,,\\
         \alpha_i\beta_j \geq \omega \iff \frac{1}{\alpha_i\beta_j} \leq \frac{1}{\omega} \iff \frac{1}{1 + \frac{1}{\alpha_i\beta_j}} \geq \frac{1}{1 + \frac{1}{\omega}}\iff \langle A_{ij} \rangle \geq \frac{\omega}{1 + \omega}\,.
    \end{align*}
The second part is an immediate consequence of $\gamma <1 \iff \gamma/(1 + \gamma) < 1/2$ and $\omega>1\iff\omega/(1 + \omega) > 1/2$.
\end{proof}
\begin{remark}
The inequalities in the last lemma imply that for all $i\in\{1,...,N\}$, the expected degrees $k_i^{\mathrm{in}}$ and $k_i^{\mathrm{out}}$ are both upper-bounded by $N\gamma/(1 + \gamma)$ when $\alpha_i\beta_j < \gamma <1$, and lower bounded by $N\omega/(1 + \omega)$ when $\alpha_i \beta_j > \omega>1$. However, these bounds on $k_i^{\mathrm{in}}$ and $k_i^{\mathrm{out}}$ do not necessarily imply that the inequalities in the last lemma are satisfied.
\end{remark}

We are now ready to present the first main theorem of this section, which states that for two broad families of parameters defining the soft directed configuration model, the singular values of expected adjacency matrix decrease very rapidly, at least exponentially. 
\begin{theorem}\label{thm:upper_bound_singvals_scm}
Let $\langle A\rangle$ be the $N\times N$ matrix defined in Eq.~\eqref{eq:expectedA} and whose singular values are $\sigma_1\geq ...\geq \sigma_N$. Let $\ell_k$ and $m_k$ be the coefficients defined in Lemma \ref{lem:expectedA_infinite_sum}.
 
 \vspace{0.5cm}
 
\noindent (1) If $0 < \langle A_{ij} \rangle < 1/2$ for all $i,j\in\{1,\ldots,N\}$, then the singular values are upper-bounded as 
\begin{equation}\label{eq:upper_bound_sparser}
        \sigma_i\leq \sum_{k=i}^\infty \ell_k \leq  \frac{N\,\gamma^i}{1-\gamma},\qquad \forall\,\, i \in\{1,...,N\}\,,
\end{equation}
where $\gamma = \max_{i,j} \alpha_i \beta_j$. 

 \vspace{0.5cm}
         
\noindent (2) If $1/2 < \langle A_{ij} \rangle < 1$ for all $i\in\{1,\ldots,N\}$, then the singular values are upper-bounded as 
\begin{equation}\label{eq:upper_bound_denser}
    \sigma_i\leq N\,\delta_{i1}+\sum_{k=i}^\infty m_k \leq N\delta_{i1} + \frac{N\omega^{1-i}}{\omega-1},\qquad \forall\,\, i \in\{1,...,N\}, 
\end{equation}
where $\delta_{i1}$ is a Kronecker delta and $\omega = \min_{i,j}\alpha_i\beta_j$.    
\end{theorem}
\begin{proof}
\phantom{Henri} \\
(1) First of all, $0 < \langle A_{ij}\rangle < 1/2$
if and only if $0<\alpha_i\beta_j<1$ for all $i,j\in\{1,...,N\}$ from Lemma~\ref{lem:expectedA_alphabeta_bound}. Lemma~\ref{lem:expectedA_infinite_sum} then implies that the expected adjacency matrix is a convergent infinite sum of rank one matrices $L_k$, $k\in\mathbb{Z}_+$. Thus, the singular values of the expected adjacency matrix are the singular values of a sum of matrices:
\begin{equation*}
    \sigma_i(\langle A\rangle) = \sigma_i(\textstyle{\sum_{k = 1}^\infty} L_k)\,, \qquad \forall \,i\in\{1,...,N\},
\end{equation*}
where we write $\sigma_i(\langle A\rangle)$ instead of $\sigma_i$ for the sake of clarity in the proof. 

Next, recall from Theorem~\ref{thm:weyl_vas} that the Weyl inequalities for $N\times N$ matrices $B$ and $C$ are
 \begin{equation*}
      \sigma_{r+s-1}(B+C) \leq \sigma_r(B) + \sigma_s(C)\,, \qquad 
      \forall\;1 \leq r,\, s, \, r+s-1 \leq N.
  \end{equation*}
Setting $r=s=1$ yields the familiar triangle inequality:  
  \begin{equation}\label{eq:triangle_inequality}
      \sigma_{1}(B+C) \leq \sigma_1(B) + \sigma_1(C)\,.
  \end{equation} 
The latter inequality implies that for all $1\leq i\leq n-1<\infty$,
  \begin{equation*} 
      \sigma_{1}\Big(\sum_{k=i}^nL_k\Big) \leq \sum_{k=i}^n\sigma_{1}(L_k)\,.
  \end{equation*}
However, given that $\sigma_1(L_k)$ is nonnegative, 
\begin{equation*} 
    \sum_{k=i}^n\sigma_{1}(L_k)\leq \sum_{k=i}^{n+1}\sigma_{1}(L_k)\leq \cdots \leq \sum_{k=i}^\infty\sigma_{1}(L_k)=\sum_{k=i}^\infty \ell_k\,,
\end{equation*}
where we have used the notation $\ell_k=\sigma_{1}(L_k)$ introduced in Lemma~\ref{lem:expectedA_infinite_sum}. To prove the convergence of the infinite series, we recall from Lemma~\ref{lem:coeff_lk_mk} that $\ell_k\leq N\gamma^k$ with $\gamma = \max_{i,j}\alpha_i\beta_j$. This in turn implies that 
\begin{equation*} 
      \sum_{k=i}^\infty \ell_k\leq\frac{N\gamma^i}{1-\gamma},
\end{equation*}
as stated in the rightmost inequality of \eqref{eq:upper_bound_sparser}. So far, we have  proved that for all $1\leq i\leq n-1<\infty$,
\begin{equation*} 
      \sigma_{1}\Big(\sum_{k=i}^nL_k\Big) \leq \sum_{k=i}^\infty \ell_k\leq\frac{N\gamma^i}{1-\gamma}\,.
\end{equation*}
The continuity of $\sigma_1\,:\mathbb{R}^{N\times N}\to \mathbb{R}$, which is obvious since $\sigma_1$ is a norm, and the convergence of $\sum_{k=i}^\infty L_k$ allow us to take the limit $n\to \infty$ on the left-hand side of the previous inequality and conclude that
  \begin{equation} \label{eq:second_inequality_SI}
      \sigma_{1}\Big(\sum_{k=i}^\infty L_k\Big) \leq  \sum_{k=i}^\infty \ell_k\leq\frac{N\gamma^i}{1-\gamma}\,.
  \end{equation}

Let us now go back to the Weyl inequalities and set $r=i$, $s=1$, $B = \sum_{k=1}^{i-1} L_k$, and  $C=\sum_{k=i}^\infty L_k$. This yields the inequality 
 \begin{equation*}
   \sigma_{i}\Big(\sum_{k=1}^\infty L_k\Big) = \sigma_i\Big(\sum_{k=1}^{i-1} L_k+\sum_{k=i}^\infty L_k\Big) \leq \sigma_i\Big(\sum_{k=1}^{i-1} L_k\Big)+\sigma_1\Big(\sum_{k=i}^\infty L_k\Big)\,,
  \end{equation*}
which is valid for all $1\leq i\leq N$. 
The matrix $\sum_{k=1}^{i-1} L_k$ is the sum of $i-1$ matrices of rank one, which means that the rank of  $\sum_{k=1}^{i-1} L_k$ is at most $i-1$. Hence,
$\sigma_i\Big(\sum_{k=1}^{i-1} L_k\Big)=0\,$,
so that
\begin{equation}\label{eq:third_inequality_SI}
      \sigma_{i}\Big(\sum_{k=1}^\infty L_k\Big) \leq \sigma_1\Big(\sum_{k=i}^\infty L_k\Big)\,,\qquad \forall \, i\in\{1,\ldots,N\}\,.
\end{equation}
Combining inequalities \eqref{eq:second_inequality_SI} and \eqref{eq:third_inequality_SI} leads to the desired result: \begin{equation*} 
      \sigma_{i}\Big(\sum_{k=1}^\infty L_k\Big) \leq  \sum_{k=i}^\infty \ell_k\leq\frac{N\gamma^i}{1-\gamma}\,,\qquad \forall \, i\in\{1,\ldots,N\}\,.
\end{equation*}

\vspace{0.5cm}

\noindent (2) Similarly to the first case Lemmas~\ref{lem:expectedA_infinite_sum}-\ref{lem:expectedA_alphabeta_bound}, and Weyl inequalities imply that
\begin{equation*}
    \sigma_i(\langle A \rangle)= \sigma_i\Big(\sum_{k=1}^\infty M_k\Big)\leq N\delta_{i1} + m_i + \sigma_1\Big(\sum_{k=i+1}^\infty M_k\Big).
\end{equation*}
Proceeding as for inequality \eqref{eq:second_inequality_SI} then leads to the inequality
\begin{equation*}
    \sigma_i\Big(\sum_{k=1}^\infty M_k\Big) \leq N\delta_{i1} + \sum_{k=i}^\infty m_k,
\end{equation*}
where $m_k=\sigma_1(M_k)$. 
Additionally, Lemma~\ref{lem:coeff_lk_mk} states that $m_k \leq N\omega^{-k}$ with $\omega = \min_{i, j}\alpha_i\beta_j$, which leads to
\begin{equation*}
    \sigma_i\Big(\sum_{k=1}^\infty M_k\Big)\leq N\delta_{i1} + N\sum_{k=i}^\infty \omega^{-k}\,.
\end{equation*} 
Writing the truncated geometric series in closed form finally gives the expected result.
\end{proof}
The upper bounds in the last theorem theoretically validate the low-rank formulation of the directed soft configuration model. 
The last inequalities in Eqs.~\eqref{eq:upper_bound_sparser} and \eqref{eq:upper_bound_denser} are meant to explicitly show the exponential decrease of the singular values while the first inequalities in Eqs.~\eqref{eq:upper_bound_sparser} and \eqref{eq:upper_bound_denser} are tighter versions. In Fig.~\ref{fig:upper_bounds_dscm}a and \ref{fig:upper_bounds_dscm}b, we illustrate the first inequalities in Eqs.~\eqref{eq:upper_bound_sparser} and \eqref{eq:upper_bound_denser} with both axes in log-log scale.

Following similar steps, we prove the second main theorem of the section: the singular values of $\langle W \rangle$ in the weighted directed soft configuration model (WDSCM) [Example~\ref{ex:network_science_weighted}] are at least exponentially decreasing.
\begin{theorem}\label{thm:upper_bound_singvals_wdscm}
Let $\langle W\rangle$ be the $N\times N$ matrix defined in Eq.~\eqref{eq:expectedW_wdscm} and whose singular values are $\sigma_1\geq ...\geq \sigma_N$ and let $n_k = \sqrt{\sum_{i,j=1}^N (y_i\bar{y}_j)^{2k}}$ with $0 < y_i\bar{y}_j < 1$ for all $i, j$. Then, the singular values are upper-bounded as 
\begin{equation}\label{eq:upper_bound_infinite_sum_wdscm}
        \sigma_i\leq \sum_{k=i}^\infty n_k \leq  \frac{N\,\tau^i}{1-\tau},\qquad \forall\,\, i \in\{1,...,N\}\,,
\end{equation}
where $\tau = \max_{i,j}y_i\bar{y}_j$.
\end{theorem}
Contrarily to Theorem~\ref{thm:upper_bound_singvals_scm}, there is no restriction on the domain of the elements of $\langle W\rangle$ for the inequality~\eqref{eq:upper_bound_infinite_sum_wdscm}, which is a consequence of the Bose-Einstein distribution for the elements of the expected weight matrix. The bound in Eq.~\eqref{eq:upper_bound_infinite_sum_wdscm} is illustrated in Fig.~\ref{fig:upper_bounds_dscm}c.

\begin{figure*}[t]
    \centering
    \includegraphics[width=0.9\textwidth]{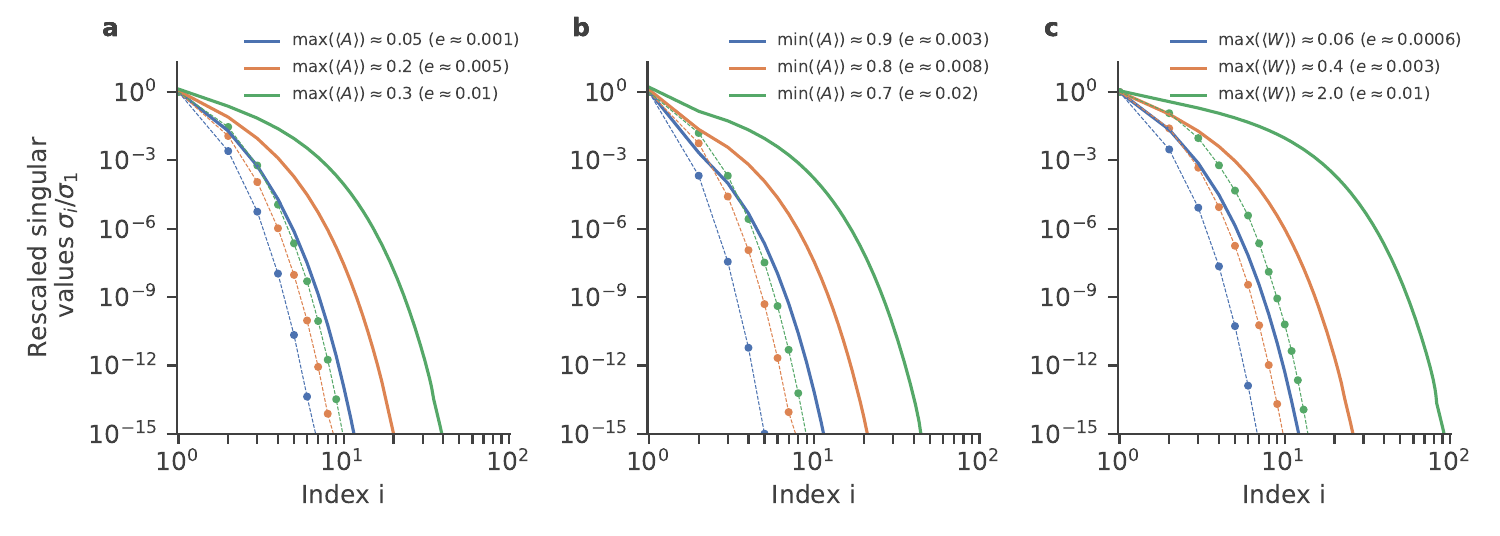}
    \vspace{-0.5cm}
    \caption{Upper bounds (solid lines) on the singular values (markers) of the expected matrix of the directed soft configuration model (\textbf{a}, Eq.~\eqref{eq:upper_bound_sparser} and \textbf{b}, Eq.~\eqref{eq:upper_bound_denser}) and its weighted version (\textbf{c}, Eq.~\eqref{eq:upper_bound_infinite_sum_wdscm}). The dashed lines between singular values are shown for the sake of visualization. In all the subfigures, the $N$-dimensional vectors $\bm\alpha$, $\bm{\beta}$, $\bm{y}$, $\bar{\bm{y}}$ defining the expected matrices are obtained from truncated Pareto distributions and we denote one $N$-dimensional realization as $p_{\bm{x}}(N, x_{\mathrm{\min}}, x_{\mathrm{\max}}, \gamma)$, where $N = 10^3$ is the number of vertices, $\bm{x}$ is $\bm\alpha$, $\bm{\beta}$, $\bm{y}$, or $\bar{\bm{y}}$, $x_{\mathrm{\min}}$ is the minimum value of the distribution, $x_{\mathrm{\max}}$ is the maximum value, and $\gamma$ is the shape parameter. \textbf{a}, $p_{\bm{\alpha}}(N, 2, \alpha_{\mathrm{\max}}, 2)/\sqrt{N}$ where $\alpha_{\mathrm{max}} \in \{10, 20, 30\}$ and $p_{\bm{\beta}}(N, 1, \beta_{\mathrm{max}}, 2.5)/\sqrt{N}$ where $\beta_{\mathrm{max}} \in (5, 10, 15)$. \textbf{b}, $p_{\bm{\alpha}}(N, \alpha_{\mathrm{min}}, 200, 2)/\sqrt{N}$ where $\alpha_{\mathrm{min}} \in (150, 75, 50)$ and $p_{\bm{\beta}}(N, \beta_{\mathrm{min}}, 150, 2.5)/\sqrt{N}$ where $\beta_{\mathrm{min}} \in (120, 60, 40)$. \textbf{c}, $p_{\bm{y}}(N, 0.05, y_{\mathrm{\max}}, 2)$ where $y_{\mathrm{max}} \in \{0.3, 0.6, 0.9\}$ and $p_{\bar{\bm{y}}}(N, 0.05, \bar{y}_{\mathrm{max}}, 2.5)$ where $\bar{y}_{\mathrm{max}} \in (0.2, 0.5, 0.8)$. All the tuples above correspond to the (blue, orange, green) solid lines.}
    \label{fig:upper_bounds_dscm}
\end{figure*}

\subsection{Impact of singular value distribution and matrix density on effective ranks}
\label{SIsubsec:impact_singvals_effrank}

In this subsection, we take advantage of the formula for the stable rank ($\mathrm{srank}$), the nuclear rank ($\mathrm{nrank}$), and the $\mathrm{erank}$, which are amenable for analytic calculations, to assess the impact of different singular value decreases on the effective ranks through various inequalities. In the first part of the subsection, we prove that finding bounding curves, $\psi_*(x)$ and $\psi^*(x)$, that approximately delineate the region of possible singular value allows us to estimate the srank, nrank, and erank. In the second part, we show that linear $O(N)$, sublinear $O(N^{1-\epsilon})$, and constant $O(1)$ asymptotic behaviors emerge depending on the shape of the singular value distribution. We finally present, in a third part, the impact of the density of $W$ on $\mathrm{srank}(W)$ through general inequalities. 

\vspace{0.3cm}

\begin{center}\textit{1. Singular-value envelopes}\end{center}

\noindent We define the \textit{singular-value envelopes} $\psi_*$ and $\psi^*$ for the singular values as functions that
\begin{enumerate}\setlength\itemsep{0em}
    \item decrease on the interval $[1, N]\subset \mathbb{R}$, that is 
    \begin{equation}
    x\leq y\qquad \Longrightarrow\qquad \psi_*(x)\geq \psi_*(y),\qquad \psi^*(x)\geq \psi^*(y)\,;
    \end{equation}
    \item are nonnegative on the interval $[1, N]\subset \mathbb{R}$;
    \item provide lower and upper bounds for the rescaled singular values as
\begin{equation}\label{eq:bounds_singvals}
    \psi_*(i)\leq \frac{\sigma_i}{\sigma_1}\leq \psi^*(i)\qquad \forall\, i\in\{1,\ldots, N\};
\end{equation}
    \item are $\sigma_1$--tight, meaning 
    \begin{equation}
    \psi_*(1)=1= \psi^*(1)\,.
\end{equation}
\end{enumerate}
The last condition is imposed to always match the only value of the ratio  $\sigma_i/\sigma_1$ that is known in all instances. In the next part of the subsection, it will also prevent us from multiplying $\psi_*$ and $\psi^*$ by global scaling factors such as $N^{-\epsilon}$, which could impose, somewhat artificially, asymptotic behaviors for the effective ranks such as $O(N^{1-\epsilon})$. The four properties of the singular-value envelopes readily imply general inequalities that will be useful to bound the effective ranks.

\begin{lemma}[Basic inequalities]\label{lem:general_bounds}
    If $\psi_*$ and $\psi^*$ satisfy conditions 1--4, then
    \begin{equation}
        \int_1^{N} \psi_*(x)^q\,dx \,\leq \,\sum_{i=1}^N\psi_*(i)^q\,\leq \,\sum_{i=1}^N\left(\frac{\sigma_i}{\sigma_1}\right)^q\,\leq\, \sum_{i=1}^N\psi^*(i)^q\,\leq \,1+\int_1^N \psi^*(x)^q\,dx
    \end{equation}
    for all $q\geq 0$ and 
\begin{multline}\label{eq:inequality_entropy_sum_int}
   0
   \leq \sum_{i=2}^N\psi_*(i)\ln \frac{1}{\psi^*(i)}
    \leq \sum_{i=1}^N\frac{\sigma_i}{\sigma_1}\ln \frac{\sigma_1}{\sigma_i}
    \leq  
    \frac{1}{\delta} \sum_{i=2}^N\left(\psi^*(i)^{1-\delta}- \psi_*(i)\right)
    \leq \frac{1}{\delta}\int_1^N\left(\psi^*(x)^{1-\delta}-\psi_*(x)\right)dx+\Delta
\end{multline}
for all $\delta \in (0, 1)$, where 
\begin{equation}
    \Delta = \frac{1}{\delta}\int_1^2\psi_*(x)\,dx\,.
\end{equation}
\end{lemma}
\begin{proof}
We first prove the inequalities involving the summations. Using inequality \eqref{eq:bounds_singvals} and the nonnegativeness of $\psi_*$, $\psi^*$, and $\sigma_i/\sigma_1$, we deduce the inequality 
\begin{equation}
    \psi_*(i)^q\,\leq \,\left(\frac{\sigma_i}{\sigma_1}\right)^q\,\leq \,\psi^*(i) ^q,
\end{equation}
which is valid for all $q\geq 0$ and  $i\in\{1,2,\ldots,N\}$. Thus 
\begin{equation}
    \sum_{i=1}^N\psi_*(i)^q\,\leq \,\sum_{i=1}^N\left(\frac{\sigma_i}{\sigma_1}\right)^q\,\leq\, \sum_{i=1}^N\psi^*(i)^q\,,
\end{equation}
as expected. 

We now concentrate on the first and last inequalities. We adopt a strategy analogous to the method used for proving the integral test for convergence. Notice that the $\psi_*$ and $\psi^*$ are integrable on any subinterval of $[1,N]$ since these functions are monotone. On the one hand, 
\begin{equation}
    \psi_*(i)^q=\int_{i}^{i+1} \psi_*(i)^q\,dx \geq \int_{i}^{i+1} \psi_*(x)^q\,dx  
\end{equation}
since $\psi_*(i)^q\geq \psi_*(x)^q$ for all $x\geq i$ (condition 1).  The latter inequality and condition 2 then imply
\begin{multline}\label{eq:int_lower_bound_varphi}
    \sum_{i=1}^N\psi_*(i)^q=\sum_{i=1}^{N-1}\psi_*(i)^q +\psi_*(N)^q 
    \\\geq \sum_{i=1}^{N-1}\int_{i}^{i+1} \psi_*(x)^q\,dx+\psi_*(N)^q  =\int_{1}^{N} \psi_*(x)^q\,dx+\psi_*(N)^q\geq \int_{1}^{N} \psi_*(x)^q\,dx \,, 
\end{multline}
which proves the leftmost inequality.
On the other hand, 
\begin{equation}
    \psi^*(i)^q=\int_{i-1}^i \psi^*(i)^q\,dx \leq \int_{i-1}^i \psi^*(x)^q\,dx  
\end{equation}
since $\psi^*(i)^q\leq \psi^*(x)^q$ for all $x\leq i$ (condition 1). Thus, with condition 4,
\begin{equation}\label{eq:int_upper_bound_varphi}
     \sum_{i=1}^N\psi^*(i)^q=\psi^*(1)^q+\sum_{i=2}^N\psi^*(i)^q\leq\psi^*(1)^q+\sum_{i=2}^N\int_{i-1}^i \psi^*(x)^q\,dx =1+ \int_{1}^N \psi^*(x)^q\,dx\,,
\end{equation}
which establishes the rightmost inequality. 

To prove the last inequalities, we first notice that thanks to condition 4, the term corresponding to $i=1$ in the summation can be omitted: 
\begin{equation}\label{eq:sum_starting_2}
    \sum_{i=1}^N\frac{\sigma_i}{\sigma_1}\ln \frac{\sigma_1}{\sigma_i} = \sum_{i=2}^N\frac{\sigma_i}{\sigma_1}\ln \frac{\sigma_1}{\sigma_i}.\,
\end{equation}
We then lower-bound each element of the sum as
\begin{equation}
    \frac{\sigma_i}{\sigma_1}\ln \frac{\sigma_1}{\sigma_i}\geq 0
\end{equation}
since ${\sigma_1}/{\sigma_i}\geq 1$ for all $i$. To get the upper bound of \eqref{eq:sum_starting_2}, we use the classical inequality $\ln x\leq a( x^{1/a}-1)$ for all $a,x>0$ \cite[\href{https://dlmf.nist.gov/4.5.E5}{(4.5.5)}]{NIST:DLMF}, which implies that
\begin{equation}
    \frac{\sigma_i}{\sigma_1}\ln \frac{\sigma_1}{\sigma_i}\leq \frac{1}{\delta} \frac{\sigma_i}{\sigma_1}\left(\frac{\sigma_1}{\sigma_i}\right)^\delta-\frac{1}{\delta} \frac{\sigma_i}{\sigma_1} =  \frac{1}{\delta} \left(\frac{\sigma_i}{\sigma_1}\right)^{1-\delta}-\frac{1}{\delta} \frac{\sigma_i}{\sigma_1}\qquad \forall\; \delta>0,
\end{equation}
where the equality is obtained when $\delta\to 0$. Thus, 
\begin{multline}
    \sum_{i=2}^N\frac{\sigma_i}{\sigma_1}\ln \frac{\sigma_1}{\sigma_i}
    \leq\frac{1}{\delta} \sum_{i=2}^N \left(\frac{\sigma_i}{\sigma_1}\right)^{1-\delta}- \frac{1}{\delta}\sum_{i=2}^N  \frac{\sigma_i}{\sigma_1}
   \\  \leq
    \frac{1}{\delta} \sum_{i=2}^N\psi^*(i)^{1-\delta}- \frac{1}{\delta}\sum_{i=2}^{N} \psi_*(i)
   \leq
    \frac{1}{\delta} \sum_{i=2}^N\psi^*(i)^{1-\delta}- \frac{1}{\delta}\sum_{i=2}^{N-1} \psi_*(i)
\end{multline}
where the second inequality is the expected result while the third one is obtained by neglecting the last (negative) element of the sum.
Now, considering that $\psi^*(x)^{1-\delta}$ (with $\delta<1$) and $-\psi_*(x)$ are respectively decreasing and increasing in $x$, we can write
\begin{equation}
   \sum_{i=2}^N\psi^*(i)^{1-\delta}=\sum_{i=2}^N\int_{i-1}^i\psi^*(i)^{1-\delta}dx
    \leq \sum_{i=2}^N\int_{i-1}^i\psi^*(x)^{1-\delta}dx = \int_1^N\psi^*(x)^{1-\delta}dx,\qquad \forall\;\delta\in(0,1)
\end{equation}
and 
\begin{equation}
   -\sum_{i=2}^{N-1} \psi_*(i)=-\sum_{i=2}^{N-1}\int_{i}^{i+1}\psi_*(i)\,dx\leq -\sum_{i=2}^{N-1}\int_{i}^{i+1}\psi_*(x)\,dx=-\int_{2}^{N}\psi_*(x)\,dx
\end{equation}
Hence,
\begin{equation}
   \sum_{i=2}^N\frac{\sigma_i}{\sigma_1}\ln \frac{\sigma_1}{\sigma_i}\leq 
   \frac{1}{\delta}\left(\int_1^N\psi^*(x)^{1-\delta}dx-\int_{2}^N\psi_*(x)\,dx\right),
\end{equation}
which is equivalent to the desired result. 
\end{proof}
In Theorems \ref{thm:upper_bound_singvals_scm} and \ref{thm:upper_bound_singvals_wdscm}, we observed exponential decreases of the singular values occurs when working with the expected adjacency or weight matrix of two frequently used random graphs, namely the directed soft configuration model (DSCM) and its weighted version (WDSCM). A first consequence of the previous lemma is that an exponential decrease implies that srank, nrank, and erank are bounded by finite geometric series (or functions of them).
\begin{proposition}[Bounds on effective ranks -- Exponential decrease]\label{prop:exponential_envelop}
Suppose that the singular values of matrix $W$, $\sigma_1\geq \sigma_2\geq \cdots \sigma_N\geq 0$, satisfy the inequalities 
\begin{equation}\label{eq:lem_bounds_srank_exp}
    \alpha^{i-1}\,\leq \,\frac{\sigma_i}{\sigma_1}\,\leq \,\omega^{i-1},\qquad i\in\{1,\ldots, N\}, 
\end{equation}
for some $0<\alpha\leq \omega<1$. Then,
\begin{align}
    g(\alpha^2,N)\,\leq \, &\, \mathrm{srank}(W)\,\leq   g(\omega^2,N)\,\\
    g(\alpha,N)\leq \,& \,\mathrm{nrank}(W)\,\leq g(\omega,N),\\   
   \omega\,g(\alpha,N)\,\exp\left(\frac{\alpha \,g'(\alpha,N)}{g(\omega,N)}\right) \leq \,&\,\mathrm{erank}(W)\,\leq \alpha\,g(\omega,N)\,\exp\left(\frac{\omega \,g'(\omega,N)}{g(\alpha,N)}\right),
\end{align}
where 
\begin{equation} 
     g(\alpha,N)=\frac{1-\alpha^N}{1-\alpha},\,\qquad g'(\alpha,N)=\frac{\partial g(\alpha,N)}{\partial \alpha}=\frac{1+(N-1)\alpha^N-\alpha^{N+1}}{(1-\alpha)^2}\,.
\end{equation}
\end{proposition}
\begin{proof}
The inequalities for $\mathrm{nrank}$ and $\mathrm{srank}$ are easily derived from Lemma~\ref{lem:general_bounds} by setting $q=1$ and $q=2$, respectively, together with $\psi_*(i)=\alpha^{i-1}$ and $\psi^*(i)=\omega^{i-1}$. Then, combining the first inequalities with Eq.~\eqref{eq:erank_expH}, we obtain the following preliminary result:
\begin{equation} \label{eq:preliminary_inequl_exp}
     g(\alpha,N)\exp\left(\frac{1}{g(\omega,N)}\sum_{i=1}^N\frac{\sigma_i}{\sigma_1}\ln\frac{\sigma_1 }{\sigma_i}\right)
     \leq\,\mathrm{erank}(W)\,\leq 
     g(\omega,N) \exp\left(\frac{1}{g(\alpha,N)}\sum_{i=1}^N\frac{\sigma_i}{\sigma_1}\ln\frac{\sigma_1 }{\sigma_i}\right),
\end{equation}
where it is understood that $0\ln 0 =0$. Now,
\begin{equation*} 
\alpha^{i-1}\ln \omega^{i-1}\,\leq \,\frac{\sigma_i}{\sigma_1}\ln\frac{\sigma_1 }{\sigma_i}\,\leq \,\omega^{i-1}\ln \alpha^{i-1}
\end{equation*}
for all $i\in\{1,\ldots, N\}$. Hence,
\begin{equation*} 
\sum_{i=1}^N\alpha^{i-1}\ln \omega^{i-1}\,\leq \,\sum_{i=1}^N\frac{\sigma_i}{\sigma_1}\ln\frac{\sigma_1 }{\sigma_i}\,\leq \,
\sum_{i=1}^N \omega^{i-1}\ln \alpha^{i-1},
\end{equation*}
which can be simplified as 
\begin{equation*} 
\alpha \ln \omega\sum_{j=0}^{N-1}j\,\alpha^{j-1}\,\leq \,\sum_{i=1}^N\frac{\sigma_i}{\sigma_1}\ln\frac{\sigma_1 }{\sigma_i}\,\leq \,
\omega \ln \alpha\sum_{j=0}^{N-1}j\,\omega^{j-1}\,.
\end{equation*}
Using the geometric series and its derivative, we deduce that    
\begin{equation*} 
\alpha \ln \omega\,g'(\alpha,N)\,\leq \,\sum_{i=1}^N\frac{\sigma_i}{\sigma_1}\ln\frac{\sigma_1 }{\sigma_i}\,\leq \,
\omega \ln \alpha \,g'(\omega,N)\,.
\end{equation*}
We finally get the desired result by taking the exponential of the previous expression and returning to inequality~\eqref{eq:preliminary_inequl_exp}. 
\end{proof}

In Lemma~\ref{lem:general_bounds}, the variable $x$ interpolates between the singular value indices $i$ and thus belongs to a domain that grows with $N$, which is particularly convenient for studying random network models individually, such as the soft configuration model. To allow the comparison of real networks of different sizes, as in Fig.~\ref{fig:low_rank_hypothesis}e, we also need to treat the case where $x$ belongs to the closed interval $[0,1]$, a compact domain that remains the same for all $N$. The following lemma allows one to go from one perspective to the other. 
\begin{lemma}[Extensive vs. intensive domains] \label{lem:extensive_intensive}
    Let $\psi\,:\,[1,N]\to [0,1]$ be monotonically decreasing. Define $\Psi\,:\,[0,1]\to [0,1]$ as
    \begin{equation}
        \Psi(y) = \psi\big((N-1)y+1\big)\qquad \forall \, y\in[0,1]\,.
    \end{equation}
    Then, for all $q>0$,
    \begin{equation}
        \int_{1}^N\psi(x)^q\,dx = (N-1) \int_{0}^1 \Psi(y)^q\,dy\,,\qquad \int_{1}^2 \psi(x)^q \,dx = (N-1) \int_0^{\frac{1}{N-1}}  \Psi(y)^q \,dy\,,
    \end{equation}
\end{lemma}
\begin{proof}
    The first result is an immediate consequence of the following linear, and thus invertible, change of variables:
    \begin{align}\label{eq:change_extensive_intensive}
        T\,:\,[1,N]&\longrightarrow [0,1]\\
           x & \longmapsto y=\frac{x-1}{N-1}
    \end{align}
    which maps $[\ell,m]$ onto $[(\ell-1)/(N-1), (m-1)/(N-1)]$ for all $1\leq \ell\leq m \leq N$. 
\end{proof}

The two previous lemmas and formula \eqref{eq:erank_expH} imply the following result stating that $\mathrm{srank}$ and $\mathrm{nrank}$ are essentially equal to $N$ times the area under the curves $\Psi(y)^2$ and $\Psi(y)$ with $y\in[0,1]$, respectively, while  $\mathrm{erank}$ is related to the area under the curves $\Psi(y)$ and $\ln \Psi(y)^{-1}$. 
\begin{lemma}[Effective rank as area under a curve]\label{lem:effrank_area}
     Let $\psi_*$ and $\psi^*$ satisfy conditions 1--4. Define $\Psi_*\,:\,[0,1]\to [0,1]$ and $\Psi^*\,:\,[0,1]\to [0,1]$ as 
     \begin{equation}
         \Psi_*(y)=\psi_*\big((N-1)y+1\big)\qquad \text{and}\qquad \Psi^*(y)=\psi^*\big((N-1)y+1\big).
     \end{equation}
    Then 
    \begin{align}
    (N-1) \int_0^1 \Psi_*(y)^2\,dy \leq \, &\; \mathrm{srank}(W)\,\leq \,1+ (N-1) \int_0^1 \Psi^*(y)^2\,dy \,, \\
     (N-1) \int_0^1 \Psi_*(y)\,dy \leq \,& \;\mathrm{nrank}(W)\, \leq 1+ (N-1) \int_0^1 \Psi^*(y)\,dy\, .
\end{align}
Moreover, for all $\delta\in(0,1)$,
   \begin{align}
    (N-1) \int_0^1 \Psi_*(y)\,dy\, \leq \, &\; \mathrm{erank}(W)\,\leq \left(1+ (N-1) \int_0^1 \Psi^*(y)\,dy\right)e^{\mathcal{H}+\mathcal{P}} \,, \label{eq:bounding_erank_integral_tight}
\end{align}
where
    \begin{equation}\label{eq:HP}
        \mathcal{H} = \frac{1}{\delta}\left(\frac{\displaystyle \int_{0}^1 \Psi^*(y)^{1-\delta} \,dy}{\displaystyle\int_0^1 \Psi_*(y)\,dy}-1\right)\,,\qquad 
        \mathcal{P} = \frac{\displaystyle \int_{0}^{\frac{1}{N-1}}\Psi_*(y)\,dy}{\delta\displaystyle\int_0^1 \Psi_*(y)\,dy}\,.
    \end{equation}
\end{lemma}
This new perspective on the effective ranks allows us to consider a general family of singular-value envelopes that can be applied to our experimental results as illustrated in Fig.~\ref{fig:low_rank_hypothesis}e. Interestingly, this family is related to the Gaussian hypergeometric function \cite[Chap.~15]{NIST:DLMF}.

\begin{theorem}[Bounds on effective ranks -- Hypergeometric decrease]\label{thm:bounds_effective_ranks_hypergeometric}
Suppose that the singular values of matrix $W$, $\sigma_1\geq \sigma_2\geq \cdots \sigma_N\geq 0$, satisfy the inequality 
\begin{equation}
     \left(1-\frac{i-1}{N-1}\right)^{c^*-2}\left(1+\zeta^*\frac{i-1}{N-1}\right)^{-b^*} \,\leq \,\frac{\sigma_i}{\sigma_1}\,\leq \,  \left(1-\frac{i-1}{N-1}\right)^{c_*-2}\left(1+\zeta_*\frac{i-1}{N-1}\right)^{-b_*}
\end{equation}
for some $0\leq b_*\leq b^* $, $2\leq c_*\leq c^*$, $0<\zeta_*\leq \zeta^*$, and for all $i\in\{1,\ldots, N\}$. Then,
\begin{align}
     \frac{N-1}{2c^*-3}\; {}_2F_1(1,2b^*;2c^*-2;-\zeta^*)\,\leq \, &\, \mathrm{srank}(W)\,\leq  1+  \frac{N-1}{2c_*-3}\; {}_2F_1(1,2b_*;2c_*-2;-\zeta_*)\\
     \frac{N-1}{c^*-1} \;{}_2F_1(1,b^*;c^*;-\zeta^*)\leq \,& \,\mathrm{nrank}(W)\,\leq 1+   \frac{N-1}{c_*-1} \:{}_2F_1(1,b_*;c_*;-\zeta_*),\\
     \frac{N-1}{c^*-1} \;{}_2F_1(1,b^*;c^*;-\zeta^*)\leq \,& \,\mathrm{erank}(W)\,\leq \left(1+   \frac{N-1}{c_*-1}\; {}_2F_1(1,b_*;c_*;-\zeta_*)\right)\,e^{\mathcal{H} + \mathcal{P}},
\end{align}
where, for all $\delta \in (0, 1)$,
\begin{align}
\mathcal{H} &= \frac{1}{\delta}\left(\frac{c^* - 1}{(1-\delta)c_* + 2\delta - 1}\,\frac{{}_2F_1(1, (1-\delta)b_*; (1-\delta)c_* + 2\delta; -\zeta_*)}{{}_2F_1(1, b^*; c^*; -\zeta^*)}-1\right),\label{eq:H_hypergeo}\\
\mathcal{P} &=\frac{1}{ N-1}\frac{(c^* - 1) \rho(b^*,c^*,\zeta^*)}{\delta\,\,{}_2F_1(1,b_*;c_*;-\zeta_*)}\,,\label{eq:P_hypergeo}
\end{align}
with $\rho(b^*,c^*,\zeta^*)$ being bounded as $0\leq \rho(b^*,c^*,\zeta^*)\leq 1$ for all $b^*\geq 0$ and $c^*\geq 2$.
\end{theorem}

\begin{proof} We apply Lemma~\ref{lem:effrank_area} to the case where the enveloping functions $\psi_*$ and $\psi^*$ are defined as
\begin{equation*}
    \psi_*(x)= \left(1-\frac{x-1}{N-1}\right)^{c^*-2}\left(1+\zeta^*\frac{x-1}{N-1}\right)^{-b^*},\qquad \psi^*(x)= \left(1-\frac{x-1}{N-1}\right)^{c_*-2}\left(1+\zeta_*\frac{x-1}{N-1}\right)^{-b_*}\,.
\end{equation*}
Changing the variable $x$ for $y=(x-1)/(N-1)$, we get the functions
\begin{equation*}
    \Psi_*(y)= \frac{(1-y)^{c^*-2}}{(1+\zeta^*y)^{b^*}},\qquad \Psi^*(y)=\frac{(1-y)^{c_*-2}}{(1+\zeta_*y)^{b_*}}\,.
\end{equation*}
Now, using the integral representation of the hypergeometric function $_2F_1(a,b;c;z)$ \cite[\href{https://dlmf.nist.gov/15.6.E1}{(15.6.1)}]{NIST:DLMF} and the symmetry property $_2F_1(a,b;c;z)={}_2F_1(b,a;c;z)$, we get the following formulas:
\begin{equation*}
    \int_0^1\frac{(1-y)^{c-2}}{(1+\zeta y)^{b}}dy=\frac{1}{c-1}\, {}_2F_1(1,b;c;-\zeta)\,\qquad \int_0^1\left(\frac{(1-y)^{c-2}}{(1+\zeta y)^{b}}\right)^2dy=\frac{1}{2c-3} \,{}_2F_1(1,2b;2c-2;-\zeta)\,.
\end{equation*}
Thus, according to the bounds for $\mathrm{nrank}$ and $\mathrm{srank}$ provided in Lemma~\ref{lem:effrank_area}, 
\begin{align*}
     \frac{N-1}{c^*-1} \;{}_2F_1(1,b^*;c^*;-\zeta^*)\leq \,& \,\mathrm{nrank}(W)\,\leq 1+   \frac{N-1}{c_*-1}\; {}_2F_1(1,b_*;c_*;-\zeta_*)\,,\\
     \frac{N-1}{2c^*-3}\; {}_2F_1(1,2b^*;2c^*-2;-\zeta^*)\,\leq \, &\, \mathrm{srank}(W)\,\leq  1+  \frac{N-1}{2c_*-3} \;{}_2F_1(1,2b_*;2c_*-2;-\zeta_*)\,,
\end{align*}
as expected. Moreover, according to the lower bound in inequality \eqref{eq:bounding_erank_integral_tight},
\begin{equation*}
    \mathrm{erank}(W)\geq \frac{N-1}{c^*-1} \;{}_2F_1(1,b^*;c^*;-\zeta^*)
\end{equation*}
To get the upper bound of $\mathrm{erank}(W)$, we use once again inequality \eqref{eq:bounding_erank_integral_tight}:
\begin{equation*}
    \mathrm{erank}(W)\leq  \left(1+   \frac{N-1}{c_*-1} \:{}_2F_1(1,b_*;c_*;-\zeta_*)\right)e^{\mathcal{H} + \mathcal{P}}.
\end{equation*}
where $\mathcal{H}$ and $\mathcal{P}$ remain to be evaluated from Eq.~\eqref{eq:HP}. One the one hand, the integral
\begin{align*}
\int_0^1 \left(\frac{(1-y)^{c_*-2}}{(1+\zeta_* y)^{b_*}}\right)^{1-\delta}\,dy = \frac{1}{(1-\delta)c_* + 2\delta - 1}\,{}_2F_1(1, (1-\delta)b_*; (1-\delta)c_* + 2\delta; -\zeta_*)
\end{align*}
yields Eq.~\eqref{eq:H_hypergeo}. On the other hand, the bounding inequality  $\Psi_*(y)\leq 1$, valid for all $b^*\geq 1$ and $c^*\geq 2$ implies that
\begin{equation*}
    \int_{0}^{\frac{1}{N-1}}\Psi_*(y)\,dy \leq \int_{0}^{\frac{1}{N-1}}\,dy = \frac{1}{N-1}\,.
\end{equation*}
This allows us to define the function $\rho$ such that
\begin{equation*}
    \rho(b^*,c^*,\zeta^*) = (N-1)\int_{0}^{\frac{1}{N-1}}\Psi_*(y)\,dy\,,\qquad 0\leq\rho(b^*,c^*,\zeta^*)\leq 1\,,
\end{equation*}
and Eq.~\eqref{eq:P_hypergeo} follows along with the theorem.
\end{proof}
The singular-value envelopes in the latter theorem are general in the sense that they include, as particular cases, sub-linear, linear, supra-linear, and power-law decreases or mixes of those shapes. As auxiliary result, we provide the following proposition for the the sub- to supra-linear decreases which is a direct implication of Lemma~\ref{lem:effrank_area}. 
\begin{proposition}[Bounds on effective ranks for sub-linear to supra-linear decrease]\label{prop:linear_envelopes}
Suppose that the singular values of matrix $W$, $\sigma_1\geq \sigma_2\geq \cdots \sigma_N\geq 0$, satisfy the inequalities 
\begin{equation}
    \left(1-a\frac{i-1}{N-1}\right)^{b}\,\leq \,\frac{\sigma_i}{\sigma_1}\,\leq \,\left(1-c\frac{i-1}{N-1}\right)^{d},\qquad i\in\{1,\ldots, N\}, 
\end{equation}
for some $0< c\leq a\leq1$ and $0<d\leq b$. Then,
\begin{align}
    (N-1)\,\ell(a,2b,1)\,\leq \, &\, \mathrm{srank}(W)\,\leq  1+(N-1)\,\ell(c,2d, 1)\,,\label{eq:lem_bounds_srank_sublinear}\\
    (N-1)\,\ell(a,b, 1)\leq \,& \,\mathrm{nrank}(W)\,\leq 1+(N-1)\,\ell(c,d, 1)\label{eq:lem_bounds_nrank_sublinear}\,,\\
    (N-1)\,\ell(a,b, 1)\leq \,& \,\mathrm{erank}(W)\,\leq [1+(N-1)\,\ell(c,d, 1)]\,e^{\mathcal{H}+\mathcal{P}}\label{eq:lem_bounds_erank_sublinear}\,,
\end{align}
where, for all $\delta \in (0, 1)$,
\begin{align*} 
     \ell(a,b, \alpha) = \frac{1-(1-\alpha\,a)^{1+b}}{a(1+b)}, \qquad\mathcal{H} = \frac{1}{\delta}\left(\frac{\ell(c, d(1-\delta), 1)}{\ell(a,b,1)} - 1\right),\qquad\text{and}\qquad
     \mathcal{P} = \frac{\ell(a, b, \frac{1}{N-1})}{\delta\,\ell(a, b, 1)}.
\end{align*}
\end{proposition}

The results in this part of the subsection only depend on the curves enveloping the singular values and can thus be used for observed singular values of real networks or to random matrix/graph models. In the following, we relate each singular value decreases to asymptotic behaviors in random graphs.

\vspace{0.3cm}

\begin{center}\textit{2. Asymptotic behaviors of the effective ranks in growing graphs}\end{center}

We start this part by highlighting a striking consequence of Proposition~\ref{prop:exponential_envelop}~: if the singular values decrease exponentially, then the basic effective ranks are $O(1)$ as $N\to \infty$. Thus, the effective rank to dimension ratios are negligible as $N\to \infty$. This is precisely stated in the next corollary. 
\begin{corollary}[Exponential decrease implies $O(1)$ effective ranks]\label{cor:expo_o1}
Let $(\,W_N\,)_{N\in \mathbb{Z}_+}$ be an infinite sequence of matrices in which $W_N$ has size $N\times N$. Suppose that there are parameters $\alpha$ and $\omega$ such that $0<\alpha\leq \omega<1$ and for each $N$,  the singular values $\sigma_1\geq \sigma_2\geq \cdots \sigma_N\geq 0$ of $W_N$ satisfy the inequalities 
\begin{equation}
    \alpha^{i-1}\,\leq \,\frac{\sigma_i}{\sigma_1}\,\leq \,\omega^{i-1},\qquad i\in\{1,\ldots, N\}\,. 
\end{equation}
 Then, as $N\to \infty$,
\begin{align}
   \frac{1}{1-\alpha^2}+O(\alpha^{2N})\,\leq \, &\, \mathrm{srank}(W_N)\,\leq   \frac{1}{1-\omega^2}+O(\omega^{2N})\,,\\
   \frac{1}{1-\alpha}+O(\alpha^N)\leq \,& \,\mathrm{nrank}(W_N)\,\leq \frac{1}{1-\omega}+O(\omega^N),\\
   \frac{\omega}{1-\alpha}\,\exp\left(\frac{\alpha(1-\omega)}{(1-\alpha)^2}\right) + O\left(\max\{\omega^N, N\alpha^{N}\}\right) \leq \,&\,\mathrm{erank}(W_N)\,\leq \frac{\alpha}{1-\omega}\,\exp\left(\frac{\omega(1-\alpha)}{(1-\omega)^2}\right) +O(N\omega^N) \,.
\end{align} 
\end{corollary}
\begin{proof}
    We essentially expand the bounds of Proposition~\ref{prop:exponential_envelop} and look for the first subdominant terms as $N\to\infty$. On the one hand, 
    \begin{equation*}
        g(\alpha^k,N)=\frac{1-\alpha^{kN}}{1-\alpha^k} = \frac{1}{1-\alpha^k}(1-\alpha^{kN})\,.
    \end{equation*}
    Hence,
    \begin{equation*}
        \lim_{N\to \infty}\frac{\displaystyle\left|g(\alpha^k,N)- \frac{1}{1-\alpha^k}\right|}{\alpha^{kN}} = \frac{1}{1-\alpha^k}<\infty   
    \end{equation*}
    meaning that
    \begin{equation*}
        g(\alpha^k,N)=\frac{1}{1-\alpha^k}+O(\alpha^{kN})\,.
    \end{equation*}
    The last asymptotic development readily implies the bounds for $\mathrm{srank}$ and $\mathrm{nrank}$.
    On the other hand,
    \begin{equation*}
         \frac{\alpha \,g'(\alpha,N)}{g(\omega,N)}=\frac{\alpha(1-\omega)}{(1-\alpha)^2}\left(1+N\alpha^N-\alpha^N-\alpha^{N+1}\right)\left(1-\omega^N\right)^{-1},
    \end{equation*}
    which allows computing the limit 
    \begin{equation*}
        \lim_{N\to \infty}\frac{\displaystyle\left|\frac{\alpha \,g'(\alpha,N)}{g(\omega,N)}- \frac{\alpha(1-\omega)}{(1-\alpha)^2\left(1-\omega^N\right)}\right|}{N\alpha^{N}} 
        =\frac{\alpha(1-\omega)}{(1-\alpha)^2}<\infty \,.  
    \end{equation*}
    Hence,
    \begin{equation*}
         \frac{\alpha \,g'(\alpha,N)}{g(\omega,N)}=\frac{\alpha(1-\omega)}{(1-\alpha)^2\left(1-\omega^N\right)}+O\left(N\alpha^{N}\right)\,.
    \end{equation*}
    However, 
    \begin{equation*}
         \frac{\alpha(1-\omega)}{(1-\alpha)^2\left(1-\omega^N\right)}=\frac{\alpha(1-\omega)}{(1-\alpha)^2}+O(\omega^N)
    \end{equation*}
    since 
    \begin{equation*}
        \lim_{N\to \infty}\frac{\displaystyle\left|\frac{\alpha(1-\omega)}{(1-\alpha)^2\left(1-\omega^N\right)}- \frac{\alpha(1-\omega)}{(1-\alpha)^2}\right|}{\omega^{N}}=\frac{\alpha(1-\omega)}{(1-\alpha)^2}\lim_{N\to \infty}\frac{\displaystyle\left|\frac{\omega^N}{1-\omega^N}\right|}{\omega^{N}}=\frac{\alpha(1-\omega)}{(1-\alpha)^2}
    \end{equation*}
    Thus, 
    \begin{equation*}
         \frac{\alpha \,g'(\alpha,N)}{g(\omega,N)}=\frac{\alpha(1-\omega)}{(1-\alpha)^2}+O(\omega^N)+O\left(N\alpha^{N}\right)=\frac{\alpha(1-\omega)}{(1-\alpha)^2}+O\left(\max\{\omega^N, N\alpha^{N}\}\right)\,,
    \end{equation*}
    where we have invoked the basic property $O(u)+O(v)=O(\max\{u,v\})$.
    Consequently, the lower bound of $\mathrm{erank}$ has the following asymptotic expansion:
    \begin{align*}
        \omega\,g(\alpha,N)\,\exp\left(\frac{\alpha \,g'(\alpha,N)}{g(\omega,N)}\right)&=\left(\frac{\omega}{1-\alpha}+O(\alpha^{N})\right)\exp\left(\frac{\alpha(1-\omega)}{(1-\alpha)^2}+O\left(\max\{\omega^N, N\alpha^{N}\}\right)\right)
         \\ &=\left(\frac{\omega}{1-\alpha}+O(\alpha^{N})\right)\exp\left(\frac{\alpha(1-\omega)}{(1-\alpha)^2}\right)\left(1+O\left(\max\{\omega^N, N\alpha^{N}\}\right)\right)\\&=\left(\frac{\omega}{1-\alpha}+O\left(\max\{\omega^N, N\alpha^{N}\}\right))\right)\exp\left(\frac{\alpha(1-\omega)}{(1-\alpha)^2}\right),
    \end{align*}
    where the second line has been deduced using the well-known asymptotic formulas $O(u)O(v)=O(uv)$ and $O(u)+O(v)=O(\max\{u,v\})$. The upper bound for $\mathrm{erank}$ is obtained from the last result by permuting $\alpha$ and $\omega$, and considering $O(\max\{N\omega^N,\alpha^N\})=O(N\omega^N)$.
\end{proof}

In light of Lemma~\ref{lem:effrank_area}, which bounds the effective ranks with terms proportional to $N-1$, the previous asymptotic behavior was rather surprising. On the contrary, the next result is fully expected: slowly decreasing envelopes lead to effective ranks that grow linearly with $N$. 

\begin{corollary}[Sub-linear to supra-linear decrease imply $O(N)$ effective ranks]\label{cor:sub_to_supra_asymptotics}
Let $(\,W_N\,)_{N\in \mathbb{Z}_+}$ be an infinite sequence of matrices in which $W_N$ has size $N\times N$. Suppose that there are parameters $a$, $b$, $c$, $d$ such that $0< c\leq a\leq1$ and $0<d\leq b$, and for each $N$, the singular values $\sigma_1\geq \sigma_2\geq \cdots \sigma_N\geq 0$ of $W_N$ satisfy the inequalities 
\begin{equation}\label{eq:lem_bounds_srank_subsupra}
        \left(1-a\,\frac{i-1}{N-1}\right)^{b}\,\leq \,\frac{\sigma_i}{\sigma_1}\,\leq \,\left(1-c\,\frac{i-1}{N-1}\right)^{d},\qquad i\in\{1,\ldots, N\}\,.  
\end{equation}
 Then, as $N\to \infty$ and for all $\delta \in (0, 1)$,
\begin{align}    
   N\,\ell(a, 2b, 1)+O(1)\,\leq \, &\, \mathrm{srank}(W_N)\,\leq  N\,\ell(c, 2d, 1)+O(1)\,,\\
   N\,\ell(a, b, 1)+O(1)\leq \,& \,\mathrm{nrank}(W_N)\,\leq N\,\ell(c, d, 1)+O(1),\\
  N\,\ell(a, b, 1)+O(1) \leq \,&\,\mathrm{erank}(W_N)\,\leq N\,\ell(c, d, 1)\exp\left[\frac{1}{\delta}\left(\frac{\ell(c, d(1-\delta), 1)}{\ell(a,b,1)} - 1\right)\right]+O(1) \,,
\end{align} 
where $\ell$ is the function defined in Proposition~\ref{prop:linear_envelopes}. 
\end{corollary}

So far, we have obtained effective ranks that have either $O(1)$ or $O(N)$ asymptotic behaviors as $N\to \infty$. We are going to prove $O(N^{1-\epsilon})$ asymptotic behaviors are also possible for all $\epsilon>0$.

\begin{corollary}[Hypergeometric decrease admits $O(N^{1-\epsilon})$ effective ranks]\label{cor:hypergeometric}
Let $(\,W_N\,)_{N\in \mathbb{Z}_+}$ be an infinite sequence of matrices in which $W_N$ has size $N\times N$. Suppose that there are parameters $b_*$, $b^*$, $c_*$, $c^*$, $\zeta_*$, $\zeta^*$ such that $0\leq b_*\leq b^* $, $2\leq c_*\leq c^*$, $0<\zeta_*\leq \zeta^*$ and such that for each $N$, the singular values $\sigma_1\geq \sigma_2\geq \cdots \sigma_N\geq 0$ of $W_N$ satisfy   
\begin{equation*}
     \left(1-\frac{i-1}{N-1}\right)^{c^*-2}\left(1+\zeta^*\frac{i-1}{N-1}\right)^{-b^*} \,\leq \,\frac{\sigma_i}{\sigma_1}\,\leq \,  \left(1-\frac{i-1}{N-1}\right)^{c_*-2}\left(1+\zeta_*\frac{i-1}{N-1}\right)^{-b_*}\qquad \forall\;i\in\{1,\ldots, N\}\,.
\end{equation*}
\textit{1.} If $c_*=c^*=N^{\epsilon}/d$ for some $d>0$ and $\epsilon\in(0,1]$, then as $N\to \infty$,
\begin{align*}
    \frac{d}{2}\,N^{1-\epsilon}+O(N^{1-2\epsilon})\,\leq \, &\, \mathrm{srank}(W_N)\,\leq 1+ \frac{d}{2}\,N^{1-\epsilon}+O(N^{1-2\epsilon})\,,\\
    d\,N^{1-\epsilon}+O(N^{1-2\epsilon})\leq \,& \,\mathrm{nrank}(W_N)\,\leq 1+d\,N^{1-\epsilon}+O(N^{1-2\epsilon})\,,\\ 
   d\,N^{1-\epsilon}+O(N^{1-2\epsilon}) \leq\,&\,\mathrm{erank}(W_N)\,\leq  (1 + d\,N^{1-\epsilon})e^{\frac{1}{1-\delta}} + O(\max(1, N^{1-2\epsilon}))\,,
\end{align*} 
where the last inequality holds for all $\delta\in(0,1)$.

\vspace{0.1cm}
 \noindent\textit{2.} If $b_*>1$ , $\zeta_*=\zeta^*=N^\epsilon/d$ for some $d>0$ and $\epsilon\in(0,1]$, and $b_*, b^*\notin \mathbb{Z}$, then 
 \begin{align*}    
     \qquad \frac{d}{2b^*-1}N^{1-\epsilon}+O(N^{1-2\epsilon})\,\leq \, &\, \mathrm{srank}(W_N)\,\leq  1+   \frac{d}{2b_*-1}N^{1-\epsilon}+O(N^{1-2\epsilon})\,,\\
     \frac{d}{b^*-1}N^{1-\epsilon}+O(N^{1-2\epsilon})  \leq \,& \,\mathrm{nrank}(W_N)\,\leq 1+   \frac{d}{b_*-1}N^{1-\epsilon}+O(N^{1-2\epsilon})\,,\\
    \frac{d}{b^*-1}N^{1-\epsilon}+O(N^{1-2\epsilon})\leq \,& \,\mathrm{erank}(W_N)\,\leq \left(1 + \frac{d}{b_*-1}N^{1-\epsilon}\right)\,\exp\left(\frac{b^* - b_\delta}{\delta(b_\delta - 1)}\right) + O(\max\{1, N^{1 - b_\delta\epsilon }, N^{1 - 2\epsilon}\})\,,   
    \end{align*}
where the last inequality holds for all $\delta\in (0,1-1/b_*)$ and $b_\delta := (1- \delta)b_*$. 

\vspace{0.1cm}
\noindent\textit{3.} If $b^*<1$, $\zeta_*=\zeta^*=N^\epsilon/d$ for some $d>0$, and $\epsilon\in(0,1]$, and $b_*, b^*\notin \mathbb{Z}$, then
\begin{align*}    
     g(2b^*,2c^*-2,d)\,N^{1-2b^*\epsilon}+O\left(N^{1-(2b^*+1)\epsilon}\right)\,&\,\leq \, \mathrm{srank}(W_N)\,\leq   1+  g(2b_*,2c_*-2,d)\,N^{1-2b_*\epsilon}+O\left(N^{1-(2b_*+1)\epsilon}\right),\\
     g(b^*,c^*,d)\,N^{1-b^*\epsilon}+O\left(N^{1-(b^*+1)\epsilon}\right)  \,&\,\leq \,\mathrm{nrank}(W_N)\,\leq \, 1+   g(b_*,c_*,d)\,N^{1-b_*\epsilon}+O\left(N^{1-(b_*+1)\epsilon}\right)\,, \phantom{X^{X^{X}}} \\      
     g(b^*,c^*,d)\,N^{1-b^*\epsilon}+O\left(N^{1-(b^*+1)\epsilon}\right)\,&\,\leq \, \mathrm{erank}(W_N) \,\leq \left(1+  g(b_*,c_*,d)\,N^{1-b_*\epsilon}\right)e^{y(N)}\,+ O\left(N^{1-\gamma\epsilon}\,e^{y(N)}\right)\,,\nonumber
    \end{align*}
    where the last inequality holds for all $\delta \in (0, 1+(1-2b^*)/b_*)$ with $b^* > 1/2$ and $b_* > 2b^* - 1$, $\gamma = 1-2(b^* - b_*) + \delta b_*$, and
    \begin{align}\label{eq:gbcd}
        g(b,c,d)=\frac{\Gamma(1-b)\Gamma(c - 1)\,d^{b}}{\Gamma(c-b)},\qquad
        y(N) = \frac{g((1 - \delta)b_*, (1 - \delta)c_* + 2\delta, d)}{\delta\, g(b^*, c^*, d)}\,N^{\epsilon(b^* - b_\delta)} - \frac{1}{\delta}\,.
    \end{align}
\end{corollary}
\begin{proof}
    First, we use the following asymptotic expansion for $c\to \infty$ \cite[\href{https://dlmf.nist.gov/15.12.E2}{(15.12.2)}]{NIST:DLMF}:
    \begin{equation*}
    {}_2F_1(a,b;c;-\zeta)=1-\frac{ab}{c}\zeta+O(c^{-2}).
    \end{equation*}
    Hence, for $c=N^{\epsilon}/d$,
    \begin{equation*}
    \frac{N-1}{c-1}\;{}_2F_1(a,b;c;-\zeta)=d N^{1-\epsilon}\left(1+O(N^{-\epsilon})\right),\qquad\frac{N-1}{2c-3}\;{}_2F_1(a,b;c;-\zeta)=\frac{d}{2} N^{1-\epsilon}\left(1+O(N^{-\epsilon})\right),
    \end{equation*}
    The substitution of the last equations into the bounds of Theorem~\ref{thm:bounds_effective_ranks_hypergeometric} readily provides the desired inequalities for srank, nrank, and the lower bound of erank. For the upper bound of the erank, we need to get the asymptotics of
    \begin{equation*}
            \mathcal H =\frac{1}{\delta}\left(\frac{c^* - 1}{(1-\delta)c_* + 2\delta - 1}\,\frac{{}_2F_1(1, (1-\delta)b_*; (1-\delta)c_* + 2\delta; -\zeta_*)}{{}_2F_1(1, b^*; c^*; -\zeta^*)}-1\right)\,.
    \end{equation*}
    However,     
    \begin{align*}
        \frac{{}_2F_1(1, (1-\delta)b_*; (1-\delta)c_* + 2\delta; -\zeta_*)}{{}_2F_1(1, b^*; c^*; -\zeta^*)} = 1 + O(N^{-\epsilon}),\qquad\frac{c^* - 1}{(1-\delta)c_* + 2\delta - 1} =\frac{1}{1-\delta} +O(N^{-\epsilon}),
    \end{align*}
    leading to the conclusion that $\mathcal{H}=\frac{1}{1-\delta}+O(N^{-\epsilon})$.
    Moreover, 
    \begin{equation*}
        \mathcal P = \frac{1}{ N-1}\frac{(c^* - 1) \rho(b^*,c^*,\zeta^*)}{\delta\,\,{}_2F_1(1,b_*;c_*;-\zeta_*)}=O(N^{\epsilon-1})\,,
    \end{equation*}
    from which we deduce that asymptotic expansions 
    \begin{equation*}
        \mathcal{H} + \mathcal{P} =\frac{1}{1-\delta} + O(N^{-\epsilon}) + O(N^{\epsilon-1})
        \qquad\text{and}\qquad e^{\mathcal{H} + \mathcal{P}}=e^{\frac{1}{1-\delta}}\left(1+O(N^{-\epsilon}) + O(N^{\epsilon-1})\right)\,
    \end{equation*}
    where the latter result holds for all $\delta\in(0,1)$. Hence, 
      \begin{equation*}
        \mathrm{erank}(W_N)\,\leq \Big(1+d\,N^{1-\epsilon}+O(N^{1-2\epsilon})\Big)e^{\frac{1}{1-\delta}}\Big(1+O(N^{-\epsilon}) + O(N^{\epsilon-1})\Big)=e^{\frac{1}{1-\delta}}+d\,e^{\frac{1}{1-\delta}}\,N^{1-\epsilon}+O(\max(1, N^{1-2\epsilon}))
    \end{equation*}
    as expected.
    
    Second, we use the following identity valid for $|z|>1$ and $a-b\notin \mathbb{Z}$ \cite[\href{https://dlmf.nist.gov/15.8.E2}{(15.8.2)}]{NIST:DLMF}:
    \begin{multline*}
        {}_2F_1(a,b;c;z)=\frac{\Gamma(b-a)\Gamma(c)}{\Gamma(c-a)\Gamma(b)}\frac{1}{(-z)^a}\;{}_2F_1\left(a,a-c+1;a-b+1;\frac{1}{z}\right)\\+\frac{\Gamma(a-b)\Gamma(c)}{\Gamma(c-b)\Gamma(a)}\frac{1}{(-z)^b}\;{}_2F_1\left(b-c+1, b; b-a+1;\frac{1}{z}\right)
    \end{multline*}
    Thus, for $a=1$ and $z=-\zeta$, 
    \begin{equation*}
        {}_2F_1(1,b;c;-\zeta)=\frac{c-1}{b-1}\frac{1}{\zeta}\;{}_2F_1\left(1,2-c;2-b;-\frac{1}{\zeta}\right)+\frac{\Gamma(1-b)\Gamma(c)}{\Gamma(c-b)}\frac{1}{\zeta^b}\;{}_2F_1\left(b-c+1, b; b;-\frac{1}{\zeta}\right)
    \end{equation*}
    However, according to  \cite[\href{https://dlmf.nist.gov/15.2.E1}{(15.2.1)}]{NIST:DLMF},
    \begin{equation*}
    {}_2F_1\left(\alpha,\beta;\gamma;-\frac{1}{\zeta}\right)=1-\frac{\alpha\beta}{1!\gamma}\frac{1}{\zeta}+\frac{\alpha(\alpha+1)\beta(\beta+1)}{2!\gamma(\gamma+1)}\frac{1}{\zeta^2}-+\ldots=1+O(\zeta^{-1})
    \end{equation*}
    The last two results imply that as $\zeta\to\infty$ ($\zeta := \zeta^* = \zeta_* = N^\epsilon/d$),
    \begin{equation}\label{eq:expansion_zeta}
        {}_2F_1(1,b;c;-\zeta)= f(b,c,\zeta)+O(\zeta^{-2})+O(\zeta^{-b-1}),\qquad f(b,c,\zeta)=\frac{c-1}{b-1}\frac{1}{\zeta}+\frac{\Gamma(1-b)\Gamma(c)}{\Gamma(c-b)}\frac{1}{\zeta^b}\,.
    \end{equation}
    Substituting this result into Theorem \ref{thm:bounds_effective_ranks_hypergeometric}, we get the following asymptotic expansion for $\zeta\to\infty$:
    {\footnotesize
    \begin{align*}
     \frac{N-1}{2c^*-3}\; \left(f(2b^*,2c^*-2,\zeta)+O(\zeta^{-2})+O(\zeta^{-2b^*-1})\right)\,\leq \, &\, \mathrm{srank}(W)\,\leq  1+  \frac{N-1}{2c_*-3}\left(f(2b_*,2c_*-2,\zeta)+O(\zeta^{-2})+O(\zeta^{-2b_*-1})\right),\\
     \frac{N-1}{c^*-1} \left(f(b^*,c^*,\zeta)+O(\zeta^{-2})+O(\zeta^{-b^*-1})\right)\ \leq \,& \,\mathrm{nrank}(W)\,\leq 1+   \frac{N-1}{c_*-1}\left(f(b_*,c_*,\zeta)+O(\zeta^{-2})+O(\zeta^{-b_*-1})\right),\\
    \frac{N-1}{c^*-1} \left(f(b^*,c^*,\zeta)+O(\zeta^{-2})+O(\zeta^{-b^*-1})\right)\leq \,& \,\mathrm{erank}(W)\,\leq \left(1+   \frac{N-1}{c_*-1}\left(f(b_*,c_*,\zeta)+O(\zeta^{-2})+O(\zeta^{-b_*-1})\right)\right)\,e^{\mathcal{H} + \mathcal{P}}.
    \end{align*}}Setting $\zeta=N^\epsilon/d$ and simplifying the resulting expressions lead to the desired inequalities for srank, nrank, and the lower bound of erank. Regarding the upper bound of the erank, the use of Eq.~\eqref{eq:expansion_zeta} allows us to write
\begin{equation*}
    \frac{{}_2F_1(1, b_{\delta}; c_{\delta}; -\zeta_*)}{{}_2F_1(1, b^*; c^*; -\zeta^*)} = \frac{(c_\delta -1)(b^*-1)}{(b_\delta - 1)(c^* - 1)}\left[\frac{1 - r(b_\delta, c_\delta, d) \,\,N^{\epsilon(1 - b_{\delta})} + O(N^{-\epsilon}) + O(N^{-\epsilon b_{\delta}})}{1 - r(b^*, c^*, d)\,\, N^{\epsilon(1-b^*)} + O(N^{-\epsilon})  + O(N^{-\epsilon b^*})}\right],
    \end{equation*}
where $b_\delta := (1- \delta)b_*$, $c_\delta := (1-\delta)c_* + 2\delta$, and 
\begin{equation*}
r(b, c, d) = \frac{\Gamma(2 - b)\,\Gamma(c-1)\,d^{b - 1}}{\Gamma(c-b)}\,.
\end{equation*}
Moreover, setting $b_\delta > 1$ implies that
\begin{equation*}
    \frac{{}_2F_1(1, b_{\delta}; c_{\delta}; -\zeta_*)}{{}_2F_1(1, b^*; c^*; -\zeta^*)} = \frac{(c_\delta -1)(b^*-1)}{(b_\delta - 1)(c^* - 1)}\left(1 + O\left(N^{\epsilon(1 - b_\delta)}\right)\right)\left(1 + O\left(N^{\epsilon(1 - b^*)}\right)\right) = \frac{(c_\delta -1)(b^*-1)}{(b_\delta - 1)(c^* - 1)} + O\left(N^{\epsilon(1 - b_\delta)}\right)
\end{equation*}
along with
\begin{equation*}
    \mathcal{H} 
    = \frac{b^* - b_\delta}{\delta(b_\delta - 1)} + O\left(N^{\epsilon(1 - b_\delta)}\right)\qquad\text{and}\qquad\mathcal{P} = O\left(N^{\epsilon - 1}\right).
\end{equation*}
Hence, 
\begin{equation*}
    e^{\mathcal{H} + \mathcal{P}} = \exp\left(\frac{b^* - b_\delta}{\delta(b_\delta - 1)}\right)\left(1 + O\left(N^{\epsilon(1 - b_\delta)}\right) + O\left(N^{\epsilon - 1}\right)\right)
\end{equation*}
and the upper bound of the erank is
\begin{align*}
\mathrm{erank}(W_N) &\leq\left(1+  \frac{d}{b_*-1}N^{1-\epsilon}+O(N^{1-2\epsilon})\right)\,\exp\left(\frac{b^* - b_\delta}{\delta(b_\delta - 1)}\right)(1 + O(N^{\epsilon(1 - b_\delta)}) + O(N^{\epsilon - 1}))\\
&= \left(1 + \frac{d}{b_*-1}N^{1-\epsilon}\right)\,\exp\left(\frac{b^* - b_\delta}{\delta(b_\delta - 1)}\right) + O(\max\{1, N^{1 - \epsilon b_\delta}, N^{1-2\epsilon}\})\,.
\end{align*}

Third, setting $b^* < 1$ gives
\begin{equation*}
    \frac{{}_2F_1(1, b_{\delta}; c_{\delta}; -\zeta_*)}{{}_2F_1(1, b^*; c^*; -\zeta^*)} = \frac{g(b_{\delta}, c_\delta, d)}{g(b^*, c^*, d)}\,N^{\epsilon(b^* - b_\delta)} + O(N^{\epsilon(2b^* - b_\delta - 1)})\,,
\end{equation*}
where the function $g$ is defined in Eq.~\eqref{eq:gbcd}. This leads to 
\begin{equation*}
    \mathcal{H} = \frac{g(b_{\delta}, c_\delta, d)}{\delta\, g(b^*, c^*, d)}\,N^{\epsilon(b^* - b_\delta)} - \frac{1}{\delta} + O(N^{\epsilon(2b^* - b_\delta - 1)}).
\end{equation*}
With $\mathcal{P} = O(N^{b^*\epsilon - 1})$, we find
\begin{equation*}
e^{\mathcal{H} + \mathcal{P}} = e^{y(N)}\,e^{O(N^{\epsilon(2b^* - b_\delta - 1)})}\qquad \text{with} \qquad y(N) = \frac{g(b_{\delta}, c_\delta, d)}{\delta\, g(b^*, c^*, d)}\,N^{\epsilon(b^* - b_\delta)} - \frac{1}{\delta}
\end{equation*}
If $2b^* - b_\delta  - 1 < 0$, or equivalently, $b^* < \frac{b_\delta + 1}{2}$, then 
\begin{align*}
    \mathrm{erank}(W_N) &\leq \left(1+  g(b_*,c_*,d)\,N^{1-b_*\epsilon}+O\left(N^{1-(b_*+1)\epsilon}\right)\right)e^{y(N)}(1 +O(N^{\epsilon(2b^* - b_\delta - 1)}))\,\\
    &= \left(1+  g(b_*,c_*,d)\,N^{1-b_*\epsilon}\right)\,e^{y(N)} + O(N^{1 - (1 - 2(b^* - b_*))\epsilon + \delta b_* \epsilon}\,e^{y(N)})
\end{align*}
as desired.
\end{proof}

\begin{remark}
    For the upper bound of the erank, we also note the following. In the case 2, additionally, if $b^*-b_*=\delta$ for some $\delta\in (0,1-1/b_*)$, then
\begin{equation*}
    \exp\left(\frac{b^* - b_\delta}{\delta(b_\delta - 1)}\right) = \exp\left(\frac{b_* + 1}{b_* - 1}\right) + O(\delta)\,.
\end{equation*}
One can also simplify the upper bound for the erank in the case 3 by setting $b^*-b_* =\delta$ and $c^*-c_*=\gamma\delta$, by considering a small $\delta$, and by using Stirling's formula for the gamma functions in $g(b_\delta, c_\delta, d)$.
\end{remark}

It is worth emphasizing that the hypergeometric envelopes of the previous corollary, given their generality, not only admit $O(N^{1-\epsilon})$ growth of the effective ranks, but can also produce $O(1)$ and $O(N)$ growths. Indeed, when $b_*=b^*=0$, one recovers special sub-linear to supra-linear decreasing envelopes included in Corollary~\ref{cor:sub_to_supra_asymptotics}, leading to $O(N)$ effective ranks. When $c_*=c^*=2$ and $b_*=b^*\to \infty$, one gets exponentially decreasing envelopes as in Corollary~\ref{cor:expo_o1}, corresponding to $O(1)$ effective ranks. Finally, when $c_*=c^*=2$ while $b_*$ and $b^*$ remain finite, one instead obtains power-law decreasing envelopes and it can be shown, using the asymptotics of the Hurwitz zeta function, that it leads to $O(N^{1-\epsilon})$ effective ranks.    

We have thus proved that different choices for $\psi_*$ and $\psi^*$ can induce very distinct asymptotic behaviors of the above-mentioned effective ranks as $N\to \infty$. Figure~\ref{fig:asymptotic-ranks-theory} depicts these findings by showing different singular-value envelopes, leading to three different classes of maximum growth of $\mathrm{nrank}$ as $N$ becomes large: linear $O(N)$, sub-linear $O(N^{1-\epsilon})$ with $0<\epsilon<1$, and constant $O(1)$.

\begin{figure}[h]
    \centering
    \includegraphics[scale=0.6]{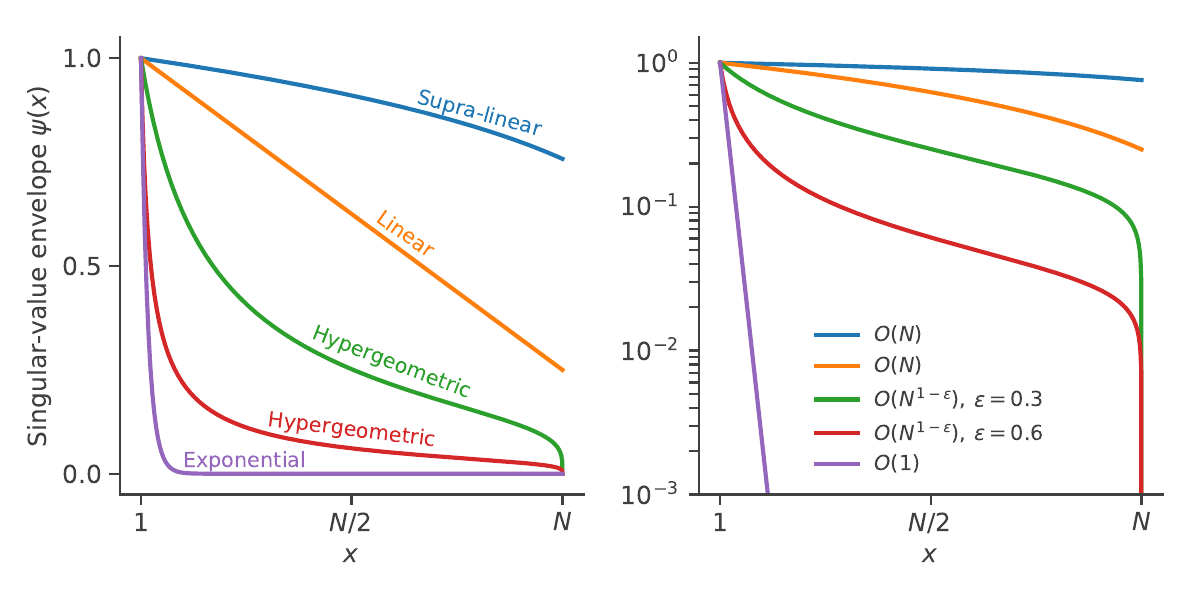}
    \vspace{-0.3cm}
    \caption{Typical singular-value envelopes $\psi(x)$, describing the decreasing behavior of the normalized singular values $\sigma_i/\sigma_1$, vs. the continuous variable $x$, interpolating between the indices $i$, and their impact on the asymptotics of $\mathrm{srank}$, $\mathrm{nrank}$, and $\mathrm{erank}$ (bottom-right panel). 
    The scale of values for $\psi(x)$ is linear on the left while it is logarithmic on the right. From top to bottom, the decreasing functions are: $\left(1-a\frac{x-1}{N-1}\right)^{b}$ with $a=\frac{3}{4}$, $b=\frac{1}{5}$ (blue), and $b=1$ (orange);  $\left(1-\frac{x-1}{N-1}\right)^{c-2}/\left(1+\zeta\frac{x-1}{N-1}\right)^{b}$ with $b=\frac{3}{4}$, $c=2\frac{1}{4}$, $\zeta = N^{0.3}$ (green), and $\zeta = N^{0.6}$ (red); $\omega^{x-1}$ with $\omega=0.97$ (purple). For all cases, $N=2000$.
    }
    \label{fig:asymptotic-ranks-theory}
\end{figure}

As previously mentioned, the hypergeometric case generalizes several types of decrease, including the power-law decrease ($c_*=c^*=2$). The latter has been observed in the adjacency spectrum of scale-free random graphs \cite{chung2003spectra} and in the eigenspectra of covariance matrices in various settings, including fractional Brownian motion \cite{gao2003principal},  EEG time series \cite{sanchez2021criticality}, neuronal activity in the mouse \cite{stringer2019high,Stringer2019} and macaque \cite{kong2022increasing} visual cortex.

In their seminal work, Stringer et al. \cite{stringer2019high} examined the eigenspectrum of the covariance matrix derived from signals of large populations of neurons in the visual cortex of awake mice viewing natural images.  They argued that the evoked neuronal population activity in this context is ``high-dimensional''.  Their conception of high dimensionality is anchored in sophisticated theoretical findings, including methods from functional data analysis and fractal geometry. As they noted \cite[Supplementary information p.7]{stringer2019high}: \emph{These results demonstrate that unless eigenspectra decay faster than \(n^{-1}\) , population codes are pathological, either exhibiting discontinuous responses, or infinite population variance. Furthermore, for stimuli drawn from a set of manifold dimension \(d\), codes with eigenspectra decaying slower than \(n^{-1-2/d}\) are also pathological, displaying infinite variance of the code’s derivative and fractal geometry of the response manifold. We conclude that our experimental observations of eigenspectrum decay only just faster than \(n^{-1-2/d}\) indicate a neural code that is as high-dimensional as possible before hitting the regime where these pathological conditions must occur.} Therefore, the dimensionality of the neural code is deemed ``high'' when the eigenvalues of the covariance matrix decrease as slow as possible, nearing the threshold indicative of pathological responses, and ``low'' when the decrease is faster, significantly distanced from this critical threshold.

However, no measure of dimension is used to quantify the decrease of the eigenvalues. In fact, as shown below, the decrease of the covariance-matrix eigenvalues implied by the above power-law is fast enough to lead to effective ranks of order strictly less than $O(N)$, where $N$ corresponds to the number of neurons, thus suggesting low dimensionality from our perspective. Indeed, let us assume that the visual stimuli's manifold dimension \( d_\mathrm{stim} > 2 \) and that the \( n \)-th eigenvalue of the covariance matrix is less than \( n^{-1-2/ d_\mathrm{stim}} \) as described in \cite{stringer2019high}, implying that its \( n \)-th singular value is bounded above by \( n^{-1/2 -1/ d_\mathrm{stim} } \). Moreover, let us assume without loss of generality that $\sigma_1 = 1$. This scenario aligns with the third case of Corollary~\ref{cor:hypergeometric} for $b_* = 1/2+1/ d_\mathrm{stim}<1$, $c_*=2$, and $\zeta_*=N-1 \approx N$ (i.e., $d=1$, $\epsilon =1$). We can deduce from this that $\mathrm{srank}$ and $\mathrm{nrank}$  asymptotically grow as $1+O(N^{-2/ d_\mathrm{stim}} )$ 
and $O(N^{1/2-1/ d_\mathrm{stim}})$, respectively. This presents an intriguing case where at least two effective ranks exhibit completely different asymptotic behaviors. 
Yet, in this power-law scenario, both the stable rank to dimension ratio and nuclear rank to dimension ratio tend to zero as $N$ grows to infinity.

\vspace{1cm}

\begin{center}
    \textit{3. Impact of matrix density on the stable rank}
\end{center}

Let us now derive some intuitive inequalities for the stable rank of graphs based on inequalities for the weight matrix.
\begin{lemma}\label{lem:bounds_norms} 
Let $W$ be a $N\times N$ matrix. Then the Frobenius norm of $W$ is upper bounded as 
\begin{equation}
    \displaystyle \|W\|_F\leq N \max_{i,j}|W_{ij}|\,.
\end{equation}
Moreover, the spectral norm of $W$ is lower bounded as 
\begin{equation}
    \|W\|_2\geq \max\Big\{\,\max_i\|\bm r_i\|,\,\max_j\|\bm c_j\|, \,\frac{1}{\sqrt{N}}\|\bm k^{\mathrm{in}}\|,\, \frac{1}{\sqrt{N}}\|\bm k^{\mathrm{out}}\|\Big\}\,,
\end{equation}
where $\bm r_i$ and $\bm c_j$ respectively denote the $i$-th row and $j$-th column of $W$ while $\bm k^{\mathrm{in}}=W\bm 1$ and $\bm k^{\mathrm{out}}=W^\top\bm 1$.
\end{lemma}
\begin{proof}
The first inequality trivially follows from the definition of the Frobenius norm:
\begin{equation*}
    \displaystyle \|W\|_F=\sqrt{\sum_{i,j}W_{ij}^2}\leq \sqrt{\sum_{i,j}W_{\max{}}^2}=\sqrt{N^2 W_{\max{}}^2}= N W_{\max{}},\qquad W_{\max{}}= \max_{i,j}|W_{ij}|\,.
\end{equation*}
The second inequality is the maximum between four lower bounds. To derive them, we start with the definition
\begin{equation*}
    \|W\|_2=\max_{\|\bm x\|=1}\left\|W\bm x\right\|,
\end{equation*}
which implies that $\|W\|_2\geq \left\|W\bm x\right\|$ for any $\bm x$ such that $\|\bm x\|=1$. Choosing $\bm x=\bm e_j$, the $j$-th unit vector, leads to the inequality $\|W\|_2\geq \|\bm c_j\|$. But this inequality holds all $j$, so we can combine all the inequalities and infer that
\begin{equation*}
    \|W\|_2\geq \max_j\|\bm c_j\|.
\end{equation*}
Now, because the spectral norm is invariant under matrix transposition, we also know that $\|W\|_2\geq \left\|W^\top\bm x\right\|$ for any $\bm x$ such that $\|\bm x\|=1$. Setting once again $\bm x=\bm e_i$ for all $i$, we conclude that 
\begin{equation*}
    \|W\|_2\geq \max_i\|\bm r_i\|.
\end{equation*}
Choosing $\bm x=\bm 1/\sqrt{N}$ in $\|W\|_2\geq \left\|W\bm x\right\|$ and in $\|W\|_2\geq \left\|W^\top\bm x\right\|$ readily provides the two other lower bounds. 
\end{proof}

\begin{proposition}\label{thm:upper_bound_srank_graph}
 Let $W$ be the adjacency matrix of a directed graph of $N$ vertices and $M$ edges. Moreover, let $k_{\max{}}$ be the maximum among all ingoing and outgoing degrees of the graph. Then,
\begin{equation}
    \mathrm{srank}(W)\leq \frac{M}{k_{\max{}}}\,.
\end{equation}
\end{proposition}
\begin{proof}
We first note that when $W$ is an adjacency matrix, all its elements are either 0 or 1, which implies that its Frobenius norm squared is exactly equal to $M$. Indeed, 
\begin{equation}
    \|W\|_F^2= \sum_{i,j}W_{ij}^2= \sum_{\substack{(j,i)\\j\to i}} 1=M.
\end{equation}
Moreover, we know from the previous lemma that the spectral norm squared is bounded by the degrees: 
\begin{equation*}
    \|W\|_2^2 \geq \max\Big\{\, \max_i \sum_jW_{ij}^2,\, \max_j \sum_iW_{ij}^2\Big\}\\
    = \max\Big\{\, \max_i \sum_jW_{ij},\, \max_j \sum_iW_{ij}\Big\}
    = \max \Big\{\, \max_i k_i^\mathrm{in},\, \max_j k_j^\mathrm{out}\Big\}=k_{\max{}}\,.
\end{equation*}
Thus, $\mathrm{srank}(W)= {\|W\|_F^2}/{\|W\|_2^2}\leq {M}/{k_{\max{}}}$, as expected.
\end{proof}

In dense directed graphs of $N$ vertices, the number of edges $M$ typically scales as $N^2$ while the maximum degree scales as $N$. The previous proposition thus implies that the stable rank is of order $O(N)$ for such graphs. A slightly different scaling law exists for sparse graphs. Indeed, if $M=O(N^{2-\varepsilon})$ and $k_{\max{}}=O(N^{1-\epsilon})$ for some $\varepsilon > \epsilon>0$, then the stable rank is of order $O(N^{1+\epsilon-\varepsilon})=o(N)$. As shown in next proposition, similar scaling behaviors emerge when considering general square matrices, which are relevant for studying signed weighted directed graphs.
\begin{proposition}\label{thm:upper_bound_srank}
Let $p\geq 0$ and $0<\alpha \leq \beta$. Let $W$ be a $N\times N$ matrix such that 
\begin{equation}\label{eq:condition_scaling_W}
    \alpha N^{-p}\leq W_{ij}^2\leq \beta N^{-p}\,
\end{equation}
for all $1\leq i,j\leq N$. Then, the stable rank satisfies the inequality 
\begin{equation}
    \mathrm{srank}(W)\leq \alpha^{-1}\beta\, N\,.
\end{equation}
More generally, if the maximum number of nonzero elements in a row or in a column of $W$ is $\gamma N$, the total number of nonzero elements of $W$ is $\delta N^2$, and all these nonzero elements satisfy inequality \eqref{eq:condition_scaling_W}, then
\begin{equation}
    \mathrm{srank}(W)\leq \alpha^{-1}\beta \gamma^{-1}\delta\,N\,.
\end{equation}
\end{proposition}
\begin{proof}We use Lemma~\ref{lem:bounds_norms} and proceed essentially as for Proposition~\ref{thm:upper_bound_srank_graph}.  
\end{proof}
Sparse $N\times N$ matrices are characterized by a total number of nonzero elements of order strictly less than $N^2$ and a maximum number of nonzero elements in each row or column of order strictly less than $N$. In the last proposition, this situation corresponds to the case where $\gamma = \bar \gamma\,N^{-\epsilon}$ and $\delta = \bar \delta N^{-\varepsilon}$ for some $\varepsilon>\epsilon>0$, which implies that once again,   $\mathrm{srank}=O(N^{1+\epsilon-\varepsilon})=o(N)$. A typical sparse matrix has $\varepsilon=2\epsilon$, leading to a stable rank scaling as $O(N^{1-\varepsilon})$, which tends to $O(1)$ when considering the ultra-sparse case $\epsilon=1$. In words, the stable rank of (signed weighted directed) graphs having $N$ vertices increases at most linearly with $N$ and sparsity makes the increase become sub-linear. This means that sparse graphs are characterized by a ratio $\mathrm{srank}/N$ that goes to zero as $N$ grows, obviously corresponding to a low effective rank.

\subsection{Directed network centrality measures}
\label{SIsubsec:centrality}
For a directed network, the eigenvalues and the eigenvectors of its matrix representation will generally be complex and one have to adapt the usual approach to define a centrality. A natural way of doing that is to use the SVD of the directed network, which provides two vertex centrality measures: the authority centrality (dominant left singular vector $u_1$) and the hub centrality (dominant right singular vector $v_1$) \cite{Kleinberg1998, Newman2018_SI}, as illustrated in Fig.~\ref{fig:SVD_centralities}. This remark guided us in choosing the observables of the reduced dynamics and it can be used to give an interpretation to the different terms and equations involved when applying Theorem~\ref{thm:least_square_optimal_vector_field} with the reduction matrix being the right singular vectors. Note, however, that for signed networks (described by matrices with negative values), these centrality measures may lead to ambiguities, since the first left and right singular vectors generally have negative values (Perron-Frobenius theorem~\cite[Theorem 38]{VanMieghem2011} doesn't apply).

\begin{figure}[h]
    \centering
    \includegraphics[width=0.9\linewidth]{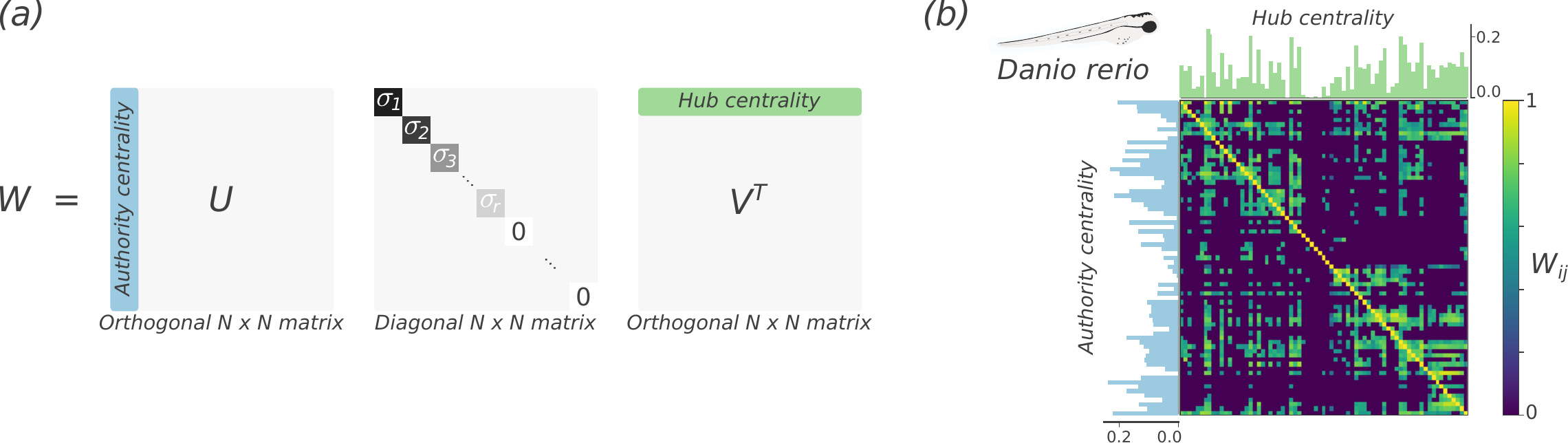}
    \caption{(a) Authority and hub centralities are provided by the elements of the dominant left and right singular vectors, respectively. (b) Centralities for the mesoscopic connectome of the larval zebrafish with $N=71$ communities with added self-loops (modified from Ref.~\cite{Kunst2019a}).}
    \label{fig:SVD_centralities}
\end{figure}

\subsection{Adaptive networks}
\label{SIsubsec:adaptive_networks}
Complex systems are not only characterized by their nonlinear dynamics and network structure, but also by their capacity to adapt themselves to environmental changes \cite{Mitchell2009}. The effective rank of a complex network should thus be expected to change according to time. We performed a preliminary investigation of this phenomenon by extracting the effective rank of the \textit{C. elegans} connectome at different stages of its maturation~\cite{Witvliet2021} as shown in Fig~\ref{fig:singvals_maturation}. We observed that the stable rank decreases with age. More work should be done on this subject to verify if this decrease is significant and to determine the biological meaning of an effective rank decrease with maturation. 

\begin{figure}[h]
    \centering
    \includegraphics[width=0.5\linewidth]{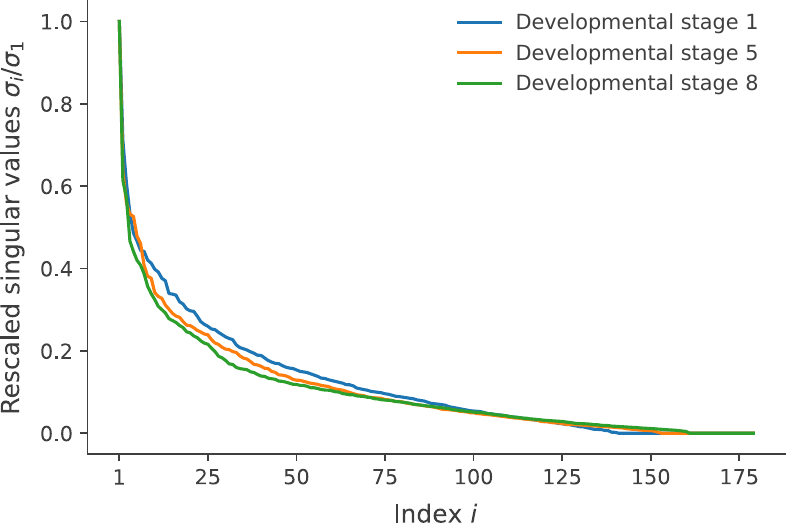}
    \caption{Singular values of the matrices describing the connectivity of the \textit{C. elegans} brain at different maturation stages. The stable ranks are 21.6 (developmental stage 1), 19.7 (developmental stage 5), 18.5 (developmental stage 8).}
    \label{fig:singvals_maturation}
\end{figure}

In Ref.~\cite{Martin2021a}, the authors numerically show that training a neural network decreases the stable rank, which is somewhat in line with what we observe in the latter biological example.

\subsection{SVD for dynamical systems}
\label{SIsubsec:svd_dynamical_systems}
The applications of SVD for dynamical systems is very broad, especially in engineering and linear control systems \cite{Antoulas2005}. SVD is also generalized for nonlinear operators \cite{Fujimoto2004} and it is even possible to perform a quasi-optimal low-rank approximation for matrix dynamics with time-evolving matrices $U,\Sigma, V^{\top}$ \cite{Koch2007}, which could have interesting applications in the study of temporal networks~\cite{Holme2012}. As illustrated in the paper, one can also leverage the power of SVD in the dimension reduction of dynamical systems on networks. As explained in Ref.~\cite[Appendix C]{Thibeault2020_SI}, it can be very hard to choose adequately the reduction matrix $M$. Having real nonnegative singular values and real singular vectors is an advantage when it comes to interpreting the spectra and to define interpretable observables for the dynamics (as opposed to eigenvalue decomposition for general real matrices, which can raise the problem of dealing with complex reduction matrices and create complex reduced dynamics for an initially real dynamics~\cite[p.145]{Thibeault2020_master}\cite{Thibeault2020_SI}). In the following section, we give details about the dimension reduction of complex systems and especially, in subsection~\ref{SIsubsec:error_bound}, we show how to use the salient properties of SVD to get insights on the low-rank hypothesis of complex systems. 

\clearpage
\section{Dimension reduction of complex systems}
\label{SIsec:dimension_reduction}
Dimension reduction of high-dimensional dynamics is a powerful technique to get analytical and numerical insights on complex systems. For instance, it helps predict the onset of explosive phenomena~\cite{Kuehn2021} or getting suitable observable to assess the controllability of the system~\cite{Montanari2022}. The range of applications of dimension-reduction techniques is therefore very broad---ranging from statistical physics and chemistry to finance and neuroscience--- and the methods substantially differ along with the terminology: dimension reduction~\cite{Wang2005, Cook2022}, coarse graining~\cite{Espanol2003, Castiglione2008, Cho2017}, reduced-order model \cite{Brunton2019}, model reduction~\cite{Smith2020}, lumping \cite{Wei1969,Toth1997} \cite[Section 2.4]{Kiss2017a}, compression~\cite{Machta2013}, pruning~\cite{Hoefler2021}, dominance analysis \cite{Forni2019}, variable or state aggregation~\cite{Faccin2021}, etc. 
Many useful dimension-reduction techniques remain unused for complex systems which may be a consequence of this great diversity of terminologies. In this section, we give details about dimension reduction of ordinary differential equations from its more general aspects to the specific ones used in the paper.

\subsection{Notation and generalities on dimension reduction}
\label{SIsubsec:generalities}
\noindent Consider the following notation for the complete dynamical systems:
\begin{itemize}
    \item $x \in \mathbb{R}^N$ is a state of the system;
    \item $t \in [0,\infty)$ denotes time;
    \item $\phi: [0,\infty) \times \mathbb{R}^N \mapsto \mathbb{R}^N$ is the flow;
    \item $x: [0,\infty) \to \mathbb{R}^N$ is the trajectory (note the abuse of notation with the state);
    \item $f: \mathbb{R}^N \to \mathbb{R}^N$ is the vector field, assumed to be  continuously differentiable;
    \item $x_0 = x(0)$ is the initial condition;
    \item $\dot{x} = f(x)$ is the complete dynamics, or more explicitly, 
    \begin{align*}
        \begin{pmatrix}
            \dot{x}_1\\\vdots\\\dot{x}_N
        \end{pmatrix} = \begin{pmatrix}
            f_1(x_1, ..., x_N)\\\vdots\\f_N(x_1, ..., x_N).
    \end{pmatrix}
    \end{align*}
\end{itemize}
Consider the following notation for the reduced dynamical system:
\begin{itemize}
    \item $R:\mathbb{R}^N \mapsto \mathbb{R}^n$ with $n<N$ is called the reduction function or a vectorial observable;
    \item $R = (R_1,...,R_n)$ where $R_\mu : \mathbb{R}^N \mapsto \mathbb{R}$ is the $\mu$-th observable;
    \item $X = R(x) \in \mathbb{R}^n$ is a reduced state;
    \item $\Phi: [0, \infty) \times \mathbb{R}^n \mapsto \mathbb{R}^n$ is the reduced flow;
    \item $X = R\circ x:[0, \infty) \to \mathbb{R}^n$ is the reduced trajectory (note the abuse of notation with the reduced state);
    \item $F: \mathbb{R}^n \to \mathbb{R}^n$ is the reduced vector field, assumed to be  continuously differentiable;
    \item $X_0 = R(x_0) = (R\circ x)(0)$ is the initial condition;
    \item $\dot{X} = F(X)$ is the reduced dynamics.
\end{itemize}

\vspace{0.3cm}

The logic behind the notation is that the ``microscopic" objects are in lowercase and the ``macroscopic" objects are in uppercase, except for $N$ and $n$ which denote some high dimension and a lower dimension respectively. Latin indices are used for these microscopic objects, while Greek indices are used for the macroscopic objects. With this notation, we now define what we mean by exact dimension reduction, in a similar spirit as Ref.~\cite{Toth1997}, but avoiding the subtleties in the characteristics of the reduction function $R$.  
\begin{definition}\label{def:dimension_reduction}
The function $R:\mathbb{R}^N \mapsto \mathbb{R}^n$ induces an \textit{exact dimension reduction} of the dynamics
\begin{align}\label{eq:complete_dynamics_general}
    \dot{x} &= f(x)
\end{align}
if there exists a vector field $F: \mathbb{R}^n \to \mathbb{R}^n$ such that for all solutions $x(t)$ of Eq.~\eqref{eq:complete_dynamics_general}, 
the reduced trajectory
\begin{align}
    X = R\circ x: [0, \infty) \to \mathbb{R}^n
\end{align}
obeys the differential equation
\begin{align}\label{eq:reduced_dynamics_general}
    \dot{X} &= F(X)\,.
\end{align}
\end{definition}
The pair of functions $(R,F)$ thus characterizes a dimension reduction, where the goal is to \textit{close} the differential equation for $X$ in terms of $X$ solely. Dimension reduction can also be seen as a special commutation relation of the vector fields and the flows.
\begin{theorem}\label{thm:dimension_reduction}
The following statements are equivalent:
\begin{enumerate}
    \item the dimension reduction is exact;
    \item the general \textit{compatibility equation}
    \begin{equation}\label{eq:nonlinear_compatibility_equation}
        \mathcal{U}[R] = J_R f = F \circ R
    \end{equation}
    holds, where $\mathcal{U}$ is the Koopman operator generator and $J_R$ is the Jacobian matrix of $R$;
    \item the complete flow $\phi_t$ and the reduced flows $\Phi_t$ commutes with $R$ such that
    \begin{equation}\label{eq:commutation_flows_reduction}
        R\circ \phi_t = \Phi_t \circ R.
    \end{equation}
\end{enumerate}
\end{theorem}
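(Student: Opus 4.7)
The plan is to prove the cyclic chain of implications $1\Rightarrow 2\Rightarrow 3\Rightarrow 1$, with the chain rule and the uniqueness theorem for ODEs doing all the real work. The continuous differentiability of $f$, $R$, and $F$ assumed in the setup guarantees that each initial value problem has a unique solution depending smoothly on the initial condition, so the flows $\phi_t$ and $\Phi_t$ are well-defined at least locally in $t$.

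For $1\Rightarrow 2$, I would fix an arbitrary point $x_0\in\mathbb{R}^N$ and let $x(t)=\phi_t(x_0)$ be the solution of $\dot x=f(x)$ through $x_0$. By the exact reduction hypothesis, $X(t)=R(x(t))$ satisfies $\dot X=F(X)$. Applying the chain rule on one side and the reduced equation on the other gives
\begin{equation}
J_R(x(t))\,f(x(t))=\dot X(t)=F(R(x(t))).
\end{equation}
Evaluating at $t=0$ yields $J_R(x_0)f(x_0)=F(R(x_0))$, and since $x_0$ was arbitrary the compatibility equation $J_R\circ f=F\circ R$ holds on all of $\mathbb{R}^N$.

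For $2\Rightarrow 3$, take $x_0\in\mathbb{R}^N$ and set $Y(t)=R(\phi_t(x_0))$. By the chain rule together with $\dot\phi_t(x_0)=f(\phi_t(x_0))$ and the compatibility equation,
\begin{equation}
\dot Y(t)=J_R(\phi_t(x_0))\,f(\phi_t(x_0))=F(R(\phi_t(x_0)))=F(Y(t)).
\end{equation}
Thus $Y$ solves $\dot Y=F(Y)$ with $Y(0)=R(x_0)$. The reduced flow is defined so that $\Phi_t(R(x_0))$ is the unique solution with that same initial condition, so by uniqueness $Y(t)=\Phi_t(R(x_0))$, which is precisely $R\circ\phi_t=\Phi_t\circ R$.

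For $3\Rightarrow 1$, I would differentiate $R(\phi_t(x_0))=\Phi_t(R(x_0))$ in $t$ at $t=0$: the left side yields $J_R(x_0)f(x_0)$ (chain rule plus the definition of $\phi_t$), the right side yields $F(R(x_0))$ (definition of $\Phi_t$), recovering the compatibility equation. Then along any solution $x(t)$ of $\dot x=f(x)$, the chain rule gives $\tfrac{d}{dt}R(x(t))=J_R(x(t))f(x(t))=F(R(x(t)))$, so $X=R\circ x$ obeys $\dot X=F(X)$, which is exactness. There is no serious obstacle here; the only subtlety is making sure that $\Phi_t$ is defined on the image $R(\mathbb{R}^N)$, which is automatic from the standing continuous differentiability assumptions (at least for sufficiently small $t$), so all the uniqueness/chain-rule steps are legitimate.
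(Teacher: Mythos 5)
Your proof is correct and rests on the same two ingredients as the paper's: the chain rule applied to $\tfrac{d}{dt}R(x(t))$ and comparison (uniqueness) of ODE solutions to identify $R\circ\phi_t$ with $\Phi_t\circ R$. The only difference is organizational --- you prove the cyclic chain $1\Rightarrow 2\Rightarrow 3\Rightarrow 1$ while the paper proves $1\Leftrightarrow 2$ and $1\Leftrightarrow 3$ separately --- and you are somewhat more explicit about invoking the uniqueness theorem, which the paper leaves implicit in its ``comparison gives the desired result.''
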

\begin{proof}

   \phantom{Sarah}\\
   
    \vspace{-0.3cm}
    
    \noindent(1. $\Leftarrow$ 2.) By definition, $X = R\circ x$ and by assumption, $J_R f = F \circ R$. Then, the time derivative of $X$ (the generator of the Koopman operator) is
    \begin{equation}
        \dot{X} = \frac{\mathrm{d} (R \circ x)}{\mathrm{d}t} = \mathcal{U}[R]\circ x = J_R f\circ x = F\circ R \circ x = F \circ X,
    \end{equation}
    which is the definition of an exact dimension reduction.
    
    \noindent(1. $\Rightarrow$ 2.) Similarly, using the time derivative of $X = R\circ x$ again, we have
    \begin{equation}
        \dot{X} = \mathcal{U}[R]\circ x = J_R f\circ x.
    \end{equation}
    But the dimension reduction is exact and $\dot{X} = F\circ X = F \circ R\circ x$ holds. Then, by comparison, it is sufficient to have $\mathcal{U}[R]\circ x = J_R f = F \circ R$.
    
    \noindent(1. $\Leftrightarrow$ 3.) On the one hand, the solution of $\dot{x} = f(x)$ is $x(t) = \phi_t(x(0))$ and thus, the exact evolution of $X(t)$ is given by $X(t) = R\circ\phi_t\circ x(0)$. On the other hand, the solution to $\dot{X} = F(X)$ with $X(0) = R(x(0))$ is $X(t) = \Phi_t(X(0)) = \Phi_t\circ R \circ x(0)$. The comparison gives the desired result.
\end{proof}
Since we have commutation relations, there is a clear picture with commutative diagrams. In particular, statement~3. tells us that that we have an exact dimension reduction if there is a commutative diagram such that
\begin{equation}
    \begin{tikzcd}[sep=huge]
        \mathbb{R}^N \arrow[r, "\phi_t"]
        \arrow[d, "R" left] \arrow[dr, dotted, shift right=0.7ex, "  \Phi_t\, \circ \,  R" below left] \arrow[dr, dotted, shift left=0.7ex, "  R\,\circ \, \phi_t"]
        & \mathbb{R}^N \arrow[d, "R"] \\
        \mathbb{R}^n \arrow[r, "\Phi_t" below] &  \mathbb{R}^n
    \end{tikzcd}
\end{equation}
In the article and the rest of the Supplementary information, we focus on the case where $R$ is a \textit{linear transformation}, which greatly simplifies the analysis and gives access to a whole range of notions and tools from linear algebra. Let us thus assume that $X = R(x) = Mx$ where $M$ is a $n\times N$ matrix, called the reduction matrix \cite{Thibeault2020_SI} (or lumping matrix~\cite{Wei1969, Kuo1969}). Then, $J_R = M$ and condition~\eqref{eq:nonlinear_compatibility_equation} for closure states that for an exact dimension reduction, the complete and reduced vector fields must commute with $M$:
 \begin{equation}\label{diag:dimension_reduction}
\begin{tikzcd}[sep=huge]
\mathbb{R}^N \arrow[r, " f"] \arrow[d, "M" left]
\arrow[dr, dotted, shift right=0.7ex, "  F\,\circ \,  M" below left] \arrow[dr, dotted, shift left=0.7ex, "  M\,\circ \, f"]
& \mathbb{R}^N \arrow[d, "M"] \\
\mathbb{R}^n \arrow[r, "F" below]
&  \mathbb{R}^n\, 
 \end{tikzcd}\quad 
\end{equation}
where we have made a slight abuse of notation, using the same symbol for the matrix and the linear transformation $M: x \mapsto Mx$, that we will use again in the document. Note that the latter scheme is related to the notions of $C^k$-equivalent and $C^k$-conjugate vector fields defined in Ref.~\cite[p.190 and p.191]{Perko2001}. In subsection~\ref{SIsubsec:error_bound} [Definition~\ref{def:alignment_error}], we introduce the alignment error which is directly defined from the compatibility equation $M\circ f = F\circ M$  and we will find a bound on it.

\begin{remark}
    In our work, we consider that the network of the system is already known (or could be known experimentally) and the dynamics is described by a given theoretical model, but the time series/functional data (trajectories) are unknown. This is the ideal setting for determining how the low effective rank of the weight matrix $W$ can affect the evolution of the state of the whole system, starting with arbitrary initial conditions, since no limitation in our analysis can be induced by the finite number of observed time series or their finite length.
\end{remark}

Given a reduction matrix $M$, a projector can always be defined as 
\begin{equation}
    P = M^+ M\,,
\end{equation} 
where $M^+$ is the Moore-Penrose pseudo-inverse of $M$. Under this linear setup, the dimension reduction can be seen as a projection of the elements $x$ of the high-dimensional space unto a low-dimensional space with elements $X$. This situation as well as the four natural vector subspaces induced by $M$ are illustrated in Fig.~\ref{fig:projection}. 

\begin{figure}
    \centering
    \includegraphics[width=0.6\linewidth]{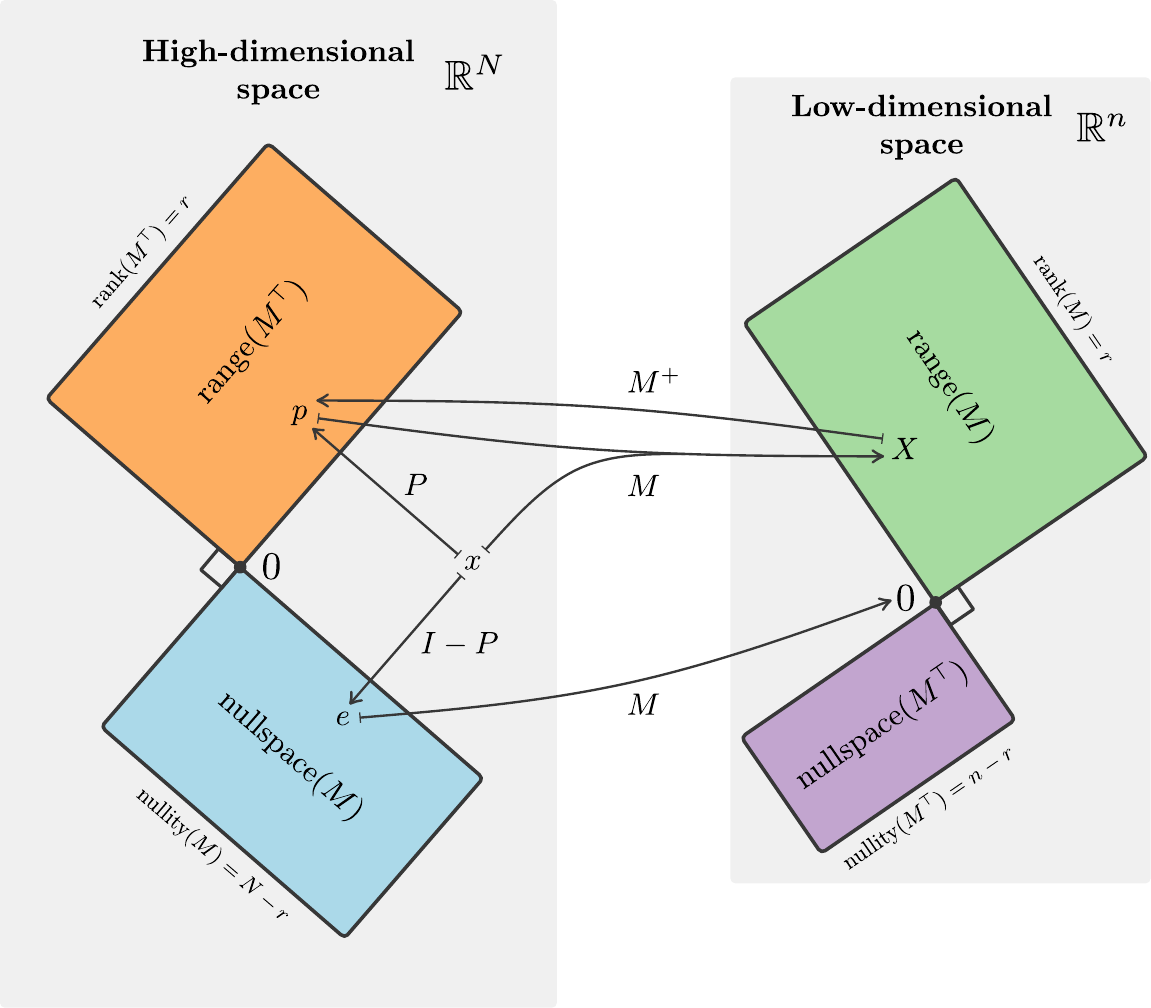}
    \caption{Schematization of dimension reduction associated with the reduction matrix $M$, the corresponding projector $P$, and the induced vector subspaces.}
    \label{fig:projection}
\end{figure}

In general, it is far from simple to solve the compatibility equations $M\circ f = F\circ M$ for $F$ and $M$. Even when $f$ and $F$ are linear transformations, respectively encoded by the $N\times N$ matrix $W$ and the $n\times n$ matrix $\mathcal{W}$, the condition $M\circ f = F\circ M$  takes the form of the compatibility equation~\cite{Thibeault2020_SI}
\begin{equation}
    MW = \mathcal{W}M\,
\end{equation}
which is in fact a system of coupled quadratic equations in the elements of $\mathcal{W}$ and $M$ that cannot always be solved analytically. However, for a fixed $M$, one can find a unique optimal reduced matrix $\mathcal{W}$.
\begin{theorem}[Ref.~\cite{Thibeault2020_SI}]\label{thm:existence_solution_compatibility}
Let $M$ and $W$ be respectively of size $n\times N$ and $N\times N$ with $n < N$. Then, the compatibility equation $\mathcal{W}M = MW$ has a solution for $\mathcal{W}$ if and only if
 \begin{equation}\label{eq:nec_suff_condition}
     M W M^+M = MW
 \end{equation} 
 where $M^{+}$ is the Moore-Penrose pseudoinverse of $M$, in which case the solution is
 \begin{equation}\label{eq:nonunique_solution_compatibility}
     \mathcal{W} = MWM^+ + Y - YMM^+,
 \end{equation}
 where $Y$ is an arbitrary $n\times n$ matrix. If $\rank{M} = n$, then there is at most one solution to the compatibility equation,~i.e.,
 \begin{equation}
     \mathcal{W} = MWM^+.
 \end{equation}
For any $M$, this solution minimizes $\|\mathcal{W}M - MW\|_F$ with error $\|MW(I - M^+M)\|_F$.
\end{theorem}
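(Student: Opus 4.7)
The plan is to view the compatibility condition $\mathcal{W}M = MW$ as a linear matrix equation of the form $XA = B$ with unknown $X = \mathcal{W}$, coefficient $A = M$, and right-hand side $B = MW$, and to leverage the four defining identities of the Moore--Penrose pseudoinverse: $MM^+M = M$, $M^+MM^+ = M^+$, together with the symmetry of $MM^+$ and $M^+M$. Throughout the argument, $MM^+$ plays the role of the orthogonal projector onto the column space of $M$ in $\mathbb{R}^n$, and $M^+M$ that of the orthogonal projector onto the row space of $M$ in $\mathbb{R}^N$.

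For the equivalence, I would handle necessity and sufficiency separately. Necessity: if $\mathcal{W}M = MW$, right-multiplying by $M^+M$ and using $MM^+M = M$ yields $\mathcal{W}M = MWM^+M$; comparing with the original equation gives $MW = MWM^+M$. Sufficiency: assuming this condition, direct substitution shows that $\mathcal{W} = MWM^+$ is a solution, since $\mathcal{W}M = MWM^+M = MW$. To describe the full solution set, I would add solutions of the homogeneous equation $ZM = 0$. Any such $Z$ must vanish on the column space of $M$, hence satisfies $ZMM^+ = 0$, so $Z = Z(I - MM^+)$; conversely, every $Y - YMM^+$ annihilates $M$ thanks to $MM^+M = M$. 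This produces the general solution $\mathcal{W} = MWM^+ + Y - YMM^+$ for arbitrary $n\times n$ matrix $Y$. When $\rank M = n$, $M$ has full row rank and $MM^+ = I_n$, so the homogeneous term collapses and $\mathcal{W} = MWM^+$ is unique.

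For the least-squares statement, which holds for any $M$ even when the solvability condition fails, I would insert the complementary pair $M^+M + (I - M^+M) = I$ on the right of $\mathcal{W}M - MW$ and invoke the Pythagorean splitting of the Frobenius norm against the orthogonal projector $M^+M$. The cross term vanishes by the cyclic property of the trace together with the symmetry and idempotence of $M^+M$, leaving $\|\mathcal{W}M - MW\|_F^2 = \|(\mathcal{W}M - MW)M^+M\|_F^2 + \|(\mathcal{W}M - MW)(I - M^+M)\|_F^2$. Using $MM^+M = M$, the first term simplifies to $\|\mathcal{W}M - MWM^+M\|_F^2$ and the second to $\|MW(I - M^+M)\|_F^2$, which does not depend on $\mathcal{W}$. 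The choice $\mathcal{W} = MWM^+$ then zeroes the first term and achieves the claimed minimum error. The main subtle step I anticipate is the homogeneous characterization, namely verifying $\{Z : ZM = 0\} = \{Y - YMM^+ : Y \in \mathbb{R}^{n\times n}\}$, which really needs the symmetry of $MM^+$ and not just its idempotence; once this is in hand, the remainder of the proof is routine bookkeeping with the Moore--Penrose identities.
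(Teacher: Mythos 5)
Your proof is correct, but it takes a genuinely different route from the paper's. The paper treats the statement as a direct corollary of two classical results: Penrose's 1955 theorem on the solvability of $AXB=C$ (applied with $A=I$, $B=M$, $C=MW$, which hands over both the condition $MWM^+M=MW$ and the general solution $MWM^+ + Y - YMM^+$ in one stroke) and Penrose's 1956 least-squares theorem for the minimization claim. You instead prove everything from scratch: necessity by right-multiplying by $M^+M$, sufficiency by substitution, the homogeneous solution set $\{Z: ZM=0\}=\{Y(I-MM^+)\}$ by hand, and the least-squares claim via the orthogonal splitting of $\|\mathcal{W}M-MW\|_F^2$ along the projector $M^+M$, with the cross term killed by symmetry and idempotence. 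Your version is self-contained and makes visible exactly which Moore--Penrose identities carry each step, at the cost of length; the paper's is a two-line reduction to cited theorems. One minor quibble: your closing remark that the homogeneous characterization ``really needs the symmetry of $MM^+$'' is not quite right --- that step only uses $MM^+M=M$ and the trivial implication $ZM=0\Rightarrow ZMM^+=0$; it is the Pythagorean splitting (where you do invoke it) that genuinely requires the symmetry of $M^+M$. This does not affect the validity of the argument.
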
  
\begin{proof}
From Penrose 1955 \cite[Theorem 2]{Penrose1955}, a necessary and sufficient condition for the equation $AXB = C$ to have a solution is $AA^+CB^+B = C$, in which case the general solution is $X = A^+CB^+ + Y - A^+AYBB^+$, where $Y$ is arbitrary. Set $A = I$, $X = \mathcal{W}$, $B = M$, $C = MW$ in the latter equations and the result in Eq.~\eqref{eq:nonunique_solution_compatibility} follows. If $\rank{M} = n$, then the $n$ rows of $M$ are linearly independent. This implies that $MM^+ = I$ and 
 \begin{equation*}
     \mathcal{W} = MWM^+ + Y - Y = MWM^+,
 \end{equation*}
 which does not depend on the arbitrary matrix $Y$ anymore. It is thus the only possible solution. 
 
 Finally, it is well known, at least since the least-squares theorem of Penrose in 1956~\cite{Penrose1956}, that
  \begin{equation*}
     \argmin_{U\in\mathbb{R}^{k \times \ell}}\|UA-V\|_F= VA^+
\quad\text{and}\quad
     \min_{U\in\mathbb{R}^{k \times \ell}}\|UA-V\|_F = \|V(I-A^+A)\|_F,
 \end{equation*}
 for matrices $V\in \mathbb{R}^{k\times m}$ and $A\in \mathbb{R}^{\ell \times m}$. Setting $A=M$, $U =\mathcal{W}$, and $V = MW$ implies that $\mathcal{W} = MWM^+$ minimizes $\|\mathcal{W}M - MW\|_F$ with error $\|MW(I - M^+M)\|_F$.
\end{proof}
As it will be discussed in another paper, the first preliminary results on these compatibility equations seems to go back to 1969 in chemistry~\cite{Wei1969} and for a fixed $\mathcal{W}$, the compatibility equations are homogeneous Sylvester equations (1884)~\cite{Sylvester1884}.

In the next section, we provide a way to find an optimal reduced vector field $F$ given a reduction matrix $M$,
thus generalizing the idea behind Theorem~\ref{thm:existence_solution_compatibility}.

\subsection{Least-square optimal vector field}
\label{SIsubsec:least_square_vf}
Low-dimensional dynamical systems can be obtained from an optimization problem, where some error is minimized under a set of constraints \cite{Boyd2004} in order to preserve the salient properties of the original high-dimensional system. For dynamical systems, a natural optimization variable is the reduced vector field $F$ itself, which is chosen to represent approximately the complete vector field $f$. Yet, it is rather puzzling to find how the different vector field errors are related to each other and which one can be minimized analytically. We found that there was a useful scheme that helps solve this puzzle. Recalling the definitions of subsection~\ref{SIsubsec:generalities}, we introduce the following diagram for dimension reduction of dynamical systems:
\begin{equation}\label{diag:alignment_errors} 
\begin{tikzcd}[sep=large]
x \arrow[rrrr, maps to, crossing over,"f"] \arrow[drr, maps to, "P"] \arrow[dddr,  bend right, maps to, "M"]   & & & &  f(x)\arrow[drrr, dashed, no head, "\varepsilon(x)"] \arrow[ddr, bend right, maps to, "M" below left] & &\\
& & Px \arrow[rrr, crossing over, "f"]  \arrow[ddl, bend right, maps to, "M" above left]& & & f(Px) \arrow[rr, dashed, no head, "\varepsilon'(x)"] \arrow[ddrr, maps to, bend right, "M" above right] & & M^+F(X) \\
&&&&&Mf(x)\arrow[d, dashed, no head, "\mathcal{E}(x)"]\\
& X \arrow[rrrr, maps to, crossing over,"F" below] \arrow[uur, bend right, maps to, "M^+"] & & &  & F(X) \arrow[uurr, bend right, maps to,crossing over, "M^+" above left] \arrow[rr, dashed, no head, "\mathcal{E}'(x)" below]& & Mf(Px)\\
\end{tikzcd}
\, ,
\end{equation}
where $P=M^+M$ and the dashed lines represent root-mean-square errors (RMSE) between adjacent vector fields, i.e., different alignment errors as defined below [see also Fig.~\ref{fig:low_dimension_hypothesis} for an illustration of $\mathcal{E}(x)$].
\begin{definition}\label{def:alignment_error}
Let $f$ be a complete vector field in $\mathbb{R}^N$, $F$ be a reduced vector field in $\mathbb{R}^n$, and $M$ be the $n\times N$ reduction matrix. At $x \in \mathbb{R}^N$, the \textit{alignment error} ...

\vspace{0.3cm}

\begin{itemize}
    \item ... in $\mathbb{R}^N$ is the RMSE between the vector fields $f$ and $M^+\circ F\circ M$, i.e.,
\begin{align}\label{eq:alignment_error_N}
    \varepsilon(x) = \frac{1}{\sqrt{N}}\|f(x) - M^+F(Mx)\|\,;
\end{align}
\item ... in $\mathbb{R}^n$ is the RMSE between the vector field $M\circ f$ and $F\circ M$, i.e.,
\begin{align}\label{eq:alignment_error_n}
    \mathcal{E}(x) = \frac{1}{\sqrt{n}}\|Mf(x) - F(Mx)\|\,,
\end{align}
\end{itemize}
 where $\|\,\|$ is the Euclidean vector norm.
\end{definition}
By applying the definition of alignment errors on the projected complete vector field $f\circ P$ instead of $f$ only, we have defined
\begin{align}\label{eq:alignment_error_N_projected}
    \varepsilon'(x) = \frac{1}{\sqrt{N}}\|f(Px) - M^+F(Mx)\|\,
\end{align}
and
\begin{align}\label{eq:alignment_error_n_projected}
    \mathcal{E}'(x) = \frac{1}{\sqrt{n}}\|Mf(Px) - F(Mx)\|\,
\end{align}
in Diagram~\ref{diag:alignment_errors}. In principle, the alignment error $\mathcal{E}(x)$ in $\mathbb{R}^n$ is to be minimized in order to be as close as possible to an exact dimension reduction [Definition~\ref{def:dimension_reduction}, Theorem~\ref{thm:dimension_reduction}, and Diagram~\ref{diag:dimension_reduction}], but this is far from a simple task. However, the alignment error $\varepsilon'(x)$ can be directly minimized using least squares which has for consequence that the alignment error $\mathcal{E}'(x)$ in $\mathbb{R}^n$ is exactly 0, as shown in the following theorem.
\begin{theorem}\label{thm:least_square_optimal_vector_field}
    Let $f$ be a complete vector field in $\mathbb{R}^N$, $F$ be a reduced vector field in $\mathbb{R}^n$, and $M$ be a $n\times N$ reduction matrix. The vector field of the reduced dynamics
    \begin{equation}
        \dot{X} = Mf(M^+X)
    \end{equation}
    is optimal in the sense that it minimizes the alignment error $\varepsilon'(x)$ in $\mathbb{R}^N$, i.e.,
    \begin{equation}\label{eq:least_square_optimal_vector_field}
        F^*(X) = \argmin_{\substack{F(X)\in\mathbb{R}^n \\ X=Mx}}\|f(Px)- M^+F(X)\| = Mf(M^+X).
    \end{equation}
    Consequently, the alignment error $\mathcal{E}'(x)$ in $\mathbb{R}^n$ is 0.
\end{theorem}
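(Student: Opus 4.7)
The plan is to reduce the problem, for each fixed $x\in\mathbb{R}^N$, to a classical linear least-squares problem in the unknown vector $y=F(X)\in\mathbb{R}^n$. With $x$ fixed, set $b=f(Px)\in\mathbb{R}^N$ and regard $M^+$ as an $N\times n$ matrix acting on $y$. Then the quantity to minimize in Eq.~\eqref{eq:least_square_optimal_vector_field} is exactly
\begin{equation*}
\|f(Px)-M^+F(X)\|=\|b-M^+y\|,
\end{equation*}
which is the standard form covered by Penrose's least-squares theorem (already invoked at the end of the proof of Theorem~\ref{thm:existence_solution_compatibility}).

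By that theorem, the minimum-norm minimizer of $\|b-M^+y\|$ is $y^{*}=(M^+)^{+}\,b$. Using the Moore-Penrose involution identity $(M^+)^{+}=M$, this gives $y^{*}=Mb=Mf(Px)$. To turn this pointwise-in-$x$ formula into a bona fide function of $X\in\mathbb{R}^n$, I would note that $X=Mx$ implies $Px=M^+Mx=M^+X$, so the natural candidate is
\begin{equation*}
F^{*}(X)=Mf(M^+X).
\end{equation*}
A small but necessary check is that $F^{*}$ is well-defined on the image of $M$: if $Mx_1=Mx_2$, then $Px_1=M^+Mx_1=M^+Mx_2=Px_2$, so $Mf(Px_1)=Mf(Px_2)$, confirming that the value depends only on $X$ and not on the choice of preimage $x$.

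The consequence $\mathcal{E}'(x)=0$ then follows by direct substitution into the definition~\eqref{eq:alignment_error_n_projected}:
\begin{equation*}
\sqrt{n}\,\mathcal{E}'(x)=\|Mf(Px)-F^{*}(Mx)\|=\|Mf(Px)-Mf(M^+Mx)\|=0,
\end{equation*}
since $M^+M=P$. There is no real obstacle in this proof: the substance is entirely carried by the least-squares theorem and the identity $(M^+)^{+}=M$, and the only delicate point is verifying that the pointwise-optimal vector indeed assembles into a well-defined function of $X$, which is immediate from $Px=M^+X$.
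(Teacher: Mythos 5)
Your proposal is correct and follows essentially the same route as the paper: both apply Penrose's least-squares theorem to the linear map $M^+$ with data vector $f(Px)$, use $(M^+)^+=M$ to identify the minimizer as $Mf(Px)=Mf(M^+X)$, and obtain $\mathcal{E}'(x)=0$ by direct substitution. Your added check that the pointwise minimizer assembles into a well-defined function of $X$ is a small refinement the paper leaves implicit, but it does not change the argument.
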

 \begin{proof}
 Let $v\in \mathbb{R}^k$ and $A\in \mathbb{R}^{k\times \ell}$. Then, using least squares (particular case of Penrose~\cite{Penrose1956}) implies that
 \begin{equation}
     \argmin_{u\in\mathbb{R}^\ell}\|v - Au\|= A^+v.
 \end{equation}
Setting $A=M^+$, $u=F(X)$, $v=f(Px)=f(M^+Mx)=f(M^+X)$ readily yields the result. Since $F^*(X) = Mf(M^+X)$ and $Px = M^+X$, we obviously have $\|Mf(Px) - F^*(X)\| = 0$.
 \end{proof}
 
\begin{remark}

\phantom{henri}

\begin{itemize}

\item Minimizing $\varepsilon'(x)$ does not tell much about the alignment error $\mathcal{E}(x)$ of interest. Yet, in subsection~\ref{SIsubsec:error_bound}, we find that using the ensuing vector field from the minimization of $\varepsilon'(x)$ allows obtaining an upper bound on $\mathcal{E}(x)$.

\item Recalling the optimal solution $M W M^+$ for the compatibility equation $MW = \mathcal{W}M$ in Theorem~\ref{thm:existence_solution_compatibility}, we observe that we now have an optimal solution (involving a nonlinear vector field) $M\circ f \circ M^+$ for the compatibility equation $M\circ f = F\circ M$ that boils down to the previous linear solution when $f=W$ and $F=\mathcal{W}$.

\item When we set $n=N$, we could expect the ``reduced'' vector field $F$ to be equivalent in some way to the complete vector field. In fact, if $f:\mathbb{R}^N \to \mathbb{R}^N$ is a $\mathcal{C}^1(\mathbb{R}^N)$ vector field, the vector fields $f$ and $F = M\circ f \circ M^+ = M\circ f \circ M^{-1}$ are $\mathcal{C}^1$-conjugate on $\mathbb{R}^N$ \cite[p.191]{Perko2001}, which is straightforward to observe from the form of $F$ itself where $M$ is the $\mathcal{C}^1$-diffeomorphism. 

\item To the authors' knowledge, even if the vector field in Eq.~\eqref{eq:least_square_optimal_vector_field} is known at least since 1989~\cite{Li1989}, the result hasn't been stated and proved clearly, simply, and in a general way for dynamical systems described by a set of differential equations. One can find many papers on the method (e.g., in fluid mechanics and chemistry) \cite{Li1989, Li1990, Rowley2004, Carlberg2017} and especially, on a similar-looking technique for time series which is also loosely~\cite{Editors2012, Gander2012} called Galerkin projection or Petrov-Galerkin method \cite{Saad2003, Rowley2004, Antoulas2005, Brunton2019}. In our paper, we recall that it is implicitly assumed that we do not have access to the time series, only the initial vector field with the network is known.

\item In principle, there is a whole world of objective functions that could be used for the optimization problem. Other constraints and regularization terms could also be added to satisfy the modeler's restrictions. This is a promising avenue to be further explored in the future to obtain optimal reduced dynamical systems.
\end{itemize}

\end{remark}
Let us now apply the latter theorem to one of the most influential models in neuroscience, the Wilson-Cowan model \cite{Wilson1972, destexhe2009wilson, chow2020before, Painchaud2022} \cite[Chap. 11]{Ermentrout2010}.
\begin{example}[Neuroscience]\label{ex:wilson_cowan}
Consider a system of $N = N_E + N_I$ neurons (or neuronal population) with $N_E$ excitatory neurons and $N_I$ inhibitory neurons. Let $E_e$ (resp. $I_i$) be the time-averaged firing rate of the $e$-th excitatory neuron for $e \in \mathcal{E} = \{1,..., N_E\}$ (resp. $i$-th inhibitory population for $i \in \mathcal{I} = \{N_E + 1,..., N\}$). The Wilson-Cowan model~\cite{Wilson1972} describes the time evolution of the firing rates as
\begin{align}
    \dot{E}_e &= -d_e E_e + (1 - aE_e)\,\, \mathcal{S}[b(\textstyle{\sum_{e'=1}^{N_E}} W_{ee'}E_{e'} + \textstyle{\sum_{i'=N_E+1}^{N}} W_{ei'}I_{i'} - c_e)]\\
    \dot{I}_i &= -d_i I_i + (1 - aI_i)\,\, \mathcal{S}[b(\textstyle{\sum_{e'=1}^{N_E}} W_{ie'}E_{e'} + \textstyle{\sum_{i'=N_E+1}^{N}} W_{ii'}I_{i'}  - c_i)],
\end{align}
where $d_y$ is the inverse time constant and $a$ is related to the refractory period. Moreover, for all $i,i'\in\mathcal{I}$ and $e,e'\in\mathcal{E}$, $W_{ee'} \geq 0$, $W_{ie'} \geq 0$, $W_{ei'} \leq 0$, $W_{ii'} \leq 0$, and 
\begin{equation}
    \mathcal{S}[b(z - c)] = \frac{1}{1 + e^{-b(z-c)}}
\end{equation}
is the logistic function with $b$ being its steepness and $c$ being its midpoint or physically, an external input. By defining 
\begin{equation}\label{eq:EI_to_x}
    (x_1,...,x_N) := (E_1,...,E_{N_E}, I_{N_E+1},...,I_N)^\top,
\end{equation}
we get a concise form of the model \cite[Eq.~(11)]{Painchaud2022}:
\begin{equation}\label{eq:wilson_cowan}
    \dot{x}_j = -d_j x_j + (1 - ax_j) \mathcal{S}[b(\gamma \,y_j - c)], \quad\forall\,j\in\{1,...,N\},
\end{equation}
where $y_j = \sum_{k=1}^N W_{jk}x_k$ and we have set $W \to \gamma W$ to have a coupling constant $\gamma$ to tune. Note that the excitatory and inhibitory variables don't have to be labeled and ordered as above and the weight matrix $W$ just describes a general signed network. From Theorem~\ref{thm:least_square_optimal_vector_field}, we directly obtain the optimal reduced dynamics
\begin{equation}
    \dot{X}_{\mu} = \sum_{\nu=1}^n\mathcal{D}_{\mu\nu} X_{\nu} + \sum_{j=1}^NM_{\mu j}(1 - a\textstyle{\sum_{\nu=1}^n}M_{j\nu}^+X_{\nu})\,\mathcal{S}\left[b(\textstyle{\gamma\sum_{\nu=1}^n}\mathscr{W}_{j\nu}X_{\nu} - c)\right],
\end{equation}
where $\mathcal{D}_{\mu\nu} = -\sum_{j=1}^N M_{\mu j}d_jM_{j\nu}^+$ and $\mathscr{W}_{j\nu} = \sum_{k=1}^N W_{jk}M_{k\nu}^+$. 
\end{example}
Under the form $Mf(M^+X)$ or, elements by elements, $\sum_{i=1}^N M_{\mu i} f_i(\sum_{\nu=1}^n M_{i\nu}^+X_{\nu})$, there is still an explicit dependence of the vector field over $N$. Yet, we can sometimes eliminate this dependence by simplifying $Mf(M^+X)$ under certain properties of $f$ which reveals something special about the resulting interaction between the observables.
 
\subsection{Emergence of higher-order interactions}
\label{SIsubsec:emergence}
The critical role of higher-order interactions in complex systems is now increasingly recognized \cite{Grilli2017, Levine2017, Lambiotte2019b, Battiston2020a, Battiston2021} and in this section, we aim at clarifying their origin by demonstrating the profound interplay between the description dimension of a system and the possibility of having higher-order interactions.
When reducing the dimension of a dynamical system on a network, it is not always clear what to expect about the structure of the reduced dynamical system [see Fig.~\ref{fig:low_dimension_hypothesis} in the paper]. We demonstrate that the structure that emerges from the dimension reduction in Theorem~\ref{thm:least_square_optimal_vector_field} generally yields higher-order interactions between the observables. For that, we first introduce some assumptions.

\begin{assumptions}\label{ass:emergence}

\phantom{Henri}

\begin{enumerate}
    \itemsep1em
    \item[(1)] 
    The $N$-dimensional dynamics on a network of weight matrix $W$ is
    \begin{align}
    \dot{x}_i = h_i(x_i, y_i),\quad i\in\{1,...,N\},
    \end{align}
    where, for all $i$, $x_i:t \mapsto \mathbb{R}^N$, $y_i = \sum_{j=1}^N W_{ij}x_j$, and $h_i: \mathbb{R}\times \mathbb{R} \to  \mathbb{R}$ is an analytic function. 

    \item[(2)] The $n$-dimensional reduced dynamics is the least-square optimal dynamics [Theorem~\ref{thm:least_square_optimal_vector_field}]
    \begin{align}
    \dot{X}_{\mu} = \sum_{i=1}^N M_{\mu i}h_i(\tilde{x}_i, \tilde{y}_i), \quad \mu \in \{1,...,n\},
    \end{align}
    where $X = Mx$ with any real reduction matrix $M$, $\tilde{x} = M^+X$, and $\tilde{y} = WM^+X$.
\end{enumerate}
\end{assumptions}
Condition (1) of Assumptions~\ref{ass:emergence} might look restrictive because of the dependence over the linear function $x\mapsto Wx$. Yet, a considerable amount of complex system models satisfy condition (1) as shown in the following examples (from the power series in $x_i, y_i$ of their analytic vector field, it is possible to classify the dynamics on networks of the next examples). 
\begin{example}[Epidemiology]\label{ex:sis}
In the Susceptible-Infected-Susceptible (SIS) dynamics, an infected individual $i$ (e.g., from a virus or disinformation) transmits its infection at a rate $\gamma$ and recovers with rate $d_i$. In its exact form, the SIS dynamics is a homogeneous Markovian jump process and is described by master equations (forward Kolmogorov equations)~\cite{Gardiner2004, Kiss2017a, St-Onge2022_thesis}. Yet, since there are $2^N$ equations in this complete description and $N$ is generally large, the typical approach is to consider some approximations of the process~\cite{Wang2017, Kiss2017a, St-Onge2022_thesis}. By neglecting the dynamical correlations between the states of the neighbors~\cite[Sec. 2.3.1]{St-Onge2017_master}, the quenched mean-field (QMF) approach~\cite{Wang2017} yields the deterministic system of equations
\begin{equation}\label{eq:qmf_sis}
    \dot{x}_i = -d_ix_i + \gamma (1 - x_i)\, y_i, \qquad i\in\{1,...,N\},
\end{equation}
called the QMF SIS model, where $x_i$ is the probability for the vertex $i$ to be infected. In Fig.~\ref{fig:error_vector_fields}, we use the latter dynamics as a simple introductory example. More generally, quenched mean-field approximations of many binary stochastic processes, such as the SIS dynamics above, the Cowan dynamics~\cite{cowan1990stochastic, Painchaud2022}, and the Glauber dynamics~\cite{glauber1963time, Murphy2022} have the general form
\begin{equation}
    \dot{x}_i = (1 - x_i)\,\alpha(k_i - y_i,\, y_i) + x_i\,\beta(k_i -y_i,\, y_i),\qquad i\in\{1,...,N\}\,,
\end{equation}
where $x_i$ is the probability for vertex $i$ to be active, $k_i = \sum_{i=1}^N W_{ij}$ is the in-degree of vertex $i$, $\alpha$ (resp. $\beta$) is some analytic activation (resp. deactivation) probability function $\mathbb{R} \times \mathbb{R} \to [0,1] $. 
\end{example}

\begin{example}[Neuroscience]\label{ex:neuroscience_dynamics}
The Wilson-Cowan dynamics in Example~\ref{ex:wilson_cowan} satisfies condition (1) of Assumption~\ref{ass:emergence}. Another popular model of neuronal activity, the threshold-linear model~\cite{Hahnloser2000, Parmelee2022}, is defined by the equations
\begin{align}\label{eq:TLN}
\dot{x}_i = -x_i + \left[\sum_{j=1}^N W_{ij}x_j + b_i\right]_+,\qquad  i\in\{1,...,N\},
\end{align}
where $y\mapsto[y]_+ = \max\{0, y\}$ is the standard rectifier or ReLU function. To meet condition (1), the latter must be replaced by an analytic approximation, such as the softplus function $y\mapsto\ln(1+e^{ky})/k$ for some $k>0$.

\end{example}

\begin{example}[Population dynamics]\label{ex:population}
Population dynamics are widely used in science from ecology~\cite{Morone2019} and game theory~\cite{Grilli2017} to chemistry (e.g., kinetic equations)~\cite{Li1984} and physics (e.g., lasers)~\cite{Milonni1988}. The generalized Lotka-Volterra model~\cite{Lotka1910, Volterra1926} is a very typical population dynamics with the form 
\begin{equation}\label{eq:lotka_volterra}
    \dot{x}_i = -dx_i + \gamma x_i y_i\,.
\end{equation}
Refined models such as  
\begin{equation}
    \dot{x}_i = -dx_i - sx_i^2 + \gamma\frac{x_iy_i}{\alpha + y_i}
\end{equation}
in Ref.~\cite{Morone2019} or the microbial population dynamics \cite{Sanhedrai2022}
\begin{equation}\label{eq:microbial_population}
    \dot{x}_i = a - d\,x_i + bx_i^2 - cx_i^3 +  \gamma\,x_i\,y_i.
\end{equation}
have also been used to incorporate more realistic effects, like the Allee effect in which a population exhibits negative growth for low abundances~\cite{Allee1932, Gao2016_SI}. In the latter dynamics, which is used in the paper, the correspondences with the parameters of Ref.~\cite{Sanhedrai2022} are
$a = F$, $b = B(1 + K/C)$, $c = B/C$, and $d = BK$ where $F$ is the migration rate, $B$ is the logistic growth rate, $C$ is the carrying capacity, and $K$ is the Allee effect strength. In Extended Data Table~1, we consider that the parameter $d$ can vary for each vertex only for the sake of coherence with the other dynamics.
\end{example}

\begin{example}[Oscillators]\label{ex:kuramoto}
The Kuramoto-Sakaguchi dynamics \cite{Kuramoto1975, Sakaguchi1986} is a canonical model for a large class of oscillatory systems~\cite{Pietras2019, Thibeault2020_master} and finds many applications, e.g., for Josephson junctions~\cite{Wiesenfeld1996}, nanoelectromechanical oscillators~\cite{Matheny2019}, and neuroscience~\cite{Izhikevich2007}. The dynamics of the phase oscillators with a phase lag $\alpha$ is such that
\begin{equation}
    \dot\theta_j = \omega_j + \gamma\sum_{k=1}^N W_{jk}\,\sin(\theta_k - \theta_j + \alpha),
\end{equation}
where $\theta_j(t)$ is the position of the $j$-th oscillator at time $t$, $\omega_j$ is the $j$-th natural frequency, and $\gamma$ is the coupling constant. By setting $z_j = e^{i\theta_j}$~\cite{Thibeault2020_SI}, the Kuramoto-Sakaguchi model becomes
\begin{equation}\label{eq:kuramoto_sakaguchi}
    \dot{z}_j = i\omega_j z_j + \gamma\,e^{-i\alpha}\, y_j - \gamma\,e^{i\alpha}\,z_j^2\,\bar{y}_j,
\end{equation}
where $y_j = \sum_{k=1}^N W_{jk}z_k$ and $\bar{\phantom{z}}$ denotes complex conjugation. Note that the Winfree model~\cite{Winfree1967} and the theta model~\cite{Ermentrout1986} on networks~\cite{Thibeault2020_SI} also satisfy the condition (1) of Assumption~\ref{ass:emergence}.
\end{example}

\begin{example}[Machine learning]
\label{ex:rnn}
The universal approximation theorem of Funahashi and Nakamura~\cite[Theorem~1]{Funahashi1993} guarantees that a solution to a general dynamical system is approximately given, up to the desired accuracy, by a solution of a  continuous-time recurrent neural network \cite{Funahashi1993, Beer1997} 
\begin{equation}\label{eq:continuous_time_RNN}
    \dot{x}_i = -\frac{1}{\tau_i}x_i + \sum_{j=1}^N W_{ij}\mathcal{S}(x_j) + I_i,
\end{equation}
where $x_i$ is the trajectory of the $i$-th neuron, $\tau_i$ is the time-scale of neuron $i$, $\mathcal{S}$ is the sigmoid (logistic) function, $W_{ij}$ is the element $(i,j)$ of the $N\times N$ weight matrix $W$, and $I_i$ is the input current applied on neuron $i$. Equation~\eqref{eq:continuous_time_RNN} is also called or similar, up to some variations in its form, Cohen-Grossberg model~\cite{Cohen1983, Grossberg1988}, Hopfield model~\cite{Hopfield1984}, activation dynamics~\cite{Hirsch1989}, continuous rate RNN~\cite{Kim2019, Kim2021}, or reservoir computers. This recurrent neural network does not directly have the form to satisfy the condition (1) of Assumption~\ref{ass:emergence}, but from Ref.~\cite{Hanson2020}, we know that there is a class of (continuous-time) recurrent neural networks with the form
\begin{equation}\label{eq:rnn}
    \dot{x}_i = -d_i x_i + \tanh(\gamma y_i + c_i),
\end{equation}
$d_1,...,d_N$ are real constants, $c_i:t \mapsto c_i(t)\in\mathbb{R}$ is the $i$-th current, that is a universal approximator and satisfy the condition (1).
\end{example}
Following these considerations, we introduce a general proposition about the emergence of higher-order interactions when reducing the dimension of a dynamical system on network using Theorem~\ref{thm:least_square_optimal_vector_field}.
\begin{proposition}\label{prop:emergence}
    If the conditions of Assumptions~\ref{ass:emergence} hold, the reduced dynamics can be expressed in terms of higher-order interactions between the observables as
   \begin{align}\label{eq:dynamics_tensors}
        \dot{X}_{\mu} &= \mathcal{C}_{\mu} + \sum_{d_x=1}^\infty\sum_{\bm{\alpha}\in\mathbb{Z}^n_{+}}\mathcal{D}_{\mu\bm{\alpha}}^{(d_x + 1)} X_{\bm{\alpha}} + \sum_{d_y=1}^\infty\sum_{\bm{\beta}\in\mathbb{Z}^n_{+}} \mathcal{W}_{\mu\bm\beta}^{(d_y + 1)} X_{\bm\beta} + \sum_{d_x, d_y=1}^\infty\sum_{\bm\alpha, \bm\beta\in\mathbb{Z}^n_{+}} \mathcal{T}_{\mu \bm\alpha \bm\beta}^{(d_x+d_y+1)} X_{\bm\alpha\bm\beta},
    \end{align}
where we have introduced the multi-indices $\bm{\alpha} = (\alpha_1,...,\alpha_{d_x})$ and $\bm{\beta} = (\beta_1,...,\beta_{d_y})$ with $\alpha_p,\beta_q \in \{1,...,n\}$, the compact notation for  products $X_{\bm{\gamma}} = X_{\gamma_1}...X_{\gamma_d}$, while $\mathcal{C}_{\mu}$ denotes a real constant and $\mu\in\{1,\ldots, n\}$. The higher-order interactions are described by three tensors of respective order $d_x+1$, $d_y+1$, $d_x+d_y+1$, and whose elements are 
\begin{align}
    \mathcal{D}_{\mu\bm\alpha}^{(d_x + 1)} &= \quad\,\,\sum_{i=1}^N\quad\,\, c_{i d_x 0}M_{\mu i} M_{i\alpha_1}^+\,...\, M_{i\alpha_{d_x}}^+,\label{eq:higher_order_D}\\
    \mathcal{W}_{\mu\bm\beta}^{(d_y + 1)} &= \sum_{i,j_1,...,j_{d_y}=1}^N c_{i 0 d_y}M_{\mu i} W_{ij_1}...W_{ij_{d_y}}M_{j_1\beta_1}^+\,...\,M_{j_{d_y}\beta_{d_y}}^+,\label{eq:higher_order_W}\\
    \mathcal{T}_{\mu \bm\alpha \bm\beta}^{(d_x+d_y+1)} &= \sum_{i,j_1,...,j_{d_y}=1}^N c_{id_xd_y}M_{\mu i} M_{i\alpha_1}^+\,...\, M_{i\alpha_{d_x}}^+ W_{ij_1}...W_{ij_{d_y}}M_{j_1\beta_1}^+\,...\,M_{j_{d_y}\beta_{d_y}}^+,\label{eq:higher_order_T}
\end{align}
for some real coefficients $c_{id_xd_y}$ with $i \in \{1,...,N\}$ and $d_x,d_y\in\mathbb{Z}_+$.
\end{proposition}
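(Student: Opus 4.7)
\textbf{Proof plan for Proposition~\ref{prop:emergence}.} The plan is to leverage the analyticity of each $h_i$ and expand the least-square reduced vector field $\dot X_\mu=\sum_{i=1}^N M_{\mu i}\,h_i(\tilde x_i,\tilde y_i)$ as a doubly-indexed power series in the observables, then regroup terms according to whether they depend only on $\tilde x$, only on $\tilde y$, both, or neither. First, since each $h_i:\mathbb{R}^2\to\mathbb{R}$ is analytic, I would write
\begin{equation*}
    h_i(x_i,y_i)=\sum_{d_x,d_y=0}^\infty c_{id_xd_y}\,x_i^{d_x}y_i^{d_y},
\end{equation*}
where $c_{id_xd_y}=\tfrac{1}{d_x!\,d_y!}\partial_x^{d_x}\partial_y^{d_y}h_i(0,0)$, and the series converges in a neighborhood of $(0,0)$ (the statement being intended in the formal-series sense on the appropriate domain of convergence).

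Next, I would substitute $\tilde x_i=\sum_{\nu=1}^n M_{i\nu}^+X_\nu$ and $\tilde y_i=\sum_{j=1}^N W_{ij}\tilde x_j=\sum_{j=1}^N\sum_{\nu=1}^n W_{ij}M_{j\nu}^+X_\nu$ into the series. Expanding the powers using the multi-index notation defined in the statement gives
\begin{equation*}
    \tilde x_i^{d_x}=\sum_{\bm\alpha\in\{1,\ldots,n\}^{d_x}} M_{i\alpha_1}^+\cdots M_{i\alpha_{d_x}}^+\,X_{\bm\alpha},\qquad
    \tilde y_i^{d_y}=\sum_{\bm j\in\{1,\ldots,N\}^{d_y}}\sum_{\bm\beta\in\{1,\ldots,n\}^{d_y}} W_{ij_1}\cdots W_{ij_{d_y}}\,M_{j_1\beta_1}^+\cdots M_{j_{d_y}\beta_{d_y}}^+\,X_{\bm\beta}.
\end{equation*}
Multiplying these two expressions and by $M_{\mu i}c_{id_xd_y}$, then summing over $i$ (and over $\bm j$ in the $\tilde y$-factor), yields the generic coefficient of $X_{\bm\alpha}X_{\bm\beta}$ in the expansion of $\sum_i M_{\mu i}h_i(\tilde x_i,\tilde y_i)$.

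The remaining step is bookkeeping: I would split the double sum $\sum_{d_x,d_y\ge 0}$ into the four disjoint cases $(d_x,d_y)=(0,0)$, $(d_x\ge 1,d_y=0)$, $(d_x=0,d_y\ge 1)$, and $(d_x\ge 1,d_y\ge 1)$. The first case produces the constant $\mathcal{C}_\mu=\sum_i M_{\mu i}c_{i00}$; the second yields exactly $\mathcal{D}_{\mu\bm\alpha}^{(d_x+1)}X_{\bm\alpha}$ as written in \eqref{eq:higher_order_D} (since the $\tilde y$-factor collapses to $1$ and no $\bm j$-sum appears); the third yields $\mathcal{W}_{\mu\bm\beta}^{(d_y+1)}X_{\bm\beta}$ as in \eqref{eq:higher_order_W}; and the mixed case produces $\mathcal{T}_{\mu\bm\alpha\bm\beta}^{(d_x+d_y+1)}X_{\bm\alpha\bm\beta}$ as in \eqref{eq:higher_order_T}. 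Assembling the four pieces recovers \eqref{eq:dynamics_tensors}.

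The only genuine obstacle is making the combinatorics of the multi-indices transparent and justifying that the interchange of the (infinite) power-series summation with the finite sums over $i,\bm j,\bm\alpha,\bm\beta$ is legitimate on the domain where the series converges; both are routine given analyticity, and the rest is symbolic rearrangement. I would emphasize in the write-up that the tensors $\mathcal{D}^{(d_x+1)}$, $\mathcal{W}^{(d_y+1)}$, and $\mathcal{T}^{(d_x+d_y+1)}$ are naturally symmetric in the $\bm\alpha$-slots (and separately in the $\bm\beta$-slots) because they arise from symmetric products $\tilde x_i^{d_x}\tilde y_i^{d_y}$, which is the structural content behind the emergence of higher-order interactions.
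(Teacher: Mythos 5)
Your proposal is correct and follows essentially the same route as the paper's proof: expand each analytic $h_i$ as a power series in $(x_i,y_i)$ around the origin, substitute $\tilde{x}=M^+X$ and $\tilde{y}=WM^+X$, expand the powers via multi-indices, and split the double sum over $(d_x,d_y)$ into the four cases that yield $\mathcal{C}_\mu$, $\mathcal{D}$, $\mathcal{W}$, and $\mathcal{T}$. Your added remarks on the legitimacy of interchanging sums and on the symmetry of the tensors in the $\bm\alpha$- and $\bm\beta$-slots are sensible refinements but do not alter the argument.
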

\begin{proof}
    By definition of an analytic function, there is a convergent power series describing the vector field of the complete dynamics, i.e.,
    \begin{equation}\label{eq:power_series}
        h_i(x_i, y_i) = \sum_{d_x=0}^\infty\sum_{d_y=0}^\infty c_{id_xd_y} x_i^{d_x} y_i^{d_y}, \qquad i\in\{1,...,N\},
    \end{equation}
    where we have chosen to express the power series around $x_i=y_i=0$ without loss of generality. The reduced dynamics is therefore
    \begin{equation*}
        \dot{X}_{\mu} 
        = \sum_{d_x, d_y=0}^\infty\sum_{i=1}^N c_{id_xd_y}M_{\mu i}\left(\sum_{\alpha = 1}^n M_{i\alpha}^+ X_{\alpha}\right)^{d_x}\left(\sum_{j=1}^N\sum_{\beta = 1}^n W_{ij}M_{j\beta}^+ X_{\beta}\right)^{d_y}.
    \end{equation*}
    The sum can be separated as 
    \begin{align*}
        \dot{X}_{\mu} &= \mathcal{C}_{\mu} + \sum_{d_x=1}^\infty\sum_{i=1}^N c_{id_x0}M_{\mu i}\left(\sum_{\alpha = 1}^n M_{i\alpha}^+ X_{\alpha}\right)^{d_x} + \sum_{ d_y=1}^\infty\sum_{i=1}^N c_{i0d_y}M_{\mu i}\left(\sum_{j=1}^N\sum_{\beta = 1}^n W_{ij}M_{j\beta}^+ X_{\beta}\right)^{d_y}\nonumber\\&\qquad\quad + \sum_{d_x, d_y=1}^\infty\sum_{i=1}^N c_{id_xd_y}M_{\mu i}\left(\sum_{\alpha = 1}^n M_{i\alpha}^+ X_{\alpha}\right)^{d_x}\left(\sum_{j=1}^N\sum_{\beta = 1}^n W_{ij}M_{j\beta}^+ X_{\beta}\right)^{d_y},
    \end{align*}
    where we have defined $\mathcal{C}_{\mu} = \sum_{i=1}^N M_{\mu i}c_{i00}$. Expanding the exponents and introducing the multi-indices directly provide the desired result. 
\end{proof}

\begin{remark}

\phantom{Henri}

\begin{enumerate}
    \item As explained in Section 1.1 (p.3) of Ref.~\cite{Qi2017}, the tensors above could be more precisely called hypermatrices.
    
    \item For clarity, we specify the order of the tensor as an exponent in parentheses. In the paper and in Example~\ref{ex:emergence}, the order is clear from the indices and we thus avoid this notation for simplicity. Also, we let the indices differentiate the tensors, e.g., $\mathcal{T}_{1(2,3)(4)}$ ($d_x = 2$, $d_y = 1$) and $\mathcal{T}_{1(2)(3,4)}$ ($d_x = 1$, $d_y = 2$) are elements of two different tensors. Finally, when it's clear in the context, if a multi-index is a singleton, than we remove the parentheses, e.g., $\mathcal{T}_{\mu(\nu)(\kappa)}$ becomes $\mathcal{T}_{\mu\nu\kappa}$. 
    
    \item The coefficients $c_{id_xd_y}$ can be chosen as the ones of the Taylor series of $h_i$ for all $i$.
    
    \item For the sake of simplicity, let us consider the case where $c_{i01} = 1$ for all $i$. We observe that $\mathcal{W}^{(2)} = MWM^+$ appears in the reduced dynamics, which can be viewed as the reduced weight matrix. From Theorem~\ref{thm:existence_solution_compatibility}, it is also the unique solution to the compatibility equation \cite{Thibeault2020_SI} $\mathcal{W}M = MW$ when $\rank{M} = n$ and it is the least-square optimal solution to the problem $\|\mathcal{W}M - MW\|_F^2$ with $\mathcal{W}$ as the optimization variable. Remember from Ref.~\cite{Thibeault2020_SI} that solving the compatibility equation is necessary to cancel the first-order errors in DART or less generally, to close the reduced dynamics of any linear dynamics $\dot{x} = Wx$. Indeed, for $X = Mx$, $\dot{X} = MWx = \mathcal{W}Mx = \mathcal{W}X$ where one can reasonably choose $\mathcal{W} = \mathcal{W}^{(2)}$ as explained before.
    \item If there was already higher-order interactions in the complete dynamics, the least-square optimal reduced dynamics would have new higher-order interactions that depends on the original ones, the parameters of the dynamics, and the reduction matrix.
    \item The latter proposition can easily be extended to complex variables. First assume that the complex dynamics has the form $\dot{x}_i = r_i(x_i, y_i, \bar{x}_i, \bar{y}_i)$, where $\bar{\phantom{0}}$ is complex conjugation and $r_i:\mathbb{C}^4 \mapsto \mathbb{C}$ is a holomorphic function (and thus analytic):
    \begin{equation}
        r_i(x_i, y_i, \bar{x}_i, \bar{y}_i) = \sum_{d_x=0}^\infty\sum_{d_y=0}^\infty\sum_{\bar{d}_x=0}^\infty\sum_{\bar{d}_y=0}^\infty c_{id_xd_y\bar{d}_x\bar{d}_y} x_i^{d_x} y_i^{d_y}\bar{x}_i^{\bar{d}_x} \bar{y}_i^{\bar{d}_y}, \quad i\in\{1,...,N\}.
    \end{equation}
    The rest of the proof is similar to its real counterpart. This is especially interesting for phase dynamics such as the Kuramoto model (see Example~\ref{ex:emergence}).
    \item This is not the only dimension reduction that yields higher-order interactions. We did not realize it clearly at the moment of writing Ref.~\cite{Thibeault2020_SI}, but DART also yields higher-order interactions, which can be explicitly seen in Eqs.~(28-30). However, these higher-order interactions could be avoided by noting that the phase dynamics have a vector field of the form $h_i(x_i, y_i)$. Indeed, using Taylor's theorem for both $x$ and $Wx$, there is no compatibility equation for the degrees that appears to cancel the first-order terms and it ultimately removes the higher-order contributions with $\mathcal{K}$ in Eqs.~(28-30). In general, a dimension reduction method where the original vector field is evaluated at a function of the original variables is susceptible to yield higher-order interactions. 
\end{enumerate}
\end{remark}

In the last proposition, the graph with $N$ vertices of the complete dynamics (and its parameters encoded by all the coefficients $c_{id_xd_y}$) is thus replaced by a hypergraph $\mathcal{H}$ \cite{Berge1989, Gallo1993, Qi2017} with $n$ vertices [see Fig.~\ref{fig:low_dimension_hypothesis}d of the paper], defined from the tensors $\mathcal{D}^{(d_x+1)}$, $\mathcal{W}^{(d_y+1)}$, and $\mathcal{T}^{(d_x + d_y + 1)}$, in the reduced dynamics. Below, we define more precisely the notion of directed, weighted, and signed hypergraphs.
\begin{definition}
    A hypergraph is a triple $\mathcal{H} = (\Upsilon, \Xi, \Omega)$, where 
    \begin{itemize}
        \item[--] $\Upsilon = \{1,...,n\}$ is the set of vertices;
        \item[--] $\Xi$ is a set of \textit{hyperarcs} (or directed hyperedges) defined as an ordered pair $E = (H, T)$, where $H$ is the \textit{head} of the hyperarc (a $n_H$-tuple with elements in $\Upsilon$), $T$ is the \textit{tail} of the hyperarc (a $n_T$-tuple with elements in $\Upsilon$), and $2 \leq n_H + n_T \leq n$ with $n_H, n_T \geq 1$. For $n_H = 1$ and $n_T = 1$, the hyperarc is a directed edge. If $n_H = 1$ and $n_T > 1$, it is a backward hyperarc and if $n_H > 1$ and $n_T = 1$, it is a forward hyperarc;
        \item[--] $\Omega$ is a function that assigns a real value to the hyperarcs. 
    \end{itemize}
     
\end{definition}

\begin{remark}
\phantom{Henri}
\begin{itemize}
    \item The latter definition is a generalization of hypergraphs~\cite{Berge1989} and of directed hypergraphs as defined in Ref.~\cite{Gallo1993}, where the head and the tails of the hyperarcs are sets instead of tuples. 
    \item For the weight matrix with elements $W_{ij}$, we use the convention that the edge (or arc) $(i, j)$ is directed from $j$ to $i$. For consistency, in the definition above, we use the convention that the hyperarc $(H, T)$ (instead of $(T, H)$ as in Ref.~\cite{Gallo1993}) is directed from the tail $T$ to the head $H$. As a consequence, in the tensor notation $\mathcal{T}^{(d_x + d_y + 1)}_{\mu\bm{\alpha}\bm{\beta}}$, the index $\mu$ and the multi-index $\bm\alpha$ are part of the head while $\bm{\beta}$ is part of the tail of the hyperarc. Thus, $\mathcal{T}_{1(2)(3)}$ ($n_T = 1$) is a forward hyperarc $((1,2), (3))$ while $\mathcal{T}_{1()(2,3)}$ ($n_H = 1$) is a backward hyperarc $(1,(2,3))$. Note that the tensor $\mathcal{W}^{(d_y + 1)}$ with elements in Eq.~\eqref{eq:higher_order_W} always form backward hyperarcs (from $\bm{\beta}$ to $\mu$) since $d_x = 0$, while the tensor with elements in Eq.~\eqref{eq:higher_order_T} can be any type of hyperarc (with $\mu$ always belonging to the head). In the example of the paper for the epidemiological dynamics, Eq.~\eqref{eq:third_order_interactions} is a forward hyperarc (from $\kappa$ to $\mu\nu$).
\end{itemize}
\end{remark}

We now derive two key consequences of Proposition~\ref{prop:emergence}. First, Proposition~\ref{prop:emergence} shows that there can be an infinite number of higher-order interactions in the reduced dynamics. Yet, for a special family of vector fields, we prove that there is a finite number of them which are related to the nonlinearity of the original dynamics.

\begin{corollary}\label{cor:emergence_polynomial}
If $h_i(x_i, y_i)$ is a polynomial of total degree $\delta$ in $x_i$ and $y_i$ for all $i\in\{1,...,N\}$ and condition (2) of Assumptions~\ref{ass:emergence} holds, then the reduced dynamics has a polynomial vector field of total degree $\delta$ with interactions of maximal order $\delta + 1$.
\end{corollary}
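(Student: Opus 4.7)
The plan is to derive this corollary as a direct specialization of Proposition~\ref{prop:emergence}. First I would note that any polynomial of total degree $\delta$ in $x_i$ and $y_i$ can be written as a finite power series
\begin{equation*}
h_i(x_i, y_i) = \sum_{\substack{d_x, d_y \geq 0 \\ d_x + d_y \leq \delta}} c_{id_xd_y}\, x_i^{d_x}\, y_i^{d_y},
\end{equation*}
so that the coefficients $c_{id_xd_y}$ vanish whenever $d_x + d_y > \delta$. This is the only observation that distinguishes the polynomial case from the general analytic case treated in Proposition~\ref{prop:emergence}.

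Next I would substitute this finite expansion into the least-square optimal reduced vector field $\dot{X}_\mu = \sum_{i=1}^N M_{\mu i}\, h_i(\tilde{x}_i, \tilde{y}_i)$ with $\tilde{x}_i = \sum_\alpha M_{i\alpha}^+ X_\alpha$ and $\tilde{y}_i = \sum_{j,\beta} W_{ij} M_{j\beta}^+ X_\beta$. Carrying out the exact same multinomial expansion as in the proof of Proposition~\ref{prop:emergence}, but now with the upper bounds $d_x + d_y \leq \delta$ (and the implicit restriction $d_x \leq \delta$, $d_y \leq \delta$), yields
\begin{equation*}
\dot{X}_\mu = \mathcal{C}_\mu + \sum_{1 \leq d_x \leq \delta} \sum_{\bm\alpha} \mathcal{D}_{\mu\bm\alpha}^{(d_x+1)} X_{\bm\alpha} + \sum_{1 \leq d_y \leq \delta} \sum_{\bm\beta} \mathcal{W}_{\mu\bm\beta}^{(d_y+1)} X_{\bm\beta} + \sum_{\substack{d_x, d_y \geq 1 \\ d_x + d_y \leq \delta}} \sum_{\bm\alpha,\bm\beta} \mathcal{T}_{\mu\bm\alpha\bm\beta}^{(d_x+d_y+1)} X_{\bm\alpha}\, X_{\bm\beta},
\end{equation*}
where the tensor entries are exactly those given by Eqs.~\eqref{eq:higher_order_D}--\eqref{eq:higher_order_T}.

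From this closed-form expression the two claims follow by inspection. The total degree of any monomial appearing on the right-hand side is $d_x + d_y$ (from the $X_{\bm\alpha}$ and $X_{\bm\beta}$ products combined), which is at most $\delta$; hence the reduced vector field is polynomial of total degree $\delta$. The interaction order corresponding to the tensor $\mathcal{T}^{(d_x+d_y+1)}$ (or $\mathcal{D}^{(d_x+1)}$, $\mathcal{W}^{(d_y+1)}$) is one plus the number of observable factors, which is maximized at $d_x + d_y + 1 \leq \delta + 1$, attained when $d_x + d_y = \delta$. I would close by remarking that this bound is tight as soon as at least one coefficient $c_{id_xd_y}$ with $d_x + d_y = \delta$ and $d_x, d_y \geq 1$ is nonzero and $M$ is generic.

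There is no real obstacle here: the whole argument is a degree-counting bookkeeping exercise, and the only point that requires minor care is separating the purely-$x$ (no $W$), purely-$y$ (no $\bm\alpha$), and mixed contributions as in Proposition~\ref{prop:emergence}, which is already handled there.
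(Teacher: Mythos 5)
Your proposal is correct and follows essentially the same route as the paper: both specialize Proposition~\ref{prop:emergence} by observing that the coefficients $c_{id_xd_y}$ vanish for $d_x+d_y>\delta$, so the tensor expansion truncates, and then read off the maximal monomial degree $\delta$ and interaction order $\delta+1$ by inspection. Your explicit genericity caveat on the attainment of the maximal degree is, if anything, slightly more careful than the paper's bare assertion that a term of degree $\delta$ survives.
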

\begin{proof}
Since any polynomial is analytic, condition (1) of Assumptions~\ref{ass:emergence} is satisfied. Then, by Proposition~\ref{prop:emergence}, the reduced dynamics is given by Eqs.~(\ref{eq:dynamics_tensors}-\ref{eq:higher_order_T}). In the following, the conclusions are valid for all $i\in\{1,...,N\}$. Let $\mathscr{I}_i = \{c_{id_x d_y}\}_{d_x,d_y = 0}^\infty$ be the $i$-th (countable) infinite set of coefficients related to the $i$-th analytic function $h_i$. The fact that $h_i$ is a polynomial implies that there is a finite subset of nonzero coefficients $\mathscr{F}_i\subset \mathscr{I}_i$ describing a polynomial vector field for the reduced dynamics. Consider any coefficient $c_{i'd_x' d_y'} \in \mathscr{F}_i$ such that $d_x' + d_y' = \delta$, the total degree of the polynomial $h_i$. Then, at least one of the tensors $\mathcal{D}^{(d_x' + 1)}$, $\mathcal{W}^{(d_y' + 1)}$, $\mathcal{T}^{(d_x' + d_y' + 1)}$, with elements in Eqs.~(\ref{eq:higher_order_D}-\ref{eq:higher_order_T}), have the highest possible order $\delta + 1$. Moreover, there will be at least one monomial term $X_{\alpha_1}...X_{\alpha_{d_x'}}$, $X_{\beta_1}...X_{\beta_{d_y'}}$, or $X_{\alpha_1}...X_{\alpha_{d_x'}}X_{\beta_1}...X_{\beta_{d_y'}}$ in Eq.~\eqref{eq:dynamics_tensors} that is of maximal degree $\delta$, which means that reduced dynamics has a polynomial vector field of total degree $\delta$.
\end{proof}

Second, the tensors describe Proposition~\ref{prop:emergence} strongly depends on the reduction matrix $M$, or in other words, the reduction matrix will play a role on the form of the higher-order interactions. One can therefore ask if one can choose $M$ in such a way that there are only pairwise interactions in the reduced dynamics. 
In the next corollary, we provide sufficient conditions to have pairwise interactions in the least-square reduced dynamics. 
\begin{corollary}\label{cor:sufficient_condition_emergence}
 Let $s: \mathcal{V}\to \Upsilon$ be a surjection where $\mathcal{V} = \{1,...,N\}$ and $\Upsilon \in\{1,...,n\}$ are the vertex sets of the complete and reduced system respectively. If Assumptions~\ref{ass:emergence} hold, the reduction matrix $M$ has elements $M_{\mu i} = m_{\mu i}\delta_{\mu\,s(i)}$ with $m_{\mu i}\in\mathbb{R}$ for all $\mu$, $i$, and $h_i$ linearly  depends on $y_i$ for all $i$, then there are solely pairwise interactions in the reduced system. The result doesn't hold in general for nonlinear dependencies of $h_i$ over $y_i$.
\end{corollary}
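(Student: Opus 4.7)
The plan is to apply Proposition~\ref{prop:emergence} and exploit the block-diagonal support of $M$ to show that its three tensor families collapse to pairwise structures when $h_i$ is linear in $y_i$, and then to exhibit an explicit nonlinear counterexample for the final assertion.

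First I would establish the support of $M^+$. Under the assumption $M_{\mu i}=m_{\mu i}\delta_{\mu,s(i)}$, the product $(MM^\top)_{\mu\nu}=\sum_{i}M_{\mu i}M_{\nu i}$ is nonzero only when $\mu=\nu$, so $MM^\top$ is diagonal with entries $\|m_\mu\|^2$, where $m_\mu=(m_{\mu i})_{i\in s^{-1}(\mu)}$. Assuming each such row is nontrivial (a tacit nondegeneracy condition needed for $M$ to define $n$ distinct observables), $M$ has full row rank and $M^+=M^\top(MM^\top)^{-1}$. A direct computation then yields $M^+_{i\nu}=m_{\nu i}\delta_{\nu,s(i)}/\|m_\nu\|^2$, so $M^+$ inherits the mirror support: its $(i,\nu)$-entry vanishes unless $\nu=s(i)$.

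Next I would propagate this support constraint through the tensors of Proposition~\ref{prop:emergence}. For $\mathcal{D}^{(d_x+1)}_{\mu\bm{\alpha}}=\sum_{i}c_{id_x0}M_{\mu i}M^+_{i\alpha_1}\cdots M^+_{i\alpha_{d_x}}$, a nonzero summand requires simultaneously $s(i)=\mu$ and $s(i)=\alpha_p$ for every $p$, so the tensor vanishes unless $\alpha_1=\cdots=\alpha_{d_x}=\mu$, producing only the self-monomial $X_\mu^{d_x}$. The same reasoning applied to $\mathcal{T}^{(d_x+d_y+1)}_{\mu\bm{\alpha}\bm{\beta}}$ forces $\alpha_p=\mu$ for all $p$, while the inner summation over the $j_q$ leaves each $\beta_q$ free. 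Imposing now that $h_i$ is linear in $y_i$ kills all coefficients with $d_y\geq 2$, so only $d_y\in\{0,1\}$ contributes; the surviving nonlinear terms reduce to $\mathcal{T}^{(d_x+2)}_{\mu(\mu,\dots,\mu)\beta}X_\mu^{d_x}X_\beta$, together with the pairwise coupling $\mathcal{W}^{(2)}_{\mu\nu}X_\nu$ and the self-terms $\mathcal{D}^{(d_x+1)}_{\mu(\mu,\dots,\mu)}X_\mu^{d_x}$. Every surviving monomial therefore depends on at most the two distinct observables $\{\mu,\beta\}$, which is exactly the pairwise conclusion in the hypergraph sense.

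Finally, for the negative assertion, I would exhibit a counterexample by taking $h_i(x_i,y_i)=y_i^2$, which activates
\begin{equation*}
\mathcal{W}^{(3)}_{\mu\beta_1\beta_2}=\sum_{i,j_1,j_2}M_{\mu i}\,W_{ij_1}W_{ij_2}\,M^+_{j_1\beta_1}M^+_{j_2\beta_2}.
\end{equation*}
The block constraint enforces $s(i)=\mu$, $s(j_1)=\beta_1$, $s(j_2)=\beta_2$, but leaves $\beta_1,\beta_2$ free and possibly distinct from each other and from $\mu$, so for a generic $W$ the corresponding entry is nonzero and contributes a genuine third-order monomial $X_{\beta_1}X_{\beta_2}$ that cannot be rewritten as a sum of pairwise interactions. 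The only technical subtlety, rather than an obstacle, is the careful support bookkeeping in the nested sums defining $\mathcal{T}^{(d_x+d_y+1)}$ together with the mild nondegeneracy assumption on $M$; once these are handled, the corollary reduces to a combinatorial argument on the indices of the tensors supplied by Proposition~\ref{prop:emergence}.
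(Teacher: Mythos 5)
Your proposal is correct and follows essentially the same route as the paper's proof: compute the mirrored support of $M^+$, propagate the Kronecker-delta constraints through the tensors of Proposition~\ref{prop:emergence} to force $\alpha_1=\cdots=\alpha_{d_x}=\mu$ while $d_y\leq 1$ leaves only pairwise couplings, and then break the conclusion with a $d_y\geq 2$ term. The only cosmetic difference is in the final assertion, where the paper exhibits an explicit five-vertex instance with $\mathcal{W}_{3(1,2)}\neq 0$ whereas you invoke genericity of $W$ for $h_i=y_i^2$; both suffice.
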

\begin{proof}
For such reduction matrix, the elements of its Moore-Penrose pseudoinverse are, for all $\mu \in \Upsilon$ and $i \in \mathcal{V}$, $M^+_{i\mu} = m_{\mu i}\delta_{s(i)\mu}/q_{\mu}$, where $q_{\mu} = \sum_{i=1}^N m_{\mu i}^2\delta_{s(i)\mu}$. Substituting $M$ and $M^+$ in Eqs.~(\ref{eq:higher_order_D}-\ref{eq:higher_order_T}) yields
\begin{align*}
    \mathcal{D}_{\mu\bm\alpha}^{(d_x + 1)} &= \frac{1}{q_{\bm{\alpha}}}\quad\,\,\sum_{i=1}^N\quad\,\, c_{i d_x 0}m_{\mu\bm{\alpha}i}\,\delta_{\mu\,s(i)} \delta_{s(i)\,\alpha_1}\,...\, \delta_{s(i)\,\alpha_{d_x}},\\
    \mathcal{W}_{\mu\bm\beta}^{(d_y + 1)} &= \frac{1}{q_{\bm{\beta}}}\sum_{i,j_1,...,j_{d_y}=1}^N c_{i 0 d_y}m_{\mu i}\,\delta_{\mu\,s(i)} W_{ij_1}...W_{ij_{d_y}}\delta_{s(j_1)\,\beta_1}\,...\,\delta_{s(j_{d_y})\,\beta_{d_y}},\\
    \mathcal{T}_{\mu \bm\alpha \bm\beta}^{(d_x+d_y+1)} &= \frac{1}{q_{\bm{\alpha}\bm{\beta}}}\sum_{i,j_1,...,j_{d_y}=1}^N c_{id_xd_y}m_{\mu\bm{\alpha}i}\,\delta_{\mu\,s(i)} \delta_{s(i)\,\alpha_1}\,...\, \delta_{s(i)\,\alpha_{d_x}} W_{ij_1}...W_{ij_{d_y}}m_{\beta_1 j_1}...m_{\beta_{d_y}j_{d_y}}\delta_{s(j_1)\,\beta_1}\,...\,\delta_{s(j_{d_y})\,\beta_{d_y}},
\end{align*}
where $q_{\bm{\gamma}} = q_{\gamma_1}...q_{\gamma_d}$ and $m_{\mu\bm{\alpha}i} = m_{\mu i}m_{\alpha_1i}...m_{\alpha_di}$.
For $\mathcal{D}^{(d_x + 1)}$ and any dependence of $h_i$ over $y_i$, it is straightforward to observe that the only nonzero elements are such that $\mu = \alpha_1 =...=\alpha_{d_x}$. The tensor can therefore be mapped to a $n\times n$ diagonal matrix. Henceforth, we only consider $\mathcal{W}^{(d_y + 1)}$ and $\mathcal{T}^{(d_x+d_y+1)}$.

The fact that $h_i(x_i, y_i)$ linearly depends on $y_i$ for all $i$ is equivalent to setting $d_y = 1$ in its power series in Eq.~\eqref{eq:power_series}, i.e.,
\begin{equation*}
    h_i(x_i, y_i) = \sum_{d_x=0}^\infty c_{id_x1} x_i^{d_x} y_i, \qquad i\in\{1,...,N\}.
\end{equation*}
Proposition~\ref{prop:emergence} thus implies that 
\begin{align*}
    \mathcal{W}_{\mu\beta}^{(2)} = \frac{1}{q_{\beta}}\sum_{i,j=1}^Nc_{i 0 1} m_{\mu i} m_{\beta j}\delta_{\mu\,s(i)} W_{ij}\delta_{s(j)\,\beta}\,\,\,\text{and}\,\,\,
    \mathcal{T}_{\mu \bm\alpha \beta}^{(d_x+2)} = \frac{1}{q_{\bm{\alpha}\beta}}\sum_{i,j=1}^N c_{id_x1}m_{\mu\bm{\alpha} i} \delta_{\mu\,s(i)} \delta_{s(i)\,\alpha_1}\,...\, \delta_{s(i)\,\alpha_{d_x}} W_{ij} m_{\beta j}\,\delta_{s(j)\,\beta}.
\end{align*}
Clearly, $\mathcal{W}^{(2)}$ is a matrix and the nonzero elements of $\mathcal{T}^{(d_x+2)}$ are for $\mu = \alpha_1 = ... = \alpha_{d_x}$ (there are at most $n^2$ of them), which means that it can be mapped to a $n\times n$ matrix. Hence, there are solely pairwise interactions in the least-square reduced dynamics.

If $d_y > 1$ (i.e., for a nonlinear dependency of $h_i$ over $y_i$), a simple example suffices to prove the last statement. Let $d_y = 2$, $\mathcal{V} = \{1,2,3,4,5\}$, $\Upsilon = \{1,2,3\}$, and $s(1) = 1$, $s(2) = 1$, $s(3) = 2$, $s(4) = 3$, $s(5) = 3$. Moreover, consider that $W_{52}$, $W_{53}$, $m_{12}$, $m_{23}$, $m_{35}$, $c_{502}$ are not equal to zero. Then, Proposition~\ref{prop:emergence} gives
\begin{equation*}
    \mathcal{W}_{\mu (\beta_1,\beta_2)}^{(3)} = \frac{1}{q_{\mu\beta_1\beta_2}}\sum_{i=1}^N c_{i02} m_{\mu i}\,\delta_{\mu\,s(i)} \left(\sum_{j,k=1}^N m_{\beta_1 j}m_{\beta_2k} W_{ij}W_{ik}\delta_{s(j)\,\beta_1}\delta_{s(k)\,\beta_2}\right).
\end{equation*}
It only remains to prove that there can be nonzero elements for $\beta_1 \neq \beta_2$. For $\beta_1 = 1$, $\beta_2 = 2$, $j = 2$, and $k = 3$ in the parentheses of the last equation, there is a term $m_{12}m_{23}W_{i2}W_{i3}\delta_{s(2)\,1}\delta_{s(3)\,2} = m_{12}m_{23}W_{i2}W_{i3}$. Considering the whole equation for $i = 5$ and $\mu = 3$, there is a nonzero term $c_{502} m_{35}m_{12}m_{23}W_{52}W_{53}/q_{312}$. Hence, in this example, $\mathcal{W}_{3(1,2)} \neq 0$ despite the fact that the observables are defined on disjoint sets of vertices.
\end{proof}

\begin{remark}
If $n = N$, the higher-order interactions between the observables does not necessarily disappear because of the linear transformation done by $M$ on $x$. Obviously, if $M = I$, $X=x$ and $\dot{X}_{\mu} = \dot{x}_i = h_i(x_i, y_i)$ for all $i$ and there are no higher order interactions. However, the vector field $M\circ h \circ M^+$ will generally contain higher-order interactions. But of course, if $M$ has full rank $N$, it is invertible and one can transform back the dynamics of the observable $X$ (with higher-order interactions) to the dynamics in $x$ (without higher-order interactions), since $x = M^{-1}X$. 
\end{remark}

Let's now provide the details about the examples presented in Extended Data Table~1.

\begin{example}[Emergence of higher-order interactions in typical models]
\label{ex:emergence}
Proposition~\ref{prop:emergence} and Corollary~\ref{cor:emergence_polynomial} imply the following results in different fields of application.
\vspace{0.2cm}
\begin{enumerate}
    \item QMF SIS dynamics [Eq.~\eqref{eq:qmf_sis} in Example~\ref{ex:sis}]:
\begin{equation}
    \dot{X}_{\mu} = \sum_{\nu=1}^n (\mathcal{D}_{\mu\nu} + \mathcal{W}_{\mu\nu})X_{\nu} +\sum_{\nu, \kappa = 1}^n \mathcal{T}_{\mu\nu\kappa}X_{\nu}X_{\kappa},
\end{equation}
where $\mathcal{D} = -MD M^{+}$ with $D = \diag(d_1,...,d_N)$, $\mathcal{W} = \gamma MWM^+$, and
\begin{align*}
    \mathcal{T}_{\mu \nu \kappa} = - \gamma\sum_{i,j=1}^NM_{\mu i}M_{i\nu}^+ W_{ij}  M_{j\kappa}^+,
\end{align*}
with $\bm{\alpha} = (\nu) = \nu$ and $\bm{\beta} = (\kappa) = \kappa$. Interestingly, for $n = 1$, one can find the exact solution since it is a Bernoulli differential equation.

\vspace{0.5cm}

\item Microbial population dynamics [Eq.~\eqref{eq:microbial_population} in Example~\ref{ex:population}]:
\begin{equation}
    \dot{X}_{\mu} = \mathcal{C}_{\mu} + \sum_{\nu=1}^n\mathcal{D}_{\mu\nu}X_{\nu} + \sum_{\nu,\kappa=1}^n(\mathcal{D}_{\mu(\nu,\kappa)} + \mathcal{T}_{\mu\nu\kappa})X_{\nu}X_{\kappa} + \sum_{\nu, \kappa,\tau =1}^n\mathcal{D}_{\mu(\nu,\kappa,\tau)}X_{\nu}X_{\kappa}X_{\tau}
\end{equation}
where $\mathcal{D} = -dMM^{+}$ and
\begin{align*}
    \mathcal{D}_{\mu(\nu,\kappa)} = b\,\sum_{i=1}^N M_{\mu i} M_{i\nu}^+ M_{i\kappa}^+\,,\quad
    \mathcal{T}_{\mu\nu\kappa} = \gamma\,\sum_{i,j=1}^N M_{\mu i} M_{i\nu}^+ W_{ij}M_{j\kappa}^+\,,\quad
    \mathcal{D}_{\mu(\nu,\kappa,\tau)} = -c\,\sum_{i=1}^N M_{\mu i} M_{i\nu}^+ M_{i\kappa}^+M_{i\tau}^+\,.
\end{align*}

\vspace{0.5cm}

\item Kuramoto-Sakaguchi dynamics  [Eq.~\eqref{eq:kuramoto_sakaguchi} in Example~\ref{ex:kuramoto}]:
\begin{equation}
    \dot{X}_{\mu} = \sum_{\nu=1}^n (\mathcal{D}_{\mu\nu}+\mathcal{W}_{\mu\nu})X_{\nu} + \sum_{\nu, \kappa, \tau = 1}^n \mathcal{T}_{\mu (\nu, \kappa) \tau}X_\nu X_\kappa \bar{X}_\tau 
\end{equation}
where $\mathcal{D} = i MD M^{+}$ with $D = \diag(\omega_1,...,\omega_N)$, $\mathcal{W} = \gamma e^{-i\alpha} MWM^+$, and
\begin{align*}
    \mathcal{T}_{\mu (\nu, \kappa) \tau} = -\gamma e^{i\alpha}\sum_{j,k=1}^NM_{\mu j}M_{j\nu}^+ M_{j\kappa}^+ W_{jk}  M_{k\tau}^+,
\end{align*}
 with $\bm{\alpha} = (\nu, \kappa)$, $\bm{\beta} = (\tau) = \tau$. In this case, the reduced variables $X_1$,...,$X_n$ and the involved tensors are complex. 
\end{enumerate}
\end{example}
\begin{remark}
\phantom{Henri}
\begin{itemize}
    \item We found that there can be computational benefits to write the vector fields in terms of tensors (subsection~\ref{SIsubsec:numerical_efficiency}).
    \item The fact that the least-square optimal reduced vector field contains higher-order interactions raises the problem of getting mathematical insights from dynamics on hypergraphs, which recalls again the pertinence of this field in the study of complex systems. Fortunately, many recent papers address the problem, such as Ref.~\cite{FerrazdeArruda2021} or Ref.~\cite{Mulas2020}. See Ref.~\cite{Battiston2020a} for more references.
\end{itemize}
\end{remark}
In phase reduction techniques~\cite{Pietras2019}, $dN$-dimensional weakly coupled limit-cycle oscillators dynamics, where each of the $N$ oscillators is described by $d$ variables, are reduced to a $N$-dimensional dynamics of their phase. It is known that these phase reductions lead to higher-order interactions between the phases \cite{Ashwin2016a, Matheny2019, Leon2019} or, in other words, between microscopic observables (i.e., there is a phase for each oscillator, considered as the microscopic level, except in Ref.~\cite[Fifth section]{Nijholt2022}). In contrast, the higher-order interactions that we observe emerge from a large variety of dynamical systems and they are between observables that can cover different scales, which strongly depends over the choice reduction matrix. The generality of our results thus suggests that the emergence could be quite ubiquitous.

\subsection{Upper bound on the alignment error and exact dimension reduction}
\label{SIsubsec:error_bound}
In this subsection, we evaluate the impact of choosing the least-square optimal vector field in Theorem~\ref{thm:least_square_optimal_vector_field} on the alignment error $\mathcal{E}(x)$ in $\mathbb{R}^n$. In particular, we will see that obtaining an upper bound on $\mathcal{E}(x)$ is useful to find a reasonable choice of reduction matrix $M$. More importantly, to determine more quantitatively the repercussions of the low-rank hypothesis on the dynamics, we aim at estimating the error caused by the optimal reduced dynamics as a function of $n$. Let us start by listing the assumptions that will be made throughout this subsection. 
\begin{assumptions}\label{ass:error_bound}

\phantom{Henri}

\begin{enumerate}
    \itemsep1em
    \item[(1)] 
    The $N$-dimensional complete dynamics on a network defined by the real $N\times N$ weight matrix $W$ is
    \begin{align}
    \dot{x} = g(x, y) = \begin{pmatrix}g_1(x_1,...,x_N, y_1,..., y_N) \\\vdots\\g_N(x_1,...,x_N, y_1,..., y_N) \end{pmatrix},
    \end{align}
    where $x:t \mapsto \mathbb{R}^N$, $y = Wx$, and $g: \mathbb{R}^N\times \mathbb{R}^N \to  \mathbb{R}^N$ is a continuously differentiable function. 

    \item[(2)] The $n$-dimensional reduced dynamics ($n < N$) is the least-square optimal dynamics of Theorem~\ref{thm:least_square_optimal_vector_field}, i.e.,
    \begin{align}
    \dot{X} = Mg(M^+X, WM^+X)\,.
    \end{align}
    
    \item[(3)] The reduction matrix $M$ is the truncated left singular vector matrix $V_n^\top$ of $W$.
\end{enumerate}
\end{assumptions}
Note the first assumption is less restrictive than the first one of the Assumptions~\ref{ass:emergence}. We chose $n<N$ to ensure dimension reduction and also, because it is obvious to show that we can have a zero alignment error when $n = N = \rank{M}$. In this case, the ``reduced" dynamics is not reduced anymore, but it is still a linear transformation of the complete dynamics. 
\begin{lemma}\label{lem:zero_alignment_error_n=N}
 If conditions~(1) and (2) in Assumptions~\ref{ass:error_bound} hold with $n = N = \rank{M}$, the alignment error is 0.
\end{lemma}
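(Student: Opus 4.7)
The plan is to exploit the fact that when $M$ is a square matrix of full rank $N$, its Moore-Penrose pseudoinverse coincides with its ordinary inverse, so the projector $P = M^+M$ reduces to the identity. This is the only real content needed; everything else is bookkeeping to feed this identity into the definition of the alignment error $\mathcal{E}(x)$.

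First, I would record the hypotheses: $M$ is $n \times N = N \times N$ and $\rank{M} = N$, so $M$ is invertible and $M^+ = M^{-1}$. Consequently $M^+M = I_N$, and for any $x \in \mathbb{R}^N$ we have $M^+Mx = x$ and therefore $WM^+Mx = Wx = y$. Next, I would specialize the least-square reduced vector field of condition (2) to this setting: $F(X) := Mg(M^+X, WM^+X)$. Evaluating at $X = Mx$ gives
\begin{equation*}
F(Mx) = Mg(M^+Mx, WM^+Mx) = Mg(x, Wx) = Mf(x),
\end{equation*}
where $f(x) := g(x, Wx)$ is the complete vector field of condition (1).

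Finally, plugging this into the definition of the alignment error in $\mathbb{R}^n$ from Eq.~\eqref{eq:alignment_error_n},
\begin{equation*}
\mathcal{E}(x) = \frac{1}{\sqrt{n}}\|Mf(x) - F(Mx)\| = \frac{1}{\sqrt{n}}\|Mf(x) - Mf(x)\| = 0,
\end{equation*}
for every $x \in \mathbb{R}^N$, which is the claim.

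There is no real obstacle here: the whole argument hinges on the single algebraic fact $M^+M = I$ when $M$ is square and of full rank, which in turn is a direct consequence of the defining properties of the Moore-Penrose pseudoinverse. The lemma is best understood as a sanity check confirming that condition~(3) of Assumptions~\ref{ass:error_bound} (the restriction $n < N$) was imposed to guarantee a genuine reduction rather than for any analytic reason, and that the least-square optimal vector field of Theorem~\ref{thm:least_square_optimal_vector_field} reproduces the full dynamics (up to the invertible linear change of coordinates $X = Mx$) in the non-reductive limit.
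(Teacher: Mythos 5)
Your proof is correct and follows exactly the paper's own argument: both rest on the single observation that a full-rank square $M$ satisfies $M^+M = M^{-1}M = I$, so the reduced vector field evaluated at $X = Mx$ coincides with $Mf(x)$ and the alignment error vanishes. Nothing to add.
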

\begin{proof}
If $M$ has rank $N$, the pseudoinverse of $M$ is its inverse and the related projector is $P = M^+M = M^{-1}M = I$. Hence, the alignment error in $\mathbb{R}^n$ is obviously zero: $\mathcal{E}(x) = \|M[g(x, Wx) - g(Px, WPx)]\|/\sqrt{n} = 0$.
\end{proof}

Let us now turn to one of the important results of the paper. The next theorem demonstrates that the alignment error between a high-dimensional vector field depending on a network and its optimally reduced version is intrinsically related to the network's singular value profile: when the singular values $\sigma_n$ decrease rapidly with $n$, so does the alignment error. Therefore, a low-rank hypothesis induces a low-dimension hypothesis for dynamical systems.
\begin{theorem}\label{thm:upper_bound_rapid_decrease}
If all conditions of Assumptions~\ref{ass:error_bound} hold, the alignment error in $\mathbb{R}^n$ at $x \in \mathbb{R}^N$ is upper-bounded as
\begin{equation}\label{eq:upper_bound_rapid_decrease}
    \mathcal{E}(x) \leq \frac{1}{\sqrt{n}}\Big[\|V_n^{\top}J_x(x',y')(I-V_nV_n^\top)x\| + \sigma_{n+1}\|V_n^{\top}J_y(x',y')\|_2\|x\|\Big],
\end{equation}
where $y' = Wx'$ with $x'$ being some point between $x$ and $V_nV_n^\top x$, $\sigma_i$ is the $i$-th singular value of $W$, and $J_x(x',y')$, $J_y(x',y')$ are the Jacobian matrices of $g$ with derivatives according to the vectors $x$ and $y$ respectively. Moreover, for any $x$ not at the origin of $\mathbb{R}^N$, the following upper bound on the relative alignment error holds:
\begin{equation}\label{eq:upper_bound_rapid_decrease_relative_theo}
\frac{\mathcal{E}(x)}{\|x\|} \leq \frac{1}{\sqrt{n}}\Big[\alpha(x',y')+ \sigma_{n+1}\beta(x',y')\Big]\,,
\end{equation}
where $\alpha(x',y')=\sigma_1(J_x(x',y'))$ and $\beta(x',y')=\sigma_1(J_y(x',y'))$.
\end{theorem}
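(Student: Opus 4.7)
The plan is to exploit the three assumptions in sequence: Assumption (3) fixes $M=V_n^{\top}$ so that $M^{+}=V_n$ (since $V_n$ has orthonormal columns) and the associated projector is $P = M^{+}M = V_nV_n^{\top}$; Assumption (2) then tells us
\[
\sqrt{n}\,\mathcal{E}(x)=\bigl\|V_n^{\top}\bigl[g(x,Wx)-g(Px,WPx)\bigr]\bigr\|.
\]
So everything boils down to bounding the $V_n^{\top}$-projection of the difference $g(x,Wx)-g(Px,WPx)$.

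To bound this difference, I would parametrize the segment from $Px$ to $x$ by $x_t = Px + t(I-P)x$ with $t\in[0,1]$, and likewise $y_t = Wx_t$. Applying the fundamental theorem of calculus to the continuously differentiable curve $t\mapsto g(x_t,y_t)$ gives
\[
g(x,Wx)-g(Px,WPx) = \int_0^1 \Bigl[J_x(x_t,y_t)(I-P)x + J_y(x_t,y_t)W(I-P)x\Bigr]\,dt.
\]
Multiplying on the left by $V_n^{\top}$, taking norms, pulling them inside the integral by the triangle inequality, and then invoking a mean-value-for-integrals argument, I extract a single representative point $x'$ on the segment (with $y'=Wx'$) such that
\[
\bigl\|V_n^{\top}\bigl[g(x,Wx)-g(Px,WPx)\bigr]\bigr\|\leq \bigl\|V_n^{\top}J_x(x',y')(I-P)x\bigr\| + \bigl\|V_n^{\top}J_y(x',y')W(I-P)x\bigr\|.
\]

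The key step is now to handle the second term: submultiplicativity of the spectral norm gives
\[
\bigl\|V_n^{\top}J_y(x',y')W(I-P)x\bigr\| \leq \bigl\|V_n^{\top}J_y(x',y')\bigr\|_2\,\bigl\|W(I-V_nV_n^{\top})\bigr\|_2\,\|x\|,
\]
and the middle factor is exactly the quantity minimized in problem~\eqref{eq:P2} of Theorem~\ref{thm:optimal_projection_general} (item 6), whose optimizer is $M=V_n^{\top}$ with minimum value $\sigma_{n+1}$. This is precisely where choosing $M$ as the top-$n$ right singular vectors of $W$ pays off, producing the factor $\sigma_{n+1}$ in inequality~\eqref{eq:upper_bound_rapid_decrease}. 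Dividing by $\sqrt{n}$ and keeping $P = V_nV_n^{\top}$ explicit gives the first claimed bound.

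For the second inequality~\eqref{eq:upper_bound_rapid_decrease_relative_theo}, I would further loosen the first term: $\|V_n^{\top}J_x(x',y')(I-P)x\|\leq \|V_n^{\top}\|_2\|J_x(x',y')\|_2\|(I-P)\|_2\|x\|$, and use that $V_n^{\top}$ has orthonormal rows ($\|V_n^{\top}\|_2=1$), that $I-P$ is an orthogonal projector ($\|I-P\|_2\leq 1$), and that $\|J_x(x',y')\|_2=\sigma_1(J_x(x',y'))=\alpha(x',y')$. The same reasoning applied to the second term yields the $\sigma_1(J_y(x',y'))=\beta(x',y')$ factor, whence dividing by $\|x\|$ closes the argument. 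The main obstacle is the careful justification of the mean-value extraction of a single $x'$ for a vector-valued integrand; rigorously, one either keeps the integral form and bounds by the supremum over $t\in[0,1]$ (continuity and compactness then yield an explicit $x'$), or splits the bound componentwise before recombining—both work, but the cleanest exposition uses the integral/supremum route.
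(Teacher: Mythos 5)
Your proposal is correct and follows essentially the same route as the paper's proof: expand $g(x,Wx)-g(Px,WPx)$ along the segment from $Px$ to $x$, apply the triangle inequality and submultiplicativity of the spectral norm, and invoke Theorem~\ref{thm:optimal_projection_general} (problem~\eqref{eq:P2}) to identify $\|W(I-V_nV_n^\top)\|_2=\sigma_{n+1}$; the relative bound then follows from $\|V_n^\top\|_2=1$ and $\|I-P\|_2=1$ exactly as you describe. The one place where you genuinely diverge is the mean-value step, and there your version is the more careful one: the paper invokes Taylor's theorem with $0$-th order Lagrange remainder directly for the vector-valued map $u(x)=g(x,Wx)$, i.e.\ $u(x)-u(\tilde x)=Du(x')(x-\tilde x)$ for a single $x'$, which in that exact-equality form fails for general vector-valued functions (each component has its own intermediate point). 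Your integral formulation via the fundamental theorem of calculus is unconditionally valid, and your extraction of a single $x'$ goes through provided you bound $\|\int_0^1[A(t)+B(t)]\,dt\|\leq\sup_{t}\bigl[\|A(t)\|+\|B(t)\|\bigr]$ and take the maximizing $t^*$ of the \emph{sum} (continuity on the compact interval $[0,1]$), rather than taking separate suprema for the two terms, which would yield two different intermediate points. With that one clarification your argument is complete, and it in fact repairs the weakest link in the paper's own exposition.
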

\begin{proof}
From the definition of the alignment error and the first two conditions in Assumptions~\ref{ass:error_bound}, we have
\begin{align*}
    \mathcal{E}(x) = \frac{1}{\sqrt{n}}\|M[g(x, y) - g(\tilde{x}, \tilde{y})]\|,
\end{align*}
where $y=Wx$, $\tilde{x} = Px$, and $\tilde{y} = WPx$ with $P = M^+M$. Let's define the function
\begin{equation}\label{eq:ux}
    u(x) = g(x, \ell(x)),
\end{equation}
with the linear function $\ell(x) = Wx$. Since $g$ is a continuously differentiable function, $u$ is also continuously differentiable and Taylor's theorem with 0-th order Lagrange remainder guarantees that
\begin{equation}\label{eq:taylor_u}
    u(x) = u(\tilde{x}) + Du(x')(x - \tilde{x})
\end{equation}
for some $x'$ between $x$ and $Px$ and where $Du(x')$ is the total derivative of $u$. From Eq.~\eqref{eq:ux} and the chain rule for the total derivative, we have (abusing the matrix notation)
\begin{align}\label{eq:total_derivative_u}
    Du(x') = Dg(x', y') = \frac{\partial g}{\partial x}(x', y') +  \frac{\partial g}{\partial y}(x', y')\frac{\partial \ell}{\partial x}(x') = J_x(x', y') +  J_y(x', y')W,
\end{align}
where $y' = \ell(x') = Wx'$, the elements of the Jacobian matrices $J_x(x', y'), J_y(x', y')$ are respectively
\begin{align*}
    [J_x(x',y')]_{ij} = \frac{\partial g_i(x, y)}{\partial x_j}\Bigg|_{(x,y) = (x', y')}\,\,\,,\,\,\quad [J_y(x',y')]_{ij} = \frac{\partial g_i(x, y)}{\partial y_j}\Bigg|_{(x,y) = (x', y')},
\end{align*}
and we have used the fact that $W$ is the Jacobian matrix of $\ell$. The Taylor expansion~\eqref{eq:taylor_u} of $u$ with $N$ variables $(x_i)_{i=1}^N$ for some $x'$ therefore implies a Taylor expansion for $g$ with 2$N$ variables $(x_i, y_i)_{i=1}^N$ for some $x'$ with $y' = Wx'$:
\begin{equation}\label{eq:taylor_g}
    g(x, y) = g(\tilde{x}, \tilde{y}) + J_x(x',y')(x - \tilde{x}) + J_y(x',y')W(x - \tilde{x}).
\end{equation}
The alignment error becomes
\begin{equation*}
    \mathcal{E}(x) = \frac{1}{\sqrt{n}}\|M[J_x(x',y')(I-P)x + J_y(x',y')W(I-P)x]\|
\end{equation*}
and the triangle inequality gives
\begin{equation*}
    \mathcal{E}(x) \leq \frac{1}{\sqrt{n}}\big[\|MJ_x(x',y')(I-P)x\| + \|MJ_y(x',y')W(I-P)x\|\big].
\end{equation*}
Moreover, the induced spectral norm for the second term yields
\begin{equation*}
    \mathcal{E}(x) \leq \frac{1}{\sqrt{n}}\big[\|MJ_x(x',y')(I-P)x\| + \|MJ_y(x',y')W(I-P)\|_2\|x\|\big]
\end{equation*}
and the submultiplicativity of the spectral norm implies
\begin{equation}\label{eq:upper_bound_M}
    \mathcal{E}(x) \leq \frac{1}{\sqrt{n}}\big[\|MJ_x(x',y')(I-P)x\| + \|W(I-P)\|_2\|MJ_y(x',y')\|_2\|x\|\big].
\end{equation}
From condition (3) of Assumption~\ref{ass:error_bound}, we have $M = V_n^\top$ which is, by Theorem~\ref{thm:optimal_projection_general}, the optimal solution to the minimization of $\|W(I-P)\|_2$ with error $\sigma_{n+1}$ (square root of the problem~\eqref{eq:P2} and the error for the spectral norm in Eq.~\eqref{eq:optimal_projection_errors}). The second inequality is deduced as follows:
\begin{align}\label{eq:upper_bound_rapid_decrease_relative}
     \frac{\mathcal{E}(x)}{\|x\|} 
        &\leq \frac{1}{\sqrt{n}}\Big[\|V_n^{\top}J_x(x',y')(I-V_nV_n^\top)\|_2 + \sigma_{n+1}\|V_n^{\top}J_y(x',y')\|_2\Big]\\
        &\leq \frac{1}{\sqrt{n}}\Big[\|V_n^{\top}J_x(x',y')\|_2+ \sigma_{n+1}\|V_n^{\top}J_y(x',y')\|_2\Big]\\
        &\leq \frac{1}{\sqrt{n}}\Big[\|J_x(x',y')\|_2+ \sigma_{n+1}\|J_y(x',y')\|_2\Big]\,,
    \end{align}
where we have used successively the submultiplicativity of the spectral norm and identities $\|(I-V_nV_n^\top)\|_2=1$, $\|V_n^{\top}\|_2 = 1$. 
The desired upper bound is found upon noticing that $\|J_x(x',y')\|_2=\sigma_1(J_x(x',y'))$ and $\|J_y(x',y')\|_2=\sigma_1(J_y(x',y'))$.  
\end{proof}

\begin{remark}
\phantom{Henri}
\begin{itemize}
    \item The dynamics used in the paper have the less general form (compared to condition (1) in Assumptions~\ref{ass:error_bound}) 
    \begin{align}
        \dot{x}_i = h_i(x_i, y_i),\quad i\in\{1,...,N\},
    \end{align}
    where, for all $i$, $x_i:t \mapsto \mathbb{R}^N$, $y_i = \sum_{j=1}^N W_{ij}x_j$ and $h_i:\mathbb{R}^2 \mapsto \mathbb{R}$. This implies that for all dynamics considered in the paper, the Jacobian matrices $J_x(x',y')$ and $J_y(x',y')$ are diagonal.
    \item Even if the effective ranks of real networks are low compared to $N$, they are generally larger than one, meaning that $\sigma_{n+1}$ is not negligible when $n=1$. According to our analysis, we therefore do not expect one-dimensional reductions~\cite{Gao2016_SI, Laurence2019_SI, Kundu2022_SI} to yield accurate results in general, which is consistent with numerical observations made in previous studies \cite{Tu2017_SI, Jiang2018_SI, Laurence2019_SI, Thibeault2020_SI, Vegue2023_SI}.
    Some very simple synthetic networks, however, such as those generated by the Erd\"os-R\'enyi and Chung-Lu models, typically have a very small second singular value, suggesting that accurate one-dimensional reductions are possible for those cases.   
    \item Using the induced spectral norm in the upper bound introduces a factor of about $\sqrt{N}/2$ when sampling $x$ uniformly between 0 and 1. This is one of the main reasons why the bound is not always tight. But our focus is not on magnitude of the error or the tightness of the bound, but on the decrease of the error. The extra $\sqrt{N}$ is removed by considering the relative alignment error.
    \item The relative alignment error $\mathcal{E}(x)/\|x\|$ is upper-bounded by purely spectral factors, which can be classified into two types: (1) those related to the Jacobians and thus depending upon the dynamics, and (2) $\sigma_{n+1}$ that only depends on the network. The second type is universal in the sense that it applies to all dynamics. Contrary to what is observed with $\sigma_{n+1}$ in real networks, the factors $\alpha(x',y')$ and $\beta(x',y')$ do not necessarily decrease as $n$ increases.
    \item The relative alignment error bound above can be improved, but it has a price. Indeed, the first steps of the theorem, the induced spectral norm and the submultiplicativity lead to $\mathcal{E}(x)/\|x\| \leq \|M\|_2\|Du(x')(I - P)\|_2/\sqrt{n}$. From there, one could consider that $M$ is dependent over $x'$ and set $M := R_n^\top(x')$, the truncated right singular vector matrix of the Jacobian matrix $Du(x')$. Again, this choice minimizes $\|Du(x')(I - P)\|_2$ from Theorem~\ref{thm:optimal_projection_general} and one has the simple upper bound $\mathcal{E}(x)/\|x\| \leq \gamma_{n+1}(x')/\sqrt{n}$, where $\gamma_{n+1}(x')$ is the $(n+1)$-th singular value of $Du(x')$ depending on $x'$. The dependence of the reduction matrix over $x'$ is, however, not desired since we want the reduced dynamics to be independent of the $N$-dimensional dynamics.
\end{itemize}
\end{remark}

As a byproduct of the last theorem, the fact that the term $\|W(I - M^+M)\|_2$ appears in the upper bound in Eq.~\eqref{eq:upper_bound_M} of the alignment error suggests a reasonable choice of reduction matrix, $M = V_n^\top$, which minimizes $\|W(I - M^+M)\|_2$ from Theorem~\ref{thm:optimal_projection_general}. Of course, this doesn't mean that it is the reduction matrix that minimizes the alignment error in $\mathbb{R}^n$ (which is another problem in itself), but it provides a reduction matrix that is independent of position $x$ and time $t$: it solely depends on the structure of the system. The theorem also provides a criterion for exact dimension reduction or in images, perfect alignment of the complete and reduced vector fields, as shown in the following corollary.
\begin{corollary}\label{cor:exact_rank}
 If all conditions of Assumptions~\ref{ass:error_bound} hold, $J_x(x', y') = aI$ for some real constant $a$, and $n = \rank{W}$, then the alignment error $\mathcal{E}(x)$ vanishes for all $x$.
\end{corollary}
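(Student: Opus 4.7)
The plan is to apply Theorem~\ref{thm:upper_bound_rapid_decrease} directly and show that both terms in the upper bound~\eqref{eq:upper_bound_rapid_decrease} vanish under the stated hypotheses. Since the two terms are nonnegative, a zero upper bound forces $\mathcal{E}(x) = 0$ for every $x \in \mathbb{R}^N$.

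First, I would dispose of the second term. By hypothesis $n = \rank(W) = r$, so $\sigma_{n+1} = \sigma_{r+1} = 0$ by Theorem~\ref{thm:SVD} (the singular values strictly past the rank are zero). The factor $\|V_n^\top J_y(x',y')\|_2 \|x\|$ is finite under the standing assumption that $g$ is continuously differentiable, hence the whole second summand is zero regardless of the position $x$ or the intermediate point $x'$.

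Next I would handle the first term using the assumption $J_x(x',y') = aI$. Substituting into $V_n^\top J_x(x',y')(I - V_nV_n^\top)x$ and pulling out the scalar $a$ gives $a\, V_n^\top(I - V_nV_n^\top)x = a(V_n^\top - V_n^\top V_n V_n^\top)x$. The orthonormality of the columns of $V_n$ yields $V_n^\top V_n = I_n$, so $V_n^\top V_n V_n^\top = V_n^\top$ and the parenthesis collapses to $0$. Thus the first summand is also zero for every $x$.

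Combining the two vanishing terms in the inequality of Theorem~\ref{thm:upper_bound_rapid_decrease}, $\mathcal{E}(x) \leq 0$, and since $\mathcal{E}(x) \geq 0$ by definition~\eqref{eq:alignment_error_n}, we conclude $\mathcal{E}(x) = 0$ for all $x \in \mathbb{R}^N$. There is no serious obstacle here; the only subtlety worth pointing out is that the assumption $J_x = aI$ must hold at the Taylor-intermediate point $x'$ (which depends on $x$), but a constant multiple of the identity trivially satisfies this at every point, so the argument goes through uniformly in $x$.
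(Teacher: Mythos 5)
Your proof is correct and takes essentially the same route as the paper: apply Theorem~\ref{thm:upper_bound_rapid_decrease} and show both terms of the bound vanish, the second because $\sigma_{n+1}=0$ when $n=\rank W$ and the first because $J_x(x',y')=aI$ collapses the projector factor. The only cosmetic difference is that the paper annihilates the first term via the pseudoinverse identity $MM^+M=M$ (which works for an arbitrary reduction matrix $M$), whereas you use the orthonormality relation $V_n^\top V_n = I_n$; these are equivalent under Assumption~(3) of Assumptions~\ref{ass:error_bound}.
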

\begin{proof}
Setting $J_x(x', y') = aI$ eliminates the first term of the bound in Theorem~\ref{thm:upper_bound_rapid_decrease} for any $M$: 
\begin{equation}
    \|MJ_x(x',y')(I-P)x\| = a\|(M-MM^+M)x\| = 0, 
\end{equation}
since $MM^+M = M$ according to the defining properties of the Moore-Penrose pseudoinverse. Finally, if $n = \rank{W}$, then $\sigma_{n+1} = 0$, which cancels out the second term of the bound.
\end{proof}

Let $s$ be a vector of $N$ functions $s_i:\mathbb{R} \to \mathbb{R}$. Let $W$ be a $N\times N$ matrix of rank $r < N$ with compact SVD $U_r\Sigma_rV_r^\top$. If  $M = V_r^\top$, then by Corollary~\ref{cor:exact_rank}, the dynamics
\begin{equation}\label{eq:general_class_Jx_aI}
    \dot{x} = -d\,x + s(Wx)
\end{equation}
can be exactly reduced to the $r$-dimensional reduced dynamics
\begin{align}\label{eq:rank_reduced}
    \dot{X} &= -d\,X + V_r^\top s(U_r\Sigma_r X),\qquad X = V_r^\top x\,.
\end{align}
\begin{example}\label{ex:linear_dynamics}
    The simplest example is the linear dynamics 
    \begin{equation}
        \dot{x} = Wx\,,
    \end{equation}
    where $W$ is not restricted to be the weight matrix in itself and the exact reduction is
    \begin{align}
    \dot{X} &= V_r^\top U_r\Sigma_r X,\qquad X = V_r^\top x\,. 
\end{align}
\end{example}
\begin{example}\label{ex:exact_rnn}
A noteworthy example of dynamics of the form~\eqref{eq:general_class_Jx_aI} is the RNN defined by Eqs.~\eqref{eq:rnn}. Therefore, when $\rank{W} = r < N$ and $d_i = d$ for all $i\in\{1,...,N\}$, the RNN  exactly reduces to the $r$-dimensional dynamics
\begin{align}\label{eq:rank_reduced_rnn_SI}
    \dot{X} &= -dX + V_r^\top\tanh(U_r\Sigma_r X + c),\qquad X = V_r^\top x\,,
\end{align}
where  $U_r$, $\Sigma_r$, $V_r^\top$ form the compact SVD of the neural network $W$. 

The RNN used in reservoir computing~\cite{Lukosevicius2009} also involves a dynamics of the form~\eqref{eq:rnn} (with, of course, the important output equation $y = W^{(\text{out})}x$). It can thus be exactly reduced too. Note, however, that the learned matrix $W$ is generally of full rank, but it can have a low effective rank. By shrinking the singular values (with optimal shrinkage~\cite{Gavish2017} for instance) of $W$, one can get a new RNN and then apply the last result to have a low-dimensional RNN. In other words, one can truncate the neural network $W$ at some rank $k$ ---yielding the rank $k$ matrix $W_k$---in such a way that there is no cost at reducing to $k$ equations the $N$-dimensional RNN depending on $W_k$ (except the preliminary cost of truncating $W$).
\end{example}

\begin{example}\label{ex:exact_wilson_cowan}
The Wilson-Cowan dynamics in Eq.~\eqref{eq:wilson_cowan} with $a = 0$ and $d_j = d$ for all $j\in\{1,...,N\}$ is essentially equivalent, from a mathematical perspective, to the RNN of the last example. It can thus be exactly reduced to the dynamics
\begin{align}\label{eq:rank_reduced_wilson_cowan_SI}
    \dot{X} &= -dX + V_r^\top\mathcal{S}[b(\gamma U_r\Sigma_r X - c)],\qquad X = V_r^\top x\,.
\end{align}
In Fig.~\ref{fig:bifurcation_wilson_cowan_to_exact}, we illustrate this result for a real connectome by comparing the global observable at equilibrium (see subsection~\ref{SIsubsec:global_observable}) of the complete and reduced dynamics $\dot{X} = -dX + V_n^\top\mathcal{S}[b(\gamma U_n\Sigma_n X - c)]$ with $X = V_n^\top x$ and different values of $n$.
\end{example}

\begin{figure}
    \centering
    \includegraphics[width=1\linewidth]{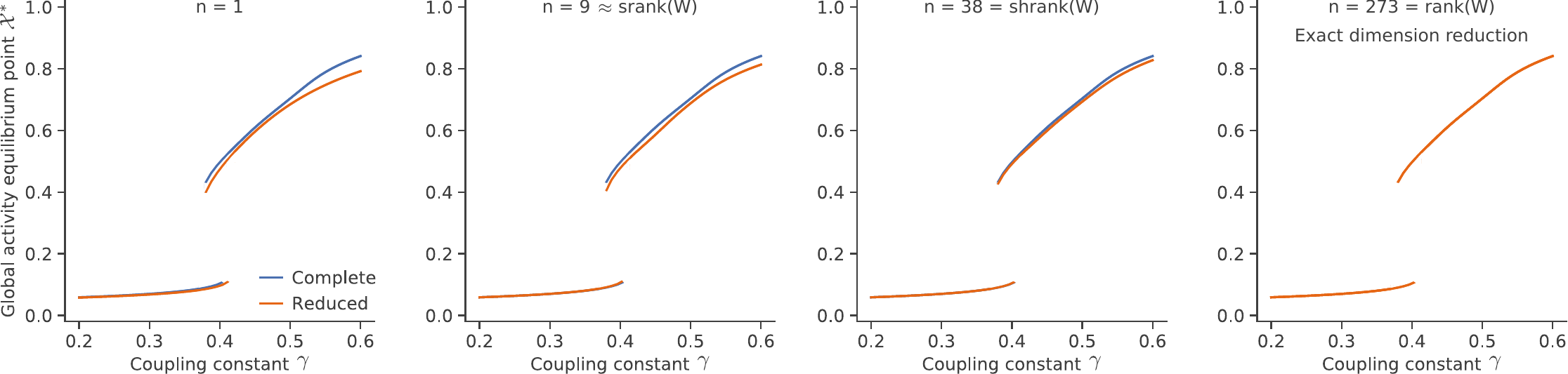}
    \caption{Comparison between the global observable at equilibrium $\mathcal{X}^*$ of the complete (blue) and reduced (orange) Wilson-Cowan dynamics on the (unsigned) \textit{C. elegans} connectomes ($N = 279$, $\rank(W) = 273$) vs. the global coupling $\gamma$ for $n \in\{1, 9, 38, 273\}$. Parameters: $d = 1$, $a = 0$, $b = 1$, $c = 3$. For the weight matrix, see the \href{https://github.com/VinceThi/low-rank-hypothesis-complex-systems}{GitHub repository}, module get\_real\_network.py, function get\_connectome\_weight\_matrix (graph\_name=``celegans"). The effective ranks of this connectome with weight matrix $W$ are $\mathrm{srank}(W) \approx 9$, $\mathrm{thrank}(W) = 27$, $\mathrm{elbow}(W) = 31$, $\mathrm{nrank}(W) \approx 36$, $\mathrm{shrank}(W) = 38$, $\mathrm{energy}(W) = 106$, and $\mathrm{erank}(W) \approx 192$.}
    \label{fig:bifurcation_wilson_cowan_to_exact}
\end{figure}

\begin{example}\label{ex:exact_TLN}
The threshold-linear model in Eq.~\eqref{eq:TLN} with $\rank{W} = r < N$ can also be exactly reduced (despite the discontinuity in the vector field) to the $r$-dimensional reduced dynamics
\begin{align}\label{eq:rank_reduced_rnn}
    \dot{X} &= -X + V_r^\top\left[U_r\Sigma_r X + b\right]_+,\qquad X = V_r^\top x\,,
\end{align}
where $X = V_r^\top x$ and $U_r$, $\Sigma_r$, $V_r^\top$ form the compact SVD of the neural network $W$. To apply Corollary~\ref{cor:exact_rank}, one can simply replace $[\,]_+$ by the softplus function to satisfy the condition 1 of Assumptions~\ref{ass:error_bound}.
\end{example}
In the case of a linear dynamics, not only the dimension reduction is exact for $n = r$ (Example~\ref{ex:linear_dynamics}), but the upper bound~\ref{eq:upper_bound_rapid_decrease_relative_theo} on the relative alignment error in Theorem~\ref{thm:upper_bound_rapid_decrease} takes a very simple form.
\begin{corollary}\label{cor:upper_bound_linear_system}
    If the dynamics is a linear system $\dot{x} = Wx$ and Assumptions~\ref{ass:error_bound} (2) and (3) are satisfied, the relative alignment error in $\mathbb{R}^n$ at $x\in\mathbb{R}^N$ is 
    \begin{equation}\label{eq:upper_bound_lineear_system}
        \frac{\mathcal{E}(x)}{\|x\|} \leq \frac{\sigma_{n+1}}{\sqrt{n}},
    \end{equation}
    where $\sigma_{i}$ is the $i$-th singular value of $W$.
\end{corollary}
\begin{proof}
    It is clearly seen by following the steps of Theorem~\ref{thm:upper_bound_rapid_decrease}. Indeed, for the linear case, the reduced dynamics is 
    \begin{equation}
        \dot{X} = WM^+X
    \end{equation}
    and the alignment error is
    \begin{equation}
        \mathcal{E}(x) = \frac{1}{\sqrt{n}}\|M(Wx - WM^+X)\| = \frac{1}{\sqrt{n}}\|MW(I - P)x\|,
    \end{equation}
    where $P = M^+M$. The induced spectral norm and the submultiplicativity imply that
    \begin{equation}
        \mathcal{E}(x) \leq \frac{1}{\sqrt{n}}\|M\|_2\|W(I - P)\|_2\|x\|.
    \end{equation}
    Assumption~\ref{ass:error_bound} (3) then leads to
    \begin{equation}
        \mathcal{E}(x) \leq \frac{1}{\sqrt{n}}\|W(I - V_nV_n^\top)\|_2\|x\|
    \end{equation}
    and using Theorem~\ref{thm:optimal_projection_general} gives the desired result.
\end{proof}
For a linear system, the relative alignment error is solely dependent on the $(n+1)$-th singular values and a scaling factor $1/\sqrt{n}$. As a consequence, a rapid decrease of the singular values of $W$ directly induces a rapid decrease of the alignment error.

\subsection{Computation of the upper bound on the alignment error}
\label{SIsubsec:evaluate_bound}
The bound in Theorem~\ref{thm:upper_bound_rapid_decrease} depends on some real point $x'$ which is unknown \textit{a priori}. Yet, according to Eqs.~(\ref{eq:taylor_u}-\ref{eq:taylor_g}), it is possible to find $x'$ analytically (sometimes exactly) or numerically from
\begin{equation}\label{eq:taylor_uG}
    \left[G_x(x') + G_y(x')W\right](I-P)x = u(x) - u(Px)\,,
\end{equation}
where $G_x(x') = J_x(x',y')$, and $G_y(x') = J_y(x',y')$. Below, we give four examples, one for each dynamics used in the paper to produce Fig.~\ref{fig:error_vector_fields}, from the simplest to the more complex case.
\begin{example}[Epidemiological]
For the QMF SIS dynamics in Eq.~\eqref{eq:qmf_sis}, we can exactly find $x'$. We have
\begin{align}
    u(x) &= -Dx + \gamma(1-x)\circ Wx\\
    G_x(x') &= -D - \gamma\diag(Wx')\\
    G_y(x') &= \gamma[I - \diag(x')],
\end{align}
where $D = \diag(d_1,...,d_N)$. By substituting the expressions above in Eq.~\eqref{eq:taylor_uG} and by canceling some terms, we have
\begin{equation}
    Wx'\circ \chi + x' \circ W\chi = x\circ Wx - Px\circ WPx.
\end{equation}
where $\chi = (I-P)x$.
The commutativity of the Hadamard product implies
\begin{equation}
    \chi \circ Wx' + W\chi \circ x'  = x\circ Wx - Px\circ WPx,
\end{equation}
which can be written as a linear equation in $x'$, i.e.,
\begin{equation}
    [\diag(\chi) W + \diag(W\chi)]x' = x\circ Wx - Px\circ WPx,
\end{equation}
If the matrix $\diag(\chi) W + \diag(W\chi)$ is invertible (which is true in general), then the unique solution to the linear system is 
\begin{equation}\label{eq:xp_sis}
    x' = [\diag(\chi) W + \diag(W\chi)]^{-1}(x\circ Wx - Px\circ WPx).
\end{equation}
In rare cases, if the matrix $\diag(\chi) W + \diag(W\chi)$ is singular, then one can use the least-square optimal solution by using the pseudo-inverse. That being said, using Eq.~\eqref{eq:xp_sis}, one can compute exactly the upper bound on the alignment error for the QMF SIS. In Fig.~\ref{fig:error_vector_fields}a, we compute the bound for the network of \href{https://networks.skewed.de/net/sp_high_school}{high school contacts} from Netzschleuder. For each $n$ and each of the 1000 samples of $x$ with elements between 0 and 1 (the dynamics is bounded between 0 and 1), the diagonal elements in $D$ are sampled from a Gaussian probability density function with mean 1 and standard deviation 0.001 and the coupling constant $\gamma$ is sampled from a uniform probability density function between 0.01 and 4. In this parameter region, there is a transcritical bifurcation for the global observable defined in subsection~\ref{SIsubsec:global_observable} (see Fig.~\ref{fig:error_vector_fields}e).
\end{example}
It is sometimes unnecessary to find $x'$ in itself to compute the bound if the Jacobian matrices solely depend on a function of $x'$, as shown in the next example.
\begin{example}[RNN]
Another way to write the RNN (with no current) is 
\begin{equation}
    \dot{x}_i = -d_ix_i + \tanh(\gamma \textstyle\sum_{i=1}^N W_{ij}x_j) = -d_ix_i+ 2\mathcal{S}(2\gamma \textstyle\sum_{i=1}^N W_{ij}x_j) - 1.
\end{equation}
For the RNN dynamics, we have
\begin{align}
    u(x) &= -Dx + \tanh(\gamma Wx) = -Dx + 2\mathcal{S}(2\gamma Wx) - 1\\
    G_x(x') &= -D\\
    G_y(x') &= 4\gamma \diag\left[\mathcal{S}(2\gamma Wx')[1 - \mathcal{S}(2\gamma Wx')]\right]
\end{align}
where $D = \diag(d_1,...,d_N)$, $\mathcal{S}$ is the sigmoid function. We observe that $G_x(x')$ do not depend over $x'$ and $G_y(x')$ solely depends on the derivative of $\mathcal{S}(2\gamma Wx')$ so we won't have to look for $x'$. By substituting the expressions above in Eq.~\eqref{eq:taylor_uG} and by canceling some terms, we get 
\begin{equation}
    \mathcal{S}(2\gamma Wx')[1 - \mathcal{S}(2\gamma Wx')] = \frac{1}{2\gamma}\diag[W(I-P)x]^{-1}[\mathcal{S}(2\gamma Wx) - \mathcal{S}(2\gamma WPx)]
\end{equation}
which can be directly substituted into $G_y(x')$ to compute the upper bound. In Fig.~\ref{fig:error_vector_fields}c, we compute the bound for the learned network \href{https://data.mendeley.com/datasets/dghdz45rfd/2}{mouse-control1-model.npz} from Ref.~\cite{Hadjiabadi2021}. For each $n$ and each of the 1000 uniform samples of $x$ with elements between -1 and 1 (the dynamics is bounded between -1 and 1), the diagonal elements in $D$ are sampled from a Gaussian probability density function with mean $1.6$ and standard deviation $0.001$ and the coupling constant $\gamma$ is sampled from a uniform probability density function between $0.16$ and $4.8$. This parameter region covers convergent and oscillatory dynamics for the RNN. 
See the script simulations/trajectories\_rnn.py on the Github repository \href{https://github.com/VinceThi/low-rank-hypothesis-complex-systems}{low-rank-hypothesis-complex-systems} to generate trajectories.
\end{example}
\noindent Letting some parameters be small or using numerical optimization, one can get reasonable approximations of the upper bound.
\begin{example}[Neuronal]\label{ex:wilson_cowan_xp}
For the Wilson-Cowan dynamics, we have
\begin{align}
    u(x) &= -Dx + (1 - ax)\circ \mathcal{S}[b(\gamma Wx - c)]\\
    G_x(x') &= -D - a\diag(\mathcal{S}[b(\gamma Wx' - c)])\\
    G_y(x') &= b\gamma[I - \diag(x')]\diag(\mathcal{S}[b(\gamma Wx' - c)](1 - \mathcal{S}[b(\gamma Wx' - c)])).
\end{align}
From there, various methods can be used to evaluate the upper bound. 
\begin{enumerate}
    \item If $a = 0$, one can get evaluate the upper bound exactly as in the RNN case, with the difference that the sigmoid function depends over the two other parameters $b$ and $c$. If $a$ is sufficiently close to 0, one can also proceed as in the RNN case, but it will give an approximation of the error bound. In this case, the Jacobian $G_x(x')$, depending on $a$ and appearing in the first term of the error bound, become more and more important \textit{relatively to the second term} as $n$ increases. 
    \item Instead of trying to solve Eq.~\eqref{eq:taylor_uG} for $x'$, one can naively set $x'$ as $x$ or $Px$ and choose the one that gives the maximum upper bound value on the alignment error. In this case, the approximation of the upper bound is more accurate for larger $n$ since $x$ and $Px$ get closer and $x'$ is a point between them.
    \item Numerical optimization, such as a least-squares method, can be used to find $x'$. This method requires considerably more computational resources, since for each $n$ and each sample in $x$, one need to solve a high-dimensional optimization problem.
\end{enumerate}
\begin{figure}[b]
    \centering
    \includegraphics[width=0.4\linewidth]{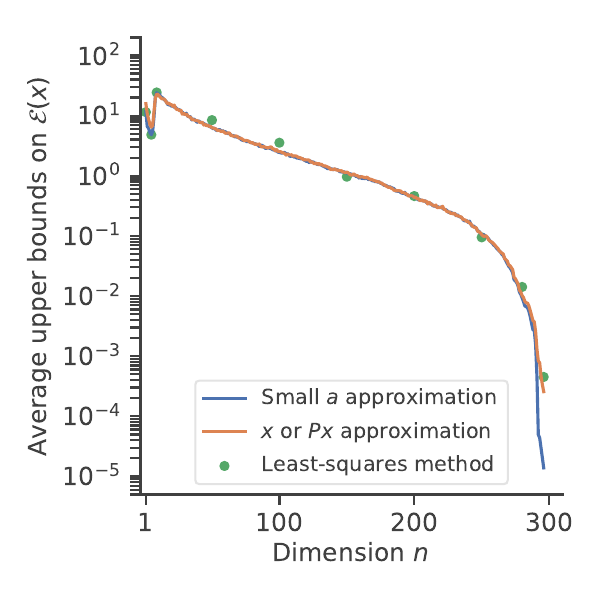}
    \vspace{-0.5cm}
    \caption{The upper bounds on the alignment error $\mathcal{E}(x)$ of the neuronal dynamics for different approximation methods of $x'$. The blue line corresponds to the approximation $a \approx 0$ (1000 samples for each $n$), the orange line corresponds to the approximation that $x'$ is either $x$ or $Px$ (1000 samples for each $n$) and the green circles correspond to the approximation of $x'$ using a least-squares method (10 samples for each $n\in\{1, 50, 100, 150, 200, 250, 296\}$).}
    \label{fig:verif_xp_wilson_cowan}
\end{figure}
The code and the tests for each case are given in the Python scripts simulations/errors\_wilson\_cowan.py and tests/test\_error\_vector\_fields\_wilson\_cowan.py on the Github repository \href{https://github.com/VinceThi/low-rank-hypothesis-complex-systems}{low-rank-hypothesis-complex-systems}. In Fig~\ref{fig:verif_xp_wilson_cowan}, we show the correspondence between the three methods for the \textit{C. elegans} signed network (see graphs/get\_connectome\_weight\_matrix on the GitHub repository of the paper). For each $n$ and each of the uniform samples in $x$ with elements between 0 and 1 (the dynamics is bounded between 0 and 1), the diagonal elements in $D$ are sampled from a Gaussian probability density function with mean 1 and standard deviation 0.001, the parameter $a$ is sampled uniformly between 0.001 and 0.1, the parameter $b$ is sampled uniformly between 0.5 and 2, the parameter $c$ is sampled uniformly between 2 and 4, and the coupling constant $\gamma$ is sampled from a uniform probability density function between 0.01 and 1. In Fig.~\ref{fig:error_vector_fields}b, the same parameters as above are used and we apply the second method to get $x'$ since it is faster to compute and it is more precise for large $n$ than the first one.
\end{example}
For some dynamics, it is not trivial to find an approximation like the first method in Example~\ref{ex:wilson_cowan_xp} that helps solve Eq.~\eqref{eq:taylor_uG} in $x'$.
\begin{example}[Microbial]
For the microbial population dynamics defined in Eq.~\eqref{eq:microbial_population}, we have
\begin{align}
    u(x) &= a-dx+b\,x\circ x - c \,x\circ x \circ x + \gamma\, x \circ Wx\label{eq:complete_microbial}\\
    G_x(x') &= -dI + \diag(2bx' - 3c\,x'\circ x' + \gamma Wx')\label{eq:gx_microbial}\\
    G_y(x') &= \gamma\diag(x')\label{eq:gy_microbial}.
\end{align}
In matrix form, it is easily shown that the system of equations to solve is 
\begin{equation}\label{eq:coupled_quadratic_equations}
    A(x'\circ x') + Bx' - C = 0,
\end{equation}
where $\chi = (I-P)x$, $D_v = \diag(v_1,...,v_N)$, and
\begin{align}
    A &= -3cD_{\chi}\\
    B &= 2bD_{\chi} + \gamma D_{W\chi} + \gamma D_{\chi} W\\
    C &= b\,[x\circ x - Px\circ Px] - c\,[x\circ x \circ x - Px\circ Px \circ Px] + \gamma\, [x\circ(Wx) - (Px)\circ(WPx)].
\end{align}
In this case, we could not find $x'$ mathematically, since we have to find a root of a system of $N$ coupled quadratic equations, which is a problem in the realm of geometric algebra. Concerning the possibility of making approximations, from our numerical experiments, neither the coupling term $\gamma D_{\chi} W x'$ nor the quadratic term can be neglected.  Moreover, for the parameters $a = 5$, $b = 13$, $c = 10/3$ (or $c = 1$), $d = 30$, $\gamma \in [0.5, 3]$ and the human gut microbiome network \cite{Lim2020, Sanhedrai2022}, the dynamics is not bounded above by 1. Since the alignment error is not a relative error, it can thus take very high values. Even if it's not a problem in itself, to be coherent with the dynamics in the three previous example, we rescale $x_i$ and $t$ in the dynamics to have trajectories approximately bounded between 0 and 1 and to normalize the human gut microbiome network by its largest singular value $\sigma_1 = 171$. To achieve that, we have generated trajectories for the given set of parameters above and we found that the trajectories are (safely) bounded by 30 given, and so we set $x_i \mapsto x_i/30$. Thus, with $t \mapsto 20\,\sigma_1\,t$, we get the differential equations
\begin{equation}\label{eq:population_rescaled}
    \frac{\mathrm{d}x_i}{\mathrm{d}t} = a - d\,x_i + b\,x_i^2 - c\,x_i^3 +  \gamma\,x_i\sum_{j=1}^N \hat{W}_{ij}x_j,
\end{equation}
where the parameters are redefined such that $a\mapsto  a/T \approx 5\times10^{-5}$, $d\mapsto  d/T \approx 0.01$, $b\mapsto  bd/T \approx 0.1$,  $c \mapsto cd^2/T \approx 0.9$, $\gamma\mapsto  \gamma d/20 \in [0.5, 4.5]$, and $\hat{W} = W/\sigma_1$.
\begin{figure}
    \centering
    \includegraphics[width=0.4\linewidth]{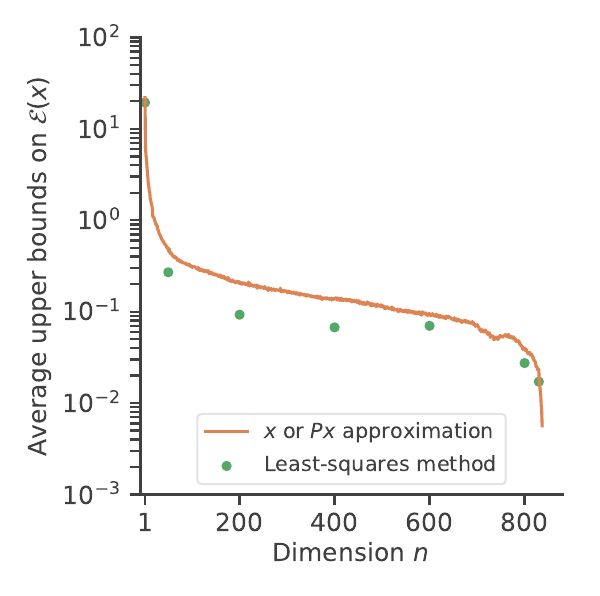}
    \vspace{-0.5cm}
    \caption{The upper bounds on the alignment error $\mathcal{E}(x)$ of the microbial dynamics for different approximations of $x'$. The orange line is related to the approximation that $x'$ is $x$ or $Px$ (1000 samples for each $n$, 35h of simulations on a personal computer with an Intel i7 processor) and the green circles are related to the approximation of $x'$ using a least-squares method (7 samples for each $n\in\{1, 50, 200, 400, 600, 800, 830\}$).}
    \label{fig:verif_xp_microbial}
\end{figure}
In Fig.~\ref{fig:error_vector_fields}d, we use the second method in Example~\ref{ex:wilson_cowan_xp} ($x'$ is $x$ or $Px$), which is compared to the least-squares method in Fig.~\ref{fig:verif_xp_microbial}. Because this is just an approximation of the bound, it is not guaranteed that for a given instance in $x$ and a given $n$, the value of the bound is above the error, but it is above on average as one can see in Fig.~\ref{fig:error_vector_fields}d. Also, for each $n$ and each of the uniform samples in $x$ with elements between 0 and 1, $d = 0.01$, the parameter $a$ is sampled uniformly between 0.00001 and 0.0001, the parameter $b$ is sampled uniformly between 0.05 and 2, the parameter $c$ is sampled uniformly between 0.5 and 1.5, and the coupling constant $\gamma$ is sampled from a uniform probability density function between 0.1 and 5.
\end{example}

\subsection{Global observables}
\label{SIsubsec:global_observable}
We here describe how to define an observable that describes the activity (state) of dynamics on networks at large scale, allowing the production of a two-dimensional diagram depicting the influence of a structural parameter on the equilibrium states of the (macro-)dynamics.   

Numerically, the SVD might give a right singular vector matrix $V = (v_1\,\,...\,\, v_N)$ with many negative entries. For instance, the leading singular vector might contain solely negative elements. Other singular vectors $v$ could be such that $\sum_i v_i < 0$. As a consequence, the dynamics of the observables $X_{\mu}$ (even the leading one, i.e., $X_1$ related to $\sigma_1$) can have equilibrium points below 0, which might be harder to interpret. One way to get more positive values without using any approximation (e.g., nonnegative matrix factorization~\cite{Ding2010, Thibeault2020_SI}) is to play with the non-uniqueness of the SVD by multiplying the singular vector matrices by a diagonal matrix $D_{\pm}$ of +1 and -1. Let 
\begin{equation}
    D_{\pm} = \mathrm{diag}\left[\,\,s(\textstyle{\sum_i (v_1)_i})\,\,,\, ... \,\,,\, s(\textstyle{\sum_i} (v_N)_i)\,\,\right],
\end{equation}
where $s$ is defined such that 
\begin{equation}
    s(x) = \left\{\begin{array}{cc}
         1& \text{ if } x\geq 0 \\
         -1& \text{ if } x<0.
    \end{array}\right.
\end{equation}
Since $D_{\pm}$ is diagonal, it commutes with any diagonal matrices. Moreover, $D_{\pm}D_{\pm} = I$. Therefore,
\begin{equation}
    W = U\Sigma V^{\top} = U\Sigma D_{\pm}D_{\pm}V^{\top} =  UD_{\pm} \Sigma D_{\pm}V^{\top} := U' \Sigma V'^\top.
\end{equation}
To get an approximate reduced dynamics of dimension $n$, we use the truncated SVD $U_n \Sigma_n V_n^\top$, where $\Sigma_n := \mathrm{diag}(\sigma_1,...,\sigma_n)$ and the $N \times n$ truncated singular vector matrices are
\begin{equation}
    U_n := (u_1'\,\,...\,\,u_n') \text{ and } V_n := (v_1'\,\, ... \,\, v_n').
\end{equation}
After integrating the reduced dynamics with $M = V_n^\top$ , we compute the global observable
\begin{align}\label{eq:global_observable}
    \mathcal{X} = w \cdot X = m\cdot x\,,
\end{align}
where $w$ is a $n\times 1$ vector of constants, $m = w^\top M$ and $\cdot$ is the scalar product. From there, one can define observables on different scales (we will roughly say that we have a global/macroscopic observable if all or almost all vertices contributes to its value through their state). Indeed, depending on $w$ and $M$, the $1\times N$ vector $w^\top M$ could have zero elements (say, elements $i, j, ...$), thus canceling the contribution of the activity of some vertices ($x_i, x_j,...$) to the observable $\mathcal{X}$. The weight matrices used in the paper (ultimately defining $M = V_n^\top$) and the chosen vector $w$ will lead to global observable, as defined in what follows.

For the epidemiological dynamics, we choose 
\begin{align}
    w = (w_1\,\,0\,\,...\,\,0)^{\top}\,,\quad \mathrm{where}\quad w_1 = \frac{1}{\sum_{j=1}(v_1)_j}
\end{align}
with $v_1$ being the leading right singular vector. This defines, from Eq.~\ref{eq:global_observable}, the leading right-singular-vector observable
\begin{equation}\label{eq:global_observable_leading_ves}
    \mathcal{X} = \sum_{i=1}^N m_i x_i\,\quad \mathrm{with}\quad m_i = \frac{(v_1)_i}{\sum_{j=1}^N (v_1)_j}.
\end{equation}
Since the high-school contact network is a nonnegative matrix, it satisfies the Perron-Frobenius theorem and the weight $m_i$ can be interpreted as the hub centrality of vertex $i$ (cf.\ Fig.~\ref{fig:SVD_centralities}). The leading right-singular-vector observable hence describes the activity of all the vertices by giving more importance to the ones with high centrality. For the neuronal dynamics, we use the rescaled leading right-singular-vector observable where
\begin{align}
    w = (w_1\,\,0\,\,...\,\,0)^\top\,,\quad \mathrm{where}\quad w_1 = \frac{1}{r\sum_{j=1}(v_1)_j}
\end{align}
with $r = 0.15$, to have a bifurcation diagram between 0 and 1 approximately in the range of coupling considered in Fig.~\ref{fig:error_vector_fields}f. 

For the microbial dynamics on the gut microbiome (signed network), a different global observable to show positive, stable equilibrium point branches and to illustrate another way of defining a global observable with our framework. Two criteria are imposed to define the vector $m$ defining the global observable: (1) it does not vary with $n$ and (2) it is as close as possible to the uniform observable where $m^{\mathrm{uni}}_i = 1/(rN)$ for all $i\in\{1,...,N\}$ and for some rescaling constant $r$. The first one is imposed strictly while the other is not. To satisfy these conditions, let $n_{\min}$ be the smallest dimension considered (in Fig.~\ref{fig:error_vector_fields}g, $n_{\min} = 76$) and define the $n$-dimensional vector
\begin{align}
    w = (w_1\,\,...\,\,w_{n_{\min}}\,\,0\,\,...\,\,0)^{\top}
\end{align}
and the $n_{\min}$-dimensional vector
\begin{equation}
    \bar{w} = (w_1\,\,...\,\,w_{n_{\min}})^{\top}.
\end{equation}
Satisfying condition (2) is equivalent to the problem of finding the coefficient $\bar{w}$ that minimizes $  \|V_{n_{\min}}\bar{w} - \mathbbm{1}/(rN)\|$ where $\mathbbm{1}$ is a $N$-dimensional vector of ones, which simply gives 
\begin{align}
    \bar{w} = \frac{1}{rN}V_{n_{\min}}^{\top} \mathbbm{1}.
\end{align}
where we chose $r = 10$. Condition (1) is thus satisfied and one can compare the equilibrium points of the global observable at different values of $n \geq n_{\min}$ (such as $n\in \{76, 203, 735\}$ in Fig.\ref{fig:error_vector_fields}).

Note that the above global observables are not chosen in a way that the bifurcation diagram or the trajectories of the complete dynamics are described in the best way possible by the reduced dynamics (in other words, some global observable are better described by the reduced dynamics than others), but rather in a way that they are more intuitive.

\subsection{Numerical efficiency}
\label{SIsubsec:numerical_efficiency}
When integrating the dynamics, the vector field is evaluated many times. For instance, with the integration method DOPRI45, the vector field is evaluated six times at each time step. It is therefore interesting to report a speed comparison for the evaluation of (1) the exact reduced vector field $M\circ h$, (2) the reduced dynamics $M\circ h \circ M^+$, and (3) the reduced dynamics in its tensor form.
\begin{table*}[h]
\caption{\label{tab:main_result} Average time taken to evaluate the exact vector field $M\circ h$, the unsimplified reduced vector field $M \circ h \circ M^+$, and the reduced vector field in closed form with higher-order interactions (closed-tensor form) for different dynamics and different values of $n$. Parameters: $N = 500$, $x_i \sim \mathcal{U}[0, 1)$, $\theta_i, \alpha \sim \mathcal{U}[0, 2\pi)$, $D_{ii}\sim \mathcal{U}[0, 1)$, $W_{ij} \sim \mathcal{U}[-1, 1)$. The average was computed over 500 time samples of the above parameters. The experiments were done on a basic laptop (Intel i7 MSI GL62 6Qf) and the related Python scripts are gather in the folder tests/test\_dynamics of the openly accessible GitHub repository \href{https://github.com/VinceThi/low-rank-hypothesis-complex-systems}{low-rank-hypothesis-complex-systems}.}
\begin{ruledtabular}
\label{tab:avg_computation_time_vector_fields}
\begin{tabular}{ c c c c c }
& &   \multicolumn{3}{c}{\textbf{Average evaluation time [s]}}  
\\
$n$ &\textbf{Reduced vector field} & Lotka-Volterra  & QMF SIS & Kuramoto-Sakaguchi
\vspace{-0.5cm}
\\
\parbox[t]{1cm}{\begin{align*}
& \\ 
& 1\\ 
&\\ \\& \\ 
& 10\\ 
&\\ \\& \\ 
& 100\\ 
& \end{align*}}
&\parbox[t]{3cm}{\begin{align*}
&\text{Exact } M\circ h\\ 
& \text{Unsimplified } M\circ h \circ M^+ \\ 
&\text{Tensor form}\\ \\&\text{Exact } M\circ h\\ 
& \text{Unsimplified } M\circ h \circ M^+ \\ 
&\text{Tensor form}\\ \\&\text{Exact } M\circ h\\ 
& \text{Unsimplified } M\circ h \circ M^+ \\ 
&\text{Tensor form} \end{align*}}
&
\parbox[t]{2cm}{\begin{align*}
3.2\times 10^{-4}\\
3.3\times 10^{-4}\\
1.0\times 10^{-5}\\ \\
3.6\times 10^{-4}\\
3.0\times 10^{-4}\\
2.0\times 10^{-5}\\ \\
5.4\times 10^{-4}\\
4.6\times 10^{-4}\\
8.6\times 10^{-4}\end{align*}} 
&
\parbox[t]{2cm}{\begin{align*}
3.3\times 10^{-3}\\
3.1\times 10^{-3}\\
2.0\times 10^{-5}\\ \\
3.4\times 10^{-3}\\
2.8\times 10^{-3}\\
2.9\times 10^{-5}\\ \\
3.4\times 10^{-3}\\
3.3\times 10^{-3}\\
9.0\times 10^{-4}\end{align*}} 
&
\parbox[t]{2cm}{\begin{align*}
9.4\times 10^{-3}\\
9.0\times 10^{-3}\\
5.6\times 10^{-5}\\ \\
9.2\times 10^{-3}\\
8.9\times 10^{-3}\\
9.2\times 10^{-5}\\ \\
1.0\times 10^{-2}\,{}^*\\
1.1\times 10^{-2}\,{}^*\\
1.2\times 10^{\,0}\,{}^*\end{align*}} 
\end{tabular}
\\
*Computed with 10 samples instead of 500.
\end{ruledtabular}
\end{table*}

As shown in Table~\ref{tab:avg_computation_time_vector_fields}, when we have the argument of each vector field in hand and $n$ is small, there can be significant benefits to use the reduced dynamics in its tensor form (approximately 10-100 times faster than the unsimplified reduced vector field and the complete vector field). The advantage of this reduced dynamics is that the tensors can be computed before the integration of the dynamics. Hence, only quantities depending on $n$ are involved in the integration. For reasonable sizes $n$, $N$, and for small enough tensor order, the tensors can be efficiently computed using some tensor calculus or using nested for loops optimized with Numba [see graphs/compute\_tensors.py and the speed test in tests/test\_graphs/test\_compute\_tensor.py]. 

Note, however, that for specific $M, W, X$, the vector field $M\circ h \circ M^+$ might be faster to evaluate than the closed-tensor form. For large values of $n$, the tensor form is particularly slow to compute, especially when the order of the tensor is higher (e.g., Kuramoto-Sakaguchi for $n=100$). The time required to evaluate the unsimplified vector field is more stable according to the size $n$. We thus extensively used it to compute the alignment error and its upper bound. More exhaustive numerical work should be done in the future to assess the benefits and the limitations of choosing a particular form of the reduced vector field in terms of computation time.

\subsection{Numerical integration of the dynamics}
\label{SIsubsec:numerical_integration}

The dynamics on real networks considered in Fig.~\ref{fig:error_vector_fields} have very different properties at equilibrium and choosing a correct ordinary differential equation integrator is essential to ensure reliable results. For the epidemiological, neuronal and recurrent neural dynamics, using the algorithm DOPRI45 (see the github repository, dynamics/integrate.py, function integrate\_dopri45) to get the equilibrium points of the dynamics worked properly when adjusting the time length and the integration step correctly. For the epidemiological dynamics, the phenomenon of critical slowing down appears, but it can be easily dealt with by increasing the number of time steps near the bifurcation. 

The more challenging problem was the integration of the microbial dynamics on the gut microbiome, since the differential equations are stiff: an really small time step for DOPRI45 was needed to capture the very fast transitions in the first steps of the trajectories and the numerical integration was excessively long. Moreover, there are multiple branches of stable equilibrium points close to each other for the global observable (see subsection~\ref{SIsubsec:global_observable}). 

We have thus turned to solve\_ivp from scipy.integrate with the backward differentiation formula (BDF), an implicit method with variable step length and order. As mentioned in the documentation [\url{https://docs.scipy.org/doc/scipy/reference/generated/scipy.integrate.solve_ivp.html}] and in Ref.~\cite{Stadter2021}, the method is well suited for stiff problems and we have made great computational time gain by using this method since the integrator uses very small steps at the beginning and much larger steps near the equilibrium point. We observed that a relative tolerance of $10^{-8}$ and an absolute tolerance of $10^{-12}$ for the complete dynamics and a relative tolerance of $10^{-6}$ and an absolute tolerance of $10^{-10}$ for the reduced dynamics were reasonable in terms of integration reliability and computational time for our problem while being in line with the recent benchmarks in Ref.~\cite{Stadter2021}. Moreover, we provided the Jacobian matrices of the complete and reduced dynamics to the integrator as recommended in the documentation of solve\_ivp for the BDF method. We have already computed the Jacobian matrix for the complete microbial dynamics to compute the alignment errors in subsection~\ref{SIsubsec:evaluate_bound}, we recall that it is given by
\begin{align}
    Du(x) = G_x(x) + WG_y(x),
\end{align}
where $u$, $G_x$ and $G_y$ are given in Eqs.~(\ref{eq:complete_microbial}-\ref{eq:gy_microbial}) for the microbial dynamics. One can then easily show that the Jacobian matrix of the reduced dynamics with vector field $U = MuM^+$ is
\begin{equation}
    DU(X) = M\,Du(M^+X)\,M^+.
\end{equation}

In our simulations, we observed that there are many lower (forward) branches of stable equilibrium points near 0 and many other stable equilibrium points (backward) branches at higher values for the global observable. Getting all these branches would be a tremendous challenge and would require sampling an 838-dimensional space of initial conditions, which is far from the goal of the paper. We thus sampled from different initial value uniform distributions to capture some of these branches. We have focused on one forward branch only to illustrate one transition: we observed that sampling the initial condition $x_0$ from a uniform distribution between 0 and 1 gave only one branch that eventually loses its stability to fall on some other branch at higher activity when increasing the coupling value. To obtain a backward branch, (1) we sampled the initial condition $x_0$ from a uniform distribution between 0 and $z$ where $z$ is a random integer between 1 and 15, (2) we integrated to get an equilibrium point, (3) we decreased the coupling and used the last equilibrium point as the initial condition for the integration in step (2), and (4) we repeat the steps (2) and (3) until the minimum coupling value (0.1 in Fig.~\ref{fig:error_vector_fields}g) was reached. We repeated these four steps 100 times (300 for $n=76$) to generate different initial conditions and different stable branches while ensuring at each iteration that the equilibrium points had reached the tolerance ($10^{-7}$) and that the equilibrium points were positive. The code to obtain Fig.~\ref{fig:error_vector_fields}g is on the Github repository, in simulations/bifurcations\_microbial.py.

Because of the performances of BDF with the microbial dynamics, we also integrated the other dynamics with the BDF method with a relative tolerance of $10^{-8}$ and an absolute tolerance of $10^{-12}$.

\clearpage 

\section{Real network dataset}
\label{SIsec:real_network_dataset}

In this section, we list the real networks used in the paper and we provide two supplementary figures. Every network in the table is from \href{https://networks.skewed.de/}{Netzschleuder}, except 31 of them, listed below.
\begin{itemize}
    \item `celegans\_signed': It is obtained by completing (with Dale's principle) the connectome \href{https://elegansign.linkgroup.hu/#!NT+R\%20method\%20prediction}{NT+R method prediction} of the open-source database EleganSign~\cite{Fenyves2020} [see graphs/get\_real\_networks.py, function get\_connectome\_weight\_matrix in the GitHub repository].
    \item `drosophila': It is taken from Ref.~\cite{Scheffer2020}.
    \item `cintestinalis' The Ciona intestinalis connectome is from Ref.~\cite{Ryan2016} and is available on our Github repository in graphs/graph\_data/connectome/ciona\_intestinalis\_lavaire\_elife-16962-fig16-data1-v1\_modified.xlsx.
    \item `pdumerilii\_neuronal': The neuronal Platynereis dumerilii connectome is from Ref.~\cite{Veraszto2020} and it is an updated version shared personally by the author G. Jékely to V. Thibeault. The connectome is available on our Github repository in graphs/graph\_data/connectome/pdumerilii\_neuronal.xml.
    \item `pdumerilii\_desmosomal': The desmosomal Platynereis dumerilii connectome is from Ref.~\cite{Jasek2022} and it is an updated version shared personally by the author G. Jékely to V. Thibeault. The connectome is available on our Github repository in graphs/graph\_data/connectome/pdumerilii\_desmosomal.xml.
    \item `mouse\_meso': The mesoscopic mouse connectome is given in Ref.~\cite{Oh2014} and available on our Github repository in graphs/graph\_data/connectome/mouse\_connectome-Oh\_Nature\_2014.csv.
    \item `zebrafish\_meso': The zebrafish mesoscopic connectome is adapted from Ref.~\cite{Kunst2019a} and the treatment is available on the paper's GitHub repository  \href{https://github.com/VinceThi/low-rank-hypothesis-complex-systems}{low-rank-hypothesis-complex-systems}.
    \item `mouse\_voxel': The mouse connectome at the level of voxels is available in Mendeley data  \href{https://data.mendeley.com/datasets/dxtzpvv83k/2}{mouse\_connectome\_voxelwise} \cite{Coletta2020}.
    \item `mouse\_control\_rnn', `mouse\_rnn', `zebrafish\_rnn': recurrent neural networks from \href{https://data.mendeley.com/datasets/dghdz45rfd/2}{Hadjiabadi et al.} \cite{Hadjiabadi2021}.
    \item `fully\_connected\_layer\_cnn\_XXXXX' with XXXXX in $\{00100, 00200,..., 01000\}$ : fully connected layers from the convolutional neural networks in the repository \href{https://github.com/gabrieleilertsen/nws}{NWS}\cite{Eilertsen2020}. 
    \item `gut': The human gut microbiome is from Ref.~\cite{Lim2020} and was constructed as in the supplementary material of Ref.~\cite{Sanhedrai2022} [see graphs/get\_real\_networks.py, function get\_microbiome\_weight\_matrix in the GitHub repository].
    \item `AT\_2008', 'CY\_2015', 'EE\_2010', 'PT\_2009', 'SI\_2016': Economic networks from Ref.~\cite{Wachs2021}.
    \item `financial\_institution07-Apr-1999', `non\_financial\_institution04-Jan-2001', `households\_04-Sep-1998', `households\_09-Jan-2002': Economic networks from Ref.~\cite{Ranganathan2018} on \href{https://doi.org/10.5061/dryad.5b8n621}{Dryad}.
\end{itemize}
The code to extract each network made available on Github is in graphs/get\_real\_networks. Other information about the real networks in the dataset is available on the Github repository \href{https://github.com/VinceThi/low-rank-hypothesis-complex-systems}{low-rank-hypothesis-complex-systems}. In particular, see real\_networks\_and\_their\_effective\_ranks.pdf on in graphs/graph\_data for the source of each network or equivalently, Supplementary Table 1 (supplementary\_table\_1\_real\_networks.pdf). Note that, in a preliminary treatment before getting the effective ranks, many Netzschleuder's networks have been removed from a larger dataset of 1145 networks to avoid over-representation of particular types of networks (specifically, `board\_directors\_net1m...', `edit\_wikibooks...', `ego\_social\_gplus...').

In subsection~\ref{SIsubsec:impact_singvals_effrank}, asymptotic results about the effective ranks of graph models have been presented for different singular value decays, showing all sorts of behavior, ranging from constant $O(1)$, to sub-linear $O(N^{1-\epsilon})$ with $0< \epsilon< 1$, to linear $O(N)$ growth as $N\to \infty$. Although we do not expect one graph model to describe every network in the dataset (which would allow doing asymptotic analysis), we can still wonder how the effective ranks are distributed according to the size $N$ of the networks. In Fig.~\ref{fig:effective_ranks_vs_N}, we present such distributions and perform nonlinear regressions, which suggest sub-linear increases of the effective ranks as $N$ increases. As mentioned in subsection~\ref{SIsubsec:adaptive_networks}, it would be pertinent to explore the behavior of the effective ranks in growing graphs and real growing networks to verify the presence of sub-linear growth.

Moreover, sparse matrices have been observed for many real and synthetic networks and in subsection~\ref{SIsubsec:impact_singvals_effrank}, it was shown that sparse matrix models lead to a low stable rank. Yet, Fig.~\ref{fig:effective_ranks_vs_density} illustrates that the effective ranks are rather anti-correlated with the density of the weight matrices of real networks, thus suggesting that it is really the rapid decrease of the singular values that lead to our observations on the effective ranks in Fig.~\ref{fig:low_rank_hypothesis}.

\begin{figure}[b]
    \centering
    \includegraphics[width=0.95\linewidth]{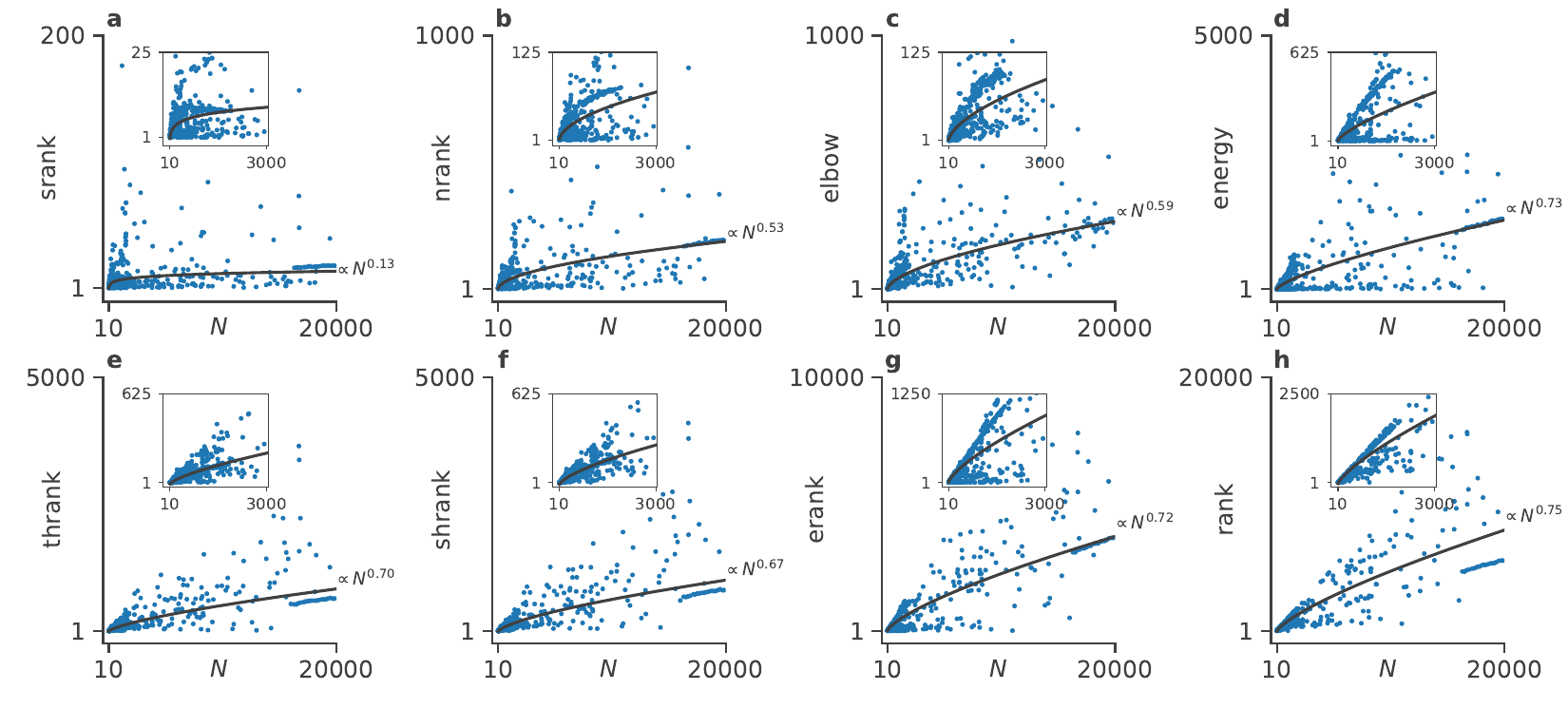}
    \vspace{-0.5cm}
    \caption{Different effective ranks vs. the number of vertices $N$ for 679 real networks (see SI~\ref{SIsec:real_network_dataset}). The solid black lines are L1 nonlinear regressions with the function $aN^b + c$ and $a, b, c$ as the optimization variables. The insets show zoomed version of the data for the smaller values of effective ranks and $N$ where the nonlinearity is better seen especially in \textbf{a} to \textbf{d}. The optimization was performed with the method BFGS of \href{https://docs.scipy.org/doc/scipy/reference/generated/scipy.optimize.minimize.html}{scipy.optimize.minimize} with the bounds (0, 10), (0, 1), (-100, 30) for $a$, $b$, $c$ respectively and the initial guesses $a_{\mathrm{guess}} = 1$, $b_{\mathrm{guess}} = 0.5$, and $c_{\mathrm{guess}} = -1$ (see plot\_fig\_SI\_effective\_rank\_vs\_size.py on the Github repository). The L1 norm was chosen for its better robustness to outliers, but the conclusions hold when using the L2 norm instead. From srank to rank, the optimization parameters $[a\,\,b\,\,c]$ are approximately [ 6.09,  0.13, -7.45], [ 1.04  0.53 -1.00], [ 0.76  0.59 -1.00], [ 1.02  0.73  -1.01], [  0.78   0.70 -11.97 ], [  1.27    0.67  -17.44], [ 2.92  0.72 -0.83 ], [  4.80   0.75 -51.34] and the normalized mean absolute errors $\sum_{i=1}^{679} |y_i - \hat{y}_i| / [679\langle y \rangle]$ are 0.76, 0.69, 0.54, 0.72, 0.51, 0.49, 0.45, 0.41.
    }
    \label{fig:effective_ranks_vs_N}
\end{figure}
\begin{figure}[t]
    \centering
    \includegraphics[width=0.9\linewidth]{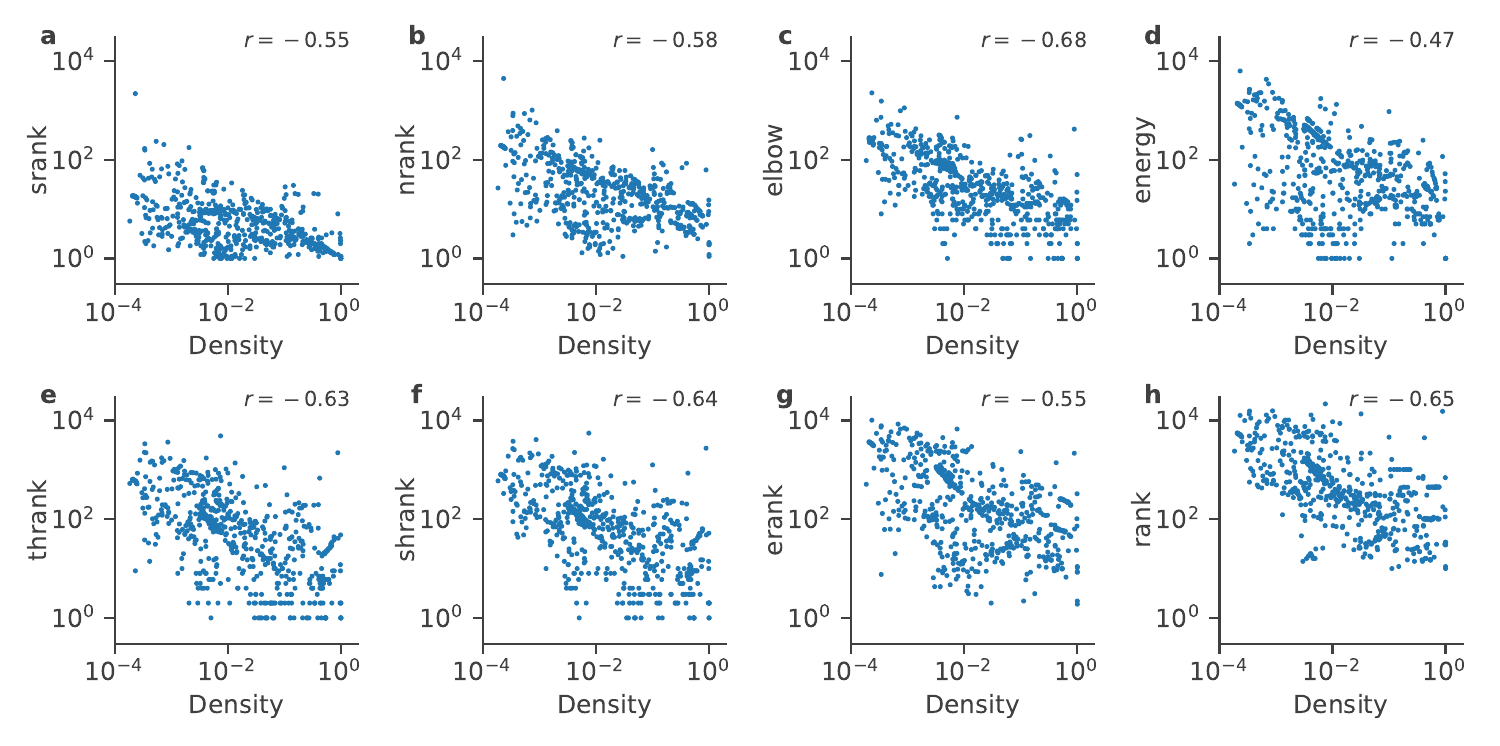}
    \vspace{-0.5cm}
    \caption{Different effective ranks vs. the density for 679 real networks (see SI~\ref{SIsec:real_network_dataset}). The (matrix) density is the number of nonzero elements in the weight matrices of the networks divided by the total number of elements $N^2$. The parameter $r$ denotes Pearson's correlation coefficient between the log of the effective ranks and the log of the density.
    }
    \label{fig:effective_ranks_vs_density}
\end{figure}

\clearpage

\bibliography{mybib} 

\end{document}